\documentclass[a4paper,11pt]{article}

\usepackage{a4wide}

\usepackage[noadjust]{cite}
\bibliographystyle{plainurl}

\usepackage{amsmath}
\usepackage{amssymb}
\usepackage{amsthm} 

\usepackage{algorithm}
\usepackage[noend]{algpseudocode} 

\usepackage{array}
\usepackage{enumerate}
\usepackage{caption} 
\usepackage{mathpartir}
\usepackage{longtable}

\usepackage{tikz}
\usetikzlibrary{fit,positioning, intersections, shapes.geometric,shapes.symbols,arrows,decorations.markings,decorations.pathreplacing,calc}

\newcommand{\executeiffilenewer}[3]{%
	\ifnum\pdfstrcmp{\pdffilemoddate{#1}}%
	{\pdffilemoddate{#2}}>0%
	{\immediate\write18{#3}}\fi%
} 
\newcommand{%
	\executeiffilenewer{.svg}{.pdf}%
	{inkscape -z -D --file=.svg %
		--export-pdf=.pdf --export-latex}%
	{\input{.pdf_tex}}}[1]{%
	\executeiffilenewer{#1.svg}{#1.pdf}%
	{inkscape -z -D --file=#1.svg %
		--export-pdf=#1.pdf --export-latex}%
	{\input{#1.pdf_tex}}}%

\newcommand{\svg}[2]{\def\svgwidth{#1}%
	\executeiffilenewer{#2.svg}{#2.pdf}%
	{inkscape -z -D --file=#2.svg %
		--export-pdf=#2.pdf --export-latex}%
	{\input{#2.pdf_tex}}}

\theoremstyle{plain}
\newtheorem{thm}{Theorem}
\newtheorem*{thm*}{Theorem}
\newtheorem{cor}[thm]{Corollary}
\newtheorem{lem}[thm]{Lemma}
\newtheorem{obs}[thm]{Observation}
\newtheorem{conj}[thm]{Conjecture}

\newtheorem*{clm*}{Claim}
\theoremstyle{definition}
\newtheorem{defn}[thm]{Definition}
\newtheorem{rem}[thm]{Remark}
\newtheorem*{thmmain}{Theorem~\ref{thm:main}}

\newenvironment{claimproof}{\par\noindent{Proof:}\space}{\hfill $\blacksquare$ \newline}

\newcommand{\parHom}[1]{\mathrm{\oplus\textsc{Hom}}\left(#1\right)}
\newcommand{\parRet}[1]{\mathrm{\oplus\textsc{Ret}}\left(#1\right)}
\newcommand{\parLHom}[1]{\mathrm{\oplus\textsc{LHom}}\left(#1\right)}

\newcommand{\paris}{\mathrm{\oplus\textsc{IS}}}
\newcommand{\parbis}{\mathrm{\oplus\textsc{BIS}}}

\renewcommand{\hom}[2]{\mathrm{hom}\left(#1 \to #2\right)}


\def\cycles{\text{Cy}}

\def\calH{\mathcal{H}}

\newcommand{\nesetril}{Ne\v{s}et\v{r}il}
\newcommand{\Zivny}{{\v{Z}}ivn{\'y}}

\newcommand{\Goebel}{G{\"o}bel}

\newcommand{\poly}{\mathrm{poly}}

\newcommand{\NP}{\mathrm{NP}} 
\newcommand{\parP}{\oplus\mathrm{P}}

\newcommand{\NH}[1]{\Gamma_H(#1)}
\newcommand{\Nof}[1]{\Gamma_{#1}}

\newcommand{\boldS}{\mathbf{S}}

\newcommand{\calC}{\mathcal{C}}

\newcommand{\calI}{\mathcal{I}}
\newcommand{\calL}{\mathcal{L}}
\newcommand{\calP}{\mathcal{P}}

\newcommand{\TV}{\mathsf{V}}

\newcommand{\Oy}{\ensuremath{\Omega_y}}
\newcommand{\Oz}{\ensuremath{\Omega_z}}
\newcommand{\Sox}{\ensuremath{\Sigma_{o,x}}}
\newcommand{\Sos}{\ensuremath{\Sigma_{o,s}}}
\newcommand{\Six}{\ensuremath{\Sigma_{i,x}}}
\newcommand{\Sis}{\ensuremath{\Sigma_{i,s}}}

\newcommand{\bc}{\mathsf{BC}}
\newcommand{\SP}[1]{\textsc{P}_H(#1)}
\newcommand{\walk}[1]{\textsc{Walk}(#1)}
\newcommand{\subwalk}[1]{\textsc{ExitWalk}(#1)}
\newcommand{\cseg}{\textsc{W}}

\newcommand{\dist}{\mathsf{dist}}
\newcommand{\abs}[1]{\left\vert #1 \right\vert}

\let\epsilon=\varepsilon

\let\originalleft\left
\let\originalright\right
\renewcommand{\left}{\mathopen{}\mathclose\bgroup\originalleft}
\renewcommand{\right}{\aftergroup\egroup\originalright}

\newcommand{\prob}[3]{
	\vbox{
		\begin{description}\setlength{\itemsep}{0pt}
		\setlength{\parskip}{0pt}
		\setlength{\parsep}{0pt}   			
		        \item[\bf Name:] #1
			\item[\bf Input:] #2  
			\item[\bf Output:] #3
		\end{description}
	}
}

\newenvironment{myitemize}
{ \begin{itemize}
		\setlength{\itemsep}{0pt}
		\setlength{\parskip}{0pt}
		\setlength{\parsep}{0pt}     }
	{ \end{itemize}                  } 

\newenvironment{myenumerate}
{ \begin{enumerate}
		\setlength{\itemsep}{0pt}
		\setlength{\parskip}{0pt}
		\setlength{\parsep}{0pt}     }
	{ \end{enumerate} 				 }

\usepackage{filecontents}
\begin{filecontents}{\jobname.bib}

@article {retr,
	AUTHOR = {Feder, Tomas and Hell, Pavol},
	TITLE = {List homomorphisms to reflexive graphs},
	JOURNAL = {J. Combin. Theory Ser. B},
	FJOURNAL = {Journal of Combinatorial Theory. Series B},
	VOLUME = {72},
	YEAR = {1998},
	NUMBER = {2},
	PAGES = {236--250},
	ISSN = {0095-8956},
	MRCLASS = {68R10 (05C85)},
	MRNUMBER = {1616616},
	MRREVIEWER = {Stanis\l aw P. Radziszowski},
	DOI = {10.1006/jctb.1997.1812},
	URL = {https://ezproxy-prd.bodleian.ox.ac.uk:4563/10.1006/jctb.1997.1812}
}

@article {lovaszsurvey,
    AUTHOR = {Lov\'{a}sz, L\'{a}szl\'{o}},
     TITLE = {Graph minor theory},
   JOURNAL = {Bull. Amer. Math. Soc. (N.S.)},
  FJOURNAL = {American Mathematical Society. Bulletin. New Series},
    VOLUME = {43},
      YEAR = {2006},
    NUMBER = {1},
     PAGES = {75--86},
      ISSN = {0273-0979},
   MRCLASS = {05C83 (05C85)},
  MRNUMBER = {2188176},
MRREVIEWER = {Ronald J. Gould},
       DOI = {10.1090/S0273-0979-05-01088-8},
       URL = {https://doi.org/10.1090/S0273-0979-05-01088-8},
}

@article{Tod91:PP-PH,
    author    = {S. Toda},
    title     = {{PP} is as hard as the polynomial-time hierarchy},
    journal   = {SIAM J. Comput.},
    volume    = 20,
    number    = 5,
    pages     = {865--877},
    year      = 1991
}

@article{GP86:Parallel,
    author    = {L. M. Goldschlager and I. Parberry},
    title     = {On the construction of parallel computers from various
                 bases of {B}oolean functions},
    journal   = {Theor. Comput. Sci.},
    volume    = {43},
    year      = {1986},
    pages     = {43--58}
}

@inproceedings{PZ82:Counting,
    author    = {C. H. Papadimitriou and S. Zachos},
    title     = {Two remarks on the power of counting},
    booktitle = {Proc. 6th GI-Conference on Theoretical Computer Science},
    publisher = {Springer-Verlag},
    pages     = {269--275},
    year      = 1982
}

@article{APred,
  author    = {Martin E. Dyer and
               Leslie Ann Goldberg and
               Catherine S. Greenhill and
               Mark Jerrum},
  title     = {The Relative Complexity of Approximate Counting Problems},
  journal   = {Algorithmica},
  volume    = {38},
  number    = {3},
  pages     = {471--500},
  year      = {2004},
  url       = {https://doi.org/10.1007/s00453-003-1073-y},
  doi       = {10.1007/s00453-003-1073-y},
  timestamp = {Wed, 14 Nov 2018 10:49:04 +0100},
  biburl    = {https://dblp.org/rec/journals/algorithmica/DyerGGJ03.bib},
  bibsource = {dblp computer science bibliography, https://dblp.org}
}

@article{MarkTrees,
  author    = {Leslie Ann Goldberg and
               Mark Jerrum},
  title     = {The Complexity of Approximately Counting Tree Homomorphisms},
  journal   = {{ACM} Trans. Comput. Theory},
  volume    = {6},
  number    = {2},
  pages     = {8:1--8:31},
  year      = {2014},
  url       = {https://doi.org/10.1145/2600917},
  doi       = {10.1145/2600917},
  timestamp = {Mon, 08 Jun 2020 22:18:54 +0200},
  biburl    = {https://dblp.org/rec/journals/toct/GoldbergJ14.bib},
  bibsource = {dblp computer science bibliography, https://dblp.org}
}

@article{FlumGrohe,
  author    = {J{\"{o}}rg Flum and
               Martin Grohe},
  title     = {The Parameterized Complexity of Counting Problems},
  journal   = {{SIAM} J. Comput.},
  volume    = {33},
  number    = {4},
  pages     = {892--922},
  year      = {2004},
  url       = {https://doi.org/10.1137/S0097539703427203},
  doi       = {10.1137/S0097539703427203},
  timestamp = {Thu, 08 Jun 2017 08:59:24 +0200},
  biburl    = {https://dblp.org/rec/journals/siamcomp/FlumG04.bib},
  bibsource = {dblp computer science bibliography, https://dblp.org}
}

@inproceedings{BulatovModp,
  author    = {Amirhossein Kazeminia and
               Andrei A. Bulatov},
  editor    = {Peter Rossmanith and
               Pinar Heggernes and
               Joost{-}Pieter Katoen},
  title     = {Counting Homomorphisms Modulo a Prime Number},
  booktitle = {44th International Symposium on Mathematical Foundations of Computer
               Science, {MFCS} 2019, August 26-30, 2019, Aachen, Germany},
  series    = {LIPIcs},
  volume    = {138},
  pages     = {59:1--59:13},
  publisher = {Schloss Dagstuhl - Leibniz-Zentrum f{\"{u}}r Informatik},
  year      = {2019},
  url       = {https://doi.org/10.4230/LIPIcs.MFCS.2019.59},
  doi       = {10.4230/LIPIcs.MFCS.2019.59},
  timestamp = {Fri, 30 Aug 2019 13:15:07 +0200},
  biburl    = {https://dblp.org/rec/conf/mfcs/KazeminiaB19.bib},
  bibsource = {dblp computer science bibliography, https://dblp.org}
}

@inproceedings{goebeltrees,
  author    = {Andreas G{\"{o}}bel and
               J. A. Gregor Lagodzinski and
               Karen Seidel},
  editor    = {Igor Potapov and
               Paul G. Spirakis and
               James Worrell},
  title     = {Counting Homomorphisms to Trees Modulo a Prime},
  booktitle = {43rd International Symposium on Mathematical Foundations of Computer
               Science, {MFCS} 2018, August 27-31, 2018, Liverpool, {UK}},
  series    = {LIPIcs},
  volume    = {117},
  pages     = {49:1--49:13},
  publisher = {Schloss Dagstuhl - Leibniz-Zentrum f{\"{u}}r Informatik},
  year      = {2018},
  url       = {https://doi.org/10.4230/LIPIcs.MFCS.2018.49},
  doi       = {10.4230/LIPIcs.MFCS.2018.49},
  timestamp = {Tue, 11 Feb 2020 15:52:14 +0100},
  biburl    = {https://dblp.org/rec/conf/mfcs/0001L018.bib},
  bibsource = {dblp computer science bibliography, https://dblp.org}
}

@PhdThesis{KelkThesis,
	author = {Steven Kelk},
	title = {{On the relative complexity of approximately counting $H$-colourings}},
	school = {Warwick University},
	year = {2003}
}

@article{BIShard,
  author    = {Andreas Galanis and
               Leslie Ann Goldberg and
               Mark Jerrum},
  title     = {Approximately Counting {H}-Colorings is \#{BIS}-Hard},
  journal   = {{SIAM} J. Comput.},
  volume    = {45},
  number    = {3},
  pages     = {680--711},
  year      = {2016},
  url       = {https://doi.org/10.1137/15M1020551},
  doi       = {10.1137/15M1020551},
  timestamp = {Sat, 27 May 2017 14:22:58 +0200},
  biburl    = {https://dblp.org/rec/journals/siamcomp/GalanisGJ16.bib},
  bibsource = {dblp computer science bibliography, https://dblp.org}
}

@article{ListApx,
  author    = {Andreas Galanis and
               Leslie Ann Goldberg and
               Mark Jerrum},
  title     = {A Complexity Trichotomy for Approximately Counting List \emph{H}-Colorings},
  journal   = {{ACM} Trans. Comput. Theory},
  volume    = {9},
  number    = {2},
  pages     = {9:1--9:22},
  year      = {2017},
  url       = {https://doi.org/10.1145/3037381},
  doi       = {10.1145/3037381},
  timestamp = {Mon, 08 Jun 2020 22:18:53 +0200},
  biburl    = {https://dblp.org/rec/journals/toct/GalanisGJ17.bib},
  bibsource = {dblp computer science bibliography, https://dblp.org}
}

@InProceedings{Borgs,
author="Borgs, Christian
and Chayes, Jennifer
and Lov{\'a}sz, L{\'a}szl{\'o}
and S{\'o}s, Vera T.
and Vesztergombi, Katalin",
editor="Klazar, Martin
and Kratochv{\'i}l, Jan
and Loebl, Martin
and Matou{\v{s}}ek, Ji{\v{r}}{\'i}
and Valtr, Pavel
and Thomas, Robin",
title="Counting Graph Homomorphisms",
booktitle="Topics in Discrete Mathematics",
year="2006",
publisher="Springer Berlin Heidelberg",
address="Berlin, Heidelberg",
pages="315--371",
abstract="Counting homomorphisms between graphs (often with weights) comes up in a wide variety of areas, including extremal graph theory, properties of graph products, partition functions in statistical physics and property testing of large graphs.",
isbn="978-3-540-33700-3"
}

	@article{BW,
  author    = {Graham R. Brightwell and
               Peter Winkler},
  title     = {Graph Homomorphisms and Phase Transitions},
  journal   = {J. Comb. Theory, Ser. {B}},
  volume    = {77},
  number    = {2},
  pages     = {221--262},
  year      = {1999},
  url       = {https://doi.org/10.1006/jctb.1999.1899},
  doi       = {10.1006/jctb.1999.1899},
  timestamp = {Mon, 23 Oct 2017 09:32:53 +0200},
  biburl    = {https://dblp.org/rec/journals/jct/BrightwellW99.bib},
  bibsource = {dblp computer science bibliography, https://dblp.org}
}

	@inproceedings{Valiant2006,
		title={{Accidental algorthims}},
		author={Valiant, Leslie G},
		booktitle={2006 47th Annual IEEE Symposium on Foundations of Computer Science (FOCS'06)},
		pages={509--517},
		year={2006},
		organization={IEEE}
	}
	
	@article{Faben2008,
		title={The complexity of counting solutions to generalised satisfiability problems modulo k},
		author={Faben, John},
		journal={arXiv preprint arXiv:0809.1836},
		year={2008}
	}

	@article{FJ,
		author = {Faben, John and Jerrum, Mark},
		title = {{The Complexity of Parity Graph Homomorphism: An Initial Investigation}},
		year = {2015},
		pages = {35--57},
		doi = {10.4086/toc.2015.v011a002},
		publisher = {Theory of Computing},
		journal = {Theory of Computing},
		volume = {11},
		number = {2}
	}
	 
	@article{squarefree,
		title={Counting homomorphisms to square-free graphs, modulo 2},
		author={G{\"o}bel, Andreas and Goldberg, Leslie Ann and Richerby, David},
		journal={ACM Transactions on Computation Theory (TOCT)},
		volume={8},
		number={3},
		pages={12},
		year={2016},
		publisher={ACM}
	}
	
	@article{Duffin,
		title={Topology of series-parallel networks},
		author={Duffin, Richard J},
		journal={Journal of Mathematical Analysis and Applications},
		volume={10},
		number={2},
		pages={303--318},
		year={1965},
		publisher={Academic Press}
	}

	@article {DG,
		author = {Dyer, Martin and Greenhill, Catherine},
		title = {The complexity of counting graph homomorphisms},
		journal = {Random Structures \& Algorithms},
		volume = {17},
		number = {3-4},
		publisher = {John Wiley \& Sons, Inc.},
		issn = {1098-2418}, 
		pages = {260--289},
		year = {2000},		
	}
		
			@article{DellHMTW14,
			author    = {Holger Dell and
				Thore Husfeldt and
				D{\'{a}}niel Marx and
				Nina Taslaman and
				Martin Wahlen},
      title     = {Exponential Time Complexity of the Permanent and the {T}utte Polynomial},
			journal   = {{ACM} Trans. Algorithms},
			volume    = {10},
			number    = {4},
			pages     = {21:1--21:32},
			year      = {2014},
			url       = {https://doi.org/10.1145/2635812},
			doi       = {10.1145/2635812},
			timestamp = {Tue, 06 Nov 2018 12:51:20 +0100},
			biburl    = {https://dblp.org/rec/journals/talg/DellHMTW14.bib},
			bibsource = {dblp computer science bibliography, https://dblp.org}
		}

		@article{ChordalBipartite99,
		author    = {Martin Charles Golumbic and
			Clinton F. Goss},
		title     = {{Perfect Elimination and Chordal Bipartite Graphs}},
		journal   = {Journal of Graph Theory},
		volume    = {2},
		number    = {2},
		pages     = {155--163},
		year      = {1978},
		url       = {https://doi.org/10.1002/jgt.3190020209},
		doi       = {10.1002/jgt.3190020209},
		timestamp = {Wed, 14 Nov 2018 10:17:30 +0100},
		biburl    = {https://dblp.org/rec/journals/jgt/GolumbicG78.bib},
		bibsource = {dblp computer science bibliography, https://dblp.org}
		}
		
		@article{ImpagliazzoP01,
			author    = {Russell Impagliazzo and
				Ramamohan Paturi},
			title     = {{On the Complexity of k-SAT}},
			journal   = {J. Comput. Syst. Sci.},
			volume    = {62},
			number    = {2},
			pages     = {367--375},
			year      = {2001},
			url       = {https://doi.org/10.1006/jcss.2000.1727},
			doi       = {10.1006/jcss.2000.1727},
			timestamp = {Wed, 14 Nov 2018 10:33:59 +0100},
			biburl    = {https://dblp.org/rec/journals/jcss/ImpagliazzoP01.bib},
			bibsource = {dblp computer science bibliography, https://dblp.org}
		}

	@book{DiestelGT,
		author    = {Reinhard Diestel},
		title     = {Graph Theory, 5th Edition},
		series    = {Graduate texts in mathematics},
		volume    = {173},
		publisher = {Springer},
		year      = {2016},
		isbn      = {978-3-662-53621-6}}
		
	@article{KolaitisV00,
		author    = {Phokion G. Kolaitis and
			Moshe Y. Vardi},
		title     = {Conjunctive-Query Containment and Constraint Satisfaction},
		journal   = {J. Comput. Syst. Sci.},
		volume    = {61},
		number    = {2},
		pages     = {302--332},
		year      = {2000},
		url       = {https://doi.org/10.1006/jcss.2000.1713},
		doi       = {10.1006/jcss.2000.1713},
		timestamp = {Wed, 14 Nov 2018 10:33:57 +0100},
		biburl    = {https://dblp.org/rec/journals/jcss/KolaitisV00.bib},
		bibsource = {dblp computer science bibliography, https://dblp.org}
	}

	@inproceedings{GroheSS01,
		author    = {Martin Grohe and
			Thomas Schwentick and
			Luc Segoufin},
		editor    = {Jeffrey Scott Vitter and
			Paul G. Spirakis and
			Mihalis Yannakakis},
		title     = {When is the evaluation of conjunctive queries tractable?},
		booktitle = {Proceedings on 33rd Annual {ACM} Symposium on Theory of Computing,
			July 6-8, 2001, Heraklion, Crete, Greece},
		pages     = {657--666},
		publisher = {{ACM}},
		year      = {2001},
		url       = {https://doi.org/10.1145/380752.380867},
		doi       = {10.1145/380752.380867},
		timestamp = {Tue, 06 Nov 2018 11:07:06 +0100},
		biburl    = {https://dblp.org/rec/conf/stoc/GroheSS01.bib},
		bibsource = {dblp computer science bibliography, https://dblp.org}
	}

	@article{Grohe07,
		author    = {Martin Grohe},
		title     = {The complexity of homomorphism and constraint satisfaction problems
			seen from the other side},
		journal   = {J. {ACM}},
		volume    = {54},
		number    = {1},
		pages     = {1:1--1:24},
		year      = {2007},
		url       = {https://doi.org/10.1145/1206035.1206036},
		doi       = {10.1145/1206035.1206036},
		timestamp = {Wed, 14 Nov 2018 10:35:25 +0100},
		biburl    = {https://dblp.org/rec/journals/jacm/Grohe07.bib},
		bibsource = {dblp computer science bibliography, https://dblp.org}
	}
	
	@inproceedings{RothW20,
		author    = {Marc Roth and
			Philip Wellnitz},
		editor    = {Shuchi Chawla},
		title     = {Counting and Finding Homomorphisms is Universal for Parameterized
			Complexity Theory},
		booktitle = {Proceedings of the 2020 {ACM-SIAM} Symposium on Discrete Algorithms,
			{SODA} 2020, Salt Lake City, UT, USA, January 5-8, 2020},
		pages     = {2161--2180},
		publisher = {{SIAM}},
		year      = {2020},
		url       = {https://doi.org/10.1137/1.9781611975994.133},
		doi       = {10.1137/1.9781611975994.133},
		timestamp = {Mon, 15 Jun 2020 17:00:16 +0200},
		biburl    = {https://dblp.org/rec/conf/soda/RothW20.bib},
		bibsource = {dblp computer science bibliography, https://dblp.org}
	}

@inproceedings{CurticapeanDM17,
	author    = {Radu Curticapean and
		Holger Dell and
		D{\'{a}}niel Marx},
	editor    = {Hamed Hatami and
		Pierre McKenzie and
		Valerie King},
	title     = {Homomorphisms are a good basis for counting small subgraphs},
	booktitle = {Proceedings of the 49th Annual {ACM} {SIGACT} Symposium on Theory
		of Computing, {STOC} 2017, Montreal, QC, Canada, June 19-23, 2017},
	pages     = {210--223},
	publisher = {{ACM}},
	year      = {2017},
	url       = {https://doi.org/10.1145/3055399.3055502},
	doi       = {10.1145/3055399.3055502},
	timestamp = {Sun, 02 Jun 2019 21:10:31 +0200},
	biburl    = {https://dblp.org/rec/conf/stoc/CurticapeanDM17.bib},
	bibsource = {dblp computer science bibliography, https://dblp.org}
}

@inproceedings{Karp72,
	author    = {Richard M. Karp},
	editor    = {Raymond E. Miller and
		James W. Thatcher},
	title     = {Reducibility Among Combinatorial Problems},
	booktitle = {Proceedings of a symposium on the Complexity of Computer Computations,
		held March 20-22, 1972, at the {IBM} Thomas J. Watson Research Center,
		Yorktown Heights, New York, {USA}},
	series    = {The {IBM} Research Symposia Series},
	pages     = {85--103},
	publisher = {Plenum Press, New York},
	year      = {1972},
	url       = {https://doi.org/10.1007/978-1-4684-2001-2\_9},
	doi       = {10.1007/978-1-4684-2001-2\_9},
	timestamp = {Fri, 20 Dec 2019 11:33:37 +0100},
	biburl    = {https://dblp.org/rec/conf/coco/Karp72.bib},
	bibsource = {dblp computer science bibliography, https://dblp.org}
}

@article{HellN90,
	author    = {Pavol Hell and
		Jaroslav Nesetril},
	title     = {On the complexity of \emph{H}-coloring},
	journal   = {J. Comb. Theory, Ser. {B}},
	volume    = {48},
	number    = {1},
	pages     = {92--110},
	year      = {1990},
	url       = {https://doi.org/10.1016/0095-8956(90)90132-J},
	doi       = {10.1016/0095-8956(90)90132-J},
	timestamp = {Wed, 14 Nov 2018 10:50:06 +0100},
	biburl    = {https://dblp.org/rec/journals/jct/HellN90.bib},
	bibsource = {dblp computer science bibliography, https://dblp.org}
}

@article{DiazST02,
	author    = {Josep D{\'{\i}}az and
		Maria J. Serna and
		Dimitrios M. Thilikos},
  title     = {Counting {H}-colorings of partial k-trees},
	journal   = {Theor. Comput. Sci.},
	volume    = {281},
	number    = {1-2},
	pages     = {291--309},
	year      = {2002},
	url       = {https://doi.org/10.1016/S0304-3975(02)00017-8},
	doi       = {10.1016/S0304-3975(02)00017-8},
	timestamp = {Wed, 14 Jun 2017 20:32:06 +0200},
	biburl    = {https://dblp.org/rec/journals/tcs/DiazST02.bib},
	bibsource = {dblp computer science bibliography, https://dblp.org}
}

@article{cactus,
	author    = {Andreas G{\"{o}}bel and
		Leslie Ann Goldberg and
		David Richerby},
	title     = {The complexity of counting homomorphisms to cactus graphs modulo 2},
	journal   = {{ACM} Trans. Comput. Theory},
	volume    = {6},
	number    = {4},
	pages     = {17:1--17:29},
	year      = {2014},
	url       = {https://doi.org/10.1145/2635825},
	doi       = {10.1145/2635825},
	timestamp = {Mon, 08 Jun 2020 22:18:57 +0200},
	biburl    = {https://dblp.org/rec/journals/toct/0001GR14.bib},
	bibsource = {dblp computer science bibliography, https://dblp.org}
}

	@inproceedings{ChenCD19,
		author    = {Hubie Chen and
			Radu Curticapean and
			Holger Dell},
		editor    = {Ignasi Sau and
			Dimitrios M. Thilikos},
		title     = {The Exponential-Time Complexity of Counting (Quantum) Graph Homomorphisms},
		booktitle = {Graph-Theoretic Concepts in Computer Science - 45th International
			Workshop, {WG} 2019, Vall de N{\'{u}}ria, Spain, June 19-21,
			2019, Revised Papers},
		series    = {Lecture Notes in Computer Science},
		volume    = {11789},
		pages     = {364--378},
		publisher = {Springer},
		year      = {2019},
		url       = {https://doi.org/10.1007/978-3-030-30786-8\_28},
		doi       = {10.1007/978-3-030-30786-8\_28},
		timestamp = {Tue, 17 Sep 2019 13:02:56 +0200},
		biburl    = {https://dblp.org/rec/conf/wg/ChenCD19.bib},
		bibsource = {dblp computer science bibliography, https://dblp.org}
	}
	
	}

@article{FGZRet,
	title={The Complexity of Approximately Counting Retractions to Square-Free Graphs},
	author={Focke, Jacob and Goldberg, Leslie Ann and {\v{Z}}ivn{\'y}, Stanislav},
	journal={arXiv preprint arXiv:1907.02319},
	year={2019}
}	
	
\end{filecontents}

\usepackage[hidelinks, bookmarks=true]{hyperref}

\title{Counting Homomorphisms to $K_4$-minor-free Graphs, modulo 2\thanks{A preliminary version of this paper (without the proofs) appeared in the proceedings of SODA 2021. The research leading to these results has received funding from the European Research Council (ERC) under the European Union's Horizon 2020 research and innovation programme (grant agreement No 714532). Jacob Focke has received funding from the Engineering and Physical Sciences Research Council (grant ref: EP/M508111/1). Stanislav \Zivny\ was supported by a Royal Society University Research Fellowship. The paper reflects only the authors' views and not the views of the ERC or the European Commission. The European Union is not liable for any use that may be made of the information contained therein.}}
\author{Jacob Focke\thanks{Department of Computer Science, University of Oxford} \and Leslie Ann Goldberg$^\dagger$ \and Marc Roth\thanks{Merton College, University of Oxford} \and Stanislav {\v{Z}}ivn{\'y}$^\dagger$}

\date{16 July 2021\vspace{-5mm}}

\begin{document}
	\maketitle

\begin{abstract}
We study the problem of computing the parity of the number of homomorphisms from an input graph $G$ to a fixed graph $H$. Faben and Jerrum  [ToC'15]
introduced an explicit criterion on the graph $H$ and conjectured that, if satisfied, the problem is solvable in polynomial time and, otherwise, 
the problem is complete for the complexity class $\parP$ of parity problems. 
		
We verify their conjecture for all graphs $H$ that exclude the complete graph on $4$ vertices as a minor. Further, we rule out the existence of a subexponential-time algorithm for the $\parP$-complete cases, assuming the randomised Exponential Time Hypothesis.
		
Our proofs introduce a novel method of deriving hardness from globally defined substructures of the fixed graph $H$. Using this, we subsume all  
prior progress towards  resolving the conjecture (Faben and Jerrum [ToC'15]; G\"obel, Goldberg and Richerby [ToCT'14,'16]). 
As special cases, our machinery also yields a proof of the conjecture for graphs 
with maximum degree at most  $3$, as well as a full classification for the problem of counting list homomorphisms, modulo $2$.\end{abstract}

\section{Introduction}

A homomorphism 
\label{page:homomorphism}from a graph~$G$ to a graph~$H$
is a map~$h$ from~$V(G)$ to~$V(H)$ that preserves edges in the sense that, for every edge
$\{u,v\}$ of~$G$, the image $\{h(u),h(v)\}$ is an edge of~$H$.
Many combinatorial structures can be modelled using graph homomorphisms.
For this reason, graph homomorphisms 
are  ubiquitous in both classical and modern-day complexity theory with applications in areas such as
constraint satisfaction problems~\cite{Grohe07},
evaluations of spin systems in statistical physics~\cite{BW,Borgs},
database theory~\cite{KolaitisV00,GroheSS01}, and parameterised algorithms~\cite{CurticapeanDM17,RothW20}.
The computational problems of finding and counting homomorphisms 
are therefore amongst   the most well-studied computational problems; the analysis of their complexity dates back to the intractability result for computing the chromatic number, one of Karp's original 21 $\NP$-complete problems~\cite{Karp72}.
More recent work builds on  Hell and \nesetril's  celebrated dichotomy theorem~\cite{HellN90}, 
which shows that determining whether an input graph~$G$ has a homomorphism to a fixed graph~$H$
is polynomial-time solvable if $H$ is bipartite, or if $H$ has a self-loop. For any other graph~$H$, they show that the problem is NP-complete.

This paper focusses on the problem of counting homomorphisms.
Applications of this problem are discussed in \cite{Borgs}. The complexity of the problem 
has   been the focus of much research  (see, for example, \cite{DG, BIShard, ListApx,KelkThesis,ChenCD19}).\footnote{
There is also a huge literature on generalisations of this problem such as counting \emph{weighted} homomorphisms
(computing partition functions of spin systems or holant problems), counting homomorphisms to \emph{directed} graphs, counting partition functions
of constraint satisfaction problems, and counting homomorphisms with restrictions such as surjectivity.
These generalisations and restrictions are beyond the scope of this paper.}

The complexity of counting homomorphisms was initiated by Dyer and Greenhill~\cite{DG}, who   gave 
a complete dichotomy theorem. The complexity of counting the homomorphisms from an input graph~$G$ to a fixed graph~$H$
is polynomial-time solvable if  every component of~$H$ is either a complete bipartite graph with no self-loops or a complete graph in which every vertex has a self-loop. For any other graph~$H$, they show that the problem is \#P-complete.

Given that (exactly) counting the homomorphisms to~$H$ is \#P-complete for almost every graph~$H$,
research has focussed on restrictions of the problem. Instead of  
determining the exact number of homomorphisms from~$G$
to~$H$,  compute an approximation to this number \cite{BIShard, ListApx, KelkThesis}, 
or determine whether it is odd or even \cite{Faben2008, FJ, squarefree, cactus}, 
or determine its value modulo any prime~$p$ \cite{goebeltrees, BulatovModp}.
Alternatively, consider the parameterised complexity \cite{FlumGrohe}.
 For example, the problem can be studied when the input~$G$ is assumed to have bounded treewidth~\cite{DiazST02}
or when $H$ has a bounded treewidth, for example when $H$ is a tree \cite{FJ, goebeltrees,BulatovModp,MarkTrees}.

Restricting the input~$G$ to have bounded treewidth 
makes counting homomorphisms tractable --- given this restriction, the problem is solvable in polynomial time for any 
fixed~$H$ \cite[Corollary 5.1]{DiazST02}. 
Restricting the fixed target graph~$H$ to have bounded treewidth 
 leads to a more nuanced complexity classification, even for treewidth~$1$ (when $H$ is a tree).
 For example, the complexity of approximately counting homomorphisms to a tree~$H$ has
 still not been fully resolved, and it is known that different trees lead to vastly different complexities.
 For example,  approximately counting homomorphisms to the very simple tree that is a path of length~$3$
 is equivalent to \#BIS, which is the canonical open problem in approximate counting \cite{APred}.
 Moreover, \cite{MarkTrees} shows that for every integer~$q\geq 3$ there is a tree~$J_q$ such that approximately counting homomorphisms to~$J_q$
 is equivalent to classic problem of
 approximating the partition function of the $q$-state Potts model from statistical physics.
 Also, it shows that there are trees~$H$ such that approximately counting homomorphisms to~$H$ is NP-hard.

\subsection{Counting modulo 2 and Past Work}

Faben and Jerrum~\cite{FJ} combined  the restriction that $H$ is a tree with the 
restriction that counting is modulo~$2$.
Their result will be important for our work, so we next give the definitions that we need to state their result.

 The complexity class $\parP$  \label{page:parP} \cite{PZ82:Counting, GP86:Parallel}  contains all problems of the form
``compute $f(x) \bmod 2$'' such that computing $f(x)$ is a problem in \#P.
Toda~\cite{Tod91:PP-PH} 
 showed that there is   a randomised polynomial-time reduction
from every
problem in the polynomial hierarchy to 
some problem in~$\parP$.
Thus, $\parP$-hardness is viewed as a stronger kind of intractability than NP-hardness.
We use $\parHom{H}$ \label{page:ParHom}
to denote the computational problem
of computing the number of homomorphisms from~$G$ to~$H$, modulo~$2$, given
an input graph~$G$.
It is  immediate from the definition that $\parHom{H}$ is in $\parP$.

 The \emph{involution-free reduction} of a graph~$H$, from \cite{FJ},
 is defined as follows. 
 An \emph{involution}\label{page:involution}~$\sigma$ of~$H$ is an automorphism of~$H$ whose order is  at most~$2$ (that is,  
	$\sigma\circ \sigma$ is the identity permutation).
	An involution is   \emph{non-trivial} if it is not the identity permutation.  
 A graph $H$ is \emph{involution-free}\label{page:involutionfree} if it  has no non-trivial involutions. 
  $H^\sigma$ denotes the subgraph of~$H$ induced by the fixed points of~$\sigma$
(the vertices $v$ with $\sigma(v)=v$).
We write $H\rightarrow K$ if there is a non-trivial involution~$\sigma$ of~$H$ such that $K= H^\sigma$.
The relation $\rightarrow^*$ is the reflexive-transitive closure of the relation~$\rightarrow$.
Thus, $H \rightarrow^* K$ means that either $K=H$, or there is 
a positive integer~$j$ and
a sequence $H_1,\ldots,H_j$ of graphs such that $H=H_1$, $K=H_j$
and, for all $i\in[j]$, $H_i \rightarrow H_{i+1}$.
Faben and Jerrum~\cite[Theorem~3.7]{FJ} showed that every graph $H$ has, up to isomorphism, 
exactly one involution-free graph~$H^*$ such that $H \rightarrow^*H^*$.  This graph $H^\ast$ 
(labelled in a canonical way) is   the \emph{involution-free reduction}\label{page:involutionfreereduction} of $H$.
The relevance of the involution-free reduction is given by the following theorem.
\begin{thm}[{\cite[Theorem 3.4]{FJ}}]\label{thm:ReductionSuffices}\label{thm:star}
For all graphs $G$ and $H$,  the number of homomorphisms from~$G$ to~$H$
has the same parity as the number of homomorphisms from~$G$ to~$H^*$. 
\end{thm}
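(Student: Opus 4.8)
The plan is to exploit the natural action of an involution of the target graph on the set of homomorphisms into it, together with a simple orbit-counting argument. Fix graphs $G$ and $H$, and write $\mathrm{Hom}(G,H)$ for the set of homomorphisms from $G$ to $H$. First I would handle a single reduction step: suppose $\sigma$ is a non-trivial involution of $H$, so that $H \to H^\sigma$. Post-composition with $\sigma$ gives a map $\Phi_\sigma\colon \mathrm{Hom}(G,H)\to\mathrm{Hom}(G,H)$ defined by $\Phi_\sigma(h)=\sigma\circ h$. This is well-defined because $\sigma$ is an automorphism of $H$ (hence $\sigma\circ h$ still preserves edges), and $\Phi_\sigma\circ\Phi_\sigma$ is the identity because $\sigma\circ\sigma$ is. Therefore $\Phi_\sigma$ is itself an involution of the set $\mathrm{Hom}(G,H)$, so it partitions this set into orbits of size $1$ and $2$; consequently $|\mathrm{Hom}(G,H)|$ has the same parity as the number of fixed points of $\Phi_\sigma$.

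Next I would identify the fixed points of $\Phi_\sigma$. A homomorphism $h$ satisfies $\sigma\circ h=h$ if and only if $\sigma(h(v))=h(v)$ for every vertex $v$ of $G$, i.e.\ if and only if the image of $h$ consists entirely of fixed points of $\sigma$. Since $H^\sigma$ is by definition the subgraph of $H$ induced by the fixed points of $\sigma$, such maps are precisely the homomorphisms from $G$ to $H^\sigma$ --- here it is important that $H^\sigma$ is an \emph{induced} subgraph, so that a map with image in $V(H^\sigma)$ is a homomorphism to $H^\sigma$ exactly when it is a homomorphism to $H$. Combining this with the previous paragraph yields $|\mathrm{Hom}(G,H)|\equiv|\mathrm{Hom}(G,H^\sigma)|\pmod 2$, which is the desired statement for one application of $\to$.

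Finally I would iterate along the reduction sequence. If $H^*=H$ there is nothing to prove. Otherwise, by the definition of $\rightarrow^*$ there is a finite sequence of graphs $H=H_1\to H_2\to\cdots\to H_j=H^*$; applying the single-step result to each $H_i\to H_{i+1}$ and chaining the resulting congruences gives $|\mathrm{Hom}(G,H)|\equiv|\mathrm{Hom}(G,H^*)|\pmod 2$. (Finiteness of the sequence is part of the Faben--Jerrum framework quoted above; alternatively it follows directly, since a non-trivial involution moves at least two vertices, so every $\to$-step strictly decreases the number of vertices.)

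I do not expect a serious obstacle: the whole argument is the observation that an involution of $H$ lifts to an involution of $\mathrm{Hom}(G,H)$ whose fixed-point set is exactly $\mathrm{Hom}(G,H^\sigma)$, followed by transitivity of congruence modulo $2$ along the reduction sequence. The only point needing a little care is the identification of the fixed points of $\Phi_\sigma$, where one must use that $H^\sigma$ is induced rather than merely a subgraph; one should also note explicitly that $\Phi_\sigma$ genuinely maps $\mathrm{Hom}(G,H)$ into itself (and not merely into the set of all maps $V(G)\to V(H)$) before invoking the orbit argument.
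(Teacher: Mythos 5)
Your proof is correct and is essentially the standard argument from Faben and Jerrum's paper, which this paper simply cites (as Theorem 3.4 of \cite{FJ}) without reproducing a proof. The orbit-counting via the post-composition involution $\Phi_\sigma$, the identification of its fixed points with $\hom{G}{H^\sigma}$ using that $H^\sigma$ is an \emph{induced} subgraph, and the iteration along the finite $\rightarrow$-sequence are all exactly what is needed.
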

Thus, the computational problem $\parHom{H}$ reduces to~$\parHom{H^*}$.
Faben and Jerrum made the following conjecture~\cite{FJ}.
  
 \begin{conj} 
 [Faben-Jerrum Conjecture]  
 \label{conj:FJ}  
 Let $H$ be a  graph.  If its involution-free
 reduction~$H^*$ has at most one vertex, then $\parHom{H}$  can be solved in polynomial time.
 Otherwise, $\parHom{H}$
 is $\parP$-complete.\end{conj}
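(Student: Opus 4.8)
The statement has two halves, and only the first is routine. For the \emph{tractability} half, suppose $H^*$ has at most one vertex. By Theorem~\ref{thm:star} the function $G\mapsto\hom{G}{H}\bmod 2$ equals $G\mapsto\hom{G}{H^*}\bmod 2$, and the latter is trivial to compute: if $H^*$ is the empty graph the count is $1$ when $G$ has no vertices and $0$ otherwise; if $H^*$ is a single vertex with a self-loop the count is always $1$; and if $H^*$ is a single vertex without a self-loop the count is $1$ when $G$ is edgeless and $0$ otherwise. Hence $\parHom{H}\in\P$. The real content is the \emph{hardness} half, and everything below is a roadmap for it.

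The first move is to reduce to involution-free targets. By Theorem~\ref{thm:star}, $\parHom{H}$ and $\parHom{H^*}$ are literally the same computational problem, so it suffices to prove: for every involution-free graph $H$ with at least two vertices, $\parHom{H}$ is $\parP$-complete. Membership in $\parP$ is immediate, so the goal is $\parP$-hardness. The plan is to follow the template of Faben and Jerrum \cite{FJ}, as extended by G\"obel, Goldberg and Richerby \cite{squarefree,cactus}: fix a small ``seed'' problem known to be $\parP$-hard --- the natural candidates are $\parHom{K_q}$ for $q\ge 3$ (counting proper $q$-colourings mod $2$) or the parity of the number of independent sets, $\paris$ --- and reduce it to $\parHom{H}$ by constructing, in polynomial time, gadget graphs that, attached to an input $G$, (i) \emph{pin} chosen vertices of $G$ to chosen vertices or chosen subsets of $V(H)$, and (ii) \emph{simulate} the edges or constraints of the seed instance inside $H$. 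The two algebraic workhorses are that homomorphism counts multiply over disjoint unions and tensor constructions (so one may pass to ``powers'' of $H$ to sharpen gadgets) and that working modulo $2$ cancels pairs of objects related by an involution --- which is exactly why the hypothesis that $H$ be involution-free is the right one, since it prevents the cancellations one engineers from being accidentally annihilated. Concretely, one wants to locate inside $H$ a \emph{hardness certificate} --- a long induced path, an induced even cycle, the hardness gadgets used for square-free and cactus graphs, or the globally defined substructures introduced in this paper --- and then show that every involution-free $H$ on at least two vertices contains at least one such certificate.

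The main obstacle, and the reason the conjecture remains open in general, is precisely this last step: there is no known \emph{uniform} structural dichotomy on involution-free graphs. Each resolved case --- trees \cite{FJ}, square-free graphs \cite{squarefree}, cactus graphs \cite{cactus}, and, in this paper, $K_4$-minor-free graphs and graphs of maximum degree at most $3$ --- rests on a bespoke decomposition of $H$ that becomes unavailable once $H$ is dense or highly connected. A proof of the full conjecture would need either (a) a single gadget construction robust enough to extract a hardness certificate from the neighbourhood and product structure of an \emph{arbitrary} involution-free graph, or (b) a structural theorem asserting that every involution-free graph with at least two vertices either retracts onto a small hard target or contains one of finitely many hardness gadgets. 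Establishing (a) or (b) --- in particular, handling involution-free graphs that admit no tree-like or minor-closed decomposition --- is where essentially all the difficulty lies; by comparison the pinning-and-simulation algebra of (i)--(ii) is mechanical. The present paper should be read as carrying out this programme for the special case of $K_4$-minor-free $H$, where a tree-decomposition of width $2$ supplies the structural control that the general case lacks.
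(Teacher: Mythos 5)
The statement you were given is the Faben--Jerrum \emph{Conjecture}, which the paper does not prove in full and which remains open, so your assessment of the situation is correct. Your tractability argument (via Theorem~\ref{thm:star} together with the triviality of counting homomorphisms into a graph with at most one vertex) is exactly what the paper uses inside the proof of Theorem~\ref{thm:main}, and your roadmap for the hardness half --- reduce to involution-free $H$ with at least two vertices, reduce $\paris$ to $\parRet{H}$ by locating a hardness gadget in $H$, then lift to $\parHom{H}$ via the pinning theorem --- is precisely the programme the paper carries out, but only for the $K_4$-minor-free case (and, in Section~\ref{sec:degreeAtmost3}, for maximum degree $3$), which is the full extent of what is actually established.
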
  
  
The following progress has been made on the Faben-Jerrum conjecture.
\begin{myitemize}
\item Faben and Jerrum \cite[Theorem 3.8, Theorem 6.1]{FJ} proved the conjecture for the case where 
every connected component of $H$ is a tree.
\item \Goebel, Goldberg and Richerby \cite[Theorem 3.8]{cactus} proved the conjecture for the case where 
every connected component of $H$ is a 
cactus graph, which is a connected, simple graph in which every edge belongs to at most one cycle.
\item \Goebel, Goldberg and Richerby \cite[Theorem 1.2]{squarefree} proved the conjecture for the case where 
$H$ is a simple graph whose involution-free reduction
$H^*$
is square-free (meaning that it has no $4$-cycle).
\end{myitemize}
The cactus-graph result generalises the tree result, and is incomparable with the square-free result.

 \subsection{Contributions and Techniques}  
  
 Our first (and main) contribution is to prove the Faben-Jerrum conjecture for every simple graph $H$ that  does not have a $K_4$-minor.
 
 Here, $K_4$ \label{page:K4} denotes the complete graph with four vertices.
 The concept of graph minors is well known
 (see, for example, \cite{DiestelGT}). In short,
 a graph~$H$ is $K_4$-minor-free 
 if $K_4$ cannot be obtained from~$H$ by a sequence of vertex deletions, edge deletions, and
 edge contractions (removing any self-loops and multiple-edges that are formed by the contraction).  
 Graph classes based on excluded minors 
 form the basis of the graph structure theory of Robertson and Seymour (see \cite{Lovaszsurvey}).
  
The class of $K_4$-minor-free graphs is a rich and well-studied class. It is equivalent to  
the class of graphs with treewidth at most~$2$  
and it includes all outerplanar and series-parallel graphs \cite{Duffin}.

Both trees and cactus graphs are $K_4$-minor free, so
 our  result
subsumes the tree result of Faben and Jerrum \cite{FJ} and also the cactus-graph result of \Goebel\ et al.~\cite{cactus}.
 $K_4$-minor-free graphs can contain a $4$-cycle and, going the other way, square-free graphs
 can have a $K_4$-minor. Thus, our result is incomparable with the result of~\cite{squarefree}.
  (As a more minor contribution, our techniques also give a shorter proof of the result of~\cite{squarefree} --- see Remark~\ref{A1}.)  
  
 Our second contribution is to extend $\parP$-hardness, using the
 randomised version of the Exponential Time Hypothesis  
 of Impagliazzo and Paturi (rETH) \label{page:rETH} to rule out subexponential algorithms.
 In order to state our result, we first state the hypothesis.
  \begin{conj}[rETH,~\cite{ImpagliazzoP01}]
There is a positive constant $c$ such that no algorithm, deterministic or randomised, can decide 
the satisfiability of an $n$-variable
$3\textsc{-SAT}$ instance in time $\exp(c\cdot n)$.
\end{conj}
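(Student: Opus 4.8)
The plan here has to be an unusual one, because the statement in question is the randomised Exponential Time Hypothesis: it is a widely believed but unproven assumption, invoked in this paper only as a hypothesis from which conditional lower bounds on algorithms for $\parHom{H}$ are later derived. There is, accordingly, no proof to sketch in the ordinary sense, and I should be candid about that. A genuine proof of rETH would in particular yield $\P\neq\NP$ --- an $\exp(c\cdot n)$ time lower bound for $n$-variable $3\textsc{-SAT}$ immediately implies $3\textsc{-SAT}\notin\P$ --- and would go much further, separating $\NP$ from deterministic and even randomised subexponential time. No argument approaching this strength is known, so what follows describes the two natural routes and explains why each is blocked.

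The first route is to prove the lower bound directly: take an arbitrary algorithm $A$, deterministic or randomised, that purports to decide $n$-variable $3\textsc{-SAT}$ in time $2^{o(n)}$, and produce instances on which $A$ errs or exceeds its claimed running time. In the deterministic case a $2^{o(n)}$-time algorithm is simulated by subexponential-size circuits, so this is exactly the programme of unconditional circuit lower bounds, and it collides head-on with the classical barriers: relativisation (Baker, Gill and Solovay), the natural-proofs barrier (Razborov and Rudich), and algebrisation (Aaronson and Wigderson). These rule out, respectively, straightforward diagonalisation, any argument resting on a ``large and constructive'' combinatorial property of hard instances, and the obvious algebraic strengthenings --- which between them cover essentially every technique currently available for lower bounds of this kind. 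The randomised case is strictly harder: one must additionally either derandomise randomised subexponential time or supply a uniform hardness-versus-randomness trade-off, and even $\mathsf{BPP}=\P$ remains open.

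The second, more modest route is to reduce from an already-hard problem: find some language $L$ for which an unconditional $\exp(\Omega(n))$ time lower bound is known, together with a polynomial-time, linear-size reduction from $L$ to $3\textsc{-SAT}$ (the size-efficient reductions of exactly this flavour --- sparsification and the like --- are, incidentally, the tools used in the later sections of this paper to transfer rETH down to $\parHom{H}$). The hard part, and indeed the fatal one, is the base case: there is at present no problem whatsoever with a proved exponential-time lower bound against general uniform algorithms. The time-hierarchy theorems provide such bounds only for problems defined by diagonalisation, not for $\NP$-complete problems, and in any case those bounds are not robust under linear-size reductions.

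In short, the single obstacle I would expect to be insurmountable is the complete absence of any technique --- classical or modern --- for establishing superpolynomial, let alone exponential, uniform (and \emph{a fortiori} randomised) time lower bounds for $\NP$-complete problems. For this reason rETH should be treated, as it is throughout this paper, as a hypothesis rather than a theorem, and the contribution of the sections that follow lies not in proving it but in faithfully propagating it to $\parHom{H}$ via size-preserving reductions.
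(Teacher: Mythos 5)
You are right that this statement is the randomised Exponential Time Hypothesis, which the paper states as a conjecture (attributed to Impagliazzo and Paturi) and never attempts to prove; it is used only as a hypothesis from which the conditional $\exp(o(|G|))$ lower bounds for $\parHom{H}$ are derived via the linear-size reductions in Theorems~\ref{thm:hardness-gadget} and~\ref{thm:ETH_RetToHoms}. Your assessment — that no proof exists, that none should be expected given current barriers, and that the paper's contribution lies in propagating the hypothesis rather than establishing it — matches the paper's treatment exactly.
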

	  
 Using the rETH, we can now state our main result. Here (and in the rest of the paper)
 we denote the size of the input graph~$G$ as $|G|=|V(G)|+|E(G)|$.
  \newcommand{\statethmmain}{Let $H$ be a simple graph  whose involution-free reduction $H^\ast$ is $K_4$-minor free.
If~$H^*$ contains at most one vertex, then $\parHom{H}$ can be solved in polynomial time.
Otherwise,   $\parHom{H}$ is $\parP$-complete and, assuming the randomised Exponential Time Hypothesis,
it  cannot be solved in time $\exp(o(\abs{G}))$. }
 \begin{thm}\label{thm:main} \statethmmain
\end{thm}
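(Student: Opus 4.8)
The polynomial-time part is immediate: if $H^*$ has at most one vertex then, by Theorem~\ref{thm:ReductionSuffices}, $\hom{G}{H}\equiv\hom{G}{H^*}\pmod 2$, and the right-hand side is $0$ or $1$ and computable in linear time from $G$ and $H^*$. So the content is the hardness direction, and there I would begin with two standard reductions. Since $\parHom{H}$ and $\parHom{H^*}$ have the same parity on every input (Theorem~\ref{thm:ReductionSuffices}) and $H^*$ is its own involution-free reduction, I may assume $H=H^*$; and, using $\hom{G}{H}=\prod_i\hom{G_i}{H}$ over the components of $G$ together with $\hom{G}{H_1\sqcup H_2}=\hom{G}{H_1}+\hom{G}{H_2}$ for connected $G$, one may pass to the connected case (the disconnected case reduces to it by standard techniques, cf.~\cite{FJ,cactus}). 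The goal becomes: for every connected, involution-free, $K_4$-minor-free graph $H$ with at least two vertices, show that $\parHom{H}$ is $\parP$-complete and, under rETH, has no $\exp(o(\abs G))$-time algorithm.

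I would first assemble a toolkit. \textbf{(i) Pinning.} Since $H$ is involution-free, $\mathrm{Aut}(H)$ has odd order (Cauchy), hence odd-sized vertex orbits; consequently, for a distinguished vertex $u$ of $G$, the number of homomorphisms mapping $u$ into a fixed orbit agrees modulo $2$ with the number mapping $u$ to any chosen representative. The task is then to realise such ``pins'' by polynomial-size gadget graphs, extending the pinning lemmas of~\cite{FJ,cactus,squarefree}. \textbf{(ii) Products and disjoint unions.} From $\hom{G}{A\times B}=\hom{G}{A}\hom{G}{B}$ and the component identities above, $\parHom{H'}$ reduces to $\parHom{H}$ whenever $H'$ is a connected component of a tensor power of $H$ (after involution-free reduction), which lets me simplify the target. \textbf{(iii) The list relaxation.} I would set up $\parLHom{\cdot}$, where lists supply pinning for free, establish a dichotomy for it (this is where a cleaner, more CSP-like criterion lives, and it yields the list-homomorphism and maximum-degree-$3$ corollaries), and then transfer list-hardness to list-free hardness using the structure of $H$ via (i). Every reduction here enlarges the instance by at most a constant factor, which I would track for the last step.

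The heart of the argument is the structure of $K_4$-minor-free graphs, i.e.\ treewidth at most $2$: decompose $H$ along its cut vertices into blocks, each of which is $K_2$, a cycle, or a larger $2$-connected series-parallel graph. The simplest blocks --- edges and cycles --- are useless as hardness sources in isolation (for every $n$, $C_n^*$ has at most one vertex), so hardness must come either from a sufficiently rich single block or from an \emph{asymmetric} assembly of simpler blocks; in both cases it is involution-freeness that is essential. I would therefore aim to identify, in \emph{every} connected involution-free $K_4$-minor-free $H$ with $\ge 2$ vertices, a \emph{globally defined} asymmetric substructure --- for instance a distinguished shortest cycle, an extremal induced path between two canonically chosen vertices, or an extremal ``winding'' pattern around a cycle --- that (a) is constrained in a controllable way by every homomorphism and (b) can be combined with the toolkit to build the pinning gadgets of (i) and thereby simulate a known $\parP$-complete problem, the natural choice being $\paris$, the parity of independent sets, which is $\parP$-complete~\cite{Faben2008,FJ}. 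This requires a case analysis across block structures --- cactus-like $H$ (edges and edge-disjoint cycles); $H$ containing $4$-cycles, which is not covered by~\cite{squarefree}; and $H$ with genuinely series-parallel blocks, which is not covered by~\cite{cactus} --- and I expect this to be the main obstacle: producing \emph{uniform} hardness gadgets that simultaneously handle the $4$-cycle case and the dense series-parallel case is exactly the ``hardness from globally defined substructures'' advertised in the introduction, and the most delicate sub-cases are likely to be when $H$ is a tree of cycles or a generalised theta graph, where only a little asymmetry is available from the involution-free hypothesis.

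Finally, for the rETH statement I would trace sizes through the whole chain; this is secondary to the structural work but needs care. From rETH, the sparsification lemma gives that $3\textsc{-SAT}$ with $O(n)$ clauses has no $\exp(o(n))$-time (randomised) algorithm; a randomised isolation argument --- using, say, parity constraints on three variables, which are themselves $3$-CNF clauses --- then rules out $\exp(o(n))$-time algorithms for the parity version, and a standard reduction carries this, with only constant-factor blow-up, to a canonical sparse $\parP$-complete problem such as $\paris$ on bounded-degree graphs. Since every step of the reduction assembled above also enlarges the instance by only a constant factor, an $\exp(o(\abs G))$-time algorithm for $\parHom{H}$ would yield an $\exp(o(n))$-time algorithm for $3\textsc{-SAT}$, contradicting rETH. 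Combined with the $\parP$-completeness from the previous paragraph, this proves Theorem~\ref{thm:main}.
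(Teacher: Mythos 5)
Your proposal correctly identifies the outer shell of the paper's argument: reduce to $H^*$ via Theorem~\ref{thm:ReductionSuffices}, pass to a connected involution-free component, produce a hardness gadget, reduce $\paris$ to $\parRet{H}$ to $\parHom{H}$ via Theorems~\ref{thm:hardness-gadget} and~\ref{thm:ETH_RetToHoms}, and keep all reductions linear-size for the rETH corollary. That much matches the paper. But the substance of the theorem lives precisely in the step you flag as ``the main obstacle'' and then defer: producing a hardness gadget in every connected involution-free $K_4$-minor-free $H$ with at least two vertices. The paper spends Sections~\ref{sec:chordalBipartiteComps}--\ref{sec:k4MinorFreeMain} on this, and the key structural discovery is not visible in your outline. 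Specifically: Lemma~\ref{lem:main_chordal_bipartite} and Lemma~\ref{lem:main_biconnected} show that any biconnected component of such an $H$ that does not itself yield a hardness gadget must be an edge, a \emph{diamond}, an \emph{impasse}, or an \emph{obstruction} --- three concrete families of graphs that the argument isolates and names. When the local structure fails (which happens exactly when every biconnected component belongs to one of these families), the paper constructs a \emph{global} gadget by building a closed suitable subtree of the block-cut tree (Lemma~\ref{lem:findsuitablesubtree}) and either running a chordal-bipartite sequence argument through it (Lemma~\ref{lem:CaterpillarForBipChordalComponents}, Lemma~\ref{lem:suitableTreeHardnessNoObstruction}) or, when an obstruction is present, tracing a closed walk $\walk{T,B'}$ around the obstructions and feeding its walk-neighbour-sets into the generalised cycle gadget (Lemma~\ref{lem:CycleHardness+}, Lemma~\ref{lem:suitableTreeHardnessWithObstruction}). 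None of this machinery, nor any substitute for it, appears in your proposal; saying that one should find ``a globally defined asymmetric substructure'' names the goal without supplying the construction, and the case analysis of how involution-freeness forces such a structure to exist is exactly the content that is missing.

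Two of your proposed toolkit items are also off-route for this particular proof. The tensor-power/disjoint-union trick (item (ii)) is not used here --- the component reduction is handled directly by Lemma~\ref{lem:ETH_connectivity}, which is a pinning argument via $\parRet{H}$, not a spectral or product identity. The detour through $\parLHom{\cdot}$ (item (iii)) is likewise not the mechanism: the paper does classify $\parLHom{H}$, but as a standalone corollary (Theorem~\ref{cor:LHomclassification}), since in the list problem one gets pinning for free and the dichotomy is elementary; transferring list-hardness back to $\parHom{H}$ still requires exactly the involution-free retraction machinery of Theorem~\ref{thm:ETH_RetToHoms}, so the detour does not save the structural work. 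Finally, for rETH the paper does not need your isolation argument: it cites Dell et al.~\cite{DellHMTW14} directly for the $\exp(o(|G|))$ lower bound on $\paris$ modulo~$2$, and only needs to observe that the oracle queries in the reduction chain have linear size. So: the frame of your argument is right, but the theorem's proof is the structural classification and global gadget construction, and that part is not supplied.
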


  As an example  of an application of Theorem~\ref{thm:main}, consider the following $K_4$-minor-free graphs~$H_1$ and~$H_2$.

	\begin{center} 
	\begin{tikzpicture}
	[scale=0.8, node distance = 1.4cm]
	\tikzstyle{dot}   =[fill=black, draw=black, circle, inner sep=0.15mm]
	\tikzstyle{vertex}=[fill=black, draw=black, circle, inner sep=2pt]
	\tikzstyle{dist}  =[fill=white, draw=black, circle, inner sep=2pt]
	\tikzstyle{pinned}=[draw=black, minimum size=10mm, circle, inner sep=0pt]
	
	\node[vertex] (t) at (0  ,0)  {};
	\node[vertex] (s) at (-1  ,3) {};
	\node[vertex] (l) at (-2  ,0)  {};
	\node[vertex] (ru) at (1  ,3) {};
	\node[vertex] (rd) at (0.5  ,1.5) {};
	\node[vertex] (sl) at (-3  ,3) {};
	\draw[thick] (s) -- (l); \draw[thick] (s) -- (ru); \draw[thick] (s) -- (sl);
	\draw[thick] (t) -- (l); \draw[thick] (s) -- (t); \draw[thick] (t) -- (rd);
	\draw[thick] (l) -- (sl); \draw[thick] (ru) -- (rd);
	\node (cl) at (-1.5  ,1.5)  {};
	\node () at (-3,0.5) {$H_2$};
	
	\begin{scope}[xshift=-6cm]
	\node[fill=white, draw=black, circle, inner sep=2pt] (t) at (0  ,0)  {};
	\node[vertex] (s) at (-1  ,3) {};
	\node[fill=white, draw=black, circle, inner sep=2pt] (l) at (-2  ,0)  {};
	\node[fill=white, draw=black, circle, inner sep=2pt] (ru) at (1  ,3) {};
	\node[fill=white, draw=black, circle, inner sep=2pt] (sl) at (-3  ,3) {};
	\draw[thick] (s) -- (l); \draw[thick] (s) -- (ru); \draw[thick] (s) -- (sl);
	\draw[thick] (t) -- (l); \draw[thick] (s) -- (t); \draw[thick] (t) -- (ru);
	\draw[thick] (l) -- (sl); 
	\node (cl) at (-1.5  ,1.5)  {};
	\node () at (-3,0.5) {$H_1$};
	\end{scope}
	
	\end{tikzpicture}
 \end{center}

\noindent 
The graph~$H_1$ has a non-trivial involution whose   only fixed-point is the solid vertex,
\label{page:H2}
so $H_1^*$ has one vertex.  By Theorem~\ref{thm:main},  $\parHom{H_1}$ can be solved in polynomial time.
The graph~$H_2$ does not have any non-trivial involutions, so  $H_2^*=H_2$.
By Theorem~\ref{thm:main},  $\parHom{H_2}$ is $\parP$-complete and 
it cannot be solved in time $\exp(o(|G|))$, unless the rETH fails.

 Before describing our techniques, we mention that they lead easily to a couple of other results ---
a proof of the Faben-Jerrum conjecture for graphs whose involution-free reduction have degree at most~$3$
(Theorem~\ref{thm:bddeg})
and a complete complexity classification for counting list homomorphisms modulo~$2$ (Theorem~\ref{cor:LHomclassification}).

\paragraph*{Technical Overview}
Given Theorem~\ref{thm:star}, we focus on the case where~$H$ is involution-free.
In general, our proof proceeds in two steps. Given an involution-free $K_4$-minor-free graph $H$, in step 1 we try to find a biconnected component of $H$, let us call it $B$, that allows us to derive $\parP$-hardness of $\parHom{H}$ by exploiting the \emph{local} structure of $B$  to construct a reduction from counting independent sets, modulo $2$. The latter problem, denoted by $\paris$,
\label{page:ParIS}
is known to be $\parP$-complete~\cite{Valiant2006} and cannot be solved in subexponential time, unless the rETH fails~\cite{DellHMTW14}. 

A careful analysis of biconnected and $K_4$-minor-free graphs, which crucially relies on the absence of 
non-trivial involutions, shows that the first step is always possible, unless all biconnected components of $H$ have a very restricted form; examples are depicted in Figure~\ref{fig:introBiconnectedComps}.
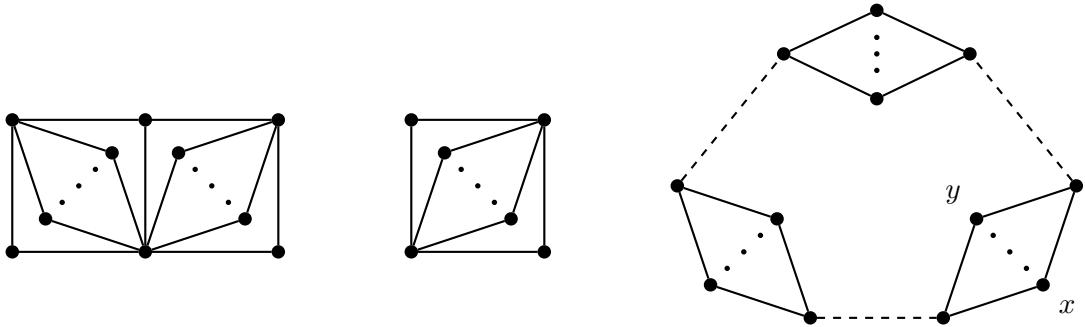
\begin{figure}[t]
	\centering
	\begin{tikzpicture}[scale=1.75, node distance = 1.4cm,thick]
	\tikzstyle{dot}   =[fill=black, draw=black, circle, inner sep=0.15mm]
	\tikzstyle{vertex}=[draw=black, fill=black, circle, inner sep=1.5pt]
	\tikzstyle{terminal}=[fill=black, draw=black, circle, inner sep=1.5pt]
	\tikzstyle{dist}  =[fill=white, draw=black, circle, inner sep=2pt]
	\tikzstyle{pinned}=[draw=black, minimum size=10mm, circle, inner sep=0pt]	
	\begin{scope}
	\node[vertex] (s) at (0  ,1)  {};
	\node[vertex] (i) at (0  ,0)  {};
	\node[vertex] (o) at (-1  ,1)  {};
	\node[terminal] (u) at (-1  ,0)  {};
	\node[vertex] (v) at (1  ,1)  {};
	\node[terminal] (x) at (1  ,0)  {};	
	\draw (s)--(i); \draw (s)--(o); \draw (s)--(v); \draw (o)--(u);
	\draw (u)--(i); \draw (x)--(i); \draw (v)--(x);	
	\node[vertex] (y1) at (-0.75  ,0.25)  {};
	\node[vertex] (yk) at (-0.25  ,0.75)  {};
	\node[vertex] (zl) at (0.25  ,0.75)   {};
	\node[vertex] (z1) at (0.75  ,0.25) {};	
	\draw (o)--(y1); \draw (o)--(yk);  \draw (i)--(y1); \draw (i)--(yk);
	\draw (v)--(z1); \draw (v)--(zl); \draw (i)--(z1); \draw (i)--(zl);    	
	\node[dot] (d) at (-0.5,0.5) {};  
	\node[dot] (d) at (-0.375,0.625) {}; 
	\node[dot] (d) at (-0.625,0.375) {}; 
	\node[dot] (d) at (0.5,0.5) {};  
	\node[dot] (d) at (0.375,0.625) {}; 
	\node[dot] (d) at (0.625,0.375) {}; 
	\end{scope}
	
	\begin{scope}[xshift=2cm] 
	\node (l) at (-1,0) {};
	\node[vertex] (s) at (0  ,1)  {};
	\node[terminal] (i) at (0  ,0)  {};
	\node[vertex] (v) at (1  ,1)  {};
	\node[terminal] (x) at (1  ,0)  {};	
	\draw (s)--(i); \draw (s)--(v);
	\draw (x)--(i); \draw (v)--(x);	
	\node[vertex] (zl) at (0.25  ,0.75)   {};
	\node[vertex] (z1) at (0.75  ,0.25) {};	
	\draw (v)--(z1); \draw (v)--(zl); \draw (i)--(z1); \draw (i)--(zl);    	
	\node[dot] (d) at (0.5,0.5) {};  
	\node[dot] (d) at (0.375,0.625) {}; 
	\node[dot] (d) at (0.625,0.375) {}; 
	\end{scope}
	
	\begin{scope}[yshift=-0.5cm, xshift=5cm] 
	\node[vertex] (n) at (-0.2  ,2)  {};
	\node[vertex] (m) at (1.2  ,2)  {};
	\node[vertex] (i) at (0  ,0)  {};
	\node[vertex] (i') at (1  ,0) {};
	\node[vertex] (o) at (-1  ,1)  {};
	\node[vertex] (v) at (2  ,1)  {};
	\node[vertex] (x1) at (0.5  ,2.33)  {};
	\node[vertex] (xk) at (0.5  ,1.66)   {};
	\node[vertex] (y1) at (-0.75  ,0.25)  {};
	\node[vertex] (yk) at (-0.25  ,0.75)   {};
	\node[vertex] (zl) at (1.25  ,0.75)  [label=135: $y$]   {};
	\node[vertex] (z1) at (1.75  ,0.25)  [label=-45: $x$]  {};	
	\draw (o)--(y1); \draw (o)--(yk);  \draw (i)--(y1); \draw (i)--(yk);
	\draw (v)--(z1); \draw (v)--(zl); \draw (i')--(z1); \draw (i')--(zl);  
	\draw (n)--(x1); \draw (n)--(xk); \draw (m)--(x1); \draw (m)--(xk);  
	\draw[dashed] (o) -- (n); \draw[dashed] (v) -- (m);
	\draw[dashed] (i) -- (i'); 	
	\node[dot] (d) at (-0.5,0.5) {};  
	\node[dot] (d) at (-0.375,0.625) {}; 
	\node[dot] (d) at (-0.625,0.375) {}; 
	\node[dot] (d) at (1.5,0.5) {};  
	\node[dot] (d) at (1.375,0.625) {}; 
	\node[dot] (d) at (1.625,0.375) {}; 
	\node[dot] (d) at (0.5,2) {};  
	\node[dot] (d) at (0.5,2.125) {}; 
	\node[dot] (d) at (0.5,1.875) {}; 
	\end{scope}
	\end{tikzpicture} 
	\caption{Examples of three types of biconnected and $K_4$-minor-free graphs that, if viewed as biconnected components, do not yield a reduction from $\paris$. We will re-encounter those graphs as ``impasses'' \emph{(left)}, ``diamonds'' \emph{(centre)}, and ``obstructions'' \emph{(right)}.}
	\label{fig:introBiconnectedComps}
\end{figure}

The second step of the proof exploits the global structure of $H$ and deals with the case where step 1 fails. Note that all of the depicted biconnected components have non-trivial involutions; consider for example the  involution given by swapping 
the vertices~$x$ and~$y$ in Figure~\ref{fig:introBiconnectedComps}.  Since the overall graph $H$ is promised to be free of such  involutions, we infer that at least one of $x$ and $y$ has a neighbour in a further biconnected component of $H$, which will allow us to successively construct a global ``walk-like'' structure in $H$ that eventually yields a reduction from $\paris$. 

We consider the construction of those global substructures as our main technical contribution. While the formal specifications are beyond the scope of the introduction, we give an illustrated example which we hope  gives some flavour of the graph theory that we will encounter in this work:

\begin{center}
	\begin{tikzpicture}[xscale=1, yscale=0.75, node distance = 1.4cm,thick]
	\tikzstyle{dot}   =[fill=black, draw=black, circle, inner sep=0.15mm]
	\tikzstyle{vertex}=[  draw=black, circle, inner sep=1.5pt]
	\tikzstyle{apoint}=[  draw=black, circle, fill, inner sep=1.5pt]
	\tikzstyle{dist}  =[fill=white, draw=black, circle, inner sep=2pt]
	\tikzstyle{pinned}=[draw=black, minimum size=10mm, circle, inner sep=0pt]	
	
	\begin{scope}[yshift=-15cm]
	\node[vertex] (a) at (0  ,2) {};
	\node[vertex] (b) at (1  ,3) {};
	\node[apoint] (c) at (2  ,3.5)  {};
	\node[vertex] (d) at (2  ,2.5) {};
	\node[vertex] (e) at (3  ,3) {};
	\node[vertex] (f) at (4  ,2) {};
	\node[apoint] (g) at (4.5  ,1)  {};
	\node[vertex] (h) at (3.5  ,1) {};
	\node[vertex] (i) at (4  ,0) {};
	\node[vertex] (j) at (3  ,-1) {};
	\node[vertex] (k) at (2  ,-0.5) {};
	\node[apoint] (l) at (2  ,-1.5)  {};
	\node[vertex] (m) at (1  ,-1) {};
	\node[vertex] (n) at (0  ,0) {};
	
	\draw[thick,red] (a) -- (b); \draw[thick,red] (c) -- (b); \draw (d) -- (b);
	\draw[thick,red] (c) -- (e); \draw (d) -- (e); \draw[thick,red] (e) -- (f);
	\draw[thick,red] (f) -- (g); \draw (f) -- (h); \draw (h) -- (i);
	\draw[thick,red] (g) -- (i); \draw[thick,red] (i) -- (j); \draw (j) -- (k);
	\draw[thick,red] (j) -- (l); \draw (k) -- (m); \draw[thick,red] (l) -- (m);
	\draw[thick,red] (m) -- (n); \draw[thick,red] (a) -- (n); 
	
	\node[apoint] (o) at (5.5  ,1)  {};
	\node[apoint] (p) at (5.5  ,2)  {};
	\node[vertex] (q) at (4.5  ,2) {};
	\node[vertex] (r) at (5  ,1.5) {};
	\node[apoint] (s) at (6.5  ,1)   {};
	\node[vertex] (t) at (7.5  ,1) {};
	\node[apoint] (u) at (8.5  ,1) {};
	\node[vertex] (v) at (8.5  ,2) {};
	\node[vertex] (w) at (7.5  ,2) {};
	\node[vertex] (x) at (6.5  ,2) {};
	\node[vertex] (y) at (7  ,1.5) {};
	\node[vertex] (z) at (8  ,1.5) {};
	
	\draw[thick,red] (g) -- (o); \draw (o) -- (p); \draw (p) -- (q);
	\draw (q) -- (r); \draw (q) -- (g); \draw (r) -- (o);
	\draw[thick,red] (u) -- (t); \draw[thick,red] (t) -- (s); 
	\draw[thick,red] (o) -- (s);
	\draw (u) -- (v); \draw (v) -- (w); \draw (w) -- (x);
	\draw (t) -- (y); \draw (y) -- (x); \draw (s) -- (x);
	\draw (t) -- (z); \draw (z) -- (v); \draw (t) -- (w);
	
	\node[apoint] (1) at (9.5  ,1)  {};
	\node[vertex] (2) at (10  ,2) {};
	\node[apoint] (3) at (11  ,3) {};
	\node[apoint] (4) at (13  ,3)  {};
	\node[apoint] (5) at (12  ,2) {};
	\node[vertex] (6) at (13  ,2) {};
	\node[apoint] (7) at (14  ,2)  {};
	\node[vertex] (8) at (13  ,1) {};
	\node[vertex] (9) at (13  ,0) {};
	\node[vertex] (10) at (12  ,0.5) {};
	\node[apoint] (11) at (12  ,-0.5)  {};
	\node[vertex] (12) at (11  ,0) {};
	\node[vertex] (13) at (10  ,0) {};
	\node[vertex] (14) at (10.5  ,1) {};
	
	\draw[thick,red] (u) -- (1); \draw[thick,red] (1) -- (2); 
	\draw[thick,red] (2) -- (3);
	\draw[thick,red] (6) -- (4); \draw (5) -- (4); \draw[thick,red] (4) -- (3);
	\draw (4) -- (7); \draw (5) -- (8); \draw[thick,red] (6) -- (8);
	\draw (7) -- (8); \draw[thick,red] (8) -- (9); \draw (9) -- (10);
	\draw[thick,red] (12) -- (11); \draw (12) -- (10); \draw[thick,red] (11) -- (9);
	\draw[thick,red] (12) -- (13); \draw (13) -- (14); \draw (14) -- (2);
	\draw[thick,red] (13) -- (1);

	\node[vertex] (c1) at (2  ,4.5) {};
	\node[vertex] (l1) at (1  ,-2) {};
	\node[apoint] (l2) at (3  ,-2)   {};
	\node[vertex] (l3) at (2  ,-2.5) {};
	\node[vertex] (111) at (12.666  ,-0.5) {};
	\node[vertex] (112) at (12.666  ,-1.5) {};
	\node[apoint] (113) at (12  ,-1.5)  {};
	\node[apoint] (114) at (12  ,-2.5)  {};
	\node[apoint] (115) at (12.666  ,-2.5) {};
	\node[vertex] (116) at (12.333  ,-3) {};
	\node[vertex] (117) at (12.666  ,-3.5) {};
	\node[apoint] (118) at (12  ,-3.5)  {};
	\node[vertex] (119) at (12  ,-4.5) {};
	
	\draw[thick,red] (c) -- (c1); \draw[thick,red] (l1) -- (l); \draw[thick,red] (l2) -- (l);
	\draw[thick,red] (11) -- (111); \draw (l2) -- (l3); \draw (l1) -- (l3);
	\draw[thick,red] (11) -- (113); \draw[thick,red] (111) -- (112); \draw[thick,red] (112) -- (113);
	\draw[thick,red] (116) -- (114); \draw[thick,red] (115) -- (114); \draw[thick,red] (114) -- (113);
	\draw[thick,red] (114) -- (118); \draw[thick,red] (118) -- (117); \draw (117) -- (116);
	\draw[thick,red] (119) -- (118); \draw [thick,red](115) -- (117); \draw[thick,red] (l2) -- (l1);

	\node[vertex] (l4) at (3  ,-3) {};
	\node[apoint] (l5) at (4  ,-3) {};
	\node[apoint] (l6) at (4  ,-2) {};
	\node[vertex] (l7) at (5  ,-2) {};
	\node[vertex] (l8) at (3.5  ,-2.5) {};
	
	\draw (l2) -- (l4); \draw (l2) -- (l6); \draw (l4) -- (l8);
	\draw (l5) -- (l6); \draw (l7) -- (l6); \draw (l6) -- (l8);
	\draw (l4) -- (l5); 
	
	\node[vertex] (p1) at (5.5  ,3) {};
	\draw (p) -- (p1); 
	
	\node[vertex] (n1) at (1  ,0) {};
	\node[vertex] (a1) at (1  ,2) {};
	\draw (n) -- (n1); \draw (a) -- (a1);
	\draw (a1) -- (n1);

	\node[vertex] (3a) at (11  ,4) {};
	\node[vertex] (4a) at (13  ,4) {};
	\node[vertex] (7a) at (14  ,3) {};
	\node[vertex] (5b) at (11  ,2) {};
	\node[apoint] (5a) at (11.5  ,2) {};
	\draw (3) -- (3a); \draw (4) -- (4a); \draw (5) -- (5a);
	\draw (5a) -- (5b); \draw (7) -- (7a);
	\node (o1) at (2,1) {};
	\node (o2) at (2,-3) {};
	\node (o3) at (11.5,1) {};
	
	\node[vertex] (l9) at (5  ,-3) {};
	\node[vertex] (1111) at (14  ,-2.5) {};
	\node[apoint] (1110) at (13.333  ,-2.5) {};
	\draw (l5) -- (l9); \draw (115) -- (1110); \draw (1110) -- (1111);
	
	\node[vertex] (no2) at (5.5  ,0) {};
	\node[vertex] (ns2) at (6.5  ,0) {};
	\node[vertex] (nu2) at (8.5  ,0) {};
	\node[vertex] (1132) at (11  ,-1.5) {};
	\node[vertex] (1142) at (11  ,-2.5) {};
	\draw (o) -- (no2); \draw (ns2) -- (s); \draw(nu2) -- (u);
	\draw (113) -- (1132); \draw (114) -- (1142);
	\end{scope}
	\end{tikzpicture} 
\end{center}

\noindent The above illustration depicts a $K_4$-minor-free graph $H'$ without non-trivial  involutions, together with a subgraph, highlighted in red, that allows for a reduction from~$\paris$. Solid vertices depict articulation points, i.e., vertices that lie in the intersection of at least $2$ biconnected components. Note that each biconnected component of~$H'$ that is not an edge is of one of the three types given in Figure~\ref{fig:introBiconnectedComps}.  Also, each biconnected component of $H'$  has an  involution. These non-trivial involutions prevent us from exploiting the local structure
of the biconnected components to derive $\parP$-hardness. Instead, we will see that the highlighted subgraph is what makes $\parHom{H'}$ hard.

 In  the next section we provide an overview of the general framework that allows us to reduce $\paris$ to $\parHom{H}$.
The structures used in such reductions are 
 captured by the so-called hardness gadgets  introduced by G\"obel, Goldberg and Richerby~\cite{cactus,squarefree}. 
 Prior applications of hardness gadgets could only be  used to construct a reduction from $\paris$ to $\parHom{H}$ if $H$ has certain local substructures,  based around a path or a cycle. 
 In contrast, our analysis will establish 
 global walks such as the one highlighted in~$H'$.
 As far as we can tell, none of the prior machinery~\cite{FJ,cactus,squarefree,BulatovModp} is capable of proving the $\parP$-hardness of $\parHom{H'}$, 
 however, this will follow as a result of our abstract consideration of   global substructures of $K_4$-minor-free graphs.

\section{Warm-up:  useful Ideas from Previous Papers --- Retractions and Hardness Gadgets }
\label{sec:warmup}

 Instead of directly reducing $\paris$ to $\parHom{H}$, it is
 useful to consider the intermediate problem $\parRet{H}$, the problem
 of counting \emph{retractions}\footnote{In some definition versions a retraction is surjective. However, for algorithmic problems this surjectivity requirement is not
 	important \cite{retr,FGZRet}} to~$H$, modulo~$2$. 
 Given a graph~$H$, a \emph{partially $H$-labelled graph} $J=(G,\tau)$ 
consists of an \emph{underlying graph}~$G$ and a corresponding \emph{pinning
function}~$\tau$,  which is a partial function from $V(G)$ to~$V(H)$. 
 A homomorphism from~$J$ to~$H$ is
 a homomorphism $h$ from $G$ to $H$ such that, for all vertices $v$ in the domain of $\tau$, $h(v) = \tau(v)$.  
 
 A homomorphism from a partially $H$-labelled graph~$J$ to~$H$ is
 also called a \emph{retraction} to~$H$ \label{page:retractions} because 
 we can think of the pinning function~$\tau$ as a way of identifying an induced subgraph~$H$ of~$G$
 which must ``retract'' to~$H$ under the action of the homomorphism --- see \cite{retr} for details.
 We use $\parRet{H}$ 
 \label{page:ParRet}
 to denote the computational problem of computing the number of homomorphisms from~$J$ to~$H$, modulo~$2$,
 given as input a partially $H$-labelled graph~$J$.

  It is known~\cite{squarefree} that $\parRet{H}$ reduces to $\parHom{H}$ whenever $H$  is involution-free. Since~$\tau$ allows us to pin vertices of $G$ to vertices of $H$ arbitrarily, it is much easier to construct a reduction from $\paris$ to $\parRet{H}$ than to construct
  a direct reduction from~$\paris$ to~$\parHom{H}$.

Consider the following example. Suppose that $H$ is the $4$-vertex path $(o,s,i,x)$ 
and that  our goal is to reduce $\paris$ to $\parRet{H}$. Let $G$ be 
an input to~$\paris$. That is, $G$ is a graph whose independent sets we wish to count, modulo~$2$.
For ease of presentation, suppose that~$G$ is bipartite,\footnote{The case of general graphs will be discussed later in the paper.} that is, the vertices of $G$ can be partitioned into two independent sets~$U$ and~$V$. Let $\widehat{G}$ be the graph obtained from $G$ by adding two additional vertices $u$ and $v$, and by connecting $u$ to all vertices in $U$, and $v$ to all vertices in $V$, respectively. 
Let $\tau$ be the pinning function defined by
  $\tau(u)=s$ and $\tau(v)= i$. We provide an illustration of the construction in Figure~\ref{fig:intro_red_1}.

\begin{figure}[h]
	\centering
	\begin{tikzpicture}[xscale=1.5,yscale=1.4, node distance = 1.4cm,thick]
	\tikzstyle{dot}   =[fill=black, draw=black, circle, inner sep=0.15mm]
	\tikzstyle{vertex}=[draw=black, fill=black, circle, inner sep=1.5pt]
	\tikzstyle{terminal}=[fill=black, draw=black, circle, inner sep=1.5pt]
	\tikzstyle{dist}  =[fill=white, draw=black, circle, inner sep=2pt]
	\tikzstyle{pinned}=[draw=black, minimum size=10mm, circle, inner sep=0pt]	
	\begin{scope}
	\node[vertex] (o) at (0  ,-1.5) [label=-90:{$o$}] {};
	\node[vertex] (s) at (1  ,-1.5) [label=-90:{$s$}] {};
	\node[vertex] (i) at (2  ,-1.5) [label=-90:{$i$}] {};
	\node[vertex] (x) at (3  ,-1.5) [label=-90:{$x$}] {};
	
	\node (U) at (1,3.5) {\large $U$}; \node (U) at (2,3.5) {\large $V$};
	\node (U) at (1.5,2) {\LARGE $G$};
	\node (U) at (0.25,0.5) {\large $\tau$}; 
	\node (U) at (2.75,0.5) {\large $\tau$};

	\node[vertex] (u) at (0  ,2) [label=-180:{$u$}] {};
	\node[vertex] (v) at (3  ,2) [label=0:{$v$}] {};
	\node[vertex] (u1) at (1  ,3)  {};
	\node[vertex] (u2) at (1  ,1)  {};
	\node[vertex] (v1) at (2  ,3)  {};
	\node[vertex] (v2) at (2  ,1)  {};
	
	\node[dot] (d) at (1,1.5) {};
	\node[dot] (d) at (1,2) {};
	\node[dot] (d) at (1,2.5) {};
	\node[dot] (d) at (2,1.5) {};
	\node[dot] (d) at (2,2) {};
	\node[dot] (d) at (2,2.5) {};
	
	\draw[rounded corners] (0.75, 3.25) rectangle (2.25, 0.75) {};
	\draw (u1) -- (v1); \draw (u2)--(v2);
	\draw (u1) -- (1.875,2.25); \draw (u2)--(1.825,1.5);
	\draw (v1) -- (1.125,2.5); \draw (v2)--(1.125,1.75);
	\draw (u1) -- (u); \draw (u2)--(u); \draw (v) -- (v1); \draw (v)--(v2);
	\draw (o) -- (s); \draw (s)--(i); \draw (i) -- (x); 
	
	\draw[dashed,->] (u) -- (0.95,-1.4); \draw[dashed,->] (v) -- (2.05,-1.4);
	
	\end{scope}
	
	\begin{scope}[xshift = 5cm]
	\node[vertex] (o) at (0  ,-0.5) [label=-180:{$o$}] {};
	\node[vertex] (s) at (1.5  ,-0.5) [label=45:{$s$}] {};
	\node[vertex] (i) at (1.5  ,-1.5) [label=-90:{$i$}] {};
	\node[vertex] (x) at (3  ,-1.5) [label=-90:{$x$}] {};
	\node[vertex] (y) at (0  ,-1.5) [label=-90:{$y$}] {};
	\node[vertex] (z) at (3  ,-0.5) [label=-45:{$z$}] {};
	
	\node (U) at (1,3.5) {\large $U$}; \node (U) at (2,3.5) {\large $V$};
	\node (U) at (1.5,2) {\LARGE $G$};
	\node (U) at (0.25,0.75) {\large $\tau$}; 
	\node (U) at (2.85,0.75) {\large $\tau$}; 
	\node (U) at (-0.7,0.75) {\large $\tau$}; 
	\node (U) at (3.5,0.75) {\large $\tau$};

	\node[vertex] (u) at (0  ,2.5) [label=-180:{$u_1$}] {};
	\node[vertex] (v) at (3  ,2.5) [label=0:{$v_1$}] {};
	\node[vertex] (u') at (0  ,1.5) [label=-180:{$u_2$}] {};
	\node[vertex] (v') at (3  ,1.5) [label=0:{$v_2$}] {};
	\node[vertex] (u1) at (1  ,3)  {};
	\node[vertex] (u2) at (1  ,1)  {};
	\node[vertex] (v1) at (2  ,3)  {};
	\node[vertex] (v2) at (2  ,1)  {};
	
	\node[dot] (d) at (1,1.5) {};
	\node[dot] (d) at (1,2) {};
	\node[dot] (d) at (1,2.5) {};
	\node[dot] (d) at (2,1.5) {};
	\node[dot] (d) at (2,2) {};
	\node[dot] (d) at (2,2.5) {};
	
	\draw[rounded corners] (0.75, 3.25) rectangle (2.25, 0.75) {};
	\draw (u1) -- (v1); \draw (u2)--(v2);
	\draw (u1) -- (1.875,2.25); \draw (u2)--(1.825,1.5);
	\draw (v1) -- (1.125,2.5); \draw (v2)--(1.125,1.75);
	\draw (u1) -- (u); \draw (u2)--(u); \draw (v) -- (v1); \draw (v)--(v2);
	\draw (u1) -- (u'); \draw (u2)--(u'); \draw (v') -- (v1); \draw (v')--(v2);
	\draw (o) -- (s); \draw (s)--(i); \draw (i) -- (x); 
	\draw (o) -- (y); \draw (y)--(i); \draw (s) -- (y); 
	\draw (x) -- (z); \draw (s)--(z); 
	
	\draw[dashed,->] (u') -- (1.45,-0.4); \draw[dashed,->] (v') -- (1.575,-1.4);
	\draw[dashed,->] (u) to[in=135,out=-135] (-0.05,-1.45);
	\draw[dashed,->] (v) to[in=45,out=-45] (3.1,-0.4);
	
	\end{scope}
	\end{tikzpicture} 
	\caption{\label{fig:intro_red_1} Illustration of the reduction from (bipartite) $\paris$ to $\parRet{H}$ 
where $H$ is the $4$-vertex path	  \emph{(left)}, and 
$H$ is the graph $H_2$ from page~\pageref{page:H2}   \emph{(right)}.}
\end{figure}

 Observe that any homomorphism~$\varphi$ 
from~$(\widehat{G},\tau)$ to~$H$  must map every vertex in $U$ to either~$o$ or~$i$, and every vertex in $V$ to either~$s$ or~$x$. 
Since $H$ has no edge from~$o$ to~$x$, the definition of homomorphism ensures that
 $\varphi^{-1}(o)\cup \varphi^{-1}(x)$ is an independent set of $G$.
 It is easy to verify that the function $\varphi \mapsto \varphi^{-1}(o)\cup \varphi^{-1}(x)$ 
 is a bijection between 
 the homomorphisms from~$(\widehat{G},\tau)$ to~$H$
 and the independent sets of~$G$,  which gives a reduction from (bipartite) $\paris$ to $\parRet{H}$.
 
The observant reader might notice that the $4$-vertex path    has a non-trivial  involution, and thus, we cannot 
further reduce   $\parRet{H}$ to $\parHom{H}$ in this case.\footnote{In fact, the problem $\parHom{H}$ is trivial 
when $H$ is the $4$-vertex path  since the number of homomorphisms will always be even.} 
However, the construction works for \emph{any} graph $H$  with an induced path $(o,s,i,x)$ such that $s$ and~$i$ each only have two neighbours.

The notion of a \emph{hardness gadget}, which we formally introduce in Section~\ref{sec:prelims}, is essentially a generalisation of the previous construction. For example, we could substitute each of $o$, $s$, $i$ and $x$ with an odd number of copies, since we are only interested in the parity of the number of independent sets. Furthermore, we could identify $o$ and $x$, since we only need the edge $\{o,x\}$ to be absent in $H$. A more sophisticated generalisation is obtained by observing that we can, to some extent, substitute the edges $\{o,s\}$, $\{s,i\}$ and $\{i,x\}$ with more complicated graphs, e.g.\ with  length-$2$ paths, if we substitute the edges in $\widehat{G}$ accordingly. 
Finally,   observe that   the 
construction $(\widehat{G},\tau)$ uses the partial function~$\tau$  in a very simple manner: By adding a common neighbour $u$ for all vertices in $U$ and setting $\tau(u)=s$, 
the construction enforces 
the constraint that any homomorphism  from $(\widehat{G},\tau)$ to~$H$ must map every vertex in $U$ to a neighbour of $s$. More sophisticated  
constructions
 will allow us to enforce much stronger constraints on  homomorphisms.
 We will need this flexibility to
  construct reductions from $\paris$ to $\parRet{H}$ for more general graphs $H$.

We conclude by making a generalisation explicit for one further example --- the graph $H_2$  from page~\pageref{page:H2}. We provide a more convenient drawing of $H_2$, including a labelling of its vertices and an illustration of the reduction in Figure~\ref{fig:intro_red_1}. Again, we will assume for ease of presentation that the input~$G$ to~$\paris$ is bipartite. To construct~$\widehat{G}$,
 we add two additional vertices $u_1$ and $u_2$ and make them adjacent to every vertex in $U$.
 Similarly, we add two additional vertices $v_1$ and $v_2$ and make them adjacent to every vertex in $V$.  
 Let~$\tau$ be the pinning function defined by $\tau(u_1)=y$, $\tau(u_2)=s$, $\tau(v_1)=z$, and $\tau(v_2)=i$. 
 
 Consider any homomorphism~$\varphi$ from $(\widehat{G},\tau)$ to~$H_2$.  Since $\varphi$ is edge-preserving, it must map every vertex in $U$ to a common neighbour of $s$ and $y$ in $H_2$. Consequently, $\varphi(U)\subseteq\{o,i\}$. Similarly, we obtain $\varphi(V)\subseteq \{s,x\}$. Again, it is easy to see that the mapping $\varphi \mapsto \varphi^{-1}(o)\cup \varphi^{-1}(x)$ is a bijection 
 between 
 the homomorphisms from~$(\widehat{G},\tau)$ to~$H$
 and the independent sets of~$G$,  which gives a reduction from (bipartite) $\paris$ to $\parRet{H}$.  

Note that the second example, while being less straightforward than the first, is still a very simple reduction. The proof of Theorem~\ref{thm:main} requires us to consider much more intricate ``hardness gadgets''; the necessary tools will be carefully introduced in Sections~\ref{sec:toolbox} and~\ref{sec:chordalBipartiteComps}.

\section{Proof Outline and Organisation of the Paper}

We start with the formal definitions  that we need in Section~\ref{sec:prelims}; in particular we set up the framework of hardness gadgets.
Section~\ref{sec:toolbox}, our ``toolbox'', presents the most important class of hardness gadgets that we use.  

Sections~\ref{sec:chordalBipartiteComps}-\ref{sec:k4MinorFreeMain} constitute the proof of our main result and should be considered the technical core of this paper. In Section~\ref{sec:chordalBipartiteComps} we deal with biconnected $K_4$-minor-free graphs that are additionally chordal bipartite graphs
(that is, they have the property that every induced cycle is a square). The reason for our separate treatment of these graphs is   that our main gadget from Section~\ref{sec:toolbox} cannot be applied to graphs without an induced cycle of length $\neq 4$. We identify two families of such graphs, \emph{impasses} and \emph{diamonds}, that prevent us from  constructing a local hardness gadget; examples of an impasse and a diamond are depicted in Figure~\ref{fig:introBiconnectedComps}.

After that, we dedicate Section~\ref{sec:generalisingCaterpillar} to the analysis of $K_4$-minor-free graphs that contain certain sequences of biconnected components, each of which is either an edge, an impasse, or a diamond. In Section~\ref{sec:K4freeGraphs} we consider biconnected $K_4$-minor-free graphs that are not necessarily chordal bipartite. We identify 
another family of graphs that does not allow for a local, i.e., an ``internal'', hardness gadget; we call such graphs \emph{obstructions}; obstructions always contain an induced cycle of length other than $4$, and an example of an obstruction is depicted in Figure~\ref{fig:introBiconnectedComps}. 

In combination, Sections~\ref{sec:generalisingCaterpillar} and~\ref{sec:K4freeGraphs} reveal the structure of 
involution-free
$K_4$-minor-free graphs  that do not allow for a local hardness gadget. In Section~\ref{sec:k4MinorFreeMain} we  use this structure, which allows us to constructively prove the existence of \emph{global} hardness gadgets for all remaining $K_4$-minor-free graphs without non-trivial involutions. Our main theorem, Theorem~\ref{thm:main},  follows. 

In Sections~\ref{sec:degreeAtmost3} and~\ref{sec:parLHom} we explore the applicability of our machinery to further classes of graphs and problems: Section~\ref{sec:degreeAtmost3} presents a full classification for counting homomorphisms to graphs of degree at most $3$, modulo $2$. Section~\ref{sec:parLHom} considers the problem of counting list homomorphisms, modulo $2$, a variation of the homomorphism problem that generalises retractions as follows: Let $H$ be a fixed graph. The problem $\parLHom{H}$ expects as input a graph $G$ and a function $\tau$ that maps every vertex of $G$ to a list of vertices of $H$. The goal is then to compute the parity of the number of homomorphisms from $G$ to $H$ which additionally map every vertex $v$ of $G$ to a vertex contained in $\tau(v)$. We provide a full classification of $\parLHom{H}$ for all graphs $H$, even if self-loops are allowed.

Finally, in Section~\ref{sec:index},
we provide an index containing the most important symbols and definitions.

	\section{Preliminaries}\label{sec:prelims}
An index of notation and terminology is in Section~\ref{sec:index}. 
Given a positive integer $q$ let $[q]=\{1, \dots, q\}$. Given a finite set $S$, we write $|S|$ for its cardinality.
	
	\paragraph*{Graph theory}
	Graphs in this work are simple, that is, without multiple edges, and do not contain self-loops, unless stated otherwise. The size of a graph $G$ is defined as $\abs{G}=\abs{V(G)}+\abs{E(G)}$. Given a graph $H$ and a subset $S$ of its vertices, we write $H[S]$ for the subgraph of $H$ \emph{induced} by $S$.\label{page:inducedgraph}
	
	Given a non-negative integer $k$, a \emph{walk} of length $k$ in a graph $H$ is a sequence of (not necessarily distinct) vertices $(v_0, \dots, v_k)$ such that, for all $i\in [k]$, $\{v_{i-1}, v_{i}\}\in E(H)$.   The walk is \emph{closed} if $v_0=v_k$. Note that for $k=0$, the single vertex $(v_0)$ is a closed walk of length $0$. A \emph{path} of length $k$ is a walk of length $k$ for which $v_0, \dots, v_k$ are distinct. 
For $k\geq 3$, a \emph{cycle} of length $k$ is a closed walk of length $k$  such that $v_1, \dots, v_k$ are distinct. A \emph{square} is a cycle of length $4$. 
	
	For $i, j \in \{0, \dots, k\}$ with $i\le j$, we say that $(v_i, v_{i+1} \dots, v_j)$ is a \emph{subwalk} of $(v_0, \dots, v_k)$. 
	For vertices $u,v \in V(H)$, $\dist_H(u,v)$ is the length of a shortest path between $u$ and $v$. 
	 
	\begin{defn}[chordal bipartite graph, see e.g.~\cite{ChordalBipartite99}]\label{def:chordalbipartite}
		A graph in which every induced cycle is a square is called a \emph{chordal bipartite graph}.
	\end{defn}

\noindent Given a graph $H$ and a vertex $v\in V(H)$, we write $\NH{v}$ for the \emph{neighbourhood} of~$v$ in~$H$
and we write $\deg_H(v)$ for the \emph{degree} of~$v$.
 That is, 
 $\NH{v} = \{u\in V(H) \mid \{u,v\} \in E(H)\}$ and 
   $\deg_H(v) =\abs{\NH{v}}$.\label{page:degree}
	Given a subset $S$ of $V(H)$, we set $\NH{S}= \bigcap_{v\in U} \NH{v}$ and note that $\NH{v} = \NH{\{v\}}$.\ \label{page:neighbourhood}
	
	\begin{defn}[walk-neighbour-set]\label{def:walkneighbourset}
		Given a closed walk $W = (w_0,\ldots,w_{q-1}, w_0)$  
		in a graph~$H$
		we use $N_{W,H}(w_i)$ to denote $\Gamma_H(w_{i-1}) \cap \Gamma_H(w_{i+1})$, where the indices are taken modulo~$q$.
		We refer to the sets $N_{W,H}(w_0),\ldots,N_{W,H}(w_{q-1})$ as the 
		\emph{walk-neighbour-sets} of~$W$ in~$H$.
	\end{defn}

	\begin{defn}[articulation point, biconnected, block-cut tree]\label{def:blockcuttree}
		An \emph{articulation point} of a graph is a vertex whose removal increases the number of connected components.
		A graph is \emph{biconnected} if it has at least $2$~vertices and has no articulation point.
		A \emph{biconnected component} is a maximal biconnected subgraph.
		
		Let $H$ be a connected graph. The \emph{block-cut tree} of $H$ is the tree $\bc(H)$ that has a vertex for each biconnected component of~$H$ (such vertices are called \emph{blocks}) and a vertex
		for each articulation point of~$H$ (such vertices are also called \emph{cut vertices})  such that $T$ has an edge between each biconnected component $B$ and each articulation point $a$ in $B$.
	\end{defn}

	\paragraph{Partially labelled graphs}
	Let $H$ be a graph. Recall from Section~\ref{sec:warmup} that
   a  {partially $H$-labelled graph} $J=(G,\tau)$\label{page:partiallyHlabelled}
	consists of an  {underlying graph}~$G$ and a corresponding  {pinning
		function}~$\tau$,\label{page:pinningfunction} which is a partial function from $V(G)$ to~$V(H)$.
	A vertex~$v$ in the domain of the pinning function is said to be
	\emph{pinned}, \emph{pinned to $\tau(v)$}, or a \emph{$\tau(v)$-pin}.
	We write
	partial functions as sets of pairs, for example, writing $\tau =
	\{a\mapsto s,b\mapsto t\}$ for the partial function~$\tau$ with domain $\{a,b\}$ such that $a$ is an $s$-pin and $b$ is a $t$-pin. The size of a partially $H$-labelled graph $J=(G,\tau)$ is defined as $\abs{J}=\abs{G}$.

	\paragraph*{Homomorphisms and Counting (mod 2)} 
	
	Let $G$ and $H$ be graphs. 
	Then $\hom{G}{H}$\label{page:homsGH} denotes the \emph{set of homomorphisms} from $G$ to $H$ and $\hom{J}{H}$ denotes the set of homomorphisms from $J$ to $H$.

	It will sometimes be convenient to consider a graph $G$ together with some number of
	distinguished vertices $x_1, \dots, x_r$ of $G$. We denote such a graph by
	$(G, x_1, \dots, x_r)$. The distinguished vertices need not
	be distinct.
	A homomorphism from a graph $(G, x_1, \dots,
	x_r)$ to $(H, y_1, \dots, y_r)$ is a homomorphism~$h$ from $G$
	to~$H$ with the property that, for each $i\in[r]$, $h(x_i)=y_i$.
	Correspondingly, we write $\hom{(G, x_1, \dots, x_r)}{(H, y_1, \dots, y_r)}$ for the set of such homomorphisms.
	
	Given a partially
	labelled graph~$J=(G,\tau)$ and distinguished vertices $x_1, \dots, x_r$ of $G$ that are not in the domain of $\tau$, a homomorphism from $(J, x_1, \dots, x_r)$ to $(H,
	y_1, \dots, y_r)$ is a homomorphism from
	$J'=(G, \tau \cup \{x_1\mapsto y_1, \dots, x_r\mapsto y_r\})$ to~$H$.
	 The set of such homomorphisms is denoted by  $\hom{(J, x_1, \dots, x_r)}{(H, y_1, \dots, y_r)}$\label{page:partialhomset}.

\paragraph*{Useful tools}

The following theorem of \Goebel, Goldberg and Richerby will be of crucial importance in this work, as it will allow us to derive hardness of $\parHom{H}$ from hardness of $\parRet{H}$.

	\begin{thm}[{\cite[Theorem 3.1]{squarefree}}]\label{thm:ETH_RetToHoms}
Let $H$ be an involution-free graph. Then there is an algorithm with oracle access to $\parHom{H}$ that takes as input a partially $H$-labelled graph $J$ and 
	computes $\abs{\hom{J}{H}}\mod 2$ in time $\poly(\abs{J})$.
	The size of the input to every oracle query is 
	$O(\abs{J})$.\end{thm}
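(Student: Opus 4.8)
The plan is to use the $\parHom{H}$ oracle to simulate the pinning function of $J$. Everything reduces to the following ``pinning'' statement: for every $w\in V(H)$ there is an ordinary (unlabelled) rooted graph $(\Theta_w,\vartheta_w)$ whose size is bounded by a function of $H$ alone, such that $\abs{\hom{(\Theta_w,\vartheta_w)}{(H,u)}}$ is odd if $u=w$ and even otherwise. Granting this, the reduction is short. Given $J=(G,\tau)$, I would build $G^\star$ from $G$ by attaching, for each $v\in\operatorname{dom}(\tau)$, a fresh disjoint copy of $\Theta_{\tau(v)}$ and identifying its root with $v$. A homomorphism $\chi\colon G^\star\to H$ is the same thing as a homomorphism $G\to H$ together with, for each pinned $v$, a homomorphism from the attached copy of $\Theta_{\tau(v)}$ to $H$ that agrees with it on $v$, so
\[
\abs{\hom{G^\star}{H}}=\sum_{\chi\colon G\to H}\ \prod_{v\in\operatorname{dom}(\tau)}\abs{\hom{(\Theta_{\tau(v)},\vartheta_{\tau(v)})}{(H,\chi(v))}}.
\]
A product of integers is odd iff every factor is, and by the pinning property the factor for $v$ is odd exactly when $\chi(v)=\tau(v)$; hence $\abs{\hom{G^\star}{H}}\equiv\abs{\hom{J}{H}}\pmod 2$. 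One oracle call on $G^\star$ then returns the answer. Since each $\Theta_w$ has $O(1)$ size and we attach at most $\abs{V(G)}$ copies, $\abs{G^\star}=O(\abs{J})$, and the whole procedure runs in polynomial time. (For disconnected $H$ one first observes that a homomorphism from a connected graph lands in one component and argues componentwise, or reduces to the connected case directly.)

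It remains to produce the pinning gadgets, and this is where involution-freeness of $H$ is indispensable --- and where the real difficulty lies. A natural framework is the $\mathbb F_2$-vector space $\mathbb F_2^{V(H)}$ together with the set $\mathcal V$ of vectors $u\mapsto\abs{\hom{(Y,y)}{(H,u)}}\bmod 2$ ranging over rooted connected graphs $(Y,y)$: it contains the all-ones vector (take $Y$ a single vertex), is closed under coordinatewise product (disjoint union of gadgets, roots identified), and is closed under applying the mod-$2$ adjacency matrix of $H$ (append a pendant edge and move the root). The supports of the vectors in $\mathcal V$ therefore form an intersection-closed family containing $V(H)$, and the goal is to show that its atoms are exactly the vertices --- so that each $\{w\}$ is such a support, realised by some $\Theta_w$ of size depending only on $H$. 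Proving that two distinct vertices are always separated by some gadget modulo $2$, using only that $H$ has no involution, is the crux: the argument has to rule out the possibility that an inseparable pair of vertices encodes a non-trivial involution of $H$. When $H$ is rigid --- automatically the case for involution-free trees, since an involution-free tree has trivial automorphism group --- one aims to carry this out directly; for a general involution-free $H$ the automorphism group still has odd order (no element of order $2$), so all orbits are odd, but vertices in a common orbit cannot be separated even over $\mathbb Z$, and one instead pins each $v$ to the orbit of $\tau(v)$ (an orbit-indicator gadget can be built as above on the orbit-quotient) and must then verify that the parity of the homomorphism count is unchanged --- equivalently, that the ``spurious'' homomorphisms sending some pinned $v$ into its target's orbit but not to the target cancel, in the aggregate, modulo $2$.

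The main obstacle is precisely this pinning lemma, and within it the non-rigid case: making sure that, when orbit-gadgets are attached at several pinned vertices simultaneously, the correction terms they introduce sum to $0$ modulo $2$; this is where one repeatedly exploits the odd order of $\mathrm{Aut}(H)$ via orbit-counting. By contrast the multiplicativity reduction above, the $\mathbb F_2$-linear-algebra set-up, the size and running-time bookkeeping, and the passage to disconnected $H$ are all routine.
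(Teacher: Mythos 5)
Your multiplicativity reduction and size bookkeeping are fine, and you correctly identify that automorphisms of $H$ (of any odd order, not only involutions) defeat the naive single-vertex pinning gadgets. But your fallback --- attach at each pinned $v$ a $1$-rooted gadget whose parity vector is the indicator of $\mathrm{Orb}_{\mathrm{Aut}(H)}(\tau(v))$ --- has a genuine gap that orbit-counting will not close. Writing $v_1,\dots,v_k$ for the pinned vertices and $\bar y_0=(\tau(v_1),\dots,\tau(v_k))$, that construction computes, modulo~$2$, the sum of $n(\bar y):=\abs{\hom{(G,v_1,\dots,v_k)}{(H,\bar y)}}$ over all $\bar y\in\prod_i\mathrm{Orb}(\tau(v_i))$. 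This product of single-vertex orbits splits into \emph{several} $\mathrm{Aut}(H)$-orbits of $k$-tuples; $n$ is constant on each and each has odd size, so the sum reduces mod~$2$ to a sum of $n$ over one representative per tuple-orbit, which need not be $n(\bar y_0)$. Concretely, let $H$ be a connected graph with $\mathrm{Aut}(H)\cong\mathbb Z_3$ (such $H$ exist by Frucht's theorem and are involution-free), take an orbit $\{a,b,c\}$ with cyclic action $\rho$, and pin two vertices to $a$: then $\{a,b,c\}^2$ has three diagonal orbits with representatives $(a,a),(a,b),(a,c)$, and your construction returns $n(a,a)+n(a,b)+n(a,c)$, which need not equal $n(a,a)$ mod~$2$ because no automorphism fixes $a$ and swaps $b$ with $c$, so nothing forces $n(a,b)\equiv n(a,c)$.

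The repair --- and what the cited proof of G\"obel, Goldberg and Richerby (building on Faben--Jerrum) actually does --- is to replace the $k$ separate $1$-rooted gadgets by a single $k$-rooted gadget $(\Theta,\vartheta_1,\dots,\vartheta_k)$ whose mod-$2$ homomorphism-count vector on $V(H)^k$ is the indicator of the single tuple-orbit $\mathrm{Orb}_{\mathrm{Aut}(H)}(\bar y_0)$. Constructing $\Theta$ is where involution-freeness is actually used, via a mod-$2$ Lov\'asz-type distinguishing lemma for $k$-marked copies of $H$. Once $\Theta$ exists, the parity step you worried about collapses to a one-liner with no iterated corrections:
\[
\abs{\hom{G^\star}{H}}\ \equiv\ \sum_{\bar y\in\mathrm{Orb}(\bar y_0)}n(\bar y)\ =\ \abs{\mathrm{Orb}(\bar y_0)}\cdot n(\bar y_0)\ \equiv\ n(\bar y_0)\pmod 2,
\]
since the orbit size divides the odd number $\abs{\mathrm{Aut}(H)}$. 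Your $\mathbb F_2$-linear-algebra picture is the right kind of object, but it must be run over $V(H)^k$ with $k$-rooted gadgets; over $V(H)$ it can only ever resolve orbits of single vertices, which, as above, is provably too coarse.
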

	
The statement of Theorem~\ref{thm:ETH_RetToHoms} in
\cite[Theorem 3.1]{squarefree} does not mention the fact that 
the size of the input to every oracle query is 
$O(\abs{J})$. Nevertheless, it is easy to see, by examining the proof in~\cite{squarefree} that this linearity requirement is met (without
making any changes to the proof).
The reason 
that we introduce this linearity constraint  is	  
so
that our hardness results  can also rule out subexponential-time algorithms for $\parHom{H}$ in the $\parP$-hard cases, 
using the rETH.

The following theorem of Faben and Jerrum will also be useful, as it will allow us to focus on
connected graphs. The statement of \cite[Theorem 6.1]{FJ} does not mention the  
linearity requirement on the size of oracle queries, but this requirement does not present any difficulties. Faben and Jerrum's proof is given in a slightly different setting (pinning to orbits of vertices of~$H$ rather than to vertices) so, for completeness, we give a short proof.

	\begin{lem}[{\cite[Theorem 6.1]{FJ}}]\label{lem:ETH_connectivity}
		Let $H$ be an involution-free graph and let $H'$ be a connected component of $H$. Then there exists an algorithm with oracle access to $\parHom{H}$ that takes as input a graph $G$ and computes $\abs{\hom{G}{H'}}\mod 2$ in time $\poly(\abs{G})$. The size of every oracle query is $O(\abs{G})$.	
	\end{lem}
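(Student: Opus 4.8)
The plan is to reduce the computation of $\abs{\hom{G}{H'}} \bmod 2$ to a polynomial number of queries to $\parHom{H}$, where each query has size $O(\abs{G})$, by exploiting the multiplicativity of homomorphism counts over connected components together with the fact that $H$ is involution-free (so that no component of $H$ is isomorphic to another, up to the structure needed here). First I would write $H = H_1 \cup \dots \cup H_m$ as its connected components, with $H' = H_1$ say. Since a homomorphism from a connected graph $G$ to $H$ must map $G$ entirely into a single component, we have for connected $G$ that $\abs{\hom{G}{H}} = \sum_{j=1}^m \abs{\hom{G}{H_j}}$. For a general (possibly disconnected) input $G$ with components $G_1, \dots, G_k$, we get $\abs{\hom{G}{H}} = \prod_{i=1}^k \abs{\hom{G_i}{H}}$, and the natural idea is to take disjoint unions of copies of (components of) $G$ to produce instances whose homomorphism counts, read modulo $2$, let us solve for $\abs{\hom{G}{H'}} \bmod 2$ by (a mod-$2$ analogue of) interpolation.

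The cleaner route, which avoids interpolation over large numbers, is the standard trick: consider the disjoint union $G \oplus F$ for a suitably chosen fixed ``marker'' graph $F$ depending only on $H$, or more flexibly take $t$ disjoint copies of $G$, giving $\abs{\hom{tG}{H}} = \sum_{j=1}^m \abs{\hom{G}{H_j}}^t \equiv \sum_{j : \abs{\hom{G}{H_j}} \text{ odd}} 1 \pmod 2$ for all $t\ge 1$ — so this alone only recovers the parity of the number of components $H_j$ with $\abs{\hom{G}{H_j}}$ odd, which is not enough to isolate $H_1$. To separate the components I would instead attach distinguishing structure: for each component $H_j$, because $H$ is involution-free one can find (or argue the existence of) a connected graph $D_j$ such that $\abs{\hom{D_j}{H_j}}$ is odd but $\abs{\hom{D_j}{H_\ell}}$ has a controlled (ideally zero, or at least ``triangular'') parity pattern for $\ell \ne j$; taking disjoint unions $G \oplus D_{j}$ and combining over $j$ in upper-triangular fashion lets one solve for $\abs{\hom{G}{H_1}} \bmod 2$ by back-substitution. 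Each such query has underlying graph $G \oplus D_j$ of size $\abs{G} + O(1)$, so the linearity requirement on oracle queries is met, and only $O(m) = O(1)$ queries are needed, giving the $\poly(\abs{G})$ running time.

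The main obstacle is constructing the distinguishing graphs $D_j$ and verifying the triangularity claim: a priori two distinct components $H_j, H_\ell$ could be abstractly isomorphic, which would make them indistinguishable by any homomorphism count. Here is exactly where involution-freeness of $H$ is used — if $H_j \cong H_\ell$ as abstract graphs with $j\ne \ell$, then swapping these two components (extended by the identity elsewhere) would be a non-trivial involution of $H$, contradicting the hypothesis; hence all components of $H$ are pairwise non-isomorphic. With that in hand, one can order the components so that, e.g., whenever there is a homomorphism $H_j \to H_\ell$ we have $\abs{V(H_j)} < \abs{V(H_\ell)}$ or some similar well-founded order, and take $D_j$ to be $H_j$ itself (or a small gadget built from $H_j$): $\abs{\hom{H_j}{H_j}}$ counts endomorphisms and, modulo $2$, counts orbits of the identity under the automorphism-free structure, which is odd, while $\hom{H_j}{H_\ell} = \emptyset$ for the ``smaller'' $\ell$ by the ordering. (Faben and Jerrum's original argument works with orbits of vertices rather than individual vertices; the translation to our setting is routine but must be done carefully.) Once the $D_j$ and the order are fixed — and they depend only on $H$, hence are constants — the reduction and the Gaussian-elimination-style back-substitution modulo $2$ are immediate, completing the proof.
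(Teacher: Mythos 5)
There is a genuine gap in the reduction you propose, and it occurs at the central step. Taking the disjoint union of $G$ with a marker graph $D_j$ and querying $\parHom{H}$ gives
\[
\abs{\hom{G\sqcup D_j}{H}} \;=\; \abs{\hom{G}{H}}\cdot\abs{\hom{D_j}{H}},
\]
because a disjoint union on the source side multiplies homomorphism counts. This is a \emph{scalar multiple} of the single quantity $\abs{\hom{G}{H}}$ --- modulo~$2$, each such query returns either~$0$ or $\abs{\hom{G}{H}}\bmod 2$ depending only on the parity of the constant $\abs{\hom{D_j}{H}}$. So the entire family of queries $\{G\sqcup D_j\}_j$ yields at most one bit of information about $G$, namely $\abs{\hom{G}{H}}\bmod 2$, and cannot possibly isolate the summand $\abs{\hom{G}{H_1}}$ from $\sum_\ell \abs{\hom{G}{H_\ell}}$. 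The ``upper-triangular'' system you hope to set up never appears, because disjoint union does not interact with the decomposition of $H$ into components in the way your write-up assumes. Your observation that involution-freeness forces the components of $H$ to be pairwise non-isomorphic is correct, but it is not used correctly here; distinguishing components requires a construction that \emph{restricts} where the image of (a connected piece of) $G$ lands, not one that multiplies by a constant.

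The paper achieves this restriction by pinning rather than by adding disjoint markers. It invokes Theorem~\ref{thm:ETH_RetToHoms}, which already gives, from $\parHom{H}$, an oracle for $\parRet{H}$, i.e.\ the parity of the number of homomorphisms from a partially $H$-labelled graph to $H$. One then picks a vertex $u\in V(G)$, and for each $v\in V(H')$ queries the instance $J_v=(G,\{u\mapsto v\})$. Pinning $u$ into $V(H')$ forces the component of $G$ containing $u$ to map entirely into $H'$, which is exactly the component-isolating effect your markers lack, and summing over $v\in V(H')$ recovers $\abs{\hom{G}{H'}}$. Each query has size $O(\abs{G})$ (it is just $G$ with a pin), giving both the $\poly(\abs{G})$ running time and the linear query-size bound. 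If you wanted a marker-style argument, the marker would have to be \emph{attached} to a vertex of $G$ rather than placed in a separate component; but at that point one is essentially re-proving the pinning lemma, and it is cleaner to cite Theorem~\ref{thm:ETH_RetToHoms} directly.
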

	
	\begin{proof}
		Let $G$ be a graph. If $G$ is the empty graph then the algorithm returns $1$, which is the number of homomorphisms from $G$ to $H'$.
		Otherwise, there exists a vertex $u\in V(G)$. For each $v\in V(H')$ we define the partially $H'$-labelled $J_v=(G,\{u \mapsto v\})$. Note that $\abs{\hom{G}{H'}}= \sum_{v \in V(H')} \abs{\hom{J_v}{H}}$.
		
		By Theorem~\ref{thm:ETH_RetToHoms}, there is an algorithm~$A$  with oracle access to $\parHom{H}$ that takes as input a partially $H$-labelled graph $J$ and computes $\abs{\hom{J}{H}}\mod 2$ in time $\poly(\abs{J})$ such that the size of every oracle query is bounded by $O(\abs{J})$. 
		Our algorithm uses algorithm~$A$ as a subroutine to compute the parity of~$\abs{\hom{J_v}{H}}$ for each $v \in V(H')$. This requires $\abs{V(H')}$ executions of the subroutine $A$. Thus, the algorithm runs in time 
$$O\Big(\sum_{v\in V(H')}\poly(|J_v|)\Big)=\poly(\abs{G}).$$ 
Moreover, for each $v\in V(H')$, the size of each $\parHom{H}$ oracle query is bounded by $O(\abs{J_v})=O(\abs{G})$.
	\end{proof}

\paragraph*{Hardness Gadgets}	
The following is a slightly generalised version of the \emph{hardness gadget} introduced in~\cite[Definition 4.1]{squarefree}.
The only difference between their definition and ours is that they require
the sets~$I$ and~$S$ to have size~$1$. 
	
	\begin{defn} \cite[Definition 4.1]{squarefree}
		\label{defn:hardness-gadget}
		A \emph{hardness gadget} $(I,S,(J_1,y),(J_2,z),(J_3,y,z))$ for a
		graph~$H$ consists
		of odd-cardinality sets $I,S \subseteq V(H)$ together with three connected, partially $H$-labelled
		graphs with distinguished vertices  
		$(J_1,y)$, $(J_2,z)$ and $(J_3,y,z)$ 
		that satisfy certain
		properties
		as explained below.
		Let
		\begin{align*}
		\Oy &= \{ a \in V(H) \mid |\hom{(J_1, y)}{(H,a)}|\text{ is odd}\},\\
		\Oz &= \{ b \in V(H) \mid |\hom{(J_2, z)}{(H,b)}|\text{ is odd}\}, \mbox{ and}\\
		\Sigma_{a,b} &=\hom{(J_3,y,z)}{(H,a,b)}\,.
		\end{align*}
		The properties 
		that
		we require are 
		the following.
		\begin{enumerate}
			\item $|\Oy|$ is even and $I\subset \Oy$.
			\item $|\Oz|$ is even and $S\subset \Oz$.
			\item 
			For each $i\in I$, $o\in \Oy\setminus I$, $s\in S$ and each $x\in \Oz\setminus S$,
			\begin{itemize}
				\item$|\Sox|$ is even.
				\item $|\Sis|$, $|\Sos|$ and $|\Six|$ are odd.
			\end{itemize}
		\end{enumerate}
	\end{defn}

The following theorem of \Goebel, Goldberg and Richerby
establishes intractability of $\parRet{H}$  whenever $H$ has a hardness gadget.
	\begin{thm}[{\cite[Theorem 4.2]{squarefree}}]
		\label{thm:hardness-gadget} 
	Let $H$ be an involution-free graph that has a hardness gadget. 
		Then $\parRet{H}$ is $\parP$-hard. Also, assuming the randomised Exponential Time Hypothesis, $\parRet{H}$   cannot be solved in time $\exp(o(|J|))$.

\end{thm}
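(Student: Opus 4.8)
The plan is to give a polynomial-time Turing reduction from $\paris$ to $\parRet{H}$ that blows up the instance size only by a constant factor, and then to invoke that $\paris$ is $\parP$-complete~\cite{Valiant2006} and, assuming the rETH, has no $\exp(o(|G|))$ algorithm~\cite{DellHMTW14}. The hypothesis that $H$ is involution-free plays no role in this reduction (it is needed only when one subsequently passes from $\parRet{H}$ to $\parHom{H}$ via Theorem~\ref{thm:ETH_RetToHoms}), but we keep it as stated. It is convenient to reduce first to bipartite inputs. Given $G$, I would pass to the graph $G'$ obtained by subdividing every edge of $G$ exactly once, which is bipartite with parts $P_1=V(G)$ and $P_2=\{m_e\mid e\in E(G)\}$ and has $|G'|=O(|G|)$. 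An independent set of $G'$ is obtained by choosing an arbitrary $A\subseteq V(G)$ together with an arbitrary subset of those subdivision vertices $m_e$ both of whose endpoints avoid $A$, so the number of independent sets of $G'$ equals $\sum_{A\subseteq V(G)}2^{m(A)}$, where $m(A)$ is the number of edges of $G$ lying inside $V(G)\setminus A$. Modulo $2$ only the terms with $m(A)=0$ survive, i.e.\ those $A$ whose complement is independent in $G$, so $G'$ and $G$ have the same number of independent sets modulo $2$, and it suffices to handle $G'$.

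The reduction from the bipartite case generalises the warm-up of Section~\ref{sec:warmup}: the gadget graphs $(J_1,y)$, $(J_2,z)$, $(J_3,y,z)$ take the roles played there by path vertices and path edges, while $I\subseteq\Oy$ and $S\subseteq\Oz$ encode membership in the independent set. From the vertex set of $G'$ I would build a partially $H$-labelled graph $\widehat J=(\widehat G,\tau)$: attach to each $u\in P_1$ a fresh copy of $(J_1,y)$ with $y$ identified with $u$; attach to each $v\in P_2$ a fresh copy of $(J_2,z)$ with $z$ identified with $v$; and for each edge $\{u,v\}$ of $G'$ (with $u\in P_1$, $v\in P_2$) attach a fresh copy of $(J_3,y,z)$ with $y$ identified with $u$ and $z$ with $v$; let $\tau$ be the union of the pinning functions of all these copies. ``Fresh'' means the copies are pairwise internally disjoint, so that every vertex of $\widehat G$ is either a vertex of $G'$ or internal to exactly one copy (hence $\widehat G$ is simple); since each copy has size depending only on $H$, this gives $|\widehat J|=O(|G|)$. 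A homomorphism from $\widehat J$ to $H$ is determined by its restriction $\varphi_0$ to $V(G')$ together with independent extensions over the copies, so
\[
\big|\hom{\widehat J}{H}\big|\ =\ \sum_{\varphi_0\colon V(G')\to V(H)}\ \prod_{u\in P_1}\big|\hom{(J_1,y)}{(H,\varphi_0(u))}\big|\ \prod_{v\in P_2}\big|\hom{(J_2,z)}{(H,\varphi_0(v))}\big|\ \prod_{\{u,v\}\in E(G')}\big|\Sigma_{\varphi_0(u),\varphi_0(v)}\big|.
\]

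The plan is then to reduce this identity modulo $2$. By the definitions of $\Oy$ and $\Oz$, a term vanishes unless $\varphi_0(P_1)\subseteq\Oy$ and $\varphi_0(P_2)\subseteq\Oz$, and for a surviving term the $J_1$- and $J_2$-factors are all odd; hence the parity of $|\hom{\widehat J}{H}|$ equals that of $\sum_{\varphi_0}\prod_{\{u,v\}\in E(G')}|\Sigma_{\varphi_0(u),\varphi_0(v)}|$, summed over maps $\varphi_0\colon P_1\to\Oy$, $P_2\to\Oz$. Property~3 of the hardness gadget says precisely that $|\Sigma_{a,b}|$ is even iff $a\in\Oy\setminus I$ and $b\in\Oz\setminus S$; therefore the product over edges of $G'$ is odd exactly when $B_{\varphi_0}:=\varphi_0^{-1}(\Oy\setminus I)\cup\varphi_0^{-1}(\Oz\setminus S)$ contains no edge of $G'$, i.e.\ is an independent set of $G'$. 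For a fixed independent set $B$ of $G'$, the number of $\varphi_0$ with $B_{\varphi_0}=B$ is $|\Oy\setminus I|^{|P_1\cap B|}\,|I|^{|P_1\setminus B|}\,|\Oz\setminus S|^{|P_2\cap B|}\,|S|^{|P_2\setminus B|}$, which is odd since $|I|$, $|S|$ are odd by definition and $|\Oy\setminus I|$, $|\Oz\setminus S|$ are odd by Properties~1 and~2. Putting this together, $|\hom{\widehat J}{H}|$ is congruent modulo $2$ to the number of independent sets of $G'$, and hence to that of $G$. A single oracle call to $\parRet{H}$ on $\widehat J$ thus decides $\paris$, so $\parRet{H}$ is $\parP$-hard; and as $|\widehat J|=O(|G|)$, an $\exp(o(|\widehat J|))$ algorithm for $\parRet{H}$ would yield an $\exp(o(|G|))$ algorithm for $\paris$, contradicting the rETH by~\cite{DellHMTW14}.

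The step needing most care is the parity bookkeeping of the third paragraph: justifying the product factorisation of $|\hom{\widehat J}{H}|$ (which uses that the gadget copies are internally disjoint and that $y$, $z$ are not pinned inside their own gadgets) and checking that every multiplicity appearing is odd --- both the extension counts $|\hom{(J_1,y)}{(H,a)}|$ and $|\hom{(J_2,z)}{(H,b)}|$ for $a\in\Oy$, $b\in\Oz$, and the exponential multiplicities $|\Oy\setminus I|^{\cdot}|I|^{\cdot}|\Oz\setminus S|^{\cdot}|S|^{\cdot}$ counting the $\varphi_0$ with a prescribed preimage structure. For the rETH refinement, the only additional point is to keep all instance sizes linear in $|G|$, which is exactly why the gadget graphs are required to have size independent of the input.
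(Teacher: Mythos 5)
Your argument is correct (modulo one small notational slip I flag below) and is a genuine self-contained reconstruction, whereas the paper itself does not reprove the statement but defers to \cite[Theorem~4.2]{squarefree} and only records the changes needed for $|I|,|S|>1$. From the paper's own discussion (the two-argument quantity $n(a,a')$ which must be even iff $a,a'\in I$), the \cite{squarefree} construction evidently handles a general input graph $G$ directly with a \emph{symmetric} vertex gadget carrying both a $J_1$-type and a $J_2$-type attachment at every vertex and bidirected $J_3$-edges. You instead first reduce $\paris$ to bipartite $\paris$ by subdividing each edge once --- the identity $|\calI(G')|=\sum_{A\subseteq V(G)}2^{m(A)}\equiv|\calI(G)|\pmod 2$ is a clean, correct step --- and then use a strictly asymmetric construction: $J_1$-copies only on $P_1$, $J_2$-copies only on $P_2$, one $J_3$-copy per edge. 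This trades the $n(a,a')$ bookkeeping for the elementary observation that $|I|$, $|S|$, $|\Oy\setminus I|$, $|\Oz\setminus S|$ are all odd, at the modest cost of the subdivision. Your remark that involution-freeness plays no role here (it is used only later, in the step $\parRet{H}\le\parHom{H}$) is correct, and the linear size bound $|\widehat J|=O(|G|)$ is valid for fixed $H$, giving the rETH consequence via \cite{DellHMTW14} exactly as you say.

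One correction is needed in the parity bookkeeping. You define $B_{\varphi_0}=\varphi_0^{-1}(\Oy\setminus I)\cup\varphi_0^{-1}(\Oz\setminus S)$ over all of $V(G')$. Nothing in Definition~\ref{defn:hardness-gadget} forces $(\Oy\setminus I)$ to be disjoint from $\Oz$ (or even from $S$), so a vertex $v\in P_2$ with $\varphi_0(v)\in(\Oy\setminus I)\cap S$ would be placed in $B_{\varphi_0}$ via the first preimage even though $\Sigma_{\varphi_0(u),\varphi_0(v)}$ is odd for every $u$; this would break the asserted equivalence ``edge-product is odd iff $B_{\varphi_0}$ is independent''. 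Define instead $B_{\varphi_0}=\bigl(P_1\cap\varphi_0^{-1}(\Oy\setminus I)\bigr)\cup\bigl(P_2\cap\varphi_0^{-1}(\Oz\setminus S)\bigr)$; with this, for each edge $\{u,v\}$ of $G'$ with $u\in P_1$, $v\in P_2$, one has $|\Sigma_{\varphi_0(u),\varphi_0(v)}|$ even iff $u,v\in B_{\varphi_0}$ precisely by Property~3, the equivalence holds, the preimage count per independent set is $|\Oy\setminus I|^{|P_1\cap B|}\,|I|^{|P_1\setminus B|}\,|\Oz\setminus S|^{|P_2\cap B|}\,|S|^{|P_2\setminus B|}$ (odd), and the rest of the argument goes through unchanged.
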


\begin{proof}
Although  the hardness gadgets from~\cite{squarefree} 
are more constrained than the ones that we use, the proof of 
\cite[Theorem 4.2]{squarefree} establishes the $\parP$-hardness in Theorem~\ref{thm:hardness-gadget} with only very minor changes, which we now
describe.

As noted in the introduction,
Valiant~\cite{Valiant2006} showed that the problem $\paris$ is $\parP$-complete.
The proof of \cite[Theorem 4.2]{squarefree}
gives a polynomial-time Turing reduction from 
$\paris$ to~$\parRet{H}$.
The reduction uses~$G$ and the hypothesised hardness gadget for~$H$
to construct a partially $H$-labelled graph~$J$ such that the
number of independent sets of $G$, which we denote $\abs{\calI(G)}$,
 is equal, modulo~$2$, to 
 $\abs{\hom{J}{H}}$. 
The reduction concludes by making a single oracle call to $\parRet{H}$ with input~$J$.

In our case, the construction of~$J$ is exactly as it is in~\cite{squarefree}.
The proof that 
$\abs{\calI(G)} = 
\abs{\hom{J}{H}}\mod 2$  needs only a very minor modification to account for the fact that the sets~$I$ and~$S$ in the hardness gadget
 may have more than one element.
 At some point in the proof of \cite{squarefree}, it is argued that a certain quantity $n(a,a')$ is even if $a$ and $a'$ are both in~$I$, and odd otherwise.
 This is still true even when~$I$  and~$S$ contain more than one element --- it follows from item~3 in the definition of hardness gadget
 (and from the fact that~$I$ and~$S$ have odd cardinality).

The final sentence in the statement of Theorem~\ref{thm:hardness-gadget}, asserting
that $\parRet{H}$ cannot be solved in time $\exp(o(|J|))$ unless the rETH fails,
was not contained in the original theorem of~\cite{squarefree},
however it follows immediately from the fact that 
 $|J| = O(|G|)$ (which is easily checked) 
 and from the fact that 
$\paris$ cannot be solved in time $\exp(o(|G|))$, unless the rETH fails,
which was proved by Dell, Husfeldt, Marx, Taslaman and Wahlen~\cite{DellHMTW14}. \footnote{In more detail,
 Dell et al.\ established that counting independent sets cannot be done in time $\exp(o(|E(G)|))$, unless the rETH fails~\cite[Theorem~1.2]{DellHMTW14}. 
 They point out explicitly that their reduction also works in the case of counting modulo $2$. 
 Furthermore, their reduction always yields a graph without isolated vertices --- for such graphs we have $|E(G)|=\Theta(|G|)$.}
 
 More precisely, for establishing the conditional lower bound for $\parRet{H}$, let us assume for contradiction that we can solve $\parRet{H}$ in time $\exp(o(|J|))$. We obtain an algorithm for $\paris$ that, on input $G$, runs the (polynomial-time) Turing-reduction from~\cite[Theorem 4.2]{squarefree} and then simulates each oracle query $J$ for $\parRet{H}$ in time $\exp(o(|J|))$; note that this simulation is possible by our assumption. Since each oracle query $J$ has size at most $O(|G|)$, the total running time of our algorithm for $\paris$ is bounded by $\mathsf{poly}(|G|)\cdot \exp(o(|G|)) = \exp(o(|G|))$, contradicting rETH. This concludes the proof of the conditional lower bound.
 \end{proof}	
	
	\section{Toolbox}\label{sec:toolbox}

\subsection{Path Gadget}

We will use the following path gadget, which  is called a   ``caterpillar gadget'' in \cite{squarefree}.

\begin{defn}
\label{defn:caterpillar} 
Given a path $P=(v_0,\dots, v_q)$ in~$H$ with $q\geq 1$,
define the \emph{path gadget} $J_P=(G,\tau)$ as follows.
$V(G)=\{u_1, \dots, u_{q-1}, w_1,\dots, w_{q-1},y,z\}$
and $G$ is the path $(y,u_1,\ldots,u_{q-1},z)$ together with edges $\{u_j,w_j\}$ for $j\in[q-1]$.
 $\tau=\{w_1\mapsto v_1, \dots, w_{q-1}\mapsto v_{q-1}\}$.
\end{defn}

We will use the following lemma of \Goebel, Goldberg and Richerby.
The original lemma was stated for square-free graphs, but the proof only uses
the fact that no edge of~$P$ is part of a square in~$H$.

\begin{lem}[{\cite[Lemma 4.5]{squarefree}}]\label{lem:SquarefreePathPrehardness}
For an integer $q\ge 1$, let $P=(v_0, \dots, v_q)$ be a path in a graph $H$. Suppose that no edge of $P$ is part of a square in $H$ and that $\deg_H(v_j)$ is odd for all $j\in [q-1]$. Let $\Omega_y\subseteq \NH{v_0}$ and $\Omega_z \subseteq \NH{v_q}$, with $I=\{v_1\}\subset \Omega_y$ and $S=\{v_{q-1}\}\subset \Omega_z$. For $i=v_1$, $s=v_{q-1}$ and for each $o\in \Omega_y\setminus I$ and $x\in \Omega_z\setminus S$ we have the following:
	\begin{myitemize}
		\item $\abs{\hom{(J_p, y,z)}{(H,o,x)}}=0$,
		\item $\abs{\hom{(J_p, y,z)}{(H,o,s)}}=1$,
		\item $\abs{\hom{(J_p, y,z)}{(H,i,x)}}=1$, and
		\item $\abs{\hom{(J_p, y,z)}{(H,i,s)}}$ is odd.
\end{myitemize} 
\end{lem}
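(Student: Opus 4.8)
The plan is to analyse homomorphisms from the path gadget $J_P=(G,\tau)$ directly, using the structure of $G$ as a path $(y,u_1,\ldots,u_{q-1},z)$ with a pendant vertex $w_j$ attached to each $u_j$ and $w_j$ pinned to $v_j$. First I would record the basic constraint: in any homomorphism $\varphi$ from $(J_P,y,z)$ to $(H,a,b)$, the edge $\{u_j,w_j\}$ forces $\varphi(u_j)\in\NH{v_j}$ for every $j\in[q-1]$, while the path $(y,u_1,\ldots,u_{q-1},z)$ forces $\varphi(u_1)\in\NH{a}$, $\varphi(u_{q-1})\in\NH{b}$, and $\varphi(u_{j})\in\NH{\varphi(u_{j-1})}\cap\NH{\varphi(u_{j+1})}$ along the way. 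So a homomorphism is exactly a walk $(a,\varphi(u_1),\ldots,\varphi(u_{q-1}),b)$ in $H$ in which the $j$-th internal vertex lies in $\NH{v_j}$.

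The key step is to show that the only such walk is the ``canonical'' one $\varphi(u_j)=v_j$ whenever the endpoints are the right ones, and to count carefully otherwise. I would argue by induction along the path, using the square-freeness hypothesis: since no edge of $P$ lies in a square of $H$, the common neighbourhood $\NH{v_{j-1}}\cap\NH{v_{j+1}}$ restricted to $\NH{v_j}$ collapses — concretely, if a vertex $t\in\NH{v_j}$ is adjacent to both $v_{j-1}$ and $v_{j+1}$, then $t$, $v_{j-1}$, $v_j$, $v_{j+1}$ would create a square (or $t\in\{v_{j-1},v_{j+1}\}$, forcing a short cycle) unless $t=v_j$. This rigidity propagates: once $\varphi(u_{j})$ is pinned down to $v_j$ by its two path-neighbours being $v_{j-1}$ and $v_{j+1}$, the neighbouring coordinates are forced too. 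For the four cases: if the left endpoint is $i=v_1$, then $\varphi(u_1)\in\NH{v_1}\cap\NH{v_0}$, which by square-freeness forces $\varphi(u_1)=v_0$ is impossible since we also need it in $\NH{v_1}$... — more carefully, I would show $\varphi(u_1)$ is forced so that the whole walk becomes $(v_1,v_2,\ldots,v_{q-1})$ on the internal vertices, hence there is a unique extension, giving count $1$, once we also know the right endpoint $x\in\Omega_z\setminus S$ or $s=v_{q-1}$ behaves correctly. Symmetrically for the right endpoint. When both endpoints are ``$o$'' and ``$x$'' (outside $I$ and $S$), the forcing makes the walk impossible, giving $0$; when both are $i=v_1$ and $s=v_{q-1}$, the canonical walk survives but the pendant structure and the degree-parity hypothesis ($\deg_H(v_j)$ odd for $j\in[q-1]$) enter to make the total count odd rather than exactly $1$ — I would count homomorphisms as a product over the internal vertices of local choices and invoke oddness of each $\deg_H(v_j)$.

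The main obstacle I expect is the careful case $\abs{\hom{(J_P,y,z)}{(H,i,s)}}$ odd: here one cannot simply claim uniqueness, because near the endpoints the constraints from $a=v_1$ and $b=v_{q-1}$ partially overlap with the pinning constraints, so there may be several valid walks, and one must show their number is odd. The cleanest route is probably to observe that the set of such homomorphisms carries a fixed-point-free involution on its non-canonical part — or, following the cited proof, to set up a recursion on $q$ and use that at each step the number of extensions is congruent to $1$ modulo $2$ precisely because $\deg_H(v_j)$ is odd, so that the product of these counts is odd. Since this is exactly \cite[Lemma 4.5]{squarefree} with ``square-free'' weakened to ``no edge of $P$ is in a square,'' and the original proof only used that weaker hypothesis, I would ultimately cite that proof verbatim, checking only that every invocation of square-freeness there is in fact an invocation of the local no-square property of the edges of $P$.
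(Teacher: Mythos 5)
Your ultimate strategy — observing that the cited proof of \cite[Lemma 4.5]{squarefree} only ever invokes square-freeness via the local property that no edge of $P$ is in a square, and therefore citing it verbatim with the weaker hypothesis — is exactly what the paper does; it provides no independent proof, only the remark preceding the lemma statement. So your proposal matches the paper's route.

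One caution about the exploratory sketch in the middle: the ``canonical'' assignment cannot be $\varphi(u_j)=v_j$, since the pin $w_j\mapsto v_j$ and the edge $\{u_j,w_j\}$ force $\varphi(u_j)\in\NH{v_j}$ and $H$ has no self-loops. The rigid assignments are $\varphi(u_j)=v_{j+1}$ (when $a=v_1$, $b=x$), $\varphi(u_j)=v_{j-1}$ (when $a=o$, $b=v_{q-1}$), and nothing (when $a=o$, $b=x$, where the two forced directions clash). In the $(i,s)$ case a correct parametrization is: the internal walk tracks forward ($t_j=v_{j+1}$) up to some index, is free in $\NH{v_p}\setminus\{v_{p+1}\}$ at one position $p\in[q-1]$, and tracks backward ($t_j=v_{j-1}$) thereafter; this gives a total of $1+\sum_{p=1}^{q-1}(\deg_H(v_p)-1)$, which is odd precisely because each $\deg_H(v_p)$ is odd. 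Your line ``$\varphi(u_1)\in\NH{v_1}\cap\NH{v_0}$'' for the case $a=v_1$ is also off — the two constraints on $u_1$ (the pin and the edge to $y$) then both give $\varphi(u_1)\in\NH{v_1}$ and do not involve $\NH{v_0}$. None of this affects the validity of your final move of citing \cite{squarefree}, but the sketch as written would not stand on its own.
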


\subsection{Cycle Gadget}

We will use the following cycle gadget, which is a generalisation of  the cycle gadget in~\cite{BulatovModp}.

\begin{defn}[Cycle gadget]\label{def:cyclegadget}
	For an integer $q\ge 3$, let $\calC=(\calC_0, \dots, \calC_{q-1})$ where, for $i=0, \dots, q-1 $, $s_i$ is a positive integer and $\calC_i=\{c_i^1, \dots, c_i^{s_i}\}$ is a set of $s_i$ vertices. We define the \emph{cycle gadget} $J_{\calC}=(G,\tau)$ as follows (see Figure~\ref{fig:cycleGadget}). For $i=0, \dots, q-1$, let $U_i=\{u_i^1, \dots, u_i^{s_i}\}$ be a set of $s_i$ vertices. Then $V(G)=\{v_0, \dots, v_{q-1}\}\cup U_0 \cup \dots \cup U_{q-1}$ (where all named vertices are assumed to be distinct) and $E(G)=\{\{v_i, v_{i+1 \bmod {q}}\} \mid i \in \{0, \dots, q-1\} \} \cup \{\{v_i, u_i^j\} \mid i \in \{0, \dots, q-1\}, j\in \{1, \dots, s_i\} \}$. $\tau =\{u_i^j\mapsto c_i^j \mid \forall i \in \{0, \dots q-1\}, j\in \{1, \dots s_i\}\}$.
	
	\begin{figure}[ht]
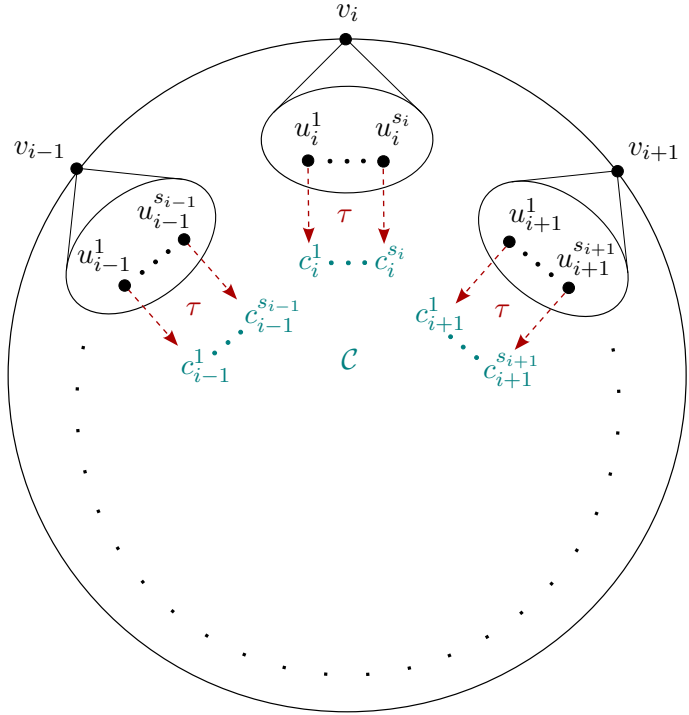

		\centering
        \svg{0.8\linewidth}{cyclegadget}
		\caption{The cycle gadget $J_{\calC}$ is depicted in black. The corresponding pinning function $\tau$ is depicted in red, it maps to the vertices of $\calC$, depicted in blue. 
		}
		\label{fig:cycleGadget}
	\end{figure}
\end{defn}	

In fact, we will also need a further generalisation of the cycle gadget from Definition~\ref{def:cyclegadget}.

\begin{defn}[Generalised cycle gadget]\label{def:generalisedcyclegadget}
	Intuitively, we generalise the cycle gadget by attaching, at each vertex $v_i$, a gadget $J_i$.
	Let $H$ be a graph.
	For an integer $q\ge 3$, let $\calC=(\calC_0, \dots, \calC_{q-1})$ where, for $i=0, \dots, q-1 $, $s_i$ is a positive integer and $\calC_i=\{c_i^1, \dots, c_i^{s_i}\}$ is a set of $s_i$ vertices of $H$. Let $J_{\calC}$ be the cycle gadget from Definition~\ref{def:cyclegadget}. Let $(J_0,z_0), \dots, (J_{q-1},z_{q-1})$ be partially $H$-labelled graphs with distinguished vertices.
	Then the \emph{generalised cycle gadget} $J(J_{\calC}, J_0, \dots, J_{q-1})$ is the gadget obtained from $J_\calC, J_0, \dots, J_{q-1}$ by identifying, for each $i\in \{0, \dots,q-1\}$ the vertex $v_i$ from $J_\calC$ with the vertex $z_i$ from $J_i$.  
\end{defn}

\begin{lem}\label{lem:CycleGadget+}
	For an integer $q\ge 3$, let $H$ be a graph which contains sets of vertices $\calC_0, \dots, \calC_{q-1}$ (not necessarily disjoint or even distinct). Let $(J_0,z_0), \dots, (J_{q-1},z_{q-1})$ be partially $H$-labelled graphs with distinguished vertices, and, for each $i\in\{0, \dots, q-1\}$, let 
	$\Omega_i = \{ a \in V(H) \mid |\hom{(J_i, z_i)}{(H,a)}|\text{ is odd}\}$.
	Suppose that for all $i\in \{0, \dots, q-1\}$ we have the following.
	{
		\renewcommand{\theenumi}{L\thethm.\arabic{enumi}}
		\renewcommand{\labelenumi}{(\theenumi)}
		\setlength{\leftmargini}{5em}
		\begin{myenumerate}
			\item $\abs{\calC_{i-1 \bmod q}\cap\Omega_i}$ and $\abs{\calC_{i+1 \bmod q}\cap\Omega_i}$ are odd. \label{equ:CycleGadget+2}
			\item If $w\in \calC_{i-1 \bmod q}$ then $\NH{w} \cap \NH{\calC_{i+1 \bmod q}}= \calC_i$. \label{equ:CycleGadget+3a}
			\item If $w\in \calC_{i+1 \bmod q}$ then $\NH{\calC_{i-1 \bmod q}} \cap \NH{w} = \calC_i$. \label{equ:CycleGadget+3b}
			\item There is no walk of the form $D=(d_0, \dots, d_{q-1}, d_0)$ such that, for all $i\in \{0,\dots, q-1\}$, $d_i\in \NH{\calC_i}\setminus(\calC_{i-1 \bmod q}\cup \calC_{i+1 \bmod q})$.\label{equ:CycleGadget+4}
		\end{myenumerate} 
	}
	\noindent Let $J_\calC$ be the cycle gadget (Definition~\ref{def:cyclegadget}) and let $J^*=J(J_{\calC}, J_0, \dots, J_{q-1})$ be the generalised cycle gadget (Definition~\ref{def:generalisedcyclegadget})
	Then, for all $k\in \{0, \dots, q-1\}$,
	\[
	\{a \in V(H) \mid \abs{\hom{(J^*, v_k)}{(H,a)}} \text{ is odd}\} = \left(\calC_{k-1 \bmod q} \cup \calC_{k+1 \bmod q}\right) \cap \Omega_k.
	\] 
\end{lem}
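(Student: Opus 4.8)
The plan is to translate $\abs{\hom{(J^*,v_k)}{(H,a)}}$ modulo~$2$ into a count of closed walks in~$H$, prove an ``alignment'' dichotomy for the relevant walks, and then count each aligned class using the oddness hypothesis~\ref{equ:CycleGadget+2}. First I would unpack $J^*=J(J_\calC,J_0,\dots,J_{q-1})$: a homomorphism $\varphi$ from $(J^*,v_k)$ to $(H,a)$ is the same as a choice of images $d_i:=\varphi(v_i)$ for $i\in\{0,\dots,q-1\}$ together with, for each~$i$, a homomorphism from $(J_i,z_i)$ to $(H,d_i)$, subject to $d_k=a$, to $(d_0,\dots,d_{q-1},d_0)$ being a closed walk in~$H$ (the cycle edges of $J_\calC$), and to $d_i\in\NH{\calC_i}$ for every~$i$ (the pinned leaves of $J_\calC$). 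Hence
\[
\abs{\hom{(J^*,v_k)}{(H,a)}}
\;=\;\sum_{\substack{(d_0,\dots,d_{q-1},d_0)\text{ closed walk in }H\\ d_k=a,\ d_i\in\NH{\calC_i}\text{ for all }i}}\ \prod_{i=0}^{q-1}\abs{\hom{(J_i,z_i)}{(H,d_i)}},
\]
and modulo~$2$ the $i$-th factor equals~$1$ precisely when $d_i\in\Omega_i$, so this quantity is congruent to the number $\#\calW_k(a)$ of closed walks $(d_0,\dots,d_{q-1},d_0)$ with $d_k=a$ and $d_i\in\NH{\calC_i}\cap\Omega_i$ for all~$i$. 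Since the hypotheses are cyclically symmetric I take $k=0$. I would also record that, applying~\ref{equ:CycleGadget+3a} (resp.~\ref{equ:CycleGadget+3b}) with $w$ ranging over $\calC_{i-1}$ (resp.~$\calC_{i+1}$), one gets $\calC_i\subseteq\NH{w}$ for each such~$w$, i.e.\ $\calC_{i-1}\subseteq\NH{\calC_i}$ and $\calC_{i+1}\subseteq\NH{\calC_i}$; consequently consecutive sets $\calC_i,\calC_{i\pm1}$ are completely joined, $\calC_i\cap\calC_{i\pm1}=\emptyset$ (as $H$ has no self-loops), and $(\calC_{k-1}\cup\calC_{k+1})\cap\Omega_k\subseteq\NH{\calC_k}\cap\Omega_k$, so it suffices to treat $a\in\NH{\calC_0}\cap\Omega_0$, every other~$a$ lying in neither side.

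The heart of the argument is a propagation property of walks. Suppose $(d_0,\dots,d_{q-1},d_0)$ is any closed walk in~$H$ with $d_i\in\NH{\calC_i}$ for all~$i$, and $d_i\in\calC_{i-1}$ for some~$i$; then \ref{equ:CycleGadget+3a} with $w=d_i$ gives $\NH{d_i}\cap\NH{\calC_{i+1}}=\calC_i$, and since $d_{i+1}\in\NH{d_i}$ (walk edge) and $d_{i+1}\in\NH{\calC_{i+1}}$ we obtain $d_{i+1}\in\calC_i=\calC_{(i+1)-1}$; iterating around the whole cycle yields $d_j\in\calC_{j-1}$ for \emph{all}~$j$. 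I call such a walk \emph{forward-aligned}; symmetrically, using \ref{equ:CycleGadget+3b} and stepping from $d_i$ to $d_{i-1}$, a walk with $d_i\in\calC_{i+1}$ for some~$i$ has $d_j\in\calC_{j+1}$ for all~$j$ and is \emph{backward-aligned}. A closed walk counted by $\#\calW_0(a)$ that is neither forward- nor backward-aligned has $d_i\in\NH{\calC_i}\setminus(\calC_{i-1}\cup\calC_{i+1})$ for every~$i$, contradicting~\ref{equ:CycleGadget+4}; hence every such walk is forward-aligned or backward-aligned (possibly both).

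Now I count, with $a\in\NH{\calC_0}\cap\Omega_0$ fixed. Let $X$ (resp.~$Y$) be the set of forward-aligned (resp.~backward-aligned) walks counted by $\#\calW_0(a)$. A forward-aligned walk has $d_0\in\calC_{q-1}$, and conversely any tuple with $d_0=a\in\calC_{q-1}\cap\Omega_0$ and $d_j\in\calC_{j-1}\cap\Omega_j$ for $j\in[q-1]$ is automatically a valid closed walk (consecutive $\calC$-sets being completely joined); so $X=\emptyset$ unless $a\in\calC_{q-1}$, in which case $\abs{X}=\prod_{j=1}^{q-1}\abs{\calC_{j-1}\cap\Omega_j}$, which is odd by~\ref{equ:CycleGadget+2}. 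Symmetrically $Y=\emptyset$ unless $a\in\calC_1$, in which case $\abs{Y}=\prod_{j=1}^{q-1}\abs{\calC_{j+1}\cap\Omega_j}$ is odd. Since $\#\calW_0(a)=\abs{X}+\abs{Y}-\abs{X\cap Y}$, this settles all cases except $a\in\calC_{q-1}\cap\calC_1\cap\Omega_0$: if $a$ lies in exactly one of $\calC_{q-1},\calC_1$ then $\#\calW_0(a)$ is odd and $a$ is in the right-hand side, and if in neither then $\#\calW_0(a)=0$ and $a$ is not.

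The remaining case $a\in\calC_{q-1}\cap\calC_1\cap\Omega_0$ is where I expect the real work to lie: I must show $\abs{X\cap Y}$ is odd, so that $\#\calW_0(a)\equiv 1+1-1\equiv 1$, matching the right-hand side again. The idea is that $\calC_{q-1}\cap\calC_1\neq\emptyset$ rigidifies the whole cyclic family. For any~$j$ with $\calC_{j-1}\cap\calC_{j+1}\neq\emptyset$, pick $v$ in the intersection; then $\NH{\calC_{j+1}}\subseteq\NH{v}$ (as $v\in\calC_{j+1}$) together with~\ref{equ:CycleGadget+3a} at $w=v$ gives $\NH{\calC_{j+1}}=\calC_j$, whence $\calC_{j+2}\subseteq\NH{\calC_{j+1}}=\calC_j$, and symmetrically (via~\ref{equ:CycleGadget+3b}) $\calC_{j-2}\subseteq\calC_j$. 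Starting from $j=0$ this gives $\calC_2\subseteq\calC_0$, hence $\calC_0\cap\calC_2\neq\emptyset$, and repeating around the cycle yields $\calC_{j+2}\subseteq\calC_j$ for every~$j$; chasing the resulting chain of containments back to its start forces all even-indexed $\calC_i$ to coincide and all odd-indexed $\calC_i$ to coincide (indeed all of them, if $q$ is odd). In every case $\calC_{j-1}=\calC_{j+1}$, so $\calC_{j-1}\cap\calC_{j+1}\cap\Omega_j=\calC_{j-1}\cap\Omega_j$, and therefore $\abs{X\cap Y}=\prod_{j=1}^{q-1}\abs{\calC_{j-1}\cap\Omega_j}$ is odd by~\ref{equ:CycleGadget+2}, finishing the proof. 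I expect the main obstacle to be exactly this overlap analysis --- recognising that a single nonempty intersection $\calC_{j-1}\cap\calC_{j+1}$ already forces a collapse of the entire cyclic configuration of sets; the other steps are essentially bookkeeping once the alignment dichotomy is available.
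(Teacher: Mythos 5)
Your proof is correct and follows the same core strategy as the paper: unpack homomorphisms from $J^*$ into closed walks weighted by the $J_i$-gadget parities, establish the forward/backward alignment dichotomy from \eqref{equ:CycleGadget+3a}--\eqref{equ:CycleGadget+3b}, rule out the unaligned case via \eqref{equ:CycleGadget+4}, and count aligned walks using \eqref{equ:CycleGadget+2}. The one place you go further than the paper is the overlap case $a\in\calC_{k-1}\cap\calC_{k+1}$, where a walk could a priori be counted both as forward- and backward-aligned: the paper's write-up simply states the product formula for $a\in\calC_{k-1}$ (and ``analogously'' for $\calC_{k+1}$) without addressing potential double-counting, whereas you correctly observe that a nonempty intersection $\calC_{j-1}\cap\calC_{j+1}$ collapses the whole cyclic family so that $\calC_{j-1}=\calC_{j+1}$ for all $j$, hence the two alignment classes coincide and the inclusion--exclusion reduces to a single product. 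This reconciles your more careful $\abs{X}+\abs{Y}-\abs{X\cap Y}$ accounting with the paper's terser single-product claim; both arguments end at the same answer, but your treatment makes explicit a case the paper leaves implicit.
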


\begin{proof}
	To simplify notation, all indices in this proof are considered to be modulo $q$. 
	For $a\in V(H)$, let $k\in \{0, \dots, q-1\}$ and $h\in \hom{(J^*, v_k)}{(H,a)}$. By construction of $J^*$ and the fact that $h$ has to preserve edges, for all $i\in \{0,\dots, q-1\}$, we obtain
	\begin{itemize}
		\item $h(v_i)\in \NH{\calC_i}$,
		\item $h(v_i)\notin \calC_i$ (since we do not allow self-loops in $H$),
		\item $h(v_i)$ is adjacent to $h(v_{i+1})$ in $H$,
		\item $h(v_i)\neq h(v_{i+1})$.
	\end{itemize}
	
	Consequently, it holds that $h(v_{i+1})\in \NH{h(v_i)} \cap \NH{\calC_{i+1}}$.
	Suppose, for some $i\in \{0, \dots, q-1\}$, that $h(v_i)\in \calC_{i-1}$. Then, by \eqref{equ:CycleGadget+3a}, we have $h(v_{i+1})\in \calC_i$. Therefore,
	\begin{equation}\label{equ:CycleGadget+A}
	\text{If }h(v_i)\in \calC_{i-1} \text{ then } h(v_{i+1})\in \calC_i.
	\end{equation}
	Analogously, using \eqref{equ:CycleGadget+3b},
	\begin{equation}\label{equ:CycleGadget+B}
	\text{If }h(v_i)\in \calC_{i+1} \text{ then } h(v_{i-1})\in \calC_i.
	\end{equation}
	
	Thus, if there exists some $\ell\in\{0, \dots, q-1\}$ such that $h(v_\ell)\in \calC_{\ell-1}$ then we can use~\eqref{equ:CycleGadget+A} iteratively to obtain $h(v_i)\in \calC_{i-1}$ for all $i\in\{0, \dots, q-1\}$. In particular, $h(v_k) \in \calC_{k-1}$. Analogously, if there exists some $\ell\in\{0, \dots, q-1\}$ such that $h(v_\ell)\in \calC_{\ell+1}$ then we can use~\eqref{equ:CycleGadget+B} iteratively to obtain $h(v_i)\in \calC_{i+1}$ for all $i\in\{0, \dots, q-1\}$. This means that $h(v_k) \in \calC_{k+1}$. 
	
	Suppose that $h(v_k) \notin \calC_{k-1} \cup \calC_{k+1}$. We have established that, using \eqref{equ:CycleGadget+A} and~\eqref{equ:CycleGadget+B} iteratively, we obtain, for all $i\in\{0, \dots, q-1\}$, $h(v_i)\notin \calC_{i-1} \cup \calC_{i+1}$ and consequently $h(v_i)\in \NH{\calC_i}\setminus(\calC_{i-1} \cup \calC_{i+1})$. However, $(h(v_0), \dots, h(v_{q-1}),h(v_0))$ is a walk in $H$, which gives a contradiction to~\eqref{equ:CycleGadget+4}.
	
	We have shown that $h(v_k) \in \calC_{k-1} \cup \calC_{k+1}$. Moreover, for each $a\in \calC_{k-1}$, $\abs{\hom{(J^*, v_k)}{(H,a)}} = \abs{\hom{(J_k, z_k)}{(H,a)}} \cdot \prod_{i \in \{0, \dots, q-1\}\setminus\{k\}} \abs{\calC_{i-1}\cap\Omega_i}$, which is odd if and only if $a\in \calC_{k-1}\cap\Omega_k$ by \eqref{equ:CycleGadget+2}. The statement for $a\in \calC_{k+1}$ is analogous.
\end{proof}

\begin{lem}\label{lem:CycleHardness+}
	For an integer $q\ge 3$, let $H$ be a graph which contains sets of vertices $\calC_0, \dots, \calC_{q-1}$ (not necessarily disjoint or even distinct). Let $(J_0,z_0), \dots, (J_{q-1},z_{q-1})$ be partially $H$-labelled graphs with distinguished vertices, and, for each $i\in\{0, \dots, q-1\}$, let 
	$\Omega_i = \{ a \in V(H) \mid |\hom{(J_i, z_i)}{(H,a)}|\text{ is odd}\}$.
	Suppose that for all $i\in \{0, \dots, q-1\}$ we have the following properties from the statement of Lemma~\ref{lem:CycleGadget+}.
	{
		\renewcommand{\theenumi}{L\ref{lem:CycleGadget+}.\arabic{enumi}}
		\renewcommand{\labelenumi}{(\theenumi)}
		\setlength{\leftmargini}{5em}
		\begin{myenumerate}
			\item $\abs{\calC_{i-1 \bmod q}\cap\Omega_i}$ and $\abs{\calC_{i+1 \bmod q}\cap\Omega_i}$ are odd. \label{equ:CycleHardness+1}
			\item If $w\in \calC_{i-1 \bmod q}$ then $\NH{w} \cap \NH{\calC_{i+1 \bmod q}}= \calC_i$.\label{equ:CycleHardness+2}
			\item If $w\in \calC_{i+1 \bmod q}$ then $\NH{\calC_{i-1 \bmod q}} \cap \NH{w} = \calC_i$. \label{equ:CycleHardness+3}
			\item There is no walk of the form $D=(d_0, \dots, d_{q-1}, d_0)$ such that, for all $i\in \{0,\dots, q-1\}$, $d_i\in \NH{\calC_i}\setminus(\calC_{i-1 \bmod q}\cup \calC_{i+1 \bmod q})$.\label{equ:CycleHardness+4}
		\end{myenumerate} 
	}
	\noindent Furthermore, there exists $k\in\{0, \dots, q-1\}$ such that
	{
		\renewcommand{\theenumi}{L\thethm.\arabic{enumi}}
		\renewcommand{\labelenumi}{(\theenumi)}
		\setlength{\leftmargini}{5em}
		\begin{myenumerate}
			\item  there are no edges between $\calC_k$ and $\calC_{k+3 \bmod q}$,\label{equ:CycleHardness+5}
			\item  $\abs{\left(\calC_k \cup \calC_{k+2\bmod q}\right) \cap \Omega_{k+1}}$ and $\abs{\left(\calC_{k+1\bmod q} \cup \calC_{k+3\bmod q}\right) \cap \Omega_{k+2}}$ are even. \label{equ:CycleHardness+6}
		\end{myenumerate}
	}
	\noindent Then $H$ has a hardness gadget.
\end{lem}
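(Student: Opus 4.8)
The plan is to exhibit an explicit hardness gadget $(I,S,(J_1,y),(J_2,z),(J_3,y,z))$ for $H$ in the sense of Definition~\ref{defn:hardness-gadget}; the conclusion is then immediate. Hypotheses \eqref{equ:CycleHardness+1}--\eqref{equ:CycleHardness+4} are precisely the hypotheses of Lemma~\ref{lem:CycleGadget+}, so that lemma applies to the generalised cycle gadget $J^*=J(J_\calC,J_0,\dots,J_{q-1})$ and gives, for every $j\in\{0,\dots,q-1\}$ (indices mod $q$ throughout),
\[
\{a\in V(H) \mid \abs{\hom{(J^*,v_j)}{(H,a)}}\text{ is odd}\} = (\calC_{j-1}\cup\calC_{j+1})\cap\Omega_j .
\]
I will take $J_1$ and $J_2$ to be two disjoint copies of $J^*$, distinguishing the vertex $v_{k+1}$ of the first copy (renaming it $y$) and the vertex $v_{k+2}$ of the second copy (renaming it $z$), where $k$ is the index supplied by the hypotheses; and I will take $J_3$ to be the single edge $\{y,z\}$ with empty pinning function, so that $\Sigma_{a,b}=\hom{(J_3,y,z)}{(H,a,b)}$ has cardinality $1$ when $\{a,b\}\in E(H)$ and $0$ otherwise. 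For the two odd-cardinality sets I will use $I=\calC_{k+2}\cap\Omega_{k+1}$ and $S=\calC_{k+1}\cap\Omega_{k+2}$.

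First I would verify connectivity and Properties~1--2 of Definition~\ref{defn:hardness-gadget}. The edge $J_3$ is connected, and $J^*$ is connected because $J_\calC$ and each $J_i$ are. By the displayed identity, $\Omega_y=(\calC_k\cup\calC_{k+2})\cap\Omega_{k+1}$ and $\Omega_z=(\calC_{k+1}\cup\calC_{k+3})\cap\Omega_{k+2}$, which have even cardinality by \eqref{equ:CycleHardness+6}, whereas $\abs{I}$ and $\abs{S}$ are odd by \eqref{equ:CycleHardness+1}; trivially $I\subseteq\Omega_y$ and $S\subseteq\Omega_z$, and the parity mismatch makes these inclusions proper.

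The substance is Property~3, and the only step needing care is bookkeeping which of the classes $\calC_k,\calC_{k+1},\calC_{k+2},\calC_{k+3}$ contains each of $i,o,s,x$. Since $I\subseteq\calC_{k+2}$ and $\Omega_y\subseteq\calC_k\cup\calC_{k+2}$, we have $\Omega_y\setminus I\subseteq\calC_k$; symmetrically $\Omega_z\setminus S\subseteq\calC_{k+3}$. Hence in every quadruple to be checked, $i\in\calC_{k+2}$, $o\in\calC_k$, $s\in\calC_{k+1}$ and $x\in\calC_{k+3}$. Property~\eqref{equ:CycleHardness+2} gives, in particular, $\calC_i\subseteq\NH{w}$ for every $w\in\calC_{i-1}$, so, taking the index there to be $j+1$, any two consecutive classes $\calC_j,\calC_{j+1}$ are completely joined in $H$ (every vertex of one is adjacent to every vertex of the other); it also follows that consecutive classes are disjoint, since $H$ has no self-loops. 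Applying this to the pairs $\calC_k$--$\calC_{k+1}$, $\calC_{k+1}$--$\calC_{k+2}$ and $\calC_{k+2}$--$\calC_{k+3}$ shows $\{o,s\},\{i,s\},\{i,x\}\in E(H)$, so $\abs{\Sigma_{o,s}}=\abs{\Sigma_{i,s}}=\abs{\Sigma_{i,x}}=1$ is odd; on the other hand $\{o,x\}$ would be an edge between $\calC_k$ and $\calC_{k+3}$, which \eqref{equ:CycleHardness+5} forbids, so $\abs{\Sigma_{o,x}}=0$ is even. Thus all the requirements of Definition~\ref{defn:hardness-gadget} hold, and $H$ has a hardness gadget.

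I do not anticipate a real obstacle: the whole proof is a substitution into the hardness-gadget template, with Lemma~\ref{lem:CycleGadget+} supplying $\Omega_y$ and $\Omega_z$ and the new hypotheses \eqref{equ:CycleHardness+5}--\eqref{equ:CycleHardness+6} supplying the missing parity and the missing non-edge. The one thing to be careful about is that all indices are read modulo $q$, so that when $q=3$ some of the four classes coincide; this is harmless because the argument only ever uses the displayed containments together with \eqref{equ:CycleHardness+5}, and the automatic disjointness of consecutive classes rules out any degenerate coincidence among $i,o,s,x$.
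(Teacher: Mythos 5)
Your proposal is correct and follows essentially the same construction as the paper's proof: two copies of the generalised cycle gadget with distinguished vertices $v_{k+1}$ and $v_{k+2}$, the edge as $J_3$, $I=\calC_{k+2}\cap\Omega_{k+1}$, $S=\calC_{k+1}\cap\Omega_{k+2}$, with Lemma~\ref{lem:CycleGadget+} supplying $\Omega_y,\Omega_z$ and (L\ref{lem:CycleGadget+}.2) giving the edges needed for the odd $\Sigma$-values. The extra observations you add (proper inclusion via parity mismatch, disjointness of consecutive classes via the absence of self-loops, the $q=3$ wrap-around) are correct and make the bookkeeping more explicit, but do not change the argument.
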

\begin{proof}
	To simplify notation all indices in this proof are considered to be modulo $q$. 
	We construct a hardness gadget $(I,S,(J'_1, y), (J'_2, z), (J'_3, y,z))$ for $H$, as defined in Definition~\ref{defn:hardness-gadget}.
	
	Let $\calC=(\calC_0, \dots, \calC_{q-1})$.
	Let $J'_1$ and $J'_2$ each be an instance of the generalised cycle gadget $J(J_{\calC}, J_0, \dots, J_{q-1})$, let $y=v_{k+1}$, and let $z=v_{k+2}$. Then we have $\Omega_y=\left(\calC_k \cup \calC_{k+2}\right) \cap \Omega_{k+1}$ and $\Omega_z=\left(\calC_{k+1} \cup \calC_{k+3}\right) \cap \Omega_{k+2}$ by Lemma~\ref{lem:CycleGadget+}. It follows that $\abs{\Omega_y}$ and $\abs{\Omega_z}$ are even by~\eqref{equ:CycleHardness+6}. Let $I=\calC_{k+2}\cap\Omega_{k+1}$ and $S=\calC_{k+1}\cap\Omega_{k+2}$. We note that $I$ and $S$ have odd size by~\eqref{equ:CycleHardness+1} and that $I\subset \Omega_y$ and $S \subset \Omega_z$.
	
	Let $J_3$ be an edge from $y$ to $z$. For each $o\in \Omega_y\setminus I\subseteq\calC_k$, $s\in S\subseteq\calC_{k+1}$, $i\in I\subseteq\calC_{k+2}$ and $x\in \Omega_z\setminus S\subseteq\calC_{k+3}$,
	\begin{itemize}
		\item $\abs{\Sigma_{ox}}=0$ since no edge exists between $\calC_k$ and $\calC_{k+3}$ according to~\eqref{equ:CycleHardness+5}.
		\item $\abs{\Sigma_{is}}=\abs{\Sigma_{ix}}=\abs{\Sigma_{os}}=1$ since, by~\eqref{equ:CycleHardness+2}, for all $\ell\in \{0, \dots, q-1\}$ we have $\calC_{\ell} \subseteq \NH{\calC_{\ell+1}}$.
	\end{itemize}
\end{proof}

We point out a corollary which is more easily accessible and does not use the full generality of the gadget $J(J_{\calC}, J_0, \dots, J_{q-1})$ but rather only uses the cycle gadget $J_{\calC}$.	

\begin{cor}\label{cor:CycleOfSquares}
	For an integer $q=3$ or $q\ge 5$, let $H$ be a graph which contains a cycle $C=c_0, \dots, c_{q-1}, c_0$ such that 
	\begin{myitemize}
		\item for all $i\in \{0, \dots, q-1\}$, we have $\abs{N_{C,H}(c_i)}=1$, and
		\item there is no walk of the form $D=d_0, \dots, d_{q-1}, d_0$ with $d_i\in \NH{c_i}\setminus(c_{i-1}\cup c_{i+1})$ ($\forall i\in \{0, \dots, q-1\}$).
	\end{myitemize}
	Then $H$ has a hardness gadget.
\end{cor}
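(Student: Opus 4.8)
The plan is to obtain the corollary as a direct instantiation of Lemma~\ref{lem:CycleHardness+}. For each $i\in\{0,\dots,q-1\}$ I would set $\calC_i=\{c_i\}$ and let $(J_i,z_i)$ be the trivial gadget: a single vertex $z_i$ with empty pinning function. Then $\abs{\hom{(J_i,z_i)}{(H,a)}}=1$ for every $a\in V(H)$, so $\Omega_i=V(H)$, and the generalised cycle gadget $J(J_{\calC},J_0,\dots,J_{q-1})$ collapses to the cycle gadget $J_{\calC}$ (with distinguished vertices $v_0,\dots,v_{q-1}$). Note also that since $C=(c_0,\dots,c_{q-1},c_0)$ is a cycle, the vertices $c_0,\dots,c_{q-1}$ are pairwise distinct, and since $c_i$ is adjacent on $C$ to both $c_{i-1}$ and $c_{i+1}$ we have $c_i\in N_{C,H}(c_i)=\NH{c_{i-1}}\cap\NH{c_{i+1}}$; combined with the first hypothesis $\abs{N_{C,H}(c_i)}=1$, this yields $\NH{c_{i-1}}\cap\NH{c_{i+1}}=\{c_i\}$ for every $i$ (indices mod $q$).

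First I would check that properties~\eqref{equ:CycleHardness+1}--\eqref{equ:CycleHardness+4} of Lemma~\ref{lem:CycleHardness+} hold under this instantiation. With $\calC_j=\{c_j\}$ and $\Omega_i=V(H)$: \eqref{equ:CycleHardness+1} reads $\abs{\{c_{i-1}\}}=\abs{\{c_{i+1}\}}=1$, trivially odd; both \eqref{equ:CycleHardness+2} and~\eqref{equ:CycleHardness+3} unfold to $\NH{c_{i-1}}\cap\NH{c_{i+1}}=\{c_i\}$, which is the identity just established; and \eqref{equ:CycleHardness+4} unfolds to the nonexistence of a closed walk $D=(d_0,\dots,d_{q-1},d_0)$ with $d_i\in\NH{c_i}\setminus\{c_{i-1},c_{i+1}\}$ for all $i$, which is exactly the second hypothesis of the corollary.

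It then remains to produce an index $k$ with properties~\eqref{equ:CycleHardness+5} and~\eqref{equ:CycleHardness+6}. Under our instantiation, using that the $c_i$ are distinct, property~\eqref{equ:CycleHardness+6} simplifies to $c_k\neq c_{k+2}$ and $c_{k+1}\neq c_{k+3}$, which holds for \emph{every} $k$ precisely when $q\nmid 2$, i.e. for $q=3$ and for $q\geq 5$ --- this is the only point where the excluded value $q=4$ (for which $c_{k+1}=c_{k+3}$) would cause trouble. When $q=3$, property~\eqref{equ:CycleHardness+5} asks that $c_k$ and $c_{k+3}=c_k$ be non-adjacent, which holds because $H$ has no self-loop, so any $k$ works (e.g.\ $k=0$). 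The remaining case, property~\eqref{equ:CycleHardness+5} for $q\geq 5$ --- the existence of some $k$ with $\{c_k,c_{k+3}\}\notin E(H)$ --- is where I expect the only real difficulty, and I would resolve it by contradiction: if $\{c_i,c_{i+3}\}\in E(H)$ for every $i$, then $d_i:=c_{i+3}$ defines the sequence $D=(c_3,c_4,\dots,c_{q-1},c_0,c_1,c_2,c_3)$, which is a closed walk (it is simply $C$ re-started at $c_3$); for each $i$ we have $d_i=c_{i+3}\in\NH{c_i}$ by assumption, while $c_{i+3}\notin\{c_{i-1},c_{i+1}\}$ since $q\nmid 4$ and $q\nmid 2$ for $q\geq 5$; thus $D$ is a forbidden walk of the kind excluded by the second hypothesis of the corollary, a contradiction. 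Hence a suitable $k$ exists, and since~\eqref{equ:CycleHardness+6} holds for all $k$, this same $k$ works. Lemma~\ref{lem:CycleHardness+} then supplies a hardness gadget for $H$, as required.
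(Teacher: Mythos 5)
Your proposal follows exactly the paper's strategy --- apply Lemma~\ref{lem:CycleHardness+} with $\calC_i=\{c_i\}$ and trivial gadgets $(J_i,z_i)$ so that $\Omega_i=V(H)$ and the generalised cycle gadget reduces to $J_\calC$ --- and the main argument is correct. The one place you deviate is in verifying \eqref{equ:CycleHardness+5} for $q\ge 5$: you argue by contradiction, constructing the forbidden walk $D$ with $d_i:=c_{i+3}$ if all chords $\{c_i,c_{i+3}\}$ were present, which leans on the corollary's second hypothesis; the paper instead fixes $k=0$ and argues directly from the first hypothesis, observing that if $c_0$ were adjacent to $c_3$ then $c_0\in\NH{c_1}\cap\NH{c_3}=N_{C,H}(c_2)=\{c_2\}$, which is impossible. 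Both arguments are valid.

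There is one inaccuracy in your exposition. You claim that $c_{k+1}=c_{k+3}$ when $q=4$ and locate the obstacle for $q=4$ at \eqref{equ:CycleHardness+6}. That is wrong: on a $4$-cycle the indices $k+1$ and $k+3$ differ by $2\not\equiv 0\pmod 4$, so $c_{k+1}\neq c_{k+3}$, and indeed \eqref{equ:CycleHardness+6} holds for every $q\geq 3$, including $q=4$. (Your condition ``$q\nmid 2$'' is also satisfied by $q=4$, which contradicts your own ``i.e.'') The real obstacle at $q=4$ is \eqref{equ:CycleHardness+5}: there $c_{k+3}=c_{k-1}$ is always a neighbour of $c_k$ on the cycle, so no suitable $k$ exists. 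More fundamentally, the corollary's first hypothesis is already unsatisfiable when $q=4$, since $N_{C,H}(c_i)$ would then contain both $c_i$ and $c_{i+2}$, forcing $\abs{N_{C,H}(c_i)}\geq 2$. Since this aside concerns only the excluded case $q=4$, it does not undermine your proof for $q=3$ and $q\geq 5$, but it signals a misdiagnosis of why $q=4$ is excluded.
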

\begin{proof} All indices in this proof are considered to be modulo $q$. 
	For $i\in \{0, \dots, q-1\}$ we choose $\calC_i=N_{C,H}(c_i)$, which by the fact that $\abs{N_{C,H}(c_i)}=1$ implies $\calC_i=\{c_i\}$. We choose $k=0$. For each $i\in \{0, \dots, q-1\}$, let $(J_i, z_i)$ be the partially $H$-labelled graph that only contains the single vertex $z_i$ and has an empty pinning function. It follows that $\Omega_i=V(H)$ and that $J(J_{\calC}, J_0, \dots, J_{q-1})$ is essentially $J_{\calC}$. We check that the requirements of Lemma~\ref{lem:CycleHardness+} are met. \eqref{equ:CycleHardness+1} holds since $\calC_{i-1}\cap\Omega_i = \calC_{i-1} = \{c_{i-1}\}$ and $\calC_{i+1}\cap\Omega_i =\calC_{i+1} = \{c_{i+1}\}$. \eqref{equ:CycleHardness+2} and~\eqref{equ:CycleHardness+3} hold since $\abs{N_{C,H}(c_i)}=1$ and therefore $c_i$ is the only common neighbour of $c_{i-1}$ and $c_{i+1}$. 
	There is no walk of the form $D=d_0, \dots, d_{q-1}, d_0$ with $d_i\in \NH{c_i}\setminus(c_{i-1}\cup c_{i+1})$, as required by~\eqref{equ:CycleHardness+4}.
	Since $q\ge 3$ and $C$ is a cycle, the vertices $c_0, c_1, c_2$ are distinct. If $q=3$, as $C$ is a cycle, we have $c_0=c_3$, and~\eqref{equ:CycleHardness+5} holds since we do not allow self-loops in $H$. If otherwise $q\ge 5$ then~\eqref{equ:CycleHardness+5} holds since $\NH{c_1}\cap \NH{c_3}= N_{C,H}(c_2)=\{c_2\}$ and therefore $c_0$ (which is a neighbour of $c_1$) cannot be a neighbour of $c_3$. 
	Since $q\ge 3$~\eqref{equ:CycleHardness+6} holds as $\left(\calC_0 \cup \calC_{2}\right) \cap \Omega_{1}=\{c_0, c_2\}$ and $\left(\calC_{1} \cup \calC_{3}\right) \cap \Omega_{2}=\{c_1,c_3\}$ are sets of $2$ distinct vertices.
\end{proof}

\begin{rem}\label{rem:squarefreeSimplification} \label{A1}
Suppose that a square-free graph $H$ contains a cycle $C$. Clearly, the requirements of Corollary~\ref{cor:CycleOfSquares} are met and, by Theorem~\ref{thm:hardness-gadget}, we obtain $\parP$-hardness for $\parRet{H}$. If, in addition, $H$ is involution-free $\parP$-hardness carries over to $\parHom{H}$ by Theorem~\ref{thm:ETH_RetToHoms} (from~\cite[Theorem 3.1]{squarefree}). This argument, together with the classification of $\parHom{H}$ for trees by Faben and Jerrum~\cite{FJ} (or alternatively the shorter~\cite[Lemmas 5.1 and 5.3]{squarefree}) implies the dichotomy for square-free graphs presented in~\cite{squarefree}.
\end{rem}

\section{Chordal Bipartite Components}\label{sec:chordalBipartiteComps}

Our main strategy for proving $\parP$-hardness of $\parHom{H}$ for $K_4$-minor-free graphs will rely on finding induced cycles 
whose lengths are not equal to $4$. However, this requires us to treat the case of ($K_4$-minor-free) graphs that include \emph{only} squares as induced cycles separately; recall that such graphs are called chordal bipartite graphs.

In the current section we will construct a hardness gadget for every involution-free, $K_4$-minor-free, biconnected chordal bipartite graph $H$, 
unless $H$ has a very restricted form (this is Lemma~\ref{lem:main_chordal_bipartite}). In this restricted case we call~$H$ an \emph{impasse} (which will be formally defined in Definition~\ref{def:impasse}). The main tool that we use to construct  hardness gadgets relies on two squares that share one edge. More formally, we will consider the following graph:

\begin{defn}[The graph $F$, $\Gamma_{H\setminus F}(i,j)$]\label{def:graphF}
	The graph $F$ is defined to be the graph depicted in Figure~\ref{fig:F}.
	
	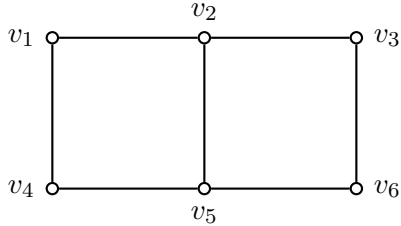
\begin{figure}[H]
		\centering
		\begin{tikzpicture}[scale=2, node distance = 1.4cm, thick] 
		\tikzstyle{vertex}=[draw=black, circle, inner sep=1.5pt] 
		\begin{scope}[xshift=2cm] 
		\node[vertex] (s) at (0  ,1) [label=90: $v_2$] {};
		\node[vertex] (i) at (0  ,0) [label=270: $v_5$] {};
		\node[vertex] (o) at (-1  ,1) [label=180: $v_1$] {};
		\node[vertex] (u) at (-1  ,0) [label=180: $v_4$] {};
		\node[vertex] (v) at (1  ,1) [label=0: $v_3$] {};
		\draw (s)--(i); \draw (s)--(o); \draw (s)--(v); \draw (o)--(u);
		\draw (u)--(i); 
		\node[vertex] (x) at (1  ,0) [label=0: $v_6$] {};
		\draw (x)--(i); \draw (v)--(x);
		\end{scope}
		\end{tikzpicture}
		\caption{The graph $F$.}
		\label{fig:F}
	\end{figure}

	Given a graph $H$ that contains $F$ as a subgraph, and $i\neq j\in [6]$, we define
	\[\Gamma_{H\setminus F}(i,j) = (\Gamma_H(v_i) \cap \Gamma_H(v_j))\setminus V(F) \,.\]
\end{defn}

\begin{defn}[Type $\TV$]\label{def:typeV}
	Let $H$ be a $K_4$-minor-free graph that contains $F$ as a subgraph. We say that $F$ has type $\TV$ in $H$ if one of the following is true
	\begin{itemize}
		\item $\Gamma_{H\setminus F}(1,5)$ and $\Gamma_{H\setminus F}(3,5)$ are non-empty and $\Gamma_{H\setminus F}(2,4)$ and $\Gamma_{H\setminus F}(2,6)$ are empty.
		\item $\Gamma_{H\setminus F}(2,4)$ and $\Gamma_{H\setminus F}(2,6)$ are non-empty and $\Gamma_{H\setminus F}(1,5)$ and $\Gamma_{H\setminus F}(3,5)$ are empty.
	\end{itemize}
	An illustration of the former case is given in Figure~\ref{fig:typeV}.
\end{defn}

The following observation will be useful in the remainder of this section:
\begin{lem}\label{lem:K_4-free}
	Let $H$ be a $K_4$-minor-free graph containing $F$ as a subgraph. At least one of $\Gamma_{H\setminus F}(1,5)$ and $\Gamma_{H\setminus F}(2,4)$ is empty, and at least one of $\Gamma_{H\setminus F}(2,6)$ and $\Gamma_{H\setminus F}(3,5)$ is empty.
\end{lem}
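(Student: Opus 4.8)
I will prove the contrapositive of each of the two assertions, by exhibiting a subdivision of $K_4$ (equivalently, since $K_4$ has maximum degree $3$, a $K_4$-minor) inside $H$, contradicting $K_4$-minor-freeness. Recall that $F$ consists of two squares, $v_1v_2v_5v_4$ and $v_2v_3v_6v_5$, glued along the edge $\{v_2,v_5\}$; since $F$ is a subgraph of $H$, the six vertices $v_1,\dots,v_6$ are distinct in $H$, and any vertex of $\Gamma_{H\setminus F}(i,j)$ lies outside $\{v_1,\dots,v_6\}$ by definition. For the first assertion I assume for contradiction that both $\Gamma_{H\setminus F}(1,5)$ and $\Gamma_{H\setminus F}(2,4)$ are non-empty, fix $w\in\Gamma_{H\setminus F}(1,5)$ and $w'\in\Gamma_{H\setminus F}(2,4)$, and split into the cases $w\neq w'$ and $w=w'$.

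\textbf{Case $w\neq w'$.} Here $v_1,v_2,v_4,v_5,w,w'$ are six distinct vertices. Take branch vertices $v_1,v_2,v_5,v_4$. The four edges $\{v_1,v_2\},\{v_2,v_5\},\{v_5,v_4\},\{v_4,v_1\}$ of the square are present in $H$; realise the edge $\{v_1,v_5\}$ of $K_4$ by the path $v_1,w,v_5$ (valid since $w$ is adjacent to both $v_1$ and $v_5$) and the edge $\{v_2,v_4\}$ by the path $v_2,w',v_4$. The only internal vertices are $w$ and $w'$, which are distinct from each other and from the four branch vertices, so this is a genuine $K_4$-subdivision.

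\textbf{Case $w=w'$.} Now $w\notin V(F)$ is adjacent to all of $v_1,v_2,v_4,v_5$. Take branch vertices $w,v_1,v_2,v_4$ (four distinct vertices). The edges $\{w,v_1\},\{w,v_2\},\{w,v_4\}$ exist by choice of $w$, and $\{v_1,v_2\},\{v_1,v_4\}$ are edges of $F$; realise the remaining $K_4$-edge $\{v_2,v_4\}$ by the path $v_2,v_5,v_4$, which is valid because $v_5$ is adjacent to $v_2$ and to $v_4$ in $F$, and $v_5\notin\{w,v_1,v_2,v_4\}$. This is again a $K_4$-subdivision. In both cases $H$ contains a $K_4$-minor, the desired contradiction, so at least one of $\Gamma_{H\setminus F}(1,5)$, $\Gamma_{H\setminus F}(2,4)$ is empty.

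\textbf{Second assertion and the main point.} The second assertion follows by the same argument applied to the other square: $F$ has the automorphism fixing $v_2,v_5$ and swapping $v_1\leftrightarrow v_3$, $v_4\leftrightarrow v_6$, which sends the pair $\{(1,5),(2,4)\}$ to $\{(3,5),(2,6)\}$; repeating the two-case analysis verbatim with $v_1,v_4$ replaced by $v_3,v_6$ (using the square $v_3v_2v_5v_6$ and, in the coincident case, the path $v_2,v_5,v_6$ for the missing edge) gives the claim. The one place that needs genuine care — and the only mild obstacle — is the coincident case $w=w'$: there the ``opposite-corner'' choice of branch vertices used in Case~$w\neq w'$ fails because $w$ can only be spent once, so one must instead promote $w$ itself to a branch vertex and route the missing $K_4$-edge through the shared vertex $v_5$ (respectively $v_5$ again for the second square); checking that all chosen paths are internally disjoint and that no two branch vertices coincide is then routine from $w\notin V(F)$ and the distinctness of $v_1,\dots,v_6$.
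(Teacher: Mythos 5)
Your proof is correct and takes the same route as the paper. The paper's entire proof consists of the two one-line assertions that $\{v_1,v_2,v_4,v_5\}$ (respectively $\{v_2,v_3,v_5,v_6\}$) yields a $K_4$-minor; you have simply written out the $K_4$-subdivision explicitly, including the coincidence case $w=w'$, which the paper leaves implicit but which your alternative branch-vertex choice $\{w,v_1,v_2,v_4\}$ (routing $v_2$--$v_5$--$v_4$) handles cleanly.
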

\begin{proof}
	If $\Gamma_{H\setminus F}(1,5)$ and $\Gamma_{H\setminus F}(2,4)$ are both non-empty, then the vertices $v_1,v_2,v_4$ and $v_5$ yield a $K_4$-minor. 
	If $\Gamma_{H\setminus F}(2,6)$ and $\Gamma_{H\setminus F}(3,5)$ are both non-empty, then the vertices $v_2,v_3,v_5$ and $v_6$ yield a $K_4$-minor. 
\end{proof}

\begin{lem}\label{lem:typeV}
	Let $H$ be a $K_4$-minor-free graph containing $F$ as a subgraph. If $F$ does not have type $\TV$ in $H$ then either $\Gamma_{H\setminus F}(1,5)=\Gamma_{H\setminus F}(2,6)=\emptyset$ or $\Gamma_{H\setminus F}(2,4)=\Gamma_{H\setminus F}(3,5)=\emptyset$.	
\end{lem}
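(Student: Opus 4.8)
The plan is to obtain this by a short propositional argument on top of Lemma~\ref{lem:K_4-free}; no further appeal to $K_4$-minor-freeness is needed, since the two relevant forbidden minors have already been extracted there. Abbreviate $A=\Gamma_{H\setminus F}(1,5)$, $B=\Gamma_{H\setminus F}(2,4)$, $C=\Gamma_{H\setminus F}(2,6)$, $D=\Gamma_{H\setminus F}(3,5)$. Lemma~\ref{lem:K_4-free} says ``$A=\emptyset$ or $B=\emptyset$'' and ``$C=\emptyset$ or $D=\emptyset$''. By Definition~\ref{def:typeV}, $F$ has type $\TV$ in $H$ exactly when ``$A\neq\emptyset$, $D\neq\emptyset$, $B=\emptyset$, $C=\emptyset$'' or ``$B\neq\emptyset$, $C\neq\emptyset$, $A=\emptyset$, $D=\emptyset$''. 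So I want to show: whenever neither $A=C=\emptyset$ nor $B=D=\emptyset$, $F$ has type $\TV$.

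I would argue by contraposition. Assume $\neg(A=C=\emptyset)$ and $\neg(B=D=\emptyset)$, i.e.\ ``$A\neq\emptyset$ or $C\neq\emptyset$'' together with ``$B\neq\emptyset$ or $D\neq\emptyset$''; the goal is to derive one of the two type-$\TV$ configurations. Split on the first disjunction. If $A\neq\emptyset$, then the first clause of Lemma~\ref{lem:K_4-free} forces $B=\emptyset$; then ``$B\neq\emptyset$ or $D\neq\emptyset$'' gives $D\neq\emptyset$, and the second clause of Lemma~\ref{lem:K_4-free} forces $C=\emptyset$. Hence $A,D\neq\emptyset$ and $B=C=\emptyset$, which is the first type-$\TV$ case. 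If instead $A=\emptyset$, then $C\neq\emptyset$, so the second clause of Lemma~\ref{lem:K_4-free} forces $D=\emptyset$; then ``$B\neq\emptyset$ or $D\neq\emptyset$'' gives $B\neq\emptyset$, yielding $B,C\neq\emptyset$ and $A=D=\emptyset$, the second type-$\TV$ case. These two cases are exhaustive, so the contrapositive, and hence the lemma, follows.

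There is essentially no hard step here; the only thing to keep straight is which emptiness fact each branch of Lemma~\ref{lem:K_4-free} supplies — the pairings are $(1,5)$ with $(2,4)$ and $(2,6)$ with $(3,5)$ — and that the two resulting configurations match exactly the two alternatives of Definition~\ref{def:typeV}. (A direct, non-contrapositive write-up is also possible: split into the four combinations coming from the two clauses of Lemma~\ref{lem:K_4-free}; in the combinations $\{A=\emptyset,C=\emptyset\}$ and $\{B=\emptyset,D=\emptyset\}$ the conclusion is immediate, and in the remaining two one splits once more on the status of the other pair of sets, landing either in the conclusion or in a type-$\TV$ configuration. I would go with the contrapositive version for brevity.)
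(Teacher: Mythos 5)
Your proof is correct, and it uses essentially the same ingredients as the paper's proof: a short case analysis on top of Lemma~\ref{lem:K_4-free} and the definition of type $\TV$, with no further use of $K_4$-minor-freeness. The only difference is cosmetic — you argue by contraposition and branch on whether $\Gamma_{H\setminus F}(1,5)$ is empty, whereas the paper first invokes the $\{(2,6),(3,5)\}$ clause of Lemma~\ref{lem:K_4-free}, assumes $\Gamma_{H\setminus F}(2,6)=\emptyset$ by symmetry, and then branches on $\Gamma_{H\setminus F}(3,5)$.
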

\begin{proof}
	Note that either $\Gamma_{H\setminus F}(2,6)$ or $\Gamma_{H\setminus F}(3,5)$ are empty by Lemma~\ref{lem:K_4-free}. Assume w.l.o.g.\ that the former is empty; the other case is symmetric.	
	We distinguish two cases:
	\begin{enumerate}[(I)]
		\item $\Gamma_{H\setminus F}(3,5)\neq \emptyset$. Now assume for contradiction that $\Gamma_{H\setminus F}(1,5)\neq\emptyset$. Then, again by Lemma~\ref{lem:K_4-free}, we obtain $\Gamma_{H\setminus F}(2,4)=\emptyset$, which implies that $F$ has type $\TV$ in $H$, yielding the desired contradiction. In combination with the previous assumption, we thus have $\Gamma_{H\setminus F}(1,5)=\Gamma_{H\setminus F}(2,6)=\emptyset$.
		\item $\Gamma_{H\setminus F}(3,5) = \emptyset$. By Lemma~\ref{lem:K_4-free} we have that either $\Gamma_{H\setminus F}(1,5)$ or $\Gamma_{H\setminus F}(2,4)$ is empty. This concludes the proof as the current case provides additionally $\Gamma_{H\setminus F}(3,5)=\emptyset$ and $\Gamma_{H\setminus F}(2,6)=\emptyset$.
	\end{enumerate}
\end{proof}
\begin{figure}[t!]
	\centering
	\begin{tikzpicture}[scale=2, node distance = 1.4cm, thick] 
	\tikzstyle{vertex}=[draw=black, circle, inner sep=1.5pt] 
	\begin{scope}[xshift=2cm] 
	\node[vertex] (s) at (0  ,1) [label=90: $v_2$] {};
	\node[vertex] (i) at (0  ,0) [label=270: $v_5$] {};
	\node[vertex] (o) at (-1  ,1) [label=180: $v_1$] {};
	\node[vertex] (u) at (-1  ,0) [label=180: $v_4$] {};
	\node[vertex] (v) at (1  ,1) [label=0: $v_3$] {};
	\draw (s)--(i); \draw (s)--(o); \draw (s)--(v); \draw (o)--(u);
	\draw (u)--(i); 
	\node[vertex] (x) at (1  ,0) [label=0: $v_6$] {};
	\draw (x)--(i); \draw (v)--(x);
	\node[vertex] (v7) at (-0.5  ,0.5)  {};
	\node[vertex] (v8) at (0.5  ,0.5)  {};
	\draw (o) -- (v7); \draw (v7) -- (i);
	\draw (i) -- (v8); \draw (v8) -- (v);
	\end{scope}
	\end{tikzpicture}
	\caption{\label{fig:typeV} A $K_4$-minor-free graph containing $F$ of type $\TV$.} 
\end{figure}
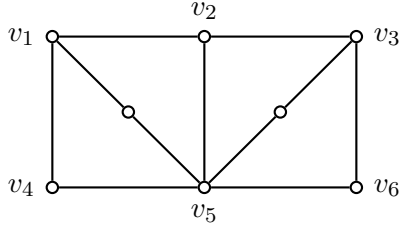

\begin{lem}\label{lem:typeV_hard}
	Let $H$ be a $K_4$-minor-free graph containing $F$ as a subgraph. Then $H$ has a hardness gadget, unless $F$ has type $\TV$ in $H$.
\end{lem}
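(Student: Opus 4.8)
The plan is first to handle the trivial case: if $F$ has type $\TV$ in $H$ there is nothing to prove, so suppose it does not. Then Lemma~\ref{lem:typeV} gives that either $\Gamma_{H\setminus F}(1,5)=\Gamma_{H\setminus F}(2,6)=\emptyset$ or $\Gamma_{H\setminus F}(2,4)=\Gamma_{H\setminus F}(3,5)=\emptyset$. The map $\sigma$ that swaps $v_1\leftrightarrow v_3$ and $v_4\leftrightarrow v_6$ and fixes $v_2,v_5$ is an automorphism of $F$, and composing the embedding of $F$ in $H$ with $\sigma$ interchanges these two possibilities (it turns the pair $\Gamma_{H\setminus F}(1,5),\Gamma_{H\setminus F}(2,6)$ into the pair $\Gamma_{H\setminus F}(3,5),\Gamma_{H\setminus F}(2,4)$); so I would assume without loss of generality that $\Gamma_{H\setminus F}(2,4)=\Gamma_{H\setminus F}(3,5)=\emptyset$. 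The remaining work is to build an explicit hardness gadget $(I,S,(J_1,y),(J_2,z),(J_3,y,z))$: take $I=\{v_5\}$ and $S=\{v_2\}$, let $J_1$ be the length-$2$ path $(p,y,p')$ with pinning $\{p\mapsto v_2, p'\mapsto v_4\}$, let $J_2$ be the length-$2$ path $(q,z,q')$ with pinning $\{q\mapsto v_3, q'\mapsto v_5\}$, and let $J_3$ be the single edge $\{y,z\}$.

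Before checking the gadget properties I would record the non-edges $\{v_1,v_3\},\{v_1,v_6\},\{v_3,v_4\},\{v_4,v_6\}\notin E(H)$, each obtained by exhibiting a $K_4$-minor in the union of $F$ with the hypothetical edge; for instance the branch sets $\{v_1\},\{v_2\},\{v_3\},\{v_4,v_5,v_6\}$ give the minor if $\{v_1,v_3\}\in E(H)$, and $\{v_1,v_4\},\{v_2,v_3\},\{v_5\},\{v_6\}$ give it if $\{v_1,v_6\}\in E(H)$, the other two being analogous. These non-edges are needed because $F$ is only assumed to be a subgraph of $H$, not an induced subgraph, so a priori $\Gamma_H(v_2)\cap\Gamma_H(v_4)$ and $\Gamma_H(v_3)\cap\Gamma_H(v_5)$ might contain additional vertices of $V(F)$; ruling those out, and combining with the edges of $F$ and with the hypotheses $\Gamma_{H\setminus F}(2,4)=\Gamma_{H\setminus F}(3,5)=\emptyset$, pins these intersections down exactly as $\Gamma_H(v_2)\cap\Gamma_H(v_4)=\{v_1,v_5\}$ and $\Gamma_H(v_3)\cap\Gamma_H(v_5)=\{v_2,v_6\}$.

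Then I would verify the conditions of Definition~\ref{defn:hardness-gadget}. Because $J_1$ is a length-$2$ path with its leaves pinned to $v_2$ and $v_4$, $\abs{\hom{(J_1,y)}{(H,a)}}$ is $1$ if $a\in\Gamma_H(v_2)\cap\Gamma_H(v_4)$ and $0$ otherwise, so $\Omega_y=\{v_1,v_5\}$; likewise $\Omega_z=\{v_2,v_6\}$. Hence $\abs{\Omega_y}=\abs{\Omega_z}=2$ are even, $I=\{v_5\}\subset\Omega_y$ and $S=\{v_2\}\subset\Omega_z$ have odd size, and $\Omega_y\setminus I=\{v_1\}$, $\Omega_z\setminus S=\{v_6\}$. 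Since $J_3$ is a single edge, $\abs{\Sigma_{a,b}}$ equals $1$ if $\{a,b\}\in E(H)$ and $0$ otherwise, so with $i=v_5$, $o=v_1$, $s=v_2$, $x=v_6$ the parity conditions reduce to $\{v_1,v_6\}\notin E(H)$ (already established) and $\{v_5,v_2\},\{v_1,v_2\},\{v_5,v_6\}\in E(H)$ (all edges of $F$). This yields all three gadget properties, hence $H$ has a hardness gadget.

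The step I expect to be the main obstacle is not the gadget construction but the middle step: since $F$ need not be induced, one must clear away the possible chords of $V(F)$ via $K_4$-minors and then confirm that the chords that can survive in a $K_4$-minor-free host (namely $\{v_1,v_5\},\{v_2,v_4\},\{v_2,v_6\},\{v_3,v_5\}$) do not affect the two neighbourhood identities. Once those are in hand the gadget is short and in fact essentially forced: with $\Omega_y=\{v_1,v_5\}$ and $\Omega_z=\{v_2,v_6\}$ and a single-edge $J_3$, the only split that meets the parity conditions is $I=\{v_5\}$, $S=\{v_2\}$.
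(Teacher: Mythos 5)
Your proof is correct and takes essentially the same route as the paper: invoke Lemma~\ref{lem:typeV}, reduce without loss of generality to one of the two symmetric cases, pin down the two relevant common-neighbourhood sets in $H$, and exhibit a hardness gadget with $J_1,J_2$ two-paths with pinned endpoints and $J_3$ a single edge. You chose the other symmetric case (justified cleanly by the automorphism $\sigma$) and hence built the mirror-image gadget, and you are more explicit about ruling out the forbidden chords of $V(F)$ by $K_4$-minors, a step the paper compresses into its assertions (a)--(c).
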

\begin{proof}
	Using Lemma~\ref{lem:typeV} and the fact that $H$ is $K_4$-minor free, we can w.l.o.g.\ assume that
	\begin{enumerate}[(a)]
		\item The edges $\{v_1,v_6\}$ and $\{v_3,v_4\}$ are \emph{not} present in $H$ as, otherwise, we obtain a $K_4$-minor.
		\item $\Gamma_H(v_1)\cap\Gamma_H(v_5) = \{v_2,v_4\}$.
		\item $\Gamma_H(v_2)\cap\Gamma_H(v_6) = \{v_3,v_5\}$.
	\end{enumerate}
	This allows us to construct a hardness gadget:
	\begin{itemize}
		\item $S=\{v_5\}$ and $I=\{v_2\}$.
		\item $J_1$ is the graph where $y$ is adjacent to a $v_1$-pin and a $v_5$-pin 
		Note that $\Omega_y=\{v_2,v_4\}$ by (b).
		\item $J_2$ is the graph where $z$ is adjacent to a $v_2$-pin and a $v_6$-pin. Note that $\Omega_z=\{v_3,v_5\}$ by (c).
		\item $J_3$ is just the edge $\{y,z\}$.
	\end{itemize}
	We have $|\Sigma_{v_4,v_5}|=|\Sigma_{v_5,v_2}|=|\Sigma_{v_2,v_3}|= 1$. Furthermore, $|\Sigma_{v_4,v_3}|=0$ by (a).
\end{proof} 

\begin{defn}[The graph {$S_{k,\ell}$}]\label{def:Skl}
	For positive integers $k$ and $\ell$, $S_{k,\ell}$ is the graph depicted in Figure~\ref{fig:Skl}.
	
	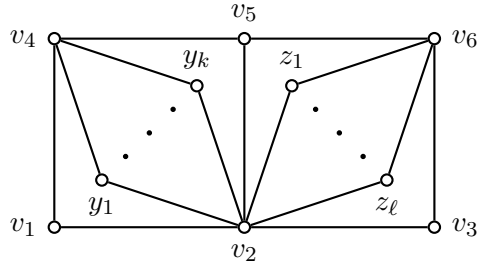
\begin{figure}[H]
		\centering
		\begin{tikzpicture}[scale=2.5, node distance = 1.4cm,thick,yscale=-1]
		\tikzstyle{dot}   =[fill=black, draw=black, circle, inner sep=0.15mm]
		\tikzstyle{vertex}=[  draw=black, circle, inner sep=1.5pt]
		\tikzstyle{dist}  =[fill=white, draw=black, circle, inner sep=2pt]
		\tikzstyle{pinned}=[draw=black, minimum size=10mm, circle, inner sep=0pt]	
		\begin{scope}[xshift=2cm] 
		\node[vertex] (s) at (0  ,1) [label=270: $v_2$] {};
		\node[vertex] (i) at (0  ,0) [label=90: $v_5$] {};
		\node[vertex] (o) at (-1  ,1) [label=180: $v_1$] {};
		\node[vertex] (u) at (-1  ,0) [label=180: $v_4$] {};
		\node[vertex] (v) at (1  ,1) [label=0: $v_3$] {};
		\node[vertex] (x) at (1  ,0) [label=0: $v_6$] {};	
		\draw (s)--(i); \draw (s)--(o); \draw (s)--(v); \draw (o)--(u);
		\draw (u)--(i); \draw (x)--(i); \draw (v)--(x);	
		\node[vertex] (y1) at (-0.75  ,0.75) [label=270: $y_1$] {};
		\node[vertex] (yk) at (-0.25  ,0.25) [label=90: $y_k$]  {};
		\node[vertex] (zl) at (0.25  ,0.25)  [label=90: $z_1$] {};
		\node[vertex] (z1) at (0.75  ,0.75) [label=270: $z_\ell$] {};	
		\draw (s)--(y1); \draw (s)--(yk);  \draw (u)--(y1); \draw (u)--(yk);
		\draw (x)--(z1); \draw (x)--(zl); \draw (s)--(z1); \draw (s)--(zl);    	
		\node[dot] (d) at (-0.5,0.5) {};  
		\node[dot] (d) at (-0.625,0.625) {}; 
		\node[dot] (d) at (-0.375,0.375) {}; 
		\node[dot] (d) at (0.5,0.5) {};  
		\node[dot] (d) at (0.625,0.625) {}; 
		\node[dot] (d) at (0.375,0.375) {}; 
		\end{scope}
		\end{tikzpicture} 
		\caption{The graph $S_{k, \ell}$.}
		\label{fig:Skl}
	\end{figure}
	
\end{defn}

\begin{lem}\label{lem:even_diamond}
	Let $H$ be a $K_4$-minor-free graph containing $F$ as a subgraph. If~$F$ has type $\TV$ in $H$ and $|\Gamma_{H\setminus F}(1,5)|$ and $|\Gamma_{H\setminus F}(2,4)|$ are even, then $H$ has a hardness gadget.
\end{lem}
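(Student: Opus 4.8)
The plan is to exhibit, for $H$, a hardness gadget $(I,S,(J_1,y),(J_2,z),(J_3,y,z))$ in the sense of Definition~\ref{defn:hardness-gadget}. First I would normalise the situation. The graph $F$ has the automorphism $\pi$ that swaps $v_1\leftrightarrow v_4$, $v_2\leftrightarrow v_5$ and $v_3\leftrightarrow v_6$; relabelling the embedded copy of $F$ in $H$ by $\pi$ turns the new $\Gamma_{H\setminus F}(1,5)$ into the old $\Gamma_{H\setminus F}(2,4)$ (and likewise $\Gamma_{H\setminus F}(3,5)$ into $\Gamma_{H\setminus F}(2,6)$), so it also interchanges the two hypotheses ``$|\Gamma_{H\setminus F}(1,5)|$ even'' and ``$|\Gamma_{H\setminus F}(2,4)|$ even''. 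Hence I may assume that we are in the second case of Definition~\ref{def:typeV}: $\Gamma_{H\setminus F}(2,4)=\{y_1,\dots,y_k\}$ and $\Gamma_{H\setminus F}(2,6)=\{z_1,\dots,z_\ell\}$ are non-empty, $\Gamma_{H\setminus F}(1,5)=\Gamma_{H\setminus F}(3,5)=\emptyset$, and $k$ is even, so $k\ge 2$. In particular $H$ contains the graph $S_{k,\ell}$ of Definition~\ref{def:Skl} as a subgraph.

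The second step is to pin down two common neighbourhoods using $K_4$-minor-freeness. A short inspection shows that $H$ contains none of the edges $\{v_1,v_6\}$, $\{v_3,v_4\}$, $\{v_4,v_6\}$, $\{v_1,v_5\}$, nor any edge $\{v_6,y_j\}$ with $j\in[k]$; for instance, if $\{v_1,v_5\}\in E(H)$ then the connected, pairwise-adjacent vertex sets $\{v_1\},\{v_5\},\{v_2\},\{v_4,y_1\}$ (this uses $\Gamma_{H\setminus F}(2,4)\neq\emptyset$) yield a $K_4$-minor, and the remaining non-edges follow from analogous four-set configurations. Combined with the definitions of the sets $\Gamma_{H\setminus F}(\cdot,\cdot)$, these non-edges give
\[
\Gamma_H(v_2)\cap\Gamma_H(v_4)=\{v_1,v_5,y_1,\dots,y_k\}
\quad\text{and}\quad
\Gamma_H(v_3)\cap\Gamma_H(v_5)=\{v_2,v_6\},
\]
where the first set has even size $k+2$ (this is exactly where the even-cardinality hypothesis is needed) and the second has size $2$.

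Finally I would assemble the gadget. Let $J_1$ be the graph in which $y$ is adjacent exactly to a $v_2$-pin and a $v_4$-pin, and $J_2$ the graph in which $z$ is adjacent exactly to a $v_3$-pin and a $v_5$-pin; then by the displayed identities $\Omega_y=\Gamma_H(v_2)\cap\Gamma_H(v_4)$ and $\Omega_z=\Gamma_H(v_3)\cap\Gamma_H(v_5)$ are both of even cardinality. Take $I=\{v_5\}\subseteq\Omega_y$ and $S=\{v_2\}\subseteq\Omega_z$, and let $J_3$ be the single edge $\{y,z\}$, so that $|\Sigma_{a,b}|$ is $1$ when $\{a,b\}\in E(H)$ and $0$ otherwise. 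Then $\Omega_y\setminus I=\{v_1,y_1,\dots,y_k\}$ and $\Omega_z\setminus S=\{v_6\}$, and it remains to observe that $\{v_5,v_2\}$, $\{v_5,v_6\}$ and each of $\{v_1,v_2\},\{y_1,v_2\},\dots,\{y_k,v_2\}$ are edges of $H$ (so $|\Sigma_{i,s}|=|\Sigma_{i,x}|=|\Sigma_{o,s}|=1$), while $\{v_1,v_6\}$ and each $\{y_j,v_6\}$ are non-edges (so $|\Sigma_{o,x}|=0$); this is precisely what Definition~\ref{defn:hardness-gadget} demands. The only substantive part of the argument is the second step — proving that these two common-neighbourhood sets are exactly as claimed — and that is the place where both $K_4$-minor-freeness and the evenness assumption on $|\Gamma_{H\setminus F}(2,4)|$ are essential; everything else is routine verification of the hardness-gadget conditions.
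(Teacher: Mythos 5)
Your proposal is correct and matches the paper's proof essentially line by line: the same normalisation to the second case of type $\TV$, the same two common-neighbourhood identities, and the same gadget $I=\{v_5\}$, $S=\{v_2\}$, $J_1$ pinned at $v_2,v_4$, $J_2$ pinned at $v_3,v_5$, $J_3$ the edge $\{y,z\}$. The only differences are that you make the ``w.l.o.g.'' reduction explicit via the automorphism $\pi$ of $F$ (the paper just says ``the other case is symmetric'') and that you list the forbidden chords a bit more exhaustively; both are cosmetic.
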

\begin{proof}
	As $F$ has type $\TV$ in $H$ we can assume w.l.o.g.\ that $\Gamma_{H\setminus F}(2,4)\neq \emptyset$ and $\Gamma_{H\setminus F}(2,6)\neq \emptyset$, and that $\Gamma_{H\setminus F}(1,5) = \Gamma_{H\setminus F}(3,5)=\emptyset$; the other case is symmetric. In other words, there exist positive integers $k$ and $\ell$ such that $H$ contains the subgraph $S_{k, \ell}$ (Definition~\ref{def:Skl}) with $\Gamma_{H\setminus F}(2,4)=\{y_1,\dots,y_k\}$ and $\Gamma_{H\setminus F}(2,6)=\{z_1,\dots,z_\ell\}$. 
	
	\noindent By the premise of the lemma, $k$ must be even. 
	We will emphasise some crucial properties of~$H$:
	\begin{enumerate}[(a)]
		\item $\Gamma_H(v_3)\cap\Gamma_H(v_5) = \{v_2,v_6\}$, since $\Gamma_{H\setminus F}(3,5)=\emptyset$. 
		\item $v_6$ is not adjacent to any vertex in $\{y_1,\dots,y_k,v_1\}$: Assuming otherwise, let $w\in \{y_1,\dots,y_k,v_1\}$ be adjacent to $v_6$. We obtain the following $K_4$-minor of $H$:
		
		\medskip
		\begin{tikzpicture}[scale=2, node distance = 1.4cm, thick] 
		\tikzstyle{vertex}=[draw=black, circle, inner sep=1.5pt] 
		\begin{scope}[xshift=2cm] 
		\node[vertex] (s) at (1  ,1) [label=0: $v_6$] {};
		\node[vertex] (i) at (0  ,0) [label=180: $w$] {};
		\node[vertex] (o) at (0  ,1) [label=180: $v_5$] {};
		\node[vertex] (u) at (0  ,0.5) [label=180: $v_4$] {};
		\node[vertex] (v) at (1  ,0.5) [label=0: $v_3$] {};
		\draw (s)--(i); \draw (s)--(o); \draw (s)--(v); \draw (o)--(u);
		\draw (u)--(i); 
		\node[vertex] (x) at (1  ,0) [label=0: $v_2$] {};
		\draw (x)--(i); \draw (v)--(x); \draw (o) -- (x);
		
		\end{scope}
		\end{tikzpicture}
		\medskip
	\end{enumerate}
	We proceed by constructing a hardness gadget:
	\begin{itemize}
		\item $S=\{v_2\}$ and $I=\{v_5\}$.
		\item $J_1$ is the graph where $y$ is adjacent to a $v_2$-pin and a $v_4$-pin. Note that \[\Omega_y=\{v_1,v_5\} \cup \Gamma_{H\setminus F}(2,4) =\{v_1,v_5,y_1,\dots,y_k\}\,.\] In particular, $|\Omega_y|$ is even as $k$ is.
		\item $J_2$ is the graph where $z$ is adjacent to a $v_3$-pin and a $v_5$-pin. Note that $\Omega_z=\{v_2,v_6\}$ by (a).
		\item $J_3$ is just the edge $\{y,z\}$.
	\end{itemize}
	We have $|\Sigma_{v_2,v_5}|=|\Sigma_{v_5,v_6}|=1$ and, for every $o\in \Omega_y\setminus\{v_5\}$, $|\Sigma_{o,v_2}|=1$. Furthermore, by (b), $|\Sigma_{o,v_6}|=0$. 
\end{proof}

\subsection{Strong Hardness Gadgets}

\begin{defn}[strong hardness gadget]\label{def:stronghardnessgadget}
	A graph $J$ is called a \emph{strong hardness gadget} if every $K_4$-minor-free graph that contains $J$ as a 
	subgraph has a hardness gadget.
\end{defn}

\begin{lem}\label{lem:good_triangle_gadget}
	The following graph $J$ is a strong hardness gadget:
	
	\medskip
	\begin{tikzpicture}[scale=2, node distance = 1.4cm, thick] 
	\tikzstyle{vertex}=[draw=black, circle, inner sep=1.5pt] 
	\begin{scope}[xshift=2cm] 
	\node[vertex] (s) at (0  ,1) [label=90: $i$] {};
	\node[vertex] (i) at (0  ,0) [label=270: $x$] {};
	\node[vertex] (o) at (-1  ,1) [label=180: $s$] {};
	\node[vertex] (u) at (-1  ,0) [label=180: $u$] {};
	\node[vertex] (v) at (1  ,1) [label=0: $v$] {};
	\draw (s)--(i); \draw (s)--(o); \draw (s)--(v); \draw (o)--(u);
	\draw (u)--(i); \draw (v) -- (i); \draw (o) -- (i);
	\end{scope}
	\end{tikzpicture}
	\medskip
\end{lem}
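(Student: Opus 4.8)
The plan is to produce, for an arbitrary $K_4$-minor-free graph $H$ that contains $J$ as a subgraph, an explicit hardness gadget $(I,S,(J_1,y),(J_2,z),(J_3,y,z))$ in the sense of Definition~\ref{defn:hardness-gadget}; by Definition~\ref{def:stronghardnessgadget} this is exactly what must be shown. I would fix the copy of $J$ in $H$ and name its vertices $u,s,i,v,x$ as in the figure, so that $x$ is adjacent to each of $u,s,i,v$ while $\{u,s\},\{s,i\},\{i,v\}\in E(J)$. I would take $J_3$ to be the single edge $\{y,z\}$, so that $\abs{\Sigma_{a,b}}$ is simply $1$ if $\{a,b\}\in E(H)$ and $0$ otherwise, which collapses condition~3 of the definition to a handful of adjacency/non-adjacency facts. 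For $J_1$ I would make $y$ adjacent to an $s$-pin and a $v$-pin, and for $J_2$ make $z$ adjacent to an $i$-pin and a $u$-pin; then $\Oy=\NH{s}\cap\NH{v}$ and $\Oz=\NH{i}\cap\NH{u}$. Finally I would set $I=\{i\}$ and $S=\{s\}$.

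The heart of the proof is to show, using $K_4$-minor-freeness, that $\NH{s}\cap\NH{v}=\{i,x\}$ and $\NH{i}\cap\NH{u}=\{s,x\}$. The inclusions ``$\supseteq$'' are immediate from the edges of $J$. For ``$\subseteq$'', suppose some $w\in\NH{s}\cap\NH{v}$ satisfies $w\notin\{i,x\}$; since $H$ has no loops, $w\notin\{s,v\}$ as well, so $x,s,i,v,w$ are five distinct vertices. Then the branch sets $\{x\},\{s\},\{i\},\{v,w\}$ witness a $K_4$-minor of $H$: the last one is connected through the edge $\{v,w\}$; the pairs $\{x,s\},\{x,i\},\{s,i\}$, together with $\{x,v\}$ and $\{i,v\}$, are edges of $J$ (handling the first three branch sets and connecting $x$ and $i$ to $\{v,w\}$); and $\{s,w\}\in E(H)$ connects $s$ to $\{v,w\}$. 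This contradicts the hypothesis on $H$. The other identity is symmetric --- one may apply the automorphism of $J$ exchanging $u\leftrightarrow v$ and $s\leftrightarrow i$, or argue directly with the branch sets $\{x\},\{i\},\{s\},\{u,w\}$.

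Granting these two identities, the rest is a bookkeeping check against Definition~\ref{defn:hardness-gadget}: $\Oy=\{i,x\}$ and $\Oz=\{s,x\}$ have even size and contain the odd-size sets $I=\{i\}$ and $S=\{s\}$, so conditions 1--2 hold; and $\Oy\setminus I=\Oz\setminus S=\{x\}$, so the only instance of condition~3 is the one with $o=x$ and with the bound variable ranging over $\Oz\setminus S$ equal to the vertex $x$, where $\abs{\Sigma_{x,x}}=0$ is even while $\abs{\Sigma_{i,s}}$, $\abs{\Sigma_{x,s}}$ and $\abs{\Sigma_{i,x}}$ all equal $1$ because $\{i,s\},\{x,s\},\{i,x\}\in E(J)$. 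The step I expect to be the real obstacle is the ``$\subseteq$'' claim above, namely pinning $\Oy$ and $\Oz$ down to these precise two-element sets in a host $H$ that may be far larger than $J$; for a more naive choice of $J_1$ (say $y$ adjacent to an $s$-pin and an $x$-pin, giving $\Oy=\NH{s}\cap\NH{x}$, or $y$ adjacent to a single $s$-pin, giving $\Oy=\NH{s}$) the parity of $\abs{\Oy}$ would not be controlled. The construction works precisely because the two pins feeding $y$ --- and likewise those feeding $z$ --- form a \emph{non-adjacent} pair of vertices of $J$, whose common neighbourhood $K_4$-minor-freeness forces to equal a fixed two-element set; getting the $K_4$-minor witness right (taking $\{v,w\}$ as one branch set, not $v$ and $w$ separately) is the one genuinely delicate point.
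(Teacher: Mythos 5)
Your proposal is correct and essentially the same as the paper's: the paper also takes $J_3$ to be a single edge and defines $\Omega_y$ and $\Omega_z$ as the common neighbourhoods of a non-adjacent pair of vertices of $J$, forcing each to equal a specific two-element set by $K_4$-minor-freeness; you have merely swapped the roles of $(J_1,y,I)$ and $(J_2,z,S)$, which is immaterial since Definition~\ref{defn:hardness-gadget} is symmetric in them. You also spell out the $K_4$-minor witness (the branch set $\{v,w\}$ is indeed the right choice), which the paper leaves to the reader.
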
 
\begin{proof}
	Let $H$ be a $K_4$-minor-free supergraph of $J$. We construct a hardness gadget of $H$:
	\begin{itemize}
		\item $S=\{s\}$ and $I=\{i\}$.
		\item $J_1$ is the graph where $y$ is adjacent to a $u$-pin and an $i$-pin. Note that $\Omega_y=\{x,s\}$ as $H$ is $K_4$-minor free.
		\item $J_2$ is the graph where $z$ is adjacent to a $v$-pin and an $s$-pin. Note that $\Omega_z=\{x,i\}$ as $H$ is $K_4$-minor free.
		\item $J_3$ is just the edge $\{y,z\}$.
	\end{itemize}
	We have $|\Sigma_{s,i}|=|\Sigma_{s,x}|=|\Sigma_{x,i}|=1$ and $|\Sigma_{x,x}|=0$ --- recall that we do not allow self-loops. 
\end{proof}

For the proof of the following lemma recall the definition of walk-neighbour-sets from Definition~\ref{def:walkneighbourset}.

\begin{lem}\label{lem:adj_odd_neighbours}
	Let $H$ be a $K_4$-minor-free graph containing two adjacent vertices $a$ and $b$ such that $|\Gamma_H(a)\cap\Gamma_H(b)|$ is odd and at least $3$. Then $H$ has a hardness gadget.
\end{lem}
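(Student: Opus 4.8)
The plan is to exploit the rigidity that $K_4$-minor-freeness forces on the ``book'' $K_{2,m}$ spanned by $a$, $b$ and their common neighbours, and then split into two cases. Write $N=\Gamma_H(a)\cap\Gamma_H(b)=\{c_1,\dots,c_m\}$, so $m$ is odd and $m\ge 3$, and note $a\neq b$ and $a,b\notin N$ since $H$ has no self-loops. First I would record three consequences of $K_4$-minor-freeness: (i) the $c_i$ are pairwise non-adjacent, since $\{a,b,c_i,c_j\}$ would be a $K_4$ if $c_i$ and $c_j$ were adjacent; (ii) for $i\neq j$ we have $\Gamma_H(c_i)\cap\Gamma_H(c_j)=\{a,b\}$, because any further common neighbour $d$ would give a $K_4$-minor with branch sets $\{a\},\{b\},\{c_i\},\{c_j,d\}$; hence $\bigcap_{i=1}^m\Gamma_H(c_i)=\{a,b\}$ (using $m\ge 2$); and (iii) if $w\notin\{a,b\}$ is adjacent to $b$ and to some $c_j$, then $w\not\sim a$, as otherwise $\{a,b,c_j,w\}$ is a $K_4$ (and symmetrically with $a$ and $b$ interchanged).

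\emph{Case 1: some $c_j$ has a neighbour $w\notin\{a,b\}$ with $w\in\Gamma_H(b)$} (the case $w\in\Gamma_H(a)$ is symmetric). By (iii) $w\not\sim a$, and by (i) $w\neq c_k$ for every $k\neq j$; fix any $c_k\in N$ with $k\neq j$, which exists since $m\ge 3$. Then the five distinct vertices $a,c_j,b,w,c_k$ carry all seven edges of the graph $J$ of Lemma~\ref{lem:good_triangle_gadget}: identify the triangle $\{i,s,x\}$ of $J$ with $\{a,c_j,b\}$, taking $x=b$ (the degree-$4$ vertex of $J$), $s=c_j$, $i=a$, and the two ``extension'' vertices $u=w$ (adjacent to $x=b$ and $s=c_j$) and $v=c_k$ (adjacent to $x=b$ and $i=a$). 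Since $J$ is a strong hardness gadget and $H$ is $K_4$-minor-free, $H$ has a hardness gadget.

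\emph{Case 2: for every $j$, $\Gamma_H(c_j)\cap\Gamma_H(b)=\{a\}$ and $\Gamma_H(c_j)\cap\Gamma_H(a)=\{b\}$.} Here I would invoke Lemma~\ref{lem:CycleHardness+} with $q=3$, the triple $\calC_0=\{a\}$, $\calC_1=\{b\}$, $\calC_2=N$, trivial one-vertex gadgets with empty pinning (so $\Omega_0=\Omega_1=\Omega_2=V(H)$), and $k=2$, and then check the hypotheses. The first hypothesis holds because $|\{a\}|$, $|\{b\}|$ and $|N|=m$ are odd. For the two ``neighbourhood'' hypotheses, the clauses involving only $a$ and $b$ are immediate from $\Gamma_H(a)\cap\Gamma_H(b)=N$ and $\bigcap_i\Gamma_H(c_i)=\{a,b\}$ (for instance $\Gamma_H(a)\cap\Gamma_H(\calC_2)=\Gamma_H(a)\cap\{a,b\}=\{b\}$), while the clauses involving a $c_j$ are exactly the Case-2 assumption (for instance $\Gamma_H(c_j)\cap\Gamma_H(\calC_1)=\Gamma_H(c_j)\cap\Gamma_H(b)=\{a\}$). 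The ``no closed walk'' hypothesis is vacuous since $\Gamma_H(\calC_2)\setminus(\calC_0\cup\calC_1)=\{a,b\}\setminus\{a,b\}=\emptyset$, leaving no admissible $d_2$. Finally, for $k=2$ the remaining two hypotheses hold because $\calC_2=N$ is an independent set by (i), and because $|\calC_1\cup\calC_2|=|N\cup\{b\}|=m+1$ and $|\calC_0\cup\calC_2|=|N\cup\{a\}|=m+1$ are both even — this is the one place the parity of $m$ is essential, as the generalised cycle gadgets attached at $v_0$ and $v_1$ have odd-supports $N\cup\{b\}$ and $N\cup\{a\}$, whose cardinalities the definition of a hardness gadget requires to be even. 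Lemma~\ref{lem:CycleHardness+} then yields a hardness gadget for $H$.

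The essential difficulty — and the reason for the case split — is that the ``neighbourhood'' hypotheses of Lemma~\ref{lem:CycleHardness+} for the triple $(\{a\},\{b\},N)$ are \emph{not} robust: a single extra vertex $w$ adjacent to $b$ and to some $c_j$ breaks the identity $\Gamma_H(c_j)\cap\Gamma_H(b)=\{a\}$ that the cycle gadget needs. The point of isolating this situation is that such a $w$ (together with $a$ and a second common neighbour $c_k$) immediately produces the seven edges of the strong hardness gadget of Lemma~\ref{lem:good_triangle_gadget}, so the two cases are genuinely complementary. I expect the only delicate bookkeeping to be the verification of all six hypotheses of Lemma~\ref{lem:CycleHardness+} in Case~2 and the check that the vertex labelling in Case~1 really exhibits $J$ as a (not necessarily induced) subgraph of $H$.
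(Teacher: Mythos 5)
Your proposal is correct and follows essentially the same route as the paper's proof: both first use Lemma~\ref{lem:good_triangle_gadget} to reduce to the case where every $c_j\in N$ satisfies $\Gamma_H(c_j)\cap\Gamma_H(a)=\{b\}$ and $\Gamma_H(c_j)\cap\Gamma_H(b)=\{a\}$, and then apply Lemma~\ref{lem:CycleHardness+} with $q=3$, $\calC_0=\{a\}$, $\calC_1=\{b\}$, $\calC_2=N$ and trivial attached gadgets. The only (cosmetic) differences are that you use $k=2$ rather than the paper's $k=0$ — both choices satisfy the hypotheses — and that you dispatch condition~\eqref{equ:CycleHardness+4} by the cleaner observation that $\Gamma_H(\calC_2)\setminus(\calC_0\cup\calC_1)=\emptyset$, where the paper instead exhibits a $K_4$-minor.
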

\begin{proof} 
	Let $c$ be a common neighbour of $a$ and $b$ and consider the triangle $C=(a,b,c,a)$: If $a$ and $c$ have a common neighbour apart from $b$, or if $b$ and $c$ have a common neighbour apart from $a$ then Lemma~\ref{lem:good_triangle_gadget} applies, as $a$ and $b$ have a common neighbour apart from $c$ by assumption. Otherwise, we have that
	$\abs{N_{C,H}(a)}=1$, $\abs{N_{C,H}(b)}=1$, and $\abs{N_{C,H}(c)}=j\ge 3$, where $j$ is odd.  
	For any $w\in N_{C,H}(c)$   we can assume that
	\begin{equation}\label{equ:adj_odd_neighbours}
	\NH{w}\cap \NH{a}=\{b\} \text{ and } \NH{w}\cap \NH{b}=\{a\},
	\end{equation}
	as otherwise we obtain a hardness gadget from Lemma~\ref{lem:good_triangle_gadget} (choose $w$ instead of $c$). 
	Next we can apply Lemma~\ref{lem:CycleHardness+} to obtain a hardness gadget as follows.
	
	Let $q=3$ and $\calC_0=N_{C,H}(a)=\{a\}$, $\calC_1=N_{C,H}(b)=\{b\}$, $\calC_2=N_{C,H}(c)$.
	For each $i\in \{0, 1,2\}$, let $(J_i, z_i)$ be the partially $H$-labelled graph that only contains the single vertex $z_i$ and has an empty pinning function. It follows that $\Omega_i=V(H)$. We choose $k=0$ and check that the requirements of Lemma~\ref{lem:CycleHardness+} are met.
	\begin{myitemize}
		\item \eqref{equ:CycleHardness+1} holds since, for each $i\in \{0, 1,2\}$, $\Omega_i=V(H)$ and $\calC_i$ has odd cardinality (either $1$ or $j$).
		\item \eqref{equ:CycleHardness+2} and~\eqref{equ:CycleHardness+3} hold by \eqref{equ:adj_odd_neighbours} and the fact that $\NH{a}\cap\NH{b}=N_{C,H}(c)=\calC_2$.
		\item  Suppose for contradiction that there exists a walk $D=(d_a,d_b,d_c,d_a)$ with $d_a\in \NH{a}\setminus\{b,c\}$, $d_b\in \NH{b}\setminus\{a,c\}$ and $d_c\in \NH{c}\setminus\{a,b\}$. Consequently, as we do not allow self-loops in $H$, $d_a\neq a$, $d_b\neq b$ and $d_c\neq c$. Then the vertices $d_a, a, b, c$ induce a $K_4$-minor (where the path from $d_a$ to $b$ goes via $d_b$, and the path from $d_a$ to $c$ goes via $d_c$). Hence \eqref{equ:CycleHardness+4} holds.
		\item Since $\calC_0=\calC_{3\mod q}=\{a\}$, \eqref{equ:CycleHardness+5} holds by the fact that we do not allow self-loops in $H$.
		\item \eqref{equ:CycleHardness+6} holds since $(\calC_0\cup \calC_2)\cap\Omega_1= \calC_0\cup \calC_2$, which has cardinality $j+1$ (as we do not allow self-loops in $H$ and therefore $\calC_0=\{a\}$ and $\calC_2=N_{C,H}(c)=\NH{a}\cap\NH{b}$ are disjoint), and $j+1$ is even. Analogously, $(\calC_1\cup \calC_2)\cap\Omega_1= \calC_1\cup \calC_2$ has even cardinality.
	\end{myitemize}
\end{proof}

\begin{lem}\label{lem:diamond_square_gadget}
	The following graph $J$ is a strong hardness gadget:
	
	\medskip
	\begin{tikzpicture}[scale=2, node distance = 1.4cm, thick] 
	\tikzstyle{vertex}=[draw=black, circle, inner sep=1.5pt] 
	\begin{scope}[xshift=2cm] 
	\node[vertex] (s) at (0  ,1) [label=90: $v_2$] {};
	\node[vertex] (i) at (0  ,0) [label=270: $v_5$] {};
	\node[vertex] (o) at (-1  ,1) [label=180: $v_1$] {};
	\node[vertex] (u) at (-1  ,0) [label=180: $v_4$] {};
	\node[vertex] (v) at (1  ,1) [label=0: $v_3$] {};
	\draw (s)--(i); \draw (s)--(o); \draw (s)--(v); \draw (o)--(u);
	\draw (u)--(i); \draw (o) -- (i);
	\node[vertex] (x) at (1  ,0) [label=0: $v_6$] {};
	\draw (x)--(i); \draw (v)--(x);
	\end{scope}
	\end{tikzpicture}
	\medskip
\end{lem}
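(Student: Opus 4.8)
The plan is to take an arbitrary $K_4$-minor-free graph $H$ that contains $J$ as a subgraph and produce a hardness gadget for it, by feeding $H$ into the lemmas already established about the subgraph~$F$. The key observation is that $J$ is precisely the graph $F$ of Definition~\ref{def:graphF} together with the single extra edge $\{v_1,v_5\}$; in particular $H$ contains $F$ as a subgraph, so $\Gamma_{H\setminus F}(i,j)$ is defined and Lemmas~\ref{lem:typeV_hard}, \ref{lem:even_diamond} and~\ref{lem:adj_odd_neighbours} are available. I identify the vertices of $J$ inside $H$ as $v_1,\dots,v_6$ in the obvious way.

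The first step exploits the extra edge $\{v_1,v_5\}$: together with $\{v_1,v_2\},\{v_2,v_5\},\{v_1,v_4\},\{v_4,v_5\}$ it makes $\{v_1,v_2,v_4,v_5\}$ a ``diamond'' whose diagonal is $\{v_1,v_5\}$. A one-line $K_4$-minor certificate --- contract any further common neighbour $w$ of $v_2$ and $v_4$ into $v_2$, giving a $K_4$-minor on the branch sets $\{v_1\},\{v_2,w\},\{v_4\},\{v_5\}$ --- then forces $\Gamma_H(v_2)\cap\Gamma_H(v_4)=\{v_1,v_5\}$, and in particular $\Gamma_{H\setminus F}(2,4)=\emptyset$. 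Now I invoke Lemma~\ref{lem:typeV_hard}: $H$ has a hardness gadget unless $F$ has type $\TV$ in $H$, so I may assume $F$ has type $\TV$. Since $\Gamma_{H\setminus F}(2,4)=\emptyset$, the second alternative of Definition~\ref{def:typeV} is impossible, so the first alternative holds; in particular $\Gamma_{H\setminus F}(1,5)\neq\emptyset$.

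The proof finishes by a case split on the parity of $|\Gamma_{H\setminus F}(1,5)|$. If it is even, then $|\Gamma_{H\setminus F}(2,4)|=0$ is also even and Lemma~\ref{lem:even_diamond} hands over a hardness gadget immediately. If it is odd (hence at least~$1$), I apply Lemma~\ref{lem:adj_odd_neighbours} to the adjacent pair $v_1,v_5$; for that I need $|\Gamma_H(v_1)\cap\Gamma_H(v_5)|$ to be odd and at least~$3$. The set $\{v_2,v_4\}\cup\Gamma_{H\setminus F}(1,5)$ is a disjoint subset of $\Gamma_H(v_1)\cap\Gamma_H(v_5)$ of size $2+|\Gamma_{H\setminus F}(1,5)|$, and the only remaining candidates from $V(F)$ are $v_3$ and $v_6$; two more short $K_4$-minor arguments --- using the paths $v_3$--$v_6$--$v_5$ and $v_2$--$v_3$--$v_6$ respectively --- rule out $v_1\sim v_3$ and $v_1\sim v_6$, so in fact $|\Gamma_H(v_1)\cap\Gamma_H(v_5)|=2+|\Gamma_{H\setminus F}(1,5)|$ is odd and $\geq 3$, and Lemma~\ref{lem:adj_odd_neighbours} completes the proof. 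I expect the only (mild) subtlety to be making this parity split line up: the ``even'' branch must be exactly what Lemma~\ref{lem:even_diamond} covers, and the ``odd'' branch must make the common-neighbourhood of $v_1$ and $v_5$ have odd size --- which is precisely why that common-neighbourhood has to be pinned down exactly, i.e., why $v_3$ and $v_6$ have to be excluded. The $K_4$-minor certificates themselves are routine.
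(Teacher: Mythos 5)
Your proof is correct and follows essentially the same route as the paper's: observe $\Gamma_{H\setminus F}(2,4)=\emptyset$ because of the extra edge $\{v_1,v_5\}$, reduce to the type-$\TV$ case via Lemma~\ref{lem:typeV_hard}, then split on the parity of $|\Gamma_{H\setminus F}(1,5)|$, handing the even case to Lemma~\ref{lem:even_diamond} and the odd case to Lemma~\ref{lem:adj_odd_neighbours} applied to the adjacent pair $v_1,v_5$. The one place you are more explicit than the paper is in pinning down $\Gamma_H(v_1)\cap\Gamma_H(v_5)$ exactly (ruling out $v_3$ and $v_6$ by short $K_4$-minor arguments) before invoking Lemma~\ref{lem:adj_odd_neighbours}; the paper simply asserts that $v_1$ and $v_5$ ``have precisely $k+2$ common neighbours,'' so your extra care is a welcome filling-in of a step the paper treats as evident.
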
 
\begin{proof}
	Let $H$ be a $K_4$-minor-free supergraph of $J$. In particular, the graph $F$ is a subgraph of $J$ and thus of $H$. Note that, due to the edge $\{v_1,v_5\}$, the vertices $v_2$ and $v_4$ have no common neighbours apart from $v_1$ and $v_5$ in $H$, as we would obtain a $K_4$-minor otherwise. In other words, $\Gamma_{H\setminus F}(2,4)=0$. By Lemma~\ref{lem:typeV_hard} we are done, unless $F$ has type $\TV$ in $H$. In particular, as $\Gamma_{H\setminus F}(2,4)=\emptyset$, only the following case remains:

	\medskip
	\begin{tikzpicture}[scale=2.5, node distance = 1.4cm,thick]
	\tikzstyle{dot}   =[fill=black, draw=black, circle, inner sep=0.15mm]
	\tikzstyle{vertex}=[  draw=black, circle, inner sep=1.5pt]
	\tikzstyle{dist}  =[fill=white, draw=black, circle, inner sep=2pt]
	\tikzstyle{pinned}=[draw=black, minimum size=10mm, circle, inner sep=0pt]	
	\begin{scope}[xshift=2cm] 
	\node[vertex] (s) at (0  ,1) [label=90: $v_2$] {};
	\node[vertex] (i) at (0  ,0) [label=270: $v_5$] {};
	\node[vertex] (o) at (-1  ,1) [label=180: $v_1$] {};
	\node[vertex] (u) at (-1  ,0) [label=180: $v_4$] {};
	\node[vertex] (v) at (1  ,1) [label=0: $v_3$] {};
	\node[vertex] (x) at (1  ,0) [label=0: $v_6$] {};	
	\draw (s)--(i); \draw (s)--(o); \draw (s)--(v); \draw (o)--(u);
	\draw (u)--(i); \draw (x)--(i); \draw (v)--(x);	
	\node[vertex] (y1) at (-0.75  ,0.25) [label=270: $y_1$] {};
	\node[vertex] (yk) at (-0.25  ,0.75) [label=90: $y_k$]  {};
	\node[vertex] (zl) at (0.25  ,0.75)  [label=90: $z_1$] {};
	\node[vertex] (z1) at (0.75  ,0.25) [label=270: $z_\ell$] {};	
	\draw (o)--(y1); \draw (o)--(yk);  \draw (i)--(y1); \draw (i)--(yk);
	\draw (v)--(z1); \draw (v)--(zl); \draw (i)--(z1); \draw (i)--(zl);    	
	\draw (o) -- (i);
	\node[dot] (d) at (-0.6875,0.3125) {}; 
	\node[dot] (d) at (-0.625,0.375) {}; 
	\node[dot] (d) at (-0.5625,0.4375) {}; 
	
	\node[dot] (d) at (-0.4375,0.5625) {}; 
	\node[dot] (d) at (-0.375,0.625) {}; 
	\node[dot] (d) at (-0.3125,0.6875) {};
	
	\node[dot] (d) at (0.5,0.5) {};  
	\node[dot] (d) at (0.375,0.625) {}; 
	\node[dot] (d) at (0.625,0.375) {}; 
	\end{scope}
	\end{tikzpicture} 
	\medskip
	
	\noindent In particular, $\Gamma_{H\setminus F}(1,5)=\{y_1,\dots,y_k\}$ and $\Gamma_{H\setminus F}(3,5)=\{z_1,\dots,z_\ell\}$ and $k,\ell>0$. Now, if $k$ is even, then Lemma~\ref{lem:even_diamond} yields a hardness gadget of $H$. Finally, if $k$ is odd, then Lemma~\ref{lem:adj_odd_neighbours} yields a hardness gadget of $H$ --- note that Lemma~\ref{lem:adj_odd_neighbours} is applicable as $v_1$ and $v_5$ have precisely $k+2$ common neighbours, which is an odd number greater or equal than $3$ since $k$ is odd and positive. 
\end{proof}

\begin{lem}\label{lem:3x1}
	The following graph $J$ is a strong hardness gadget:
	
	\medskip
	\begin{tikzpicture}[scale=2, node distance = 1.4cm, thick] 
	\tikzstyle{vertex}=[draw=black, circle, inner sep=1.5pt] 
	\begin{scope}[xshift=2cm] 
	\node[vertex] (s) at (0  ,1) [label=90: $v_2$] {};
	\node[vertex] (i) at (0  ,0) [label=270: $v_6$] {};
	\node[vertex] (o) at (-1  ,1) [label=180: $v_1$] {};
	\node[vertex] (u) at (-1  ,0) [label=180: $v_5$] {};
	\node[vertex] (v) at (1  ,1) [label=90: $v_3$] {};
	\node[vertex] (v4) at (2  ,1) [label=0: $v_4$] {};
	\node[vertex] (v8) at (2  ,0) [label=0: $v_8$] {};
	\draw (s)--(i); \draw (s)--(o); \draw (s)--(v); \draw (o)--(u);
	\draw (u)--(i); 
	\node[vertex] (x) at (1  ,0) [label=270: $v_7$] {};
	\draw (x)--(i); \draw (v)--(x); \draw (v)--(v4); \draw (v4)--(v8);
	\draw (v8)--(x);
	\end{scope}
	\end{tikzpicture}
	\medskip
\end{lem}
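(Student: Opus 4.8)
My proof plan follows the template of Lemma~\ref{lem:diamond_square_gadget}, using that $J$ is a ``chain of three squares''. First I would observe that $J$ contains two copies of the graph $F$: the copy $F_1$ on $\{v_1,v_2,v_3,v_5,v_6,v_7\}$, built from the squares $v_1v_2v_6v_5$ and $v_2v_3v_7v_6$ (which share the edge $\{v_2,v_6\}$), and the copy $F_2$ on $\{v_2,v_3,v_4,v_6,v_7,v_8\}$, built from the squares $v_2v_3v_7v_6$ and $v_3v_4v_8v_7$ (which share $\{v_3,v_7\}$). Let $H$ be a $K_4$-minor-free supergraph of $J$. By Lemma~\ref{lem:typeV_hard} applied to $F_1$ and to $F_2$, we may assume both $F_1$ and $F_2$ have type $\TV$ in $H$, since otherwise $H$ already has a hardness gadget.

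Next I would combine the two type-$\TV$ conditions. I would first dispose of the cases in which $H$ contains one of the ``extra'' edges among $v_1,\dots,v_8$ (for instance $v_2\sim v_5$, $v_2\sim v_7$, $v_4\sim v_7$, $v_1\sim v_6$, $v_3\sim v_6$, $v_3\sim v_8$, $v_1\sim v_3$, $\dots$): each such edge, together with the squares of $J$, creates a triangle two of whose edges carry an extra common neighbour, or an $F$ with a diagonal edge, so that Lemma~\ref{lem:good_triangle_gadget}, Lemma~\ref{lem:adj_odd_neighbours}, or Lemma~\ref{lem:diamond_square_gadget} applies. In the remaining case, $K_4$-minor-freeness does the rest of the bookkeeping: by Lemma~\ref{lem:K_4-free} two ``diagonal'' pairs of the same square cannot both acquire extra common neighbours, and the pairs $\{v_2,v_7\}$ and $\{v_3,v_6\}$ appear simultaneously in the type-$\TV$ conditions of $F_1$ and of $F_2$; a short check then shows that exactly two of the four combinations of the two sub-cases of type $\TV$ are mutually consistent, and that these two are interchanged by the left--right automorphism of $J$ (which swaps $F_1$ and $F_2$). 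Hence, up to that symmetry, we are reduced to one global picture: $v_2$ has a non-empty set $R$ of extra common neighbours with $v_5$ and a non-empty set $P$ of extra common neighbours with $v_7$, $v_4$ has a non-empty set $U$ of extra common neighbours with $v_7$, and the pairs $\{v_1,v_6\}$, $\{v_3,v_6\}$, $\{v_3,v_8\}$ have no extra common neighbours. In particular $H$ then contains the graph $S_{|R|,|P|}$ of Definition~\ref{def:Skl} at $F_1$ and a symmetric copy at $F_2$ involving $P$ and $U$.

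It then remains to handle parity. If $|R|$ is even, then $|\Gamma_{H\setminus F_1}(2,4)|=|R|$ and $|\Gamma_{H\setminus F_1}(1,5)|=0$ are both even, so Lemma~\ref{lem:even_diamond} applied to $F_1$ yields a hardness gadget; symmetrically, if $|P|$ is even, then $|\Gamma_{H\setminus F_2}(1,5)|=|P|$ and $|\Gamma_{H\setminus F_2}(2,4)|=0$ are both even, so Lemma~\ref{lem:even_diamond} applied to $F_2$ does. The only case left is that $|R|$ and $|P|$ are both odd; then $v_2$ has $|R|+2$ common neighbours with $v_5$ and $|P|+2$ common neighbours with $v_7$, both numbers odd and at least~$3$, yet $v_2$ is adjacent to neither $v_5$ nor $v_7$ (these subcases were disposed of above), so Lemma~\ref{lem:adj_odd_neighbours} does not apply immediately. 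This is exactly where the third square of $J$ is needed: I would use the vertices $v_3$, $v_4$, $v_8$ together with the book of squares attached to $v_7$ either to locate a pair of adjacent vertices with an odd number ($\ge 3$) of common neighbours, and conclude by Lemma~\ref{lem:adj_odd_neighbours}, or, failing that, to build the gadget of Definition~\ref{defn:hardness-gadget} directly around the edge $\{v_3,v_7\}$ (exploiting that $K_4$-minor-freeness tightly constrains how $R$, $P$ and $U$ can attach to $v_1,\dots,v_8$). I expect this final all-odd case, together with the enumeration showing that only the two symmetric combinations of type-$\TV$ subcases can occur, to be the main obstacle; every other branch reduces at once to a lemma already proved in this section.
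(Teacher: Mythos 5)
Your plan tracks the paper's own proof closely up to the final step: you correctly identify the two embedded copies of $F$ (on $\{v_1,\dots,v_7\}$ and $\{v_2,\dots,v_8\}$), invoke Lemma~\ref{lem:typeV_hard} to force both to have type $\TV$, dispose of non-induced cases, use $K_4$-minor-freeness and Lemma~\ref{lem:K_4-free} to show that only two mutually symmetric combinations of the type-$\TV$ subcases are consistent (hinging on the shared pairs $\{v_2,v_7\}$ and $\{v_3,v_6\}$), and apply Lemma~\ref{lem:even_diamond} to eliminate even-parity cases. That is exactly the paper's reduction to its conditions (a)--(c).

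However, there are two genuine gaps in what remains. First, you derive oddness only for $|\Gamma_H(v_2)\cap\Gamma_H(v_5)|$ (via $F_1$) and $|\Gamma_H(v_2)\cap\Gamma_H(v_7)|$ (via $F_2$), but you never establish oddness of $|\Gamma_H(v_4)\cap\Gamma_H(v_7)|$. That third parity condition is needed and does follow from Lemma~\ref{lem:even_diamond} applied to $F_2$ under the internal left--right symmetry of $F$ (which swaps $\Gamma_{H\setminus F}(1,5)\leftrightarrow\Gamma_{H\setminus F}(3,5)$ and $\Gamma_{H\setminus F}(2,4)\leftrightarrow\Gamma_{H\setminus F}(2,6)$); omitting it leaves the final gadget unverifiable.

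Second, and more seriously, the crux of the proof --- the explicit hardness gadget in the all-odd case --- is left as a hope rather than a construction. You correctly observe that Lemma~\ref{lem:adj_odd_neighbours} does not apply because $v_2$ is adjacent to neither $v_5$ nor $v_7$, but your fallback ``locate a pair of adjacent vertices with an odd number $\geq 3$ of common neighbours'' cannot succeed in the reduced configuration (by (a)--(c) and $K_4$-minor-freeness, every edge of $J$ lies in exactly one square of $H$), and ``build the gadget directly around the edge $\{v_3,v_7\}$'' is not a construction. The paper's actual gadget spans all three squares: $I=\{v_7\}$, $S=\{v_2\}$, $J_1$ is $y$ adjacent to a $v_1$-pin and a $v_6$-pin (so $\Omega_y=\{v_2,v_5\}$ by condition (a)), $J_2$ is $z$ adjacent to a $v_3$-pin and a $v_8$-pin (so $\Omega_z=\{v_4,v_7\}$ by (c)), and $J_3$ is a length-$2$ path from $y$ to $z$; the three oddness conditions make $|\Sigma_{v_5,v_2}|$, $|\Sigma_{v_2,v_7}|$, $|\Sigma_{v_7,v_4}|$ odd, while a $2$-walk from $v_5$ to $v_4$ would create a $K_4$-minor, so $|\Sigma_{v_5,v_4}|=0$. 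This is a new gadget, not a consequence of the preceding lemmas in the section, so your closing expectation that ``every other branch reduces at once to a lemma already proved'' is over-optimistic.
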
 
\begin{proof}
	Let $H$ be a $K_4$-minor-free supergraph of $J$.

	\medskip
	\noindent \textbf{Claim A} \textit{If $J$ is not an induced subgraph of $H$ then $H$ has a $K_4$-minor or a hardness gadget.}
	\medskip
	\begin{claimproof}
		If $J$ is not an induced subgraph of $H$ then there is an edge $e=\{v_i,v_j\}\in E(H)\setminus E(J)$ for some $i\neq j\in[8]$.
		If $e$ is a diagonal of one of the three squares, such as $\{v_2,v_7\}$, then $H$ has a hardness gadget by Lemma~\ref{lem:diamond_square_gadget}. 
		
		If $e$ is not a diagonal of a square, then we obtain a $K_4$-minor; each case is similar to one of the following two:
		
		\medskip
		\begin{tikzpicture}[scale=2, node distance = 1.4cm, thick] 
		\tikzstyle{vertex}=[draw=black, circle, inner sep=1.5pt] 
		
		\begin{scope}[xshift=1.25cm]
		\node[vertex] (s) at (0  ,1) [label=90: $v_2$] {};
		\node[vertex] (i) at (0  ,0) [label=270: $v_6$] {};
		\node[vertex] (o) at (-1  ,1) [label=180: $v_1$] {};
		\node[vertex] (u) at (-1  ,0) [label=180: $v_5$] {};
		\node[vertex] (v) at (1  ,1) [label=90: $v_3$] {};
		\node[vertex] (v4) at (2  ,1) [label=0: $v_4$] {};
		\node[vertex] (v8) at (2  ,0) [label=0: $v_8$] {};
		\draw (s)--(i); \draw (s)--(o); \draw (s)--(v); \draw (o)--(u);
		\draw (u)--(i); 
		\node[vertex] (x) at (1  ,0) [label=270: $v_7$] {};
		\draw (x)--(i); \draw (v)--(x); \draw[dashed] (v)--(v4); \draw[dashed] (v4)--(v8);
		\draw[dashed] (v8)--(x); \draw (o)--(x);
		\end{scope}
		
		\begin{scope}[xshift=5cm] 
		\node[vertex] (s) at (0  ,1) [label=90: $v_2$] {};
		\node[vertex] (i) at (0  ,0) [label=270: $v_6$] {};
		\node[vertex] (o) at (-1  ,1) [label=180: $v_1$] {};
		\node[vertex] (u) at (-1  ,0) [label=180: $v_5$] {};
		\node[vertex] (v) at (1  ,1) [label=90: $v_3$] {};
		\node[vertex] (v4) at (2  ,1) [label=0: $v_4$] {};
		\node[vertex] (v8) at (2  ,0) [label=0: $v_8$] {};
		\draw (s)--(i); \draw (s)--(o); \draw (s)--(v); \draw (o)--(u);
		\draw (u)--(i); 
		\node[vertex] (x) at (1  ,0) [label=270: $v_7$] {};
		\draw (x)--(i); \draw (v)--(x); \draw[dashed] (v)--(v4); \draw[dashed] (v4)--(v8);
		\draw[dashed] (v8)--(x); \draw (o) to[out=45, in=135] (v);
		\end{scope}
		\end{tikzpicture}
		\medskip
	\end{claimproof}
	
	Thus assume for the remainder of the proof that $J$ is an induced subgraph of $H$.
	Note that $J$ has two subgraphs isomorphic to $F$. We are done unless both have type $\TV$ in $H$ by Lemma~\ref{lem:typeV_hard}. If both have type $\TV$, but Lemma~\ref{lem:even_diamond} is applicable, we are done as well. There is thus only one case (up to symmetry) remaining:  
	\begin{enumerate}[(a)]
		\item $\Gamma_H(v_1)\cap\Gamma_H(v_6) = \{v_2,v_5\}$,
		\item $\Gamma_H(v_3)\cap\Gamma_H(v_6) = \{v_2,v_7\}$,
		\item $\Gamma_H(v_3)\cap\Gamma_H(v_8) = \{v_4,v_7\}$,
		\item $|\Gamma_H(v_2)\cap\Gamma_H(v_5)|$ is odd,
		\item $|\Gamma_H(v_2)\cap\Gamma_H(v_7)|$ is odd, and
		\item $|\Gamma_H(v_4)\cap\Gamma_H(v_7)|$ is odd.
	\end{enumerate}
	We provide an illustration for convenience:
	
	\medskip
	\begin{tikzpicture}[scale=2.25, node distance = 1.4cm, thick] 
	\tikzstyle{vertex}=[draw=black, circle, inner sep=1.5pt] 
	\begin{scope}[xshift=2cm] 
	\node[vertex] (s) at (0  ,1) [label=90: $v_2$] {};
	\node[vertex] (i) at (0  ,0) [label=270: $v_6$] {};
	\node[vertex] (o) at (-1  ,1) [label=180: $v_1$] {};
	\node[vertex] (u) at (-1  ,0) [label=180: $v_5$] {};
	\node[vertex] (v) at (1  ,1) [label=90: $v_3$] {};
	\node[vertex] (v4) at (2  ,1) [label=0: $v_4$] {};
	\node[vertex] (v8) at (2  ,0) [label=0: $v_8$] {};
	\draw (s)--(i); \draw (s)--(o); \draw (s)--(v); \draw (o)--(u);
	\draw (u)--(i); 
	\node[vertex] (x) at (1  ,0) [label=270: $v_7$] {};
	\draw (x)--(i); \draw (v)--(x); \draw (v)--(v4); \draw (v4)--(v8);
	\draw (v8)--(x);
	\node[vertex] (x1) at (-0.75  ,0.75) [label=90: $x_1$] {};
	\node[vertex] (xk) at (-0.25  ,0.25) [label=270: $x_k$] {};
	\node[vertex] (y1) at (0.25  ,0.25) [label=270: $y_1$] {};
	\node[vertex] (yl) at (0.75  ,0.75) [label=90: $y_\ell$] {};
	\node[vertex] (z1) at (1.25  ,0.75) [label=90: $z_1$] {};
	\node[vertex] (zm) at (1.75  ,0.25) [label=270: $z_m$] {};
	\draw (x1)--(u); \draw (x1)--(s); \draw (xk)--(u); \draw (xk)--(s);
	\draw (y1)--(x); \draw (y1)--(s); \draw (yl)--(x); \draw (yl)--(s);
	\draw (z1)--(x); \draw (z1)--(v4); \draw (zm)--(x); \draw (zm)--(v4);
	\end{scope}
	\end{tikzpicture}
	\medskip
	
	\noindent Note that $k$, $\ell$ and $m$ are odd. We construct a hardness gadget:
	\begin{itemize}
		\item $S=\{v_2\}$ and $I=\{v_7\}$.
		\item $J_1$ is the graph where $y$ is adjacent to a $v_1$-pin and a $v_6$-pin. Note that $\Omega_y=\{v_2,v_5\}$ by (a).
		\item $J_2$ is the graph where $z$ is adjacent to a $v_3$-pin and a $v_8$-pin. Note that $\Omega_z=\{v_7,v_4\}$ by (c).
		\item $J_3$ is a path of length $2$ from $y$ to $z$.
	\end{itemize}
	By (d), (e) and (f) we have that
	$|\Sigma_{v_5,v_2}|$, $|\Sigma_{v_2,v_7}|$ and $|\Sigma_{v_7,v_4}|$ are odd. Furthermore, we observe that $|\Sigma_{v_5,v_4}|=0$ as any path of length $2$ from $v_5$ to $v_4$ would create a $K_4$-minor. 
\end{proof}

\begin{lem}\label{lem:zigzag}
	The following graph $J$ is a strong hardness gadgets:
	
	\medskip
	\begin{tikzpicture}[scale=2.25, node distance = 1.4cm, thick, yscale=-1] 
	\tikzstyle{vertex}=[draw=black, circle, inner sep=1.5pt] 
	\begin{scope}[xshift=2cm] 
	\node[vertex] (s) at (0  ,1)  {};
	\node[vertex] (i) at (0  ,0)  {};
	\node[vertex] (o) at (-1  ,1) {};
	\node[vertex] (u) at (-1  ,0) {};
	\node[vertex] (v) at (1  ,1)  {};
	\node[vertex] (v4) at (0  ,-1) {};
	\node[vertex] (v8) at (1  ,-1)  {};
	\draw (s)--(i); \draw (s)--(o); \draw (s)--(v); \draw (o)--(u);
	\draw (u)--(i); 
	\node[vertex] (x) at (1  ,0)  {};
	\draw (x)--(i); \draw (v)--(x); \draw (i)--(v4); \draw (v4)--(v8);
	\draw (v8)--(x);
	\node[vertex] (v9) at (-0.5  ,0.5)  {};
	\draw (v9)--(o); \draw (v9)--(i);
	\end{scope}
	
	\end{tikzpicture}
	\medskip
\end{lem}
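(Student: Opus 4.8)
The plan is to show that every $K_4$-minor-free graph $H$ containing $J$ as a subgraph has a hardness gadget, in the style of the proofs of Lemmas~\ref{lem:diamond_square_gadget} and~\ref{lem:3x1}. First note the shape of $J$: it is the union of three $4$-cycles $Q_1=(o,s,i,u)$, $Q_2=(i,s,v,w)$, $Q_3=(i,v',w',w)$, where $Q_1,Q_2$ share the edge $\{i,s\}$ and $Q_2,Q_3$ share the edge $\{i,w\}$ (these two shared edges meeting at the unique degree-$5$ vertex $i$), together with one extra vertex $p$ adjacent to $o$ and $i$. Thus $o,i$ are non-adjacent with the three common neighbours $s,u,p$, and $J$ contains the graph $F$ of Definition~\ref{def:graphF} in several ways: as $Q_1\cup Q_2$, as $Q_2\cup Q_3$, and as $Q_2$ together with a $4$-cycle through $p$.

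\emph{Step 1.} First I would reduce to the case that $J$ is induced in $H$. For each non-edge of $J$ I would check that adding it to $H$ either creates a $K_4$-minor --- obtained by routing the six connections of a $K_4$ on a suitable set of four vertices of $J$ through the degree-two vertices $u,v,v',w',p$ --- or it is a diagonal of one of the three $4$-cycles, in which case it is a diagonal of a square of a copy of $F$; since $F$ plus any diagonal of one of its two squares is isomorphic to the graph of Lemma~\ref{lem:diamond_square_gadget}, $H$ then contains that graph as a subgraph and we are done. So henceforth $J$ is induced in $H$, and in particular $o\not\sim i$.

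\emph{Step 2.} Next I would push the copies of $F$ through the tools of this section. For $F=Q_1\cup Q_2$ the vertex $p$ witnesses $\Gamma_{H\setminus F}(1,5)\neq\emptyset$, so by Lemma~\ref{lem:typeV_hard} we may assume $F$ has type $\TV$ in $H$, which (by Definition~\ref{def:typeV}, since $\Gamma_{H\setminus F}(1,5)\neq\emptyset$) forces the first alternative, and then Lemma~\ref{lem:even_diamond} lets us assume $|\Gamma_{H\setminus F}(1,5)|$ is odd. Running the analogous argument for $F=Q_2\cup Q_3$ (using the symmetries of $F$ in Lemma~\ref{lem:even_diamond}) and for the copy through $p$ pins down the neighbourhood structure of $H$: the antipodal pairs $\{o,i\}$, $\{v,i\}$, $\{w',i\}$ of $Q_1,Q_2,Q_3$ are non-adjacent, each with an odd number $\ge 3$ of common neighbours, while the other antipodal pairs $\{s,u\}$, $\{s,w\}$, $\{v',w\}$ each have exactly two common neighbours (the endpoints of the corresponding $i$-containing pair). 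Each of these identities holds because an additional common neighbour would close up a $K_4$-minor through one of the three $4$-cycles.

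\emph{Step 3, and the main obstacle.} In the one remaining configuration I would build a hardness gadget $(I,S,(J_1,y),(J_2,z),(J_3,y,z))$. The three size-two antipodal pairs let me take $J_1,J_2$ to be single vertices joined to two pins each (e.g.\ $y$ to a $v'$-pin and a $w$-pin, $z$ to an $s$-pin and a $u$-pin), so that $\Omega_y$ and $\Omega_z$ have even cardinality $2$, and then choose $I\subset\Omega_y$, $S\subset\Omega_z$ as suitable singletons and $J_3$ a short path (an edge, a length-two path, or a length-two path with a pinned interior vertex); conditions (1)--(3) of Definition~\ref{defn:hardness-gadget} should then follow from the identities of Step 2, the non-adjacency $o\not\sim i$, and the fact that the forbidden entry of condition (3) vanishes because the corresponding short connection would close a $K_4$-minor. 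The main obstacle is exactly this last step: unlike in Lemmas~\ref{lem:diamond_square_gadget} and~\ref{lem:3x1}, the common neighbourhoods coming ``for free'' from the squares (the odd diamond on $\{o,i\}$ and its analogues at $v$ and $w'$) have odd size rather than size two, so one has to play these diamonds off against the attached squares $Q_2,Q_3$ to produce an even-sized $\Omega$ and to avoid the parity of $\deg_H(i)$, which is uncontrolled and appears as soon as $J_3$ is a length-two path. Finding a single choice of $(I,S,J_1,J_2,J_3)$ that makes all the parities work out --- or, instead, invoking the generalised cycle gadget of Lemma~\ref{lem:CycleHardness+} with $q=3$ and the odd diamonds in the role of the sets $\calC_j$ --- is the crux of the proof.
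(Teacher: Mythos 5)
Your Steps 1 and 2 track the paper's proof closely: the paper also reduces (via its Claim~A, on a larger induced supergraph $\hat{J}$ rather than $J$ alone) to the induced case, uses Lemma~\ref{lem:typeV_hard} to force both copies of $F$ to have type $\TV$, and then uses Lemma~\ref{lem:even_diamond} to force the three ``rich'' diagonals through $i$ to have odd size $\geq 3$ while the three diagonals not through $i$ have exactly two common neighbours. The resulting picture (three squares joined in a zig-zag at $i$, each with a fan of $k$, $\ell$, $m$ extra common neighbours on its $i$-diagonal, $k,\ell,m$ all odd, everything induced) is exactly the paper's graph $\hat{J}$.

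Step 3 is where the genuine gap lies, and you have correctly diagnosed it but not resolved it. The difficulty you name --- that the parity of $\deg_H(i)$ is uncontrolled and shows up as soon as $J_3$ is a length-two path --- is precisely the crux. The paper's resolution is \emph{not} to find a single clever choice of gadget, nor to invoke the generalised cycle gadget; instead, it splits into two cases on the parity of $\deg_H(i)$ and uses a different pair of size-two diagonals in each case. When $\deg_H(i)$ is even, take $\Omega_y,\Omega_z$ from the two diagonals adjacent to each other around $i$ (in the paper's relabelling, $\{v_1,v_5\}$ and $\{v_5,v_3\}$); then the problematic $2$-walk count from $i$ to itself lands on the $(o,x)$-entry of $\Sigma$, which must be even --- and it equals $\deg_H(i)$, which is even by assumption, while the $(i,s)$-entry is the count of $2$-paths between the two ``outer'' diagonal endpoints, which is $1$ by inducedness. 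When $\deg_H(i)$ is odd, take instead the two extremal size-two diagonals (the ones on opposite ends of the zig-zag, $\{v_1,v_5\}$ and $\{\alpha,v_3\}$) so that $I=S=\{i\}$; now the $i$-to-$i$ count is the $(i,s)$-entry, which must be odd --- and it is --- while the $(o,x)$-entry counts $2$-paths between the two far endpoints, which is $0$ because any such path would create a $K_4$-minor. So the missing idea is: do not avoid the parity of $\deg_H(i)$, branch on it, and in each branch pick the pair of size-two diagonals that routes that parity into the $\Sigma$-entry of matching required parity.
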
 
\begin{proof}
	Let $H$ be a $K_4$-minor-free supergraph of $J$. 
	Note that $J$ has two subgraphs isomorphic to $F$. By Lemma~\ref{lem:typeV_hard}, we obtain a hardness gadget of $H$, unless both of the subgraphs isomorphic to~$F$ have type $\TV$. If this is the case, however, we obtain the following subgraph $\hat{J}$ of~$H$:
	
	\begin{center}
		\begin{tikzpicture}[scale=3, node distance = 1.4cm,thick]
		\tikzstyle{dot}   =[fill=black, draw=black, circle, inner sep=0.15mm]
		\tikzstyle{vertex}=[fill=black, draw=black, circle, inner sep=1.5pt]
		\tikzstyle{dist}  =[fill=white, draw=black, circle, inner sep=2pt]
		\tikzstyle{pinned}=[draw=black, minimum size=10mm, circle, inner sep=0pt]
		
		\node[vertex] (s) at (0  ,1) [label=135: $v_2$] {};
		\node[vertex] (i) at (0  ,0) [label=270: $v_5$] {};
		\node[vertex] (o) at (-1  ,1) [label=180: $v_1$] {};
		\node[vertex] (u) at (-1  ,0) [label=180: $v_4$] {};
		\node[vertex] (v) at (1  ,1) [label=0: $v_3$] {};
		\node[vertex] (x) at (1  ,0) [label=0: $v_6$] {};
		
		\draw (s)--(i); \draw (s)--(o); \draw (s)--(v); \draw (o)--(u);
		\draw (u)--(i); \draw (x)--(i); \draw (v)--(x);
		
		\node[vertex] (y1) at (-0.75  ,0.75) [label=90: $y_1$] {};
		\node[vertex] (yk) at (-0.25  ,0.25) [label=270: $y_k$]  {};
		\node[vertex] (zl) at (0.25  ,0.25)  [label=270: $z_1$] {};
		\node[vertex] (z1) at (0.75  ,0.75) [label=90: $z_\ell$] {};
		
		\draw (s)--(y1); \draw (s)--(yk);  \draw (u)--(y1); \draw (u)--(yk);
		\draw (x)--(z1); \draw (x)--(zl); \draw (s)--(z1); \draw (s)--(zl);    
		
		\node[dot] (d) at (-0.5,0.5) {};  
		\node[dot] (d) at (-0.625,0.625) {}; 
		\node[dot] (d) at (-0.375,0.375) {}; 
		\node[dot] (d) at (0.5,0.5) {};  
		\node[dot] (d) at (0.625,0.625) {}; 
		\node[dot] (d) at (0.375,0.375) {}; 
		\node[dot] (d) at (0.5,1.5) {};  
		\node[dot] (d) at (0.625,1.375) {}; 
		\node[dot] (d) at (0.375,1.625) {}; 
		
		\node[vertex] (a) at (0,  2) [label=180: $\alpha$] {};
		\node[vertex] (b) at (1  ,2) [label=0: $\beta$] {};
		\draw (s) -- (a); \draw (a) -- (b); \draw (b) -- (v); 
		
		\node[vertex] (zpm) at (0.75  ,1.25)  [label=270: $z'_m$] {};
		\node[vertex] (zp1) at (0.25  ,1.75) [label=90: $z'_1$] {};
		\draw (s) -- (zp1); \draw (s) -- (zpm); \draw (zpm) -- (b);
		\draw (zp1) -- (b);  
		
		\end{tikzpicture}
	\end{center}
	In $\hat{J}$, $k,\ell,m>0$ and all common neighbours in $H$ between the pairs $(v_2,v_4)$, $(v_2,v_6)$ and $(v_2,\beta)$ are depicted. By definition of type $\TV$, we also obtain that each of the pairs $(v_1,v_5)$, $(v_5,v_3)$ and $(v_3,\alpha)$ has only the two common neighbours in $H$ depicted. Note further, that $H$ has a hardness gadget if at least one of $k,\ell$ or $m$ is even by Lemma~\ref{lem:even_diamond}. Thus assume for the remainder of the proof that all three are odd. We will rely on the following claim, that we can assume $\hat{J}$ to be an \emph{induced} subgraph of $H$:
	
	\medskip
	\noindent \textbf{Claim A:} \textit{If $\hat{J}$ is not an induced subgraph of $H$ then $H$ has a $K_4$-minor or a hardness gadget.}
	\medskip
	
	\begin{claimproof}
		Let $e\in E(H)\setminus E(\hat{J})$ be an edge of $H$ that connects two vertices of $\hat{J}$. We first assume that $e$ connects two vertices in
		\[\{v_1,v_2,v_3,v_4,v_5,v_6,y_1,\dots,y_k,z_1,\dots,z_\ell\}\,.\]
		We show by case distinction that $e$ either yields a hardness gadget, or a $K_4$-minor:
		\begin{enumerate}[(I)]
			\item $x\in e$ for $x\in \{v_1,y_1,\dots,y_k\}$. Let $x'$ be the other endpoint of $e$ and note that $x'\notin\{v_4,v_2,x\}$ as we do not allow self-loops and multiple edges. 
			\begin{enumerate}[(i)]
				\item If $x'\in \{v_1,y_1,\dots,y_k,v_5\}$ then we obtain a $K_4$-minor induced by $x,x',v_2,v_4$ --- note that, as $k>0$, there exists a 2-path from $v_2$ to $v_4$ whose internal vertex is neither $x$ nor $x'$. 
				
				\item If $x' \in \{v_3,z_1,\dots,z_\ell\}$ then we obtain a $K_4$-minor induced by $x,v_2,x',v_5$ --- note that there is a 2-path from $v_5$ to $x$ via $v_4$, and a 2-path from $v_5$ to $x'$ via $v_6$.
				
				\item If $x'=v_6$, then we obtain a $K_4$-minor induced by $x,v_2,v_6,v_5$ --- note that there is a 2-path from $v_5$ to $x$ via $v_4$, and a 2-path from $v_6$ to $v_2$ via $v_3$.
			\end{enumerate}
			\item $x\in e$ for $x\in \{v_3,z_1,\dots,z_\ell\}$. Symmetric to the previous case (I).
			\item $v_4\in e$. Let $x'$ be the other endpoint of $e$ and note that $x'\notin\{v_4,v_1,y_1,\dots,y_k,v_5\}$ as we do not allow self-loops and multiple edges. 
			\begin{enumerate}[(i)]
				\item If $x'\in \{v_3,z_1,\dots,z_\ell\}$ then the case is symmetric to case (I)(iii).
				\item If $x'=v_6$ then we obtain a $K_4$-minor induced by $v_4,v_2,v_6,v_5$ --- note that there is a 2-path from $v_4$ to $v_2$ via $v_1$, and a 2-path from $v_2$ to $v_6$ via $v_3$.
				\item If $x'=v_2$, then $H$ has a hardness gadget by Lemma~\ref{lem:diamond_square_gadget}.
			\end{enumerate}
			\item $v_6\in e$. Symmetric to the previous case (III).
			\item $v_2\in e$. Let $x'$ be the other endpoint of $e$. Thus $x'\notin\{v_2,v_1,y_1,\dots,y_k,v_5,z_1,\dots,z_\ell,v_3\}$ as we do not allow self-loops and multiple edges. The only remaining candidates for $x'$ are thus $v_4$ and $v_6$. However, both of the latter candidates yield a hardness gadget by Lemma~\ref{lem:diamond_square_gadget}.
			\item $v_5\in e$. Let $x'$ be the other endpoint of $e$ and note that $x'\notin\{v_5,v_4,v_2,v_6\}$ as we do not allow self-loops and multiple edges. Similarly as in the previous case (V), all other candidates for $x'$ yield a hardness gadget by Lemma~\ref{lem:diamond_square_gadget}.
		\end{enumerate}
		This concludes the case distinction. Observe now, that a symmetric case analysis shows $H$ has a hardness gadget or a $K_4$-minor if $e$ connects two vertices in
		\[\{v_5,v_2,\alpha,v_6,v_3,\beta,z_1,\dots,z_\ell,z'_1,\dots,z'_m\}\,.\]
		The remaining possibility for $e$ is to have one endpoint in $\{v_4,v_1,y_1,\dots,y_k\}$ and the other endpoint in $\{\alpha,\beta,z'_1,\dots,z'_m\}$. However, in this case, we find a path from $v_5$ to $v_3$ whose vertices are disjoint from $\{v_2,z_1,\dots,z_\ell,v_6\}$. Consequently, we obtain a $K_4$-minor induced by $v_2,v_3,v_5,v_6$.
	\end{claimproof}
	
	\noindent We thus assume that $\hat{J}$ is an induced subgraph of $H$ in what follows. 
	Next, we perform a case distinction on the parity of the degree of $v_2$; in both cases, we construct a hardness gadget.
	\begin{enumerate}[(I)]
		\item $\deg_H(v_2)$ is even. We construct a hardness gadget:
		\begin{itemize}
			\item $I=\{v_4\}$ and $S=\{v_6\}$.
			\item $J_1$ is the graph where $y$ is adjacent to a $v_1$-pin and a $v_5$-pin so $\Omega_y = \{v_2,v_4\}$.
			\item $J_2$ is the graph where $z$ is adjacent to a $v_5$-pin and a $v_3$-pin so $\Omega_z = \{v_2,v_6\}$. 
			\item $J_3$ is a 2-path between $y$ and $z$. 
		\end{itemize}
		As the degree of $v_2$ is even,   $|\Sigma_{v_2,v_2}|$ is even. 
		As $k$ and $\ell$ are odd,   $|\Sigma_{v_2,v_6}|$ and $|\Sigma_{v_2,v_4}|$  are odd. Finally, we claim that 
		$|\Sigma_{v_6,v_4}|$ is odd: Otherwise there must be an additional 2-path from $v_6$ to $v_4$. As $\hat{J}$ is an induced subgraph of $H$, the internal vertex of this path, let us call it $x$, cannot be contained in $V(\hat{J})$; otherwise, $H$ would contain an edge between $x$ and a vertex $v$ of $\hat{J}$ while $x$ and $v$ are not adjacent in $\hat{J}$.
		
		This, however, yields a $K_4$-minor induced by the vertices $v_4,v_2,v_6$ and $v_5$ --- note that $v_4$ and $v_2$ are connected by the 2-path via $v_1$, $v_2$ and $v_6$ are connected by the 2-path via $v_3$, and $v_4$ and $v_6$ are connected by the 2-path via $x$.
		\item $\deg_H(v_2)$ is odd. We construct a hardness gadget:
		\begin{itemize}
			\item $I=S=\{v_2\}$.
			\item $J_1$ is the graph where $y$ is adjacent to a $v_1$-pin and a $v_5$-pin so $\Omega_y = \{v_2,v_4\}$.
			\item $J_2$ is the graph where $z$ is adjacent to an $\alpha$-pin and a $v_3$-pin so $\Omega_z = \{v_2,\beta\}$. 
			\item $J_3$ is a 2-path between $y$ and $z$. 
		\end{itemize}
		As the degree of $v_2$ is odd,   $|\Sigma_{v_2,v_2}|$ is odd. As $k$ and $\ell$ are odd, we have that 
		$|\Sigma_{v_4,v_2}|$ and $|\Sigma_{v_2,\beta}|$ are odd. Finally, we claim that $|\Sigma_{v_4,\beta}|$ is even: Assuming otherwise, there must be at least one 2-path in $H$ from $v_4$ to $\beta$; we show that there is none.
		
		As $\hat{J}$ is an induced subgraph of $H$, the internal vertex of this path, let us call it $x$, cannot be contained in $V(\hat{J})$; otherwise, $H$ would contain an edge between $x$ and a vertex $v$ of $\hat{J}$ while $x$ and $v$ are not adjacent in $\hat{J}$. 
		
		This, however, yields a $K_4$-minor induced by the vertices $v_2,\beta,v_5$ and $v_4$ --- note that $v_4$ and $v_2$ are connected by the 2-path via $v_1$, $v_2$ and $\beta$ are connected by the 2-path via $\alpha$, $v_4$ and $\beta$ are connected by the 2-path via $x$, and $v_5$ and $\beta$ are connected by the 3-path via $v_6$ and $v_3$. 
	\end{enumerate}
\end{proof}

\subsection{Chordal Bipartite Component Lemma}

\begin{defn}[(1,2)-supergraph]\label{def:12supergraph}
	Let $J$ be a connected graph.
	We say that a supergraph~$H$ of~$J$ is a \emph{(1,2)-supergraph} of $J$ 
	if every edge of~$H$ connecting vertices of~$J$ is also an edge of~$J$
	and every length-$2$ path of~$H$ connecting vertices of~$J$ is also a path of~$J$.  
\end{defn}

For what follows, recall that a chordal bipartite graph is a graph in which every induced cycle is a square.
The following notion captures the $K_4$-minor-free (biconnected) graphs that are obtained by gluing squares together without inducing $\parP$-hardness.

\begin{defn}[impasse, pair of connectors]\label{def:impasse}
	A $K_4$-minor-free biconnected graph $B$ is called an \emph{impasse} if there are odd positive integers $k$ and $\ell$ such that $B$ is a (1,2)-supergraph of the graph $S_{k,\ell}$. 
	Also, with the vertex labels from Definition~\ref{def:Skl}, all of the vertices in $\{v_1,y_1,\dots,y_k,v_3,z_1,\dots,z_\ell\}$ are required to have degree~$2$ in~$B$.
	The pair $(v_1, v_3)$ is called a \emph{pair of connectors} of the impasse $B$. (Note that a pair of connectors of $B$ is not unique as, for instance, $(v_1, z_1)$ is also a pair of connectors.)
\end{defn}
The graph in Figure~\ref{fig:typeV} is an example of an impasse. 

\begin{defn}[diamond]\label{def:diamond}
	A biconnected graph $B$ is a \emph{diamond} if,
	for an integer $k\geq 2$, 
	$V(B) =  \{s,t,x_1,\dots,x_k\}$   and
	$E(B) = \cup_{i\in [k]} \{ \{s,x_i\},\{x_i,t\} \}$.
\end{defn}

Note that a square is a diamond with $k=2$.
The following lemma classifies biconnected chordal bipartite graphs:

\begin{lem}[Chordal Bipartite Component Lemma]\label{lem:main_chordal_bipartite}\label{lem:X}
	Let $H$ be a $K_4$-minor-free graph and let $B$ be a biconnected component of $H$. If $B$ is chordal bipartite and not just a single edge, then at least one of the following is true:
	\begin{enumerate}[(a)]
		\item $B$ is a diamond.
		\item $H$ has a hardness gadget.
		\item $B$ is an impasse.
	\end{enumerate}
\end{lem}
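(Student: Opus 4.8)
The strategy is a structural induction on the biconnected chordal bipartite graph $B$, using the strong hardness gadgets from Section~\ref{sec:toolbox} to discharge almost every case, so that only diamonds and impasses survive. First I would recall that, since $B$ is biconnected, chordal bipartite, and not a single edge, $B$ is built by gluing squares: more precisely, a biconnected chordal bipartite graph always contains an induced square, and (being $K_4$-minor-free, hence treewidth~$2$) it has a tree-decomposition of width~$2$ whose bags give a natural ``series-parallel''-like decomposition. I would use this to locate a copy of $F$ (two squares sharing an edge) inside $B$ whenever $B$ is not itself a diamond --- if every pair of squares of $B$ shares at most a vertex, then biconnectedness plus $K_4$-minor-freeness forces $B$ to be a single square or, more generally, exactly a diamond (all squares sharing the two ``poles'' $s,t$). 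So either $B$ is a diamond (case (a), done), or $B$ contains $F$ as a subgraph.

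**Using $F$.** Once $F \subseteq B \subseteq H$ and $H$ is $K_4$-minor-free, Lemma~\ref{lem:typeV_hard} gives a hardness gadget for $H$ (case (b), done) unless $F$ has type $\TV$ in $H$. In the type-$\TV$ case, Lemma~\ref{lem:typeV_hard}'s proof shows (up to symmetry) that $F$ sits inside a subgraph $S_{k,\ell}$ with $\Gamma_{H\setminus F}(2,4)=\{y_1,\dots,y_k\}$, $\Gamma_{H\setminus F}(2,6)=\{z_1,\dots,z_\ell\}$, and $\Gamma_{H\setminus F}(1,5)=\Gamma_{H\setminus F}(3,5)=\emptyset$. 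If $k$ or $\ell$ is even, Lemma~\ref{lem:even_diamond} finishes (case (b)). So I may assume $k,\ell$ odd; this is exactly the configuration in the definition of impasse. It then remains to verify that $B$ itself is a $(1,2)$-supergraph of $S_{k,\ell}$ with the stated degree-$2$ conditions, or else extract a further hardness gadget. This is where the strong hardness gadget lemmas do the heavy lifting: any ``extra'' edge among the named vertices of $S_{k,\ell}$, any extra common neighbour, or any short path between the wrong vertices, produces one of the forbidden configurations --- a diagonal of a square (Lemma~\ref{lem:diamond_square_gadget}), the triangle-with-extra-neighbour pattern (Lemma~\ref{lem:good_triangle_gadget}), the $3\times 1$ zigzag of squares (Lemma~\ref{lem:3x1}), or the bent configuration (Lemma~\ref{lem:zigzag}) --- or else a $K_4$-minor in $H$, which is impossible. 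So the argument is a (long but routine) case analysis: enumerate how $B$ could fail to be the claimed $(1,2)$-supergraph of $S_{k,\ell}$, and in each case point to the relevant strong hardness gadget or to a $K_4$-minor.

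**Degree conditions and connectors.** The last thing to check is that the ``peripheral'' vertices $v_1, v_3, y_1,\dots,y_k, z_1,\dots,z_\ell$ have degree exactly $2$ in $B$. Each of these is already adjacent to exactly two vertices of $S_{k,\ell}$; if it had a third neighbour $w$ in $B$, then since $B$ is biconnected that neighbour lies on a cycle through the vertex, and tracing such a cycle inside the $K_4$-minor-free graph $B$ forces either a new short path giving one of the strong-hardness-gadget patterns above, or a $K_4$-minor. (For instance, a third neighbour of $y_i$ gives, together with $v_2$ and $v_4$ and a disjoint $v_2$--$v_4$ path through some other $y_j$ --- using $k \ge 1$, in fact $k\ge 3$ once we are careful --- a $K_4$-minor; the boundary subtleties with $k=1$ are handled because then $y_1, v_1$ give two common neighbours of $v_2,v_4$, and Lemma~\ref{lem:adj_odd_neighbours} applies.) Once all these are degree $2$, $(v_1,v_3)$ is a legitimate pair of connectors and $B$ is an impasse (case (c)).

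**Main obstacle.** The real difficulty is organisational rather than conceptual: the case analysis in the middle paragraph is delicate because ``$F$ has type $\TV$'' only controls a bounded neighbourhood of one copy of $F$, whereas $B$ may contain many overlapping squares, and one must argue that they all line up into the single fan-of-two-diamonds shape $S_{k,\ell}$ rather than forming, say, a path of three squares (ruled out by Lemma~\ref{lem:3x1}) or a bent chain (ruled out by Lemma~\ref{lem:zigzag}). Keeping track of which strong hardness gadget kills which deviation --- and in particular making sure the small cases $k=1$ or $\ell=1$ are covered by Lemma~\ref{lem:adj_odd_neighbours} rather than by a gadget that needs $k\ge 2$ --- is where almost all of the work goes. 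I expect the clean statement to follow, but only after a somewhat lengthy exhaustive check of how squares can be glued in a biconnected $K_4$-minor-free chordal bipartite graph.
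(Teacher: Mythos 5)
Your proposal follows the same three-stage plan as the paper's proof: locate a copy of $F$ inside $B$ unless $B$ is a diamond, reduce to the ``$F$ has type $\TV$ with $k,\ell$ odd'' case via Lemmas~\ref{lem:typeV_hard} and~\ref{lem:even_diamond}, and then check that $B$ is an impasse, invoking Lemmas~\ref{lem:3x1},~\ref{lem:zigzag}, \ref{lem:diamond_square_gadget} and~\ref{lem:good_triangle_gadget} to kill any deviation. So the outline is right, and you correctly identified which strong hardness gadgets do the work in the last stage.

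The gap is in the first stage. The dichotomy ``either $B$ is a diamond or $F\subseteq B$'' is the structural heart of the lemma, and the heuristic you give for it is wrong as stated: in a diamond the squares $(s,x_i,t,x_j)$ all share the two poles $s,t$, so ``every pair of squares shares at most a vertex'' does \emph{not} characterise diamonds. The paper has to work for this step: fix an induced square $C=(a,b,c,d,a)$, use $K_4$-minor-freeness to assume $\Gamma_{H\setminus C}(b,d)=\emptyset$, let $B'$ be the maximal diamond around $C$, and if $B\neq B'$ use biconnectedness to extract a shortest path $P$ with internal vertices outside $B'$; the crux (Claim~A in the paper's proof) is that chordal-bipartiteness forces $P$ to have length exactly $3$ with one endpoint in $\{a,c\}$ and the other in $\Gamma_B(a)\cap\Gamma_B(c)$, which is precisely the shape of $F$. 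Your appeal to a treewidth-$2$ / series-parallel decomposition is not used and does not obviously produce this. A minor further slip: in your degree-check paragraph you invoke Lemma~\ref{lem:adj_odd_neighbours} on $v_2$ and $v_4$, but that lemma needs the two vertices to be adjacent and $v_2,v_4$ are opposite corners of a square of $S_{k,\ell}$, not an edge; the paper instead finishes the degree check by showing any escaping path from the periphery must land at $v_2$ or $v_4$ with length $3$ and then applying Lemma~\ref{lem:zigzag} or~\ref{lem:3x1}, uniformly in $k,\ell\geq 1$, so no special treatment of $k=1$ is needed.
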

\begin{proof}
	As $B$ is biconnected, chordal bipartite and not a single edge, there exists an induced square $C=(a,b,c,d,a)$ in $B$. Let us write $\Gamma_{H\setminus C}(a,c)$ for the set $\Gamma_H(a)\cap\Gamma_H(c)\setminus\{b,d\}$ and 
	$\Gamma_{H\setminus C}(b,d)$ for the set $\Gamma_H(b)\cap\Gamma_H(d)\setminus\{a,c\}$. 
	Since $B$ is a biconnected component of $H$, and $a,b,c,d \in B$, we actually have that
	$
	\Gamma_{H\setminus C}(a,c)
	= \Gamma_B(a)\cap\Gamma_B(c)\setminus\{b,d\}$
	and $
	\Gamma_{H\setminus C}(b,d) 		  = \Gamma_B(b)\cap\Gamma_B(d)\setminus\{a,c\}$.
	As $H$ is $K_4$-minor free, we observe that at least one of $\Gamma_{H\setminus C}(a,c)$ and $\Gamma_{H\setminus C}(b,d)$ is empty. Assume w.l.o.g., that $\Gamma_{H\setminus C}(b,d)$	  is empty. 
	Let $B'$ be the graph consisting of $C$  together with the edges from~$a$ and $c$ to $\Gamma_{H\setminus C}(a,c)$.
	If $B=B'$ 
	then $B$ is a diamond. Otherwise, as $B$ is biconnected, there is a shortest path $P$ in $B$ connecting two vertices of 
	$C\cup \Gamma_{H\setminus C}(a,c)$ whose internal vertices are not in $B'$.
	This path~$P$ has an internal vertex since $B$ has no triangle. 
	
	\medskip
	\noindent \textbf{Claim A:} \textit{$P$ has length $3$, one endpoint of $P$ is contained in $\Gamma_B(a)\cap \Gamma_B(c)$ and the other endpoint is contained in $\{a,c\}$. }
	\medskip
	
	\begin{claimproof}
		Assume first, for contradiction, that both endpoints of $P$, let us call them $s$ and $t$, are in $\Gamma_B(a)\cap \Gamma_B(c)$. The only possible length for $P$ under this assumption is $2$, as, otherwise, we obtain an induced cycle $(a,s,P,t,a)$ of length $\neq 4$. As $P$ must have length $2$, the endpoints of $P$ cannot be $b$ and $d$, as $\Gamma_{H\setminus C}(b,d)$ is empty. Thus we can assume w.l.o.g.\ that $s\neq b$ and $t\neq b$, which yields the following $K_4$-minor; $P$ is depicted dashed:
		
		\medskip
		\begin{tikzpicture}[scale=2, node distance = 1.4cm, thick] 
		\tikzstyle{vertex}=[draw=black, circle, inner sep=1.5pt] 
		\begin{scope}[xshift=2cm] 
		\node[vertex] (s) at (1  ,1) [label=0: $s$] {};
		\node[vertex] (i) at (0  ,0) [label=180: $c$] {};
		\node[vertex] (o) at (0  ,1) [label=180: $a$] {};
		\node[vertex] (u) at (0  ,0.5) [label=180: $b$] {};
		\draw (s)--(i); \draw (s)--(o); \draw (o)--(u);
		\draw (u)--(i); 
		\node[vertex] (x) at (1  ,0) [label=0: $t$] {};
		\draw (x)--(i);  \draw (o) -- (x); \draw[dashed] (s)--(x); 
		
		\end{scope}
		\end{tikzpicture}
		\medskip
		
		\noindent This yields the desired contradiction.
		
		Next, if $P$ starts in $a$ and ends in $c$, then we obtain an induced cycle that is not a square, unless $P$ as length $2$. However, in the latter case, the internal vertex of $P$ is contained in $\Gamma_{H\setminus C}(a,c)$, contradicting the fact that $P$ is not fully contained in $C\cup \Gamma_{H\setminus C}(a,c)$.
		This shows that one endpoint of $P$ is in $\Gamma_B(a)\cap \Gamma_B(c)$ and the other endpoint is in $\{a,c\}$. 
		
		Recall that the length of~$P$ is greater than~$1$ (since it has internal vertices).  If $P$ has length $2$, then we obtain a triangle, contradicting the fact that $B$   is chordal bipartite. Finally, if $P$ has length at least $4$, we obtain an induced cycle of length at least $5$, also contradicting chordal-bipartiteness. Consequently, $P$ must have length $3$.
	\end{claimproof}
	
	\noindent Claim A yields that $B$ contains a subgraph isomorphic to the graph $F$ --- recall from Definition~\ref{def:graphF} that $F$ is just the graph containing two squares that share one edge. We use the vertex labels from Figure~\ref{fig:F}, i.e., the vertices are $\{v_1, \ldots, v_6\}$. Now assume that (b) is not true, i.e., that $H$ does not have a hardness gadget. Using the fact that $H$ is $K_4$-minor free, and invoking Lemma~\ref{lem:typeV_hard} we obtain that $F$ has to be of type $\TV$. So, without loss of generality (by renaming), we can assume that $\Gamma_B(v_4) \cap \Gamma_B(v_2) = \{v_1,y_1,\ldots,y_k,v_5\}$ 
	and $\Gamma_B(v_6) \cap \Gamma_B(v_2) = \{v_5,z_1,\ldots,z_\ell,v_3\}$ for some $k,\ell\ge 1$. Consequently, since $H$ is $K_4$-minor-free $\Gamma_B(v_1) \cap \Gamma_B(v_5) = \{v_2, v_4\}$ and $\Gamma_B(v_3) \cap \Gamma_B(v_5) = \{v_2, v_6\}$ have to hold. By Lemma~\ref{lem:even_diamond}, we obtain 
	that~$k$ and~$\ell$ have to be odd. 
	So we have shown that $B$ contains $S_{k,\ell}$ (Definition~\ref{def:Skl}) as a subgraph.


	\medskip
	\noindent \textbf{Claim B} \textit{$S_{k,\ell}$ is an induced subgraph of $B$.}
	\medskip
	
	\begin{claimproof}
		Assume that $S_{k,\ell}$ is not an induced subgraph. Then $B$ (equivalently, $H$) contains an edge $e\notin E(S_{k,\ell})$ between two vertices of $S_{k,\ell}$.
		We need to distinguish a variety of (simple) cases:
		\begin{itemize}
			\item $v_4 \in e$: The other endpoint of $e$ cannot be one of $v_4,v_5,y_1,\dots,y_k,v_1$ as we do not allow self-loops and multi-edges. Further, it cannot be $v_6$ or $v_2$, as this would create a triangle, contradicting the fact that $B$ is chordal-bipartite. Finally, if the other endpoint of $e$ is $x\in \{v_3,z_1,\dots,z_\ell\}$, then we obtain a $K_4$-minor induced by the vertices $v_4,v_5,v_2$ and $x$ --- note that there is a 2-path from $x$ to $v_5$ via $v_6$, and a 2-path from $v_2$ to $v_4$ via $v_1$.
			\item $v_6 \in e$: Symmetric to the previous case.
			\item $x\in e$ for some $x\in \{v_1,y_1,\dots,y_k\}$: The other endpoint of $e$ cannot be one of \[v_1,y_1,\dots,y_k,v_4,v_2,v_5,v_3,z_1,\dots,z_\ell\,,\] as each of those cases would yield a self-loop, a multi-edge, or a triangle (in $B$). The remaining candidate for the other endpoint is $v_6$, which was covered in the previous case.
			\item $x\in e$ for some $x\in \{v_3,z_1,\dots,z_\ell\}$: Symmetric to the previous case.
			\item $v_5\in e$: Any (additional) edge from $v_5$ to a vertex of $S_{k,\ell}$ would create either a multi-edge, a self-loop, or a triangle.
			\item $v_2\in e$: The other endpoint of $e$ cannot be $v_2$ as we this would create a self-loop. Consequently, one of the previous cases must be true for the other endpoint of $e$.
		\end{itemize}
	\end{claimproof}
	
	Recall that we want to show that (a)~$B$ is a diamond, (b)~$H$ has a hardness gadget, or (c)~$B$ is an impasse.
	For what follows, we distinguish two cases: 
	\begin{enumerate}[(I)]
		\item All vertices $v_1,y_1,\dots,y_k,z_1,\dots,z_\ell,v_3$ have degree $2$ in $B$.
		In this case we will show that $B$ is a (1,2)-supergraph of~$S_{k,\ell}$.
		This implies (see Definition~\ref{def:impasse}) that $B$ is an impasse, so we are finished.
		To see that $B$ is a (1,2)-supergraph of~$S_{k,\ell}$, recall (from Claim~B) that $S_{k,\ell}$ is an induced subgraph of~$B$.
		All neighbours of $v_1,y_1,\dots,y_k,z_1,\dots,z_\ell,v_3$	
		in~$B$ are included in~$S_{k,\ell}$.
		Thus, it suffices to show that $B$ has no 2-path connecting vertices in $\{v_4,v_5,v_6,v_2\}$ whose internal vertex~$x$, is outside of~$S_{k,\ell}$.
		We noted above that $\Gamma_B(v_4) \cap \Gamma_B(v_2) \subseteq V(S_{k,\ell})$ and
		$\Gamma_B(v_6) \cap \Gamma_B(v_2) \subseteq V(S_{k,\ell})$.
		There is no 2-path in~$B$ from $v_2$ to $v_5$ because that would yield a triangle in~$B$.
		Similarly, 2-paths from $v_5$ to~$v_4$ or~$v_6$ would yield triangles in~$B$,
		so the only possibility is a 2-path from~$v_4$ to~$v_6$ but this would yield the $K_4$-minor $\{v_4,v_5,v_6,v_2\}$ in~$B$, contradicting the fact that $H$ (hence $B$) has no $K_4$-minor.

		\item Otherwise, assume w.l.o.g.\ that $v_1$ has degree at least $3$ in $B$. As $B$ is biconnected, there exists a shortest path $P$ in the remainder of $B$ connecting $v_1$ with another vertex $w$ of $S_{k,\ell}$. We claim that the only candidates for $w$ are $v_4$ and $v_2$, which we will prove by case distinction:
		\begin{itemize}
			\item $w\in \{y_1,\dots,y_k,v_5\}$. Then we obtain a $K_4$-minor: $(v_4,w,v_2,v_1,v_4)$ is a square, $P$ connects $v_1$ and $w$ via vertices not contained in $S_{k,\ell}$, and $v_4$ and $v_2$ are connected by a 2-path via a vertex $x\in \{y_1,\dots,y_k,v_5\}\setminus w$ --- note that $x$ exists as $k\geq 1$.
			\item $w=v_6$. Then we obtain a $K_4$-minor induced by the vertices $v_5,v_6,v_1$ and $v_2$ --- note that $v_1$ is connected to $v_5$ by the 2-path via $v_4$, and that $v_2$ is connected to $v_6$ by the 2-path via $v_3$. 
			\item $w\in \{z_1,\dots,z_\ell,v_3\}$. Then we obtain a $K_4$-minor induced by the vertices $v_5,v_1,v_2$ and $w$ --- note that $v_1$ is connected to $v_5$ by the 2-path via $v_4$, and that $v_5$ is connected to $w$ by the 2-path via $v_6$. 
		\end{itemize}
		
		Consequently, $w$ must either be $v_4$ or $v_2$ as all other possibilities create a $K_4$-minor. As $B$ is chordal bipartite, $P$ must have length three. However, if $P$ connects $v_1$ and $v_2$, we obtain a strong hardness gadget by Lemma~\ref{lem:zigzag}, and if $P$ connects $v_1$ and $v_4$, we obtain a strong hardness gadget by Lemma~\ref{lem:3x1}. In both cases, $H$ therefore has  a hardness gadget.
	\end{enumerate}
\end{proof}

The following lemma shows that impasses already yield hardness if the vertex $v_2$ has even degree:
\begin{lem}\label{lem:hard_impasses}
	Let $H$ be a graph containing an impasse $B$ as biconnected component, that is, there are odd integers $k$ and $\ell$ such that $B$ is a (1,2)-supergraph of 
	the graph $S_{k,\ell}$ such that, using the vertex labels from Figure~\ref{fig:Skl}, all vertices $v_1,y_1,\dots,y_k,v_3,z_1,\dots,z_\ell$ have degree $2$ in $B$.
	If $\deg_H(v_2)$ is even, then $H$ has a hardness gadget.
\end{lem}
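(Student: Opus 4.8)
The plan is to exploit the rigidity forced by $B$ being a biconnected component of $H$ together with the $(1,2)$-supergraph property (Definition~\ref{def:12supergraph}) to pin down several joint neighbourhoods in $H$ exactly, and then to build a hardness gadget in the spirit of Lemma~\ref{lem:even_diamond}, with the parity condition on $\Sigma_{o,x}$ supplied directly by the hypothesis that $\deg_H(v_2)$ is even. The one fact about blocks that I would isolate first is: if $u\neq u'$ are vertices of $B$ and $w\in\Gamma_H(u)\cap\Gamma_H(u')$, then $w\in V(B)$. Indeed $B$ is biconnected, hence contains a path $Q$ from $u$ to $u'$; if $w\notin V(Q)$ then the walk $u$--$w$--$u'$ followed by $Q$ is a cycle through the edge $\{u,w\}$ and through an edge of $B$, so $\{u,w\}$ lies in the biconnected component $B$, while if $w\in V(Q)$ then $w\in V(B)$ anyway. (In particular, although a low-degree vertex $v_i$ of $S_{k,\ell}$ may carry further neighbours of $H$ in other blocks, those extra neighbours are never common to two distinct vertices of $B$.)

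Using this fact, the $(1,2)$-supergraph property (a length-$2$ path of $B$ with both endpoints in $V(S_{k,\ell})$ is a path of $S_{k,\ell}$), and the degree-$2$ hypotheses, I would next establish, with the labels of Figure~\ref{fig:Skl}, that $\Gamma_H(v_1)\cap\Gamma_H(v_5)=\{v_2,v_4\}$, $\Gamma_H(v_3)\cap\Gamma_H(v_5)=\{v_2,v_6\}$, $\Gamma_H(v_4)\cap\Gamma_H(v_6)=\{v_5\}$, $\Gamma_H(v_2)\cap\Gamma_H(v_4)=\{v_1,v_5,y_1,\dots,y_k\}$ (of cardinality $k+2$), and $\Gamma_H(v_2)\cap\Gamma_H(v_6)=\{v_3,v_5,z_1,\dots,z_\ell\}$ (of cardinality $\ell+2$). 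In each case the inclusion ``$\supseteq$'' is read off from $S_{k,\ell}\subseteq B\subseteq H$, and for ``$\subseteq$'' a common neighbour $w$ of the pair is either one of the listed vertices or else, fixing a length-$2$ $S_{k,\ell}$-path joining the pair (e.g.\ $v_2$--$v_1$--$v_4$ for the pair $(v_2,v_4)$), the block fact above forces $w\in V(B)$ and then the $(1,2)$-supergraph property forces $w$ to be a common $S_{k,\ell}$-neighbour of the pair, which a glance at $S_{k,\ell}$ shows is on the list.

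Granting these identities I would take the hardness gadget $(I,S,(J_1,y),(J_2,z),(J_3,y,z))$ with $I=\{v_4\}$ and $S=\{v_6\}$; with $J_1$ the star in which $y$ is joined to a $v_1$-pin and a $v_5$-pin, so $\Omega_y=\Gamma_H(v_1)\cap\Gamma_H(v_5)=\{v_2,v_4\}$; with $J_2$ the star in which $z$ is joined to a $v_3$-pin and a $v_5$-pin, so $\Omega_z=\Gamma_H(v_3)\cap\Gamma_H(v_5)=\{v_2,v_6\}$; and with $J_3$ the length-$2$ path from $y$ to $z$ (unpinned middle vertex), so $|\Sigma_{a,b}|=|\Gamma_H(a)\cap\Gamma_H(b)|$ for all $a,b$. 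Then $I,S$ are odd-cardinality sets with $I\subset\Omega_y$ and $S\subset\Omega_z$, while $|\Omega_y|=|\Omega_z|=2$ are even; the only choices in item~3 of Definition~\ref{defn:hardness-gadget} are $i=v_4,\ o=v_2,\ s=v_6,\ x=v_2$, and there $|\Sigma_{o,x}|=|\Sigma_{v_2,v_2}|=\deg_H(v_2)$ is even by hypothesis, whereas $|\Sigma_{i,x}|=|\Gamma_H(v_4)\cap\Gamma_H(v_2)|=k+2$, $|\Sigma_{o,s}|=|\Gamma_H(v_2)\cap\Gamma_H(v_6)|=\ell+2$, and $|\Sigma_{i,s}|=|\Gamma_H(v_4)\cap\Gamma_H(v_6)|=1$ are all odd since $k$ and $\ell$ are odd. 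Hence $H$ has a hardness gadget.

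The main obstacle is the neighbourhood computation in the second paragraph: the two confinement facts must be combined carefully, since a priori the degree-$2$ vertices of $S_{k,\ell}$ may lie in further blocks of $H$. After that the construction is routine; the one point to stress is that, because $k$ and $\ell$ are \emph{odd}, the ``diamonds at $v_2$'', namely $\Gamma_H(v_2)\cap\Gamma_H(v_4)$ and $\Gamma_H(v_2)\cap\Gamma_H(v_6)$, have odd size, so one cannot imitate Lemma~\ref{lem:even_diamond} at $v_2$; instead we route $J_1$ and $J_2$ through the two small diamonds at $v_5$ (forcing $|\Omega_y|=|\Omega_z|=2$) and let the assumption that $\deg_H(v_2)$ is even settle the parity of $|\Sigma_{o,x}|$.
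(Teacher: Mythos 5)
Your proof is correct and is essentially the paper's own proof: same $I=\{v_4\}$, $S=\{v_6\}$, same gadgets $J_1$, $J_2$, $J_3$, and the same parity computation ending with $|\Sigma_{v_2,v_2}|=\deg_H(v_2)$ even. The paper states the values of $\Omega_y$, $\Omega_z$ and the $\Sigma$ sets more tersely, justifying them in one clause ``as $H$ has the impasse $B$ as a biconnected component''; you have simply unpacked that clause by isolating the block-confinement fact and invoking the $(1,2)$-supergraph property to pin down each joint neighbourhood, which is a correct (and somewhat more readable) elaboration rather than a different route.
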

\begin{proof}
	We construct a hardness gadget:
	\begin{itemize}
		\item $I=\{v_4\}$ and $S=\{v_6\}$.
		\item $J_1$ is the graph where $y$ is adjacent to a $v_1$-pin and a $v_5$-pin so $\Omega_y = \{v_2,v_4\}$ as 
		$H$ has the impasse~$B$ as a biconnected component.
		\item $J_2$ is the graph where $z$ is adjacent to a $v_5$-pin and a $v_3$-pin so $\Omega_z = \{v_2,v_6\}$ as 
		$H$ has the impasse~$B$ as a biconnected component. 
		\item $J_3$ is a 2-path between $y$ and $z$. 
	\end{itemize}
	As the degree of $v_2$ is even,  $|\Sigma_{v_2,v_2}|$ is even. As $k$ and $\ell$ are odd, we have that 
	$|\Sigma_{v_2,v_4}|$ and $ |\Sigma_{v_2,v_6}|$ are odd. Finally, we also have $|\Sigma_{v_4,v_6}|=1$ as an additional 2-path from $v_4$ to $v_6$ would 
	contradict the fact that the biconnected component~$B$ of~$H$   is an impasse. 
\end{proof}

\section{Sequences of Chordal Bipartite Components}\label{sec:generalisingCaterpillar}
In 
Section~\ref{sec:chordalBipartiteComps}, we proved 
(Lemma~\ref{lem:main_chordal_bipartite})
that every chordal bipartite biconnected component $B$ of a $K_4$-minor-free graph $H$ is an edge, a diamond, or an impasse, or the graph $H$ has a hardness gadget. The goal of the current section is to establish a structural property (Lemma~\ref{lem:CaterpillarForBipChordalComponents}) of a graph $H$, informally stating that a path in a block-cut-tree of $H$ consisting only of edges, diamonds, and impasses either induces a hardness gadget of $H$, or has endpoints that satisfy a technical criterion necessary for our construction of global hardness gadgets in Section~\ref{sec:k4MinorFreeMain}.  

\begin{defn}[good start, good stop]\label{def:goodstartgoodstop}
	Let $H$ be a graph  and let $B$ be a subgraph of $H$. Let $y$ be a vertex in $B$ and let $L_B\subseteq \NH{y}\cap V(B)$.
	\begin{myitemize}
		\item We say that $(L_B,y)$ is a \emph{good start} in $B$ if there is a gadget $(J, z)$ such that $\{ v \in V(H) \mid \>\abs{\hom{(J, z)}{(H,v)}}\text{ is odd}\,\}=L_B \cup R_B$, where $\abs{L_B}$ is odd and $R_B=\NH{y}\setminus V(B)$.
		\item We say that $(L_B,y)$ is a \emph{good stop} in $B$ if it is a good start in~$B$ and $\abs{R_B}$ is odd. 
	\end{myitemize}
\end{defn}

For non-negative integers $k$ and $\ell$, we define some (classes of) graphs with a pair of distinguished vertices $a$ and $b$ each, see Figure~\ref{fig:bipbiconnchordalComponents} (The graph $S_{k,\ell}$ was already defined in Definition~\ref{def:impasse}, however, for the scope of this section it will be more convenient to work with the vertex labels as given in Figure~\ref{fig:bipbiconnchordalComponents}.).

\begin{figure}[ht]
	\centering
	\begin{minipage}{.24 \textwidth}
		\centering
		\begin{tikzpicture}[scale=2.5, node distance = 1.4cm,thick]
		\tikzstyle{dot}   =[fill=black, draw=black, circle, inner sep=0.15mm]
		\tikzstyle{vertex}=[draw=black, circle, inner sep=1.5pt]
		\tikzstyle{terminal}=[fill=black, draw=black, circle, inner sep=1.5pt]
		\tikzstyle{dist}  =[fill=white, draw=black, circle, inner sep=2pt]
		\tikzstyle{pinned}=[draw=black, minimum size=10mm, circle, inner sep=0pt]	
		\begin{scope}[xshift=2cm] 
		\node[vertex] (s) at (0  ,1) [label=0: $c$] {};
		\node[terminal] (i) at (0  ,0) [label=0: $b$] {};
		\node[vertex] (o) at (-1  ,1) [label=180: $d$] {};
		\node[terminal] (u) at (-1  ,0) [label=180: $a$] {};	
		\draw (s)--(i); \draw (s)--(o); \draw (o)--(u);
		\draw (u)--(i);	
		\node[vertex] (y1) at (-0.75  ,0.25) [label=270: $y_1$] {};
		\node[vertex] (yk) at (-0.25  ,0.75) [label=90: $y_k$]  {};	
		\draw (o)--(y1); \draw (o)--(yk);  \draw (i)--(y1); \draw (i)--(yk);   	
		\node[dot] (d) at (-0.5,0.5) {};  
		\node[dot] (d) at (-0.375,0.625) {}; 
		\node[dot] (d) at (-0.625,0.375) {}; 
		\end{scope}
		\end{tikzpicture} 
	\end{minipage}
	\centering
	\begin{minipage}{.24 \textwidth}
		\centering
		\begin{tikzpicture}[scale=2.5, node distance = 1.4cm,thick]
		\tikzstyle{dot}   =[fill=black, draw=black, circle, inner sep=0.15mm]
		\tikzstyle{vertex}=[draw=black, circle, inner sep=1.5pt]
		\tikzstyle{terminal}=[fill=black, draw=black, circle, inner sep=1.5pt]
		\tikzstyle{dist}  =[fill=white, draw=black, circle, inner sep=2pt]
		\tikzstyle{pinned}=[draw=black, minimum size=10mm, circle, inner sep=0pt]	
		\begin{scope}[xshift=2cm] 
		\node[vertex] (s) at (0  ,1) [label=180: $d$] {};
		\node[terminal] (i) at (0  ,0) [label=180: $a$] {};
		\node[vertex] (v) at (1  ,1) [label=0: $c$] {};
		\node[terminal] (x) at (1  ,0) [label=0: $b$] {};	
		\draw (s)--(i); \draw (s)--(v);
		\draw (x)--(i); \draw (v)--(x);	
		\node[vertex] (zl) at (0.25  ,0.75)  [label=90: $z_1$] {};
		\node[vertex] (z1) at (0.75  ,0.25) [label=270: $z_k$] {};	
		\draw (v)--(z1); \draw (v)--(zl); \draw (i)--(z1); \draw (i)--(zl);    	
		\node[dot] (d) at (0.5,0.5) {};  
		\node[dot] (d) at (0.375,0.625) {}; 
		\node[dot] (d) at (0.625,0.375) {}; 
		\end{scope}
		\end{tikzpicture} 
	\end{minipage}
	\centering
	\begin{minipage}{.48 \textwidth}
		\centering
		\begin{tikzpicture}[scale=2.5, node distance = 1.4cm,thick]
		\tikzstyle{dot}   =[fill=black, draw=black, circle, inner sep=0.15mm]
		\tikzstyle{vertex}=[draw=black, circle, inner sep=1.5pt]
		\tikzstyle{terminal}=[fill=black, draw=black, circle, inner sep=1.5pt]
		\tikzstyle{dist}  =[fill=white, draw=black, circle, inner sep=2pt]
		\tikzstyle{pinned}=[draw=black, minimum size=10mm, circle, inner sep=0pt]	
		\begin{scope}[xshift=2cm] 
		\node[vertex] (s) at (0  ,1) [label=90: $m_2$] {};
		\node[vertex] (i) at (0  ,0) [label=270: $m_1$] {};
		\node[vertex] (o) at (-1  ,1) [label=180: $d$] {};
		\node[terminal] (u) at (-1  ,0) [label=180: $a$] {};
		\node[vertex] (v) at (1  ,1) [label=0: $c$] {};
		\node[terminal] (x) at (1  ,0) [label=0: $b$] {};	
		\draw (s)--(i); \draw (s)--(o); \draw (s)--(v); \draw (o)--(u);
		\draw (u)--(i); \draw (x)--(i); \draw (v)--(x);	
		\node[vertex] (y1) at (-0.75  ,0.25) [label=270: $y_1$] {};
		\node[vertex] (yk) at (-0.25  ,0.75) [label=90: $y_k$]  {};
		\node[vertex] (zl) at (0.25  ,0.75)  [label=90: $z_1$] {};
		\node[vertex] (z1) at (0.75  ,0.25) [label=270: $z_\ell$] {};	
		\draw (o)--(y1); \draw (o)--(yk);  \draw (i)--(y1); \draw (i)--(yk);
		\draw (v)--(z1); \draw (v)--(zl); \draw (i)--(z1); \draw (i)--(zl);    	
		\node[dot] (d) at (-0.5,0.5) {};  
		\node[dot] (d) at (-0.375,0.625) {}; 
		\node[dot] (d) at (-0.625,0.375) {}; 
		\node[dot] (d) at (0.5,0.5) {};  
		\node[dot] (d) at (0.375,0.625) {}; 
		\node[dot] (d) at (0.625,0.375) {}; 
		\end{scope}
		\end{tikzpicture} 
	\end{minipage}
	\caption{The graphs $BD_k$ (for ``backward diamond''), $FD_k$ (for ``forward diamond'') and $S_{k,\ell}$ (from left to right).}
	\label{fig:bipbiconnchordalComponents}
\end{figure}
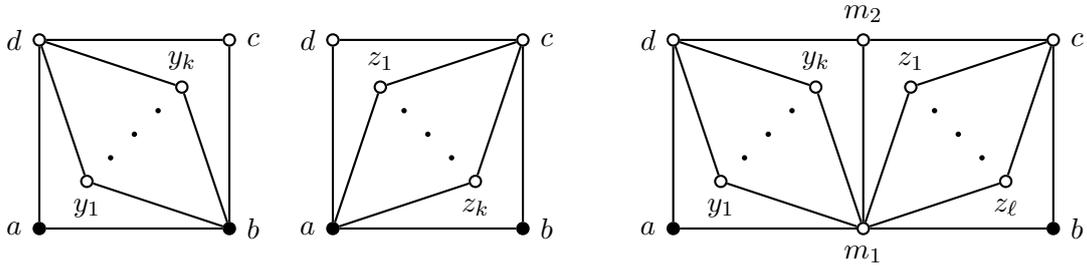

\subsection{Good Starts}

\begin{lem}\label{lem:EdgeGoodStart}
	Let $B$ be a biconnected component of a graph $H$, where $B$ is   an edge between vertices $a$ and $b$. 
	Then  $(\{a\},b)$ is a good start in $B$.
\end{lem}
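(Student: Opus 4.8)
The plan is to exhibit an explicit witnessing gadget and verify the two numerical side conditions in Definition~\ref{def:goodstartgoodstop}. The obvious candidate is the ``edge pinned at~$b$'': let $J=(G,\tau)$ where $G$ is the single edge $\{z,w\}$, the pinning function is $\tau=\{w\mapsto b\}$, and $z$ is the distinguished vertex. This is a connected partially $H$-labelled graph with a distinguished vertex, hence a legitimate gadget in the sense required, and it can be built for any graph~$H$ and any vertex~$b$.

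First I would compute $\abs{\hom{(J,z)}{(H,v)}}$ for every $v\in V(H)$. Unwinding the definition of a homomorphism from a partially labelled graph with a distinguished vertex, such a homomorphism is a map $h\colon V(G)\to V(H)$ with $h(z)=v$, $h(w)=b$, preserving the single edge $\{z,w\}$; the only constraint is thus $\{v,b\}\in E(H)$, i.e.\ $v\in\NH{b}$. Consequently $\abs{\hom{(J,z)}{(H,v)}}=1$ if $v\in\NH{b}$ and $0$ otherwise, so the ``odd set'' $\{v\in V(H)\mid \abs{\hom{(J,z)}{(H,v)}}\text{ is odd}\}$ equals $\NH{b}$ exactly.

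Next I would match this against the target $L_B\cup R_B$ with $y=b$, $L_B=\{a\}$, and $R_B=\NH{b}\setminus V(B)=\NH{b}\setminus\{a,b\}$. Since $\{a,b\}\in E(H)$ we have $a\in\NH{b}$, and since $H$ has no self-loops we have $b\notin\NH{b}$; therefore $L_B\cup R_B=\{a\}\cup(\NH{b}\setminus\{a,b\})=\NH{b}$, which is precisely the odd set just computed. It then remains to record the side conditions: $\abs{L_B}=1$ is odd, and $L_B=\{a\}\subseteq\NH{b}\cap V(B)$ because $\NH{b}\cap\{a,b\}=\{a\}$ (again using the absence of self-loops). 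This establishes that $(\{a\},b)$ is a good start in~$B$.

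I do not expect any real obstacle: the argument is a one-line gadget construction plus elementary bookkeeping, the only subtlety being the two uses of ``$\{a,b\}$ is an edge'' and ``$H$ has no self-loops'' to identify $\{a\}\cup(\NH{b}\setminus\{a,b\})$ with $\NH{b}$. The content of the statement is merely that this is the base case seeding the inductive construction of good starts along a block-cut path carried out in the remainder of Section~\ref{sec:generalisingCaterpillar}.
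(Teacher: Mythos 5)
Your proof is correct and uses the same gadget and the same computation as the paper: let $z$ be adjacent to a $b$-pin, observe the odd set is exactly $\NH{b}$, and note this equals $\{a\}\cup R_B$. The only cosmetic difference is that you explicitly spell out $\NH{b}\setminus\{a\}=\NH{b}\setminus\{a,b\}$ via the no-self-loop assumption, whereas the paper writes $R_B=\NH{b}\setminus\{a\}$ directly (which silently uses the same fact).
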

\begin{proof}
	Clearly, $\{a\}$ has odd cardinality, and is contained in $\Gamma_H(b) \cap V(B)$.
	Let $(J_B, z_B)$ be the gadget where $z_B$ is adjacent to a $b$-pin and let 
	$R_B= \NH{b}\setminus\{a\}$.
	Then
	$\{ v \in V(H) \mid \>\abs{\hom{(J_B, z_B)}{(H,v)}}\text{ is odd}\,\}=\NH{b}=\{a\} \cup R_B$, as desired. 
\end{proof}

\begin{lem}\label{lem:FDGoodStart}
	Let $B$ be a biconnected component of a graph $H$ such that, for an \emph{even} non-negative integer $k$, $B$ is a graph of the form $FD_k$ and 
	the vertices~$a$ and~$b$ are as given in Figure~\ref{fig:bipbiconnchordalComponents}. Let $A$ be a subgraph of $H$ 
	such that $V(A)\cap V(B)=\{a\}$
	Suppose that $L_A\subseteq \NH{a}\cap V(A)$.
	If $(L_A, a)$ is a good start in~$A$ but not a good stop in~$A$ then $(\{a\},b)$ is a good start in~$B$.
\end{lem}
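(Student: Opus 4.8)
The plan is to construct a single gadget witnessing that $(\{a\},b)$ is a good start in $B$, obtained by attaching the good-start gadget for $A$ to a $b$-pin. Throughout I use the labels of Figure~\ref{fig:bipbiconnchordalComponents} for $B\cong FD_k$, so $\Gamma_B(a)=\Gamma_B(c)=\{d,b,z_1,\dots,z_k\}$ and $\Gamma_B(b)=\{a,c\}$; note that $a\not\sim c$ in $H$, since the edge $\{a,c\}$ would have to lie in the biconnected component $B$. Let $(J_A,z_A)$ be a gadget witnessing that $(L_A,a)$ is a good start in $A$, and put $O_A=\{\,v\in V(H)\mid \abs{\hom{(J_A,z_A)}{(H,v)}}\text{ is odd}\,\}$, so that $O_A=L_A\cup R_A$ with $R_A=\NH{a}\setminus V(A)$. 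Since $(L_A,a)$ is not a good stop in $A$, $\abs{R_A}$ is even; as $R_A$ and $L_A$ are disjoint (the former avoids $V(A)$) and $\abs{L_A}$ is odd, $\abs{O_A}$ is odd. Moreover $O_A\subseteq\NH{a}$, and, using $V(A)\cap V(B)=\{a\}$, the set $\NH{a}\cap V(B)=\{d,b,z_1,\dots,z_k\}$ is contained in $R_A$, hence in $O_A$; in particular $b\in O_A$.

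First I would define the partially $H$-labelled graph $J_B$, with distinguished vertex $z_B$, by taking a fresh copy of $J_A$, adding a new vertex $z_B$ joined by an edge to $z_A$, and adding a $b$-pin joined by an edge to $z_B$; this $J_B$ is connected. For $v\in V(H)$, a homomorphism from $(J_B,z_B)$ to $(H,v)$ can exist only if $v\in\NH{b}$, and in that case it amounts to a homomorphism from the copy of $J_A$ to $H$ whose image of $z_A$ is a neighbour of $v$. Hence $\abs{\hom{(J_B,z_B)}{(H,v)}}$ equals $0$ when $v\notin\NH{b}$ and equals $\sum_{w\in\NH{v}}\abs{\hom{(J_A,z_A)}{(H,w)}}$ otherwise; modulo $2$ the latter sum equals $\abs{\NH{v}\cap O_A}$. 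Consequently the set of $v\in V(H)$ with $\abs{\hom{(J_B,z_B)}{(H,v)}}$ odd is $\NH{b}\cap\{\,v\mid \abs{\NH{v}\cap O_A}\text{ odd}\,\}$, and since $\NH{b}=\{a,c\}\cup R_B$ with $R_B=\NH{b}\setminus V(B)$, the task reduces to determining, for $v\in\{a,c\}\cup R_B$, the parity of $\abs{\NH{v}\cap O_A}$.

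The three cases go as follows. For $v=a$: since $O_A\subseteq\NH{a}$, we get $\abs{\NH{a}\cap O_A}=\abs{O_A}$, which is odd. For $v=c$: we have $\NH{c}\cap O_A\subseteq\NH{a}\cap\NH{c}$, and a common neighbour $w$ of $a$ and $c$ lying outside $V(B)$ would, together with the length-$2$ path $(a,d,c)$ inside $B$, form a $4$-cycle through the edge $\{a,d\}$, which must stay inside the biconnected component $B$, forcing $w\in V(B)$ --- a contradiction; hence $\NH{a}\cap\NH{c}=\Gamma_B(a)\cap\Gamma_B(c)=\{d,b,z_1,\dots,z_k\}\subseteq O_A$, so $\abs{\NH{c}\cap O_A}=k+2$, which is even because $k$ is. For $v\in R_B$: $b$ is a common neighbour of $a$ and $v$, and any further common neighbour $w$ would, together with the edge $\{a,b\}$, form the $4$-cycle $(a,w,v,b,a)$ through an edge of $B$, again forcing $v\in V(B)$ and contradicting $v\in R_B$; hence $\NH{a}\cap\NH{v}=\{b\}$ and, since $b\in O_A$, $\abs{\NH{v}\cap O_A}=1$ is odd. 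Putting this together, the set of $v$ with $\abs{\hom{(J_B,z_B)}{(H,v)}}$ odd equals exactly $\{a\}\cup R_B$; since $\abs{\{a\}}=1$ and $a\in\NH{b}\cap V(B)$, this is precisely what is required for $(\{a\},b)$ to be a good start in $B$.

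I expect the delicate step to be the treatment of $v\in R_B$: a priori the $J_A$-part of the gadget could behave uncontrollably on the (unknown) external neighbours of $b$, and what rescues the construction is the structural fact that $a$ and any such $v$ share only the neighbour $b$, which itself rests on the elementary but crucial observation that every cycle through an edge of $B$ remains inside the biconnected component $B$. The same observation pins down $\NH{a}\cap\NH{c}$, which is exactly where the hypothesis that $k$ is even enters --- it is what makes the spurious vertex $c$ drop out of the odd-image. (No appeal to $K_4$-minor-freeness appears to be needed.)
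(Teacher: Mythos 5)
Your construction is exactly the gadget used in the paper (a new vertex $z_B$ adjacent to $z_A$ and to a $b$-pin), and the case analysis over $v\in\{a\}\cup\{c\}\cup R_B$ matches the paper's proof step for step, with the same use of the facts $\NH{a}\cap\NH{c}\subseteq V(B)$ and $\NH{a}\cap\NH{v}=\{b\}$ for $v\in R_B$. The only stylistic difference is that you justify these common-neighbour facts by observing that any cycle through an edge of $B$ stays in $B$, whereas the paper simply invokes the general fact that vertices of a biconnected component have no common neighbours outside it; these are the same observation, and your proof is correct.
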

\begin{proof}
	By the definition of a good start,
	$|L_A|$ is odd and
	there is a gadget $(J_A, z_A)$ such that $\{ v \in V(H) \mid \>\abs{\hom{(J_A, z_A)}{(H,v)}}\text{ is odd}\,\}=L_A \cup R_A$ where 
	$R_A= \NH{a}\setminus V(A)$. Since $(L_A,a)$ is not a good stop in~$A$,   $\abs{R_A}$ is even. 
	
	Let $L_B = \{a\}$. 
	We now prove the lemma by showing that $(L_B,b)$ is a good start in $B$.
	Clearly, $|L_B|$ is odd.
	
	Let $(J_B, z_B)$ be the gadget where $z_B$ is adjacent to the vertex $z_A$ of the gadget $J_A$ and it is also adjacent to a $b$-pin. 
	In order to prove that $(L_B,b)$ is a good start we check that $\{ v \in V(H) \mid \>\abs{\hom{(J_B, z_B)}{(H,v)}}\text{ is odd}\,\}=L_B \cup R_B$, where $L_B=\{a\}$ and $R_B=\NH{b}\setminus V(B)$.  
	
	Since $z_B$ is adjacent to a $b$-pin we need only consider each $v\in \NH{b}$ and homomorphisms with $z_B\mapsto v$. Then $z_A$ is also adjacent to $z_B$ and can be mapped to every vertex in the set $\NH{v}\cap(L_A \cup R_A)$. We determine the cardinality of this set depending on $v$:
	\begin{myitemize}
		\item If $v=a$ then $\NH{v}\cap(L_A \cup R_A)=L_A \cup R_A$ and $\abs{L_A \cup R_A}$ is odd, as required.
		\item If $v=c$ (for $c$ as given in Figure~\ref{fig:bipbiconnchordalComponents}) then $v=c$ does not have any neighbours in $L_A$ since every path from $L_A$ to $B$ goes through $a$ because $B$ is a biconnected component of~$H$. Hence, $\NH{v}\cap(L_A \cup R_A)=\NH{v}\cap R_A$ and $\NH{v}\cap R_A= \NH{c} \cap (\NH{a}\setminus V(A))$ by definition of $R_A$. Finally, since $B$ is a biconnected component, the vertices of $B$ have no common neighbours outside of $B$. Thus $\NH{c} \cap (\NH{a}\setminus V(A))=\{d,b,z_1, \dots, z_k\}$, which has even cardinality, as required (since $k$ is even).
		\item If $v\in \NH{b}\setminus V(B)$ then   since $B$ is a biconnected component we have $\NH{v}\cap \NH{a} =\{b\}$. Consequently, $\NH{v}\cap(L_A \cup R_A)=\{b\}$, which is odd, as required.
	\end{myitemize} 
\end{proof}

\begin{lem}\label{lem:BDGoodStart}
	Let $B$ be a biconnected component of a graph $H$ such that,  for a non-negative integer $k$, $B$ is a graph of the form $BD_k$ 
	and the vertices~$a$ and~$b$ are as given in Figure~\ref{fig:bipbiconnchordalComponents}. Let $A$ be a subgraph of $H$ 
	such that   $V(A)\cap V(B)=\{a\}$.
	Suppose that  $L_A\subseteq \NH{a}\cap V(A)$.
	If $(L_A, a)$ is a good start in $A$ but not a good stop in $A$ then $(\{a\},b)$ is a good start in $B$.
\end{lem}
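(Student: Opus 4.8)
The plan is to mirror the proof of Lemma~\ref{lem:FDGoodStart} almost verbatim, adjusting only the internal case analysis to the shape of $BD_k$. First I would unpack the hypothesis: since $(L_A,a)$ is a good start in $A$, there is a gadget $(J_A,z_A)$ whose odd-set is $L_A\cup R_A$ with $R_A=\NH{a}\setminus V(A)$ and $|L_A|$ odd; and since $(L_A,a)$ is \emph{not} a good stop in $A$, $|R_A|$ is even. As $L_A\subseteq V(A)$ and $R_A=\NH{a}\setminus V(A)$, the union $L_A\cup R_A$ is disjoint, contained in $\NH{a}$, and of odd cardinality. I would then set $L_B=\{a\}$ (which is odd and, since $\{a,b\}\in E(B)$, lies in $\NH{b}\cap V(B)$) and define $(J_B,z_B)$ exactly as in Lemma~\ref{lem:FDGoodStart}: adjoin a fresh vertex $z_B$ adjacent to $z_A$ and to a new $b$-pin. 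Then $\abs{\hom{(J_B,z_B)}{(H,v)}}\equiv 0\pmod 2$ unless $v\in\NH{b}$, in which case its parity equals $\abs{\NH{v}\cap(L_A\cup R_A)}\bmod 2$ (the $b$-pin forces $v\in\NH{b}$, and then $z_A$ ranges over $\NH{v}$). It remains to check that this parity is $1$ exactly for $v\in\{a\}\cup R_B$, where $R_B=\NH{b}\setminus V(B)$.

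For the verification I would use the vertex labels of Figure~\ref{fig:bipbiconnchordalComponents}, where $\NH{b}\cap V(B)=\{a,c,y_1,\dots,y_k\}$. For $v=a$ one gets $\NH{a}\cap(L_A\cup R_A)=L_A\cup R_A$, of odd size, so $a$ lies in the odd-set. For $v=c$ and for $v=y_j$ with $j\in[k]$: every path from $L_A$ to $V(B)$ passes through $a$ (as $B$ is a biconnected component and $V(A)\cap V(B)=\{a\}$), so neither $c$ nor $y_j$ has a neighbour in $L_A$, and the relevant set is $\NH{v}\cap(\NH{a}\setminus V(A))$; in $BD_k$ the common neighbours of $a$ with $c$, and with each $y_j$, are exactly $\{b,d\}\subseteq V(B)\setminus V(A)$, so this count is $2$ and hence even. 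It is precisely here that $BD_k$ behaves better than $FD_k$: because $a$ is not adjacent to the $y_j$'s in $BD_k$, this count is $2$ independently of $k$, which is why, unlike in Lemma~\ref{lem:FDGoodStart}, no parity hypothesis on $k$ is needed. Finally, for $v\in\NH{b}\setminus V(B)=R_B$ I would show $\NH{v}\cap(L_A\cup R_A)=\{b\}$: the inclusion $b\in\NH{v}\cap R_A$ is immediate from $\{a,b\}\in E(B)$ and $b\notin V(A)$, and any further common neighbour $x\neq b$ of $v$ and $a$ would make $a,b,v,x$ a $4$-cycle through the edge $\{a,b\}\in E(B)$, forcing $v\in V(B)$, a contradiction. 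Combining the cases, the odd-set of $(J_B,z_B)$ equals $\{a\}\cup R_B$, so $(\{a\},b)$ is a good start in $B$.

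The only steps needing care — and they are exactly the arguments already established in the proof of Lemma~\ref{lem:FDGoodStart} — are the two appeals to maximality of the biconnected component $B$: that $a$ has no common neighbour with $c$ (respectively with any $y_j$) outside $B$, since such a vertex, attached to two vertices of $B$, would extend $B$ to a strictly larger biconnected subgraph of $H$; and the $4$-cycle argument in the case $v\in R_B$, which uses that a cycle lies inside a single biconnected component and that the edge $\{a,b\}$ belongs to $B$. I do not expect a genuine obstacle here, since $BD_k$ relative to its distinguished vertices $a,b$ is the ``mirror image'' of $FD_k$ (here $b$ is a pole and $a$ an interior vertex of the diamond, rather than the other way round), so all of the bookkeeping transfers directly and in fact simplifies.
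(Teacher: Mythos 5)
Your proof is correct and follows the paper's argument essentially verbatim: same gadget (adjoin $z_B$ adjacent to $z_A$ and to a $b$-pin), same case split over $v\in\NH{b}$ into $v=a$, $v\in\{c,y_1,\dots,y_k\}$, and $v\in\NH{b}\setminus V(B)$, and the same two appeals to maximality of the biconnected component $B$. The only addition is your side remark explaining \emph{why} $BD_k$ imposes no parity condition on $k$ whereas $FD_k$ does (because in $BD_k$ the vertex $a$ is not adjacent to any $y_j$, so the relevant common-neighbour count is $2$ rather than $k+2$); this is accurate and a useful bit of intuition, but not needed for the proof.
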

\begin{proof}
	The proof is analogous to that of Lemma~\ref{lem:FDGoodStart}. We define $L_B=\{a\}$ and 
	use the same gadget and again have to consider each $v\in \NH{b}$ and homomorphisms with $z_B\mapsto v$ and consequently determine the cardinality of the set $\NH{v}\cap(L_A \cup R_A)$ depending on $v$:
	\begin{myitemize}
		\item If $v=a$ then $\NH{v}\cap(L_A \cup R_A)=L_A \cup R_A$ and $\abs{L_A \cup R_A}$ is odd, as required.
		\item If $v\in \{c, y_1, \dots y_k\}$ (as given in Figure~\ref{fig:bipbiconnchordalComponents}) then $v$ does not have any neighbours in $L_A$ since every path from $L_A$ to $B$ goes through $a$. Hence, $\NH{v}\cap(L_A \cup R_A)=\NH{v}\cap R_A$ and $\NH{v}\cap R_A= \NH{v} \cap (\NH{a}\setminus V(A))$ by definition of $R_A$.  Finally, since $B$ is a biconnected component, the vertices of $B$ have no common neighbours outside of $B$, $\NH{v} \cap (\NH{a}\setminus V(A))=\{b,d\}$, which has even cardinality, as required.
		\item If $v\in \NH{b}\setminus V(B)$ then since $B$ is a biconnected component we have  $\NH{v}\cap \NH{a} =\{b\}$. Consequently, $\NH{v}\cap(L_A \cup R_A)=\{b\}$, which is odd, as required.
	\end{myitemize} 
\end{proof}

\begin{lem}\label{lem:SklGoodStart}
	Let $B$ be a biconnected component of a graph $H$, where $B$ is an impasse (Definition~\ref{def:impasse}). Let $(a,b)$ be a pair of connectors of $B$ and let $m_1$ be the unique common neighbour of $a$ and $b$ in $H$ (see Figure~\ref{fig:bipbiconnchordalComponents}). Suppose further that $\deg_H(m_1)$ is odd. Let $A$ be a subgraph of $H$  such that $V(A)\cap V(B)=\{a\}$. 
	Suppose that $L_A\subseteq \NH{a}\cap V(A)$.
	If $(L_A, a)$ is a good start in $A$ but not a good stop in $A$ then $(\{m_1\},b)$ is a good start in $B$.
\end{lem}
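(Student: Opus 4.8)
The plan is to carry the good-start data of $A$ across the impasse by a short bridging gadget, in the same spirit as the proofs of Lemma~\ref{lem:FDGoodStart} and Lemma~\ref{lem:BDGoodStart}; the one structural difference is that in an impasse the connectors $a$ and $b$ are joined only by the length-$2$ path $(a,m_1,b)$ and have no common neighbour, so the distinguished vertex of the new gadget has to sit one edge further from the $A$-gadget than there. Throughout I use the vertex names of Figure~\ref{fig:bipbiconnchordalComponents}: after applying the obvious symmetries of the impasse, $B$ is a $(1,2)$-supergraph of $S_{k,\ell}$ with $k,\ell$ odd, $a$ and $b$ are the connectors, $m_1$ is their (unique) common neighbour, $d$ is the other neighbour of $a$ in $B$, $c$ is the other neighbour of $b$ in $B$, and $m_2$ is the remaining middle vertex.

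First I would unpack the hypotheses. By Definition~\ref{def:goodstartgoodstop} there is a gadget $(J_A,z_A)$ with $\{\,v\in V(H):\abs{\hom{(J_A,z_A)}{(H,v)}}\text{ odd}\,\}=L_A\cup R_A$, where $R_A=\NH{a}\setminus V(A)$ and $\abs{L_A}$ is odd; since $(L_A,a)$ is not a good stop, $\abs{R_A}$ is even, so $\abs{L_A\cup R_A}$ is odd. Because $a$ has degree $2$ in $B$, both of its $B$-neighbours $m_1$ and $d$ lie in $\NH{a}\setminus V(A)=R_A$, so $L_A\cup R_A=\{m_1\}\cup\{d\}\cup Q$ with the three parts pairwise disjoint, $Q\cap V(B)=\emptyset$, $\abs Q$ odd, and --- since $B$ is a biconnected component of $H$ --- every $u\in Q$ adjacent to $a$ and having $a$ as its only neighbour in $V(B)$. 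Second I would record the facts about how $B$ embeds in $H$, each following from Definition~\ref{def:12supergraph} applied to the impasse together with the maximality of $B$ as a biconnected component: $\NH{b}\cap V(B)=\{m_1,c\}$; in $H$ the pairs $(m_1,d)$, $(c,m_1)$, $(c,d)$ have exactly $k+2$, $\ell+2$ and $1$ common neighbours respectively; $a\not\sim c$, $b\sim m_1$, $b\not\sim d$; every vertex of $H$ outside $V(B)$ has at most one neighbour in $V(B)$; and each connected component of $H-V(B)$ is adjacent to exactly one vertex of $V(B)$, whence two vertices lying outside $V(B)$ ``on different sides of $B$'' are non-adjacent and have no common neighbour.

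Then I would set $L_B=\{m_1\}$ (which is odd and contained in $\NH{b}\cap V(B)$) and define $(J_B,z_B)$ by taking $J_A$, adding a new vertex $t$ adjacent to $z_A$, and adding a new vertex $z_B$ adjacent to $t$ and to a $b$-pin, with $z_B$ distinguished; $J_B$ is connected. Since $z_B$ is adjacent to a $b$-pin, $\abs{\hom{(J_B,z_B)}{(H,v)}}=0$ unless $v\in\NH{b}$, and for $v\in\NH{b}$, summing over the image $w$ of $t$ and then over the image of $z_A$ yields
\[
\abs{\hom{(J_B,z_B)}{(H,v)}}\ \equiv\ \sum_{w\in\NH{v}}\abs{\NH{w}\cap(L_A\cup R_A)}\ \equiv\ \abs{\NH{v}\cap\NH{m_1}}+\abs{\NH{v}\cap\NH{d}}+\sum_{u\in Q}\abs{\NH{v}\cap\NH{u}}\pmod 2 .
\]
As $\NH{b}=\{m_1,c\}\cup R_B$ with $R_B=\NH{b}\setminus V(B)$, it remains to evaluate the right-hand side in three cases. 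For $v=m_1$ the three summands are $\deg_H(m_1)$ (odd, by hypothesis), $k+2$ (odd), and $\abs Q$ (odd, since for each $u\in Q$ the only common neighbour of $m_1$ and $u$ is $a$), so the total is odd. For $v=c$ they are $\ell+2$ (odd), $1$ (odd, the common neighbour being $m_2$), and $0$ (as $a\not\sim c$, and the ``different sides'' fact kills the $Q$-sum), so the total is even. For $v=r\in R_B$ they are $1$ (the common neighbour of $r$ and $m_1$ being $b$), $0$ (since $b\not\sim d$), and $0$ (``different sides'' again, $u$ lying on the $a$-side and $r$ on the $b$-side with $a\ne b$), so the total is odd. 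Hence the odd-target set of $(J_B,z_B)$ is $\{m_1\}\cup R_B=L_B\cup R_B$, which is what Definition~\ref{def:goodstartgoodstop} requires for $(\{m_1\},b)$ to be a good start in $B$.

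The part I expect to be delicate is the last paragraph: extracting the exact common-neighbour counts inside the impasse from Definition~\ref{def:12supergraph}, and handling the ``different sides of $B$'' arguments, which depend on the maximality of the biconnected component $B$ and on $a\ne b$ --- this is precisely where the length-$2$ separation of $a$ and $b$ (rather than the adjacency present in $FD_k$ and $BD_k$) forces a different treatment. The only genuinely new ingredient, compared with Lemmas~\ref{lem:FDGoodStart} and~\ref{lem:BDGoodStart}, is the use of the parity of $\deg_H(m_1)$, which enters the argument exactly once, in the case $v=m_1$.
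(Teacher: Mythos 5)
Your proof is correct, and it follows essentially the same approach as the paper: you define the identical gadget (a length-$2$ path from $z_A$ to $z_B$, with a $b$-pin on $z_B$), reduce the count modulo $2$ to a sum of common-neighbour parities over $L_A\cup R_A$, and run through the same three cases $v\in\{m_1\}$, $v=c$, $v\in \NH{b}\setminus V(B)$. The only cosmetic difference is the order of summation in the double count and your explicit partition $L_A\cup R_A=\{m_1\}\cup\{d\}\cup Q$ (the paper just distinguishes $u=m_1$, $u=d$, and $u\in\NH{a}\setminus\{d,m_1\}$); the underlying biconnected-component facts (each vertex outside $B$ has at most one neighbour in $B$; each component of $H-V(B)$ attaches to a unique vertex of $B$) are used in the same way.
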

\begin{proof}
	By the definition of a good start, $|L_A|$ is odd and
	there is a gadget $(J_A, z_A)$ such that $\{ v \in V(H) \mid \>\abs{\hom{(J_A, z_A)}{(H,v)}}\text{ is odd}\,\}=L_A \cup R_A$, where   $R_A= \NH{a}\setminus V(A)$. Since $(L_A,a)$ is not a good stop, $\abs{R_A}$ is even. 
	
	Let $L_B = \{m_1\}$. We now prove the lemma by showing that $(L_B,b)$ is a good start in~$B$.
	Clearly, $|L_B|$ is odd.
	
	Let $(J_B, z_B)$ be the gadget that consists of the gadget $J_A$ joined with a path of length $2$ from the vertex $z_A$ to the vertex $z_B$, and a $b$-pin that is adjacent to $z_B$.
	In order to prove that $(L_B,b)$ is a good start we check that $\{ v \in V(H) \mid \>\abs{\hom{(J_B, z_B)}{(H,v)}}\text{ is odd}\,\}=L_B \cup R_B$, where $L_B=\{m_1\}$ and $R_B=\NH{b}\setminus V(B)$.  
	
	Since $z_B$ is adjacent to a $b$-pin we need only consider $v\in \NH{b}$ and homomorphisms with $z_B\mapsto v$. Then there is a path of length $2$ from $z_A$ to $z_B$ and therefore, for $v\in \NH{b}$,
	\[
	\abs{\hom{(J_B, z_B)}{(H,v)}} = \abs{\{u\in L_A \cup R_A \mid\enspace \abs{\NH{u} \cap \NH{v}} \text{ is odd.}\}}.
	\] 
	We determine $\abs{\{u\in L_A \cup R_A \mid \enspace \abs{\NH{u} \cap \NH{v}} \text{ is odd.}\}}$ depending on $v$ and using the vertex labels from Figure~\ref{fig:bipbiconnchordalComponents}.
	Note that $m_1$ and $c$ are the only neighbours of~$b$ in~$B$ since the degree of~$b$ is~$2$ in~$B$ (by the definition of an impasse).
	
	\begin{myitemize}
		\item Consider $v=m_1$. 
		\begin{itemize}
			\item If $u\in \NH{a}\setminus \{d,m_1\}$ then $u\notin V(B)$ since $\deg_B(a)=2$. As $B$ is a biconnected component, it follows that $a$ is the only common neighbour of $v=m_1$ and $u$.
			\item
			The vertices $v=m_1$ and $u=d$ have an odd number of common neighbours 
			in $S_{k,\ell}$ since $k$ is odd.  
			They have no further common neighbours in~$B$, since $B$ is an impasse, and no further common neighbours in~$H$
			since $B$ is a biconnected component of~$H$. 
			\item Finally, $v=m_1$ and $u=m_1$ have an odd number of
			common neighbours  since $\deg_H(m_1)$ is odd by assumption of the lemma.
		\end{itemize}
		Therefore, $\{u\in L_A \cup R_A \mid \enspace \abs{\NH{u} \cap \NH{v}} \text{ is odd}\}=L_A \cup R_A$ and $\abs{L_A \cup R_A}$ is odd, as required.
		
		\item Consider $v=c$. 
		\begin{itemize}
			\item If $u\in \NH{a}\setminus \{d,m_1\}$, then $u\notin V(B)$ and, as $B$ is a biconnected component, $v=c$ and $u$ have no common neighbours.
			\item The vertices $v=c$ and $u=d$ have one common neighbour in $S_{k,\ell}$
			(the vertex $m_2$) and no further common neighbours in~$H$ (by the same argument as we used for $v=m_1$), so 
			$v=c$ and $u=d$
			have an odd number of common neighbours in~$H$. 
			\item Finally,  $v=c$ and $u=m_1$ have an odd number of common neighbours (since $\ell$ is odd). 
		\end{itemize}
		Therefore, $\{u\in L_A \cup R_A \mid \enspace \abs{\NH{u} \cap \NH{v}} \text{ is odd}\}=\{d,m_1\}$ which has even cardinality, as required.
		\item Consider $v\in \NH{b}\setminus V(B)$.  
		\begin{itemize}
			\item
			If $u\in \NH{a}\setminus \{d,m_1\}$ then $u\notin V(B)$ (since $\deg_B(a)=2$) and, as $B$ is a biconnected component, $v$ and $u$ have no common neighbours.
			\item
			If $u=d$ then $\{u,b\}$ is not an edge of~$B$ (by the definition of impasse) so it is not an edge of~$H$ (since $B$ is a biconnected component). Hence $b$ is not a common neighbour of $u$ and $v$.
			Also, $v$ and $u$ have no other common neighbours since $v$ is not in the biconnected component containing $b$ and $d$.
			\item If $u=m_1$ then the only neighbour of $u$ and $v$ is $b$ since $v$ is not in the biconnected component containing $m_1$ and $b$. 
		\end{itemize}
		Since $L_A \cup R_A\subseteq \NH{a}$ and $m_1\in R_A$ it follows that $\{u\in L_A \cup R_A \mid \enspace \abs{\NH{u} \cap \NH{v}} \text{ is odd}\}=\{m_1\}$ which has odd cardinality, as required.
	\end{myitemize} 
	
\end{proof}

\subsection{Good Stops}

\begin{lem}\label{lem:FDGoodStop}
	Let $B$ be a biconnected component of a graph $H$.
	Suppose that, for an \emph{even} non-negative integer $k$, $B$ is a graph of the form $FD_k$ with vertices as given in Figure~\ref{fig:bipbiconnchordalComponents}. If $(\{a\},b)$ is a good stop in $B$ then  $H$ has a hardness gadget.
\end{lem}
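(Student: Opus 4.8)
The plan is to construct a hardness gadget $(I,S,(J_1,y),(J_2,z),(J_3,y,z))$ for $H$ directly, in the same spirit as the proofs of Lemma~\ref{lem:even_diamond} and Lemma~\ref{lem:hard_impasses}.

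First I would fix the labelling of $B=FD_k$ as in Figure~\ref{fig:bipbiconnchordalComponents}: the vertices $d,a,b,c$ form a square, $b$ is adjacent in $B$ only to $a$ and $c$, and each $z_j$ (for $j\in[k]$) is adjacent to both $a$ and $c$. Two standard consequences of $B$ being a biconnected component of $H$ will be used. Since $\abs{V(B)}\ge 4$ and $B$ is a maximal biconnected subgraph, every vertex of $H$ outside $V(B)$ has at most one neighbour in $V(B)$ --- otherwise, joining such a vertex to $B$ via two edges would, using a path between its two $V(B)$-neighbours in the connected graph $B$, produce a biconnected proper supergraph of $B$. In particular $\NH{a}\cap\NH{c}$ contains no vertex outside $V(B)$, so $\NH{a}\cap\NH{c}=\{b,d,z_1,\dots,z_k\}$, a set of \emph{even} cardinality $k+2$ (recall $k$ is even). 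Also, by the good-stop hypothesis and Definition~\ref{def:goodstartgoodstop}, there is a gadget, which we denote $(J_1,y)$, with $\{v\in V(H)\mid \abs{\hom{(J_1,y)}{(H,v)}}\text{ odd}\}=\{a\}\cup R_B$, where $R_B=\NH{b}\setminus V(B)$ has \emph{odd} cardinality; and, by the same biconnectedness fact, each $o\in R_B$ has $b$ as its unique neighbour in $V(B)$ (it cannot also be adjacent to $a$, since $\{o,a,b\}$ would then be a triangle with $a,b\in V(B)$).

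Then I would set $I=\{a\}$ and $S=\{b\}$; keep $(J_1,y)$ as above, so that $\Omega_y=\{a\}\cup R_B$ has even size $1+\abs{R_B}$ and $I\subsetneq\Omega_y$ since $R_B\neq\emptyset$; let $(J_2,z)$ be the gadget in which $z$ is adjacent to an $a$-pin and a $c$-pin, so that $\Omega_z=\NH{a}\cap\NH{c}=\{b,d,z_1,\dots,z_k\}$ has even size and $S\subsetneq\Omega_z$; and let $(J_3,y,z)$ be the single edge $\{y,z\}$. It then remains only to check the last condition of Definition~\ref{defn:hardness-gadget}: for $i=a$, any $o\in\Omega_y\setminus I=R_B$, $s=b$, and any $x\in\Omega_z\setminus S=\{d,z_1,\dots,z_k\}$, the quantity $\abs{\Sigma_{p,q}}$ equals $1$ if $\{p,q\}\in E(H)$ and $0$ otherwise, since $J_3$ is an edge. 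Here $\{a,b\}$ is a square edge of $FD_k$, $\{o,b\}$ is an edge since $o\in R_B\subseteq\NH{b}$, and $\{a,x\}$ is an edge since $a$ is adjacent in $FD_k$ to $d$ and to every $z_j$; hence $\abs{\Sis}$, $\abs{\Sos}$, $\abs{\Six}$ are odd. On the other hand $\{o,x\}\notin E(H)$ because $o\notin V(B)$ has $b$ as its only neighbour in $V(B)$ while $x\in V(B)\setminus\{b\}$; hence $\abs{\Sox}$ is even. Thus $(I,S,(J_1,y),(J_2,z),(J_3,y,z))$ is a hardness gadget for $H$.

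I do not expect a real obstacle: this is a short gadget construction, and the only points needing care are the parity bookkeeping --- that $k$ being even is exactly what makes $\abs{\Omega_z}$ even --- and the fact that $S$ must be the singleton $\{b\}$ rather than another element of $\Omega_z$, since $b$ is the unique common neighbour both of $a$ and of every vertex of $R_B$. Everything else follows from the maximality of the biconnected component $B$.
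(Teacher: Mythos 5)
Your proposal is correct and constructs exactly the same hardness gadget as the paper's proof ($I=\{a\}$, $S=\{b\}$, $J_1$ the good-stop gadget, $J_2$ a vertex joined to an $a$-pin and a $c$-pin, $J_3$ an edge), with the same parity bookkeeping; the extra justification you give for why vertices outside $V(B)$ have at most one neighbour in $V(B)$ is a standard fact the paper invokes implicitly via biconnectedness.
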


\begin{proof}
	By the definition of a good stop, $R_B = \Gamma_H(b) \setminus \{a,c\}$ has odd cardinality
	and
	there is a gadget $(J_B, z_B)$ such that $\{ v \in V(H) \mid \>\abs{\hom{(J_B, z_B)}{(H,v)}}\text{ is odd}\,\}=\{a\} \cup R_B$.
	
	We give a hardness gadget $(I,S,(J_1, y), (J_2,z), (J_3,y, z))$ for $H$ as follows:
	\begin{myitemize}
		\item $I=\{a\}$ and $S=\{b\}$.
		\item $J_1$ is the gadget $J_B$ with $y=z_B$ so $\Omega_{y} = \{a\} \cup R_B$, which has even cardinality, as required.
		\item $J_2$ is the graph where $z$ is adjacent to an $a$-pin and a $c$-pin so $\Omega_{z} = \{b,d, z_1, \dots, z_k\}$, which has even cardinality, as required (since $k$ is even). 
		\item $J_3$ is an edge between $y$ and $z$. 
	\end{myitemize}
	Note that $a$ is adjacent to every vertex in $\Omega_{z}$, and $b$ is adjacent to every vertex in $\Omega_{y}$, as required. 
	Since $\Omega_y \setminus I = R_B = \Gamma_H(b) \setminus \{a,c\}$
	and  $\Omega_{z}\setminus S=\{d, z_1, \dots, z_k\}$
	and $B$ is a biconnected component,
	there is no edge from $\Omega_y \setminus I$ to $\Omega_z \setminus S$,  as required.
\end{proof}

\begin{lem}\label{lem:BDGoodStop}
	Let $B$ be a biconnected component of a graph $H$.
	Suppose that, for a non-negative integer $k$, $B$ is a graph of the form $BD_k$ with vertices as given in Figure~\ref{fig:bipbiconnchordalComponents}.  If $(\{a\},b)$ is a good stop in $B$ then  $H$ has a hardness gadget.
\end{lem}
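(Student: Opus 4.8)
The plan is to mirror the proof of Lemma~\ref{lem:FDGoodStop}. Unwinding the definition of a good stop (Definition~\ref{def:goodstartgoodstop}), the set $R_B=\Gamma_H(b)\setminus V(B)$ has odd cardinality and there is a gadget $(J_B,z_B)$ with $\{v\in V(H):|\hom{(J_B,z_B)}{(H,v)}|\text{ is odd}\}=\{a\}\cup R_B$. I will assemble a hardness gadget $(I,S,(J_1,y),(J_2,z),(J_3,y,z))$ in the sense of Definition~\ref{defn:hardness-gadget} with $I=\{a\}$, $S=\{b\}$, $J_1=J_B$ (so $\Omega_y=\{a\}\cup R_B$), $J_3$ a single edge, and $J_2$ a suitable star gadget.

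First I record the relevant structure of $B=BD_k$ from Figure~\ref{fig:bipbiconnchordalComponents}: $B$ is a diamond with poles $b,d$ and middle vertices $a,c,y_1,\dots,y_k$, so inside $B$ each of $a,c,y_1,\dots,y_k$ is adjacent to exactly $b$ and $d$; and, since $B$ is a biconnected component of $H$, no vertex of $V(H)\setminus V(B)$ is adjacent to two distinct vertices of $V(B)$. In particular the neighbours of $b$ in $B$ are exactly $a,c,y_1,\dots,y_k$, so $a\notin R_B$, $\Omega_y=\{a\}\cup R_B$ has even cardinality (as $|R_B|$ is odd), and $I=\{a\}\subseteq\Omega_y$, as required.

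The one genuinely new point is the choice of $J_2$, and this is also where I expect the only real subtlety to lie. In the $FD_k$ case one pins $z$ to the two \emph{poles} and obtains the $k+2$ middle vertices as $\Omega_z$, which forces $k$ to be even. Here, by contrast, I take $J_2$ to be the graph in which $z$ is adjacent to an $a$-pin and a $c$-pin, so that $\Omega_z=\Gamma_H(a)\cap\Gamma_H(c)$. Since $a$ and $c$ are both adjacent inside $B$ exactly to $b$ and $d$, and (by the biconnected-component property) they have no common neighbour outside $B$, we get $\Omega_z=\{b,d\}$; thus $|\Omega_z|=2$ is even for \emph{every} $k\ge 0$, and $S=\{b\}\subseteq\Omega_z$.

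It then remains to verify the third condition of Definition~\ref{defn:hardness-gadget}. Because $J_3$ is an edge, $|\Sigma_{u,v}|$ equals $1$ if $\{u,v\}\in E(H)$ and $0$ otherwise. With $I=\{a\}$, $S=\{b\}$, $\Omega_z\setminus S=\{d\}$ and $\Omega_y\setminus I=R_B$, the only relevant choices are $i=a$, $s=b$, $x=d$ and $o\in R_B$: the edges $\{a,b\}$ and $\{a,d\}$ of $B$ give $|\Sigma_{a,b}|=|\Sigma_{a,d}|=1$; since $R_B\subseteq\Gamma_H(b)$ we get $|\Sigma_{o,b}|=1$ for every $o\in R_B$; and $|\Sigma_{o,d}|=0$ for every $o\in R_B$, because an edge from $o\in V(H)\setminus V(B)$ to $d$ would make $o$ a common neighbour of the distinct vertices $b,d\in V(B)$ outside $B$, contradicting that $B$ is a biconnected component. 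This establishes every requirement, so $H$ has a hardness gadget. Apart from spotting the right $J_2$ so that the parity of $\Omega_z$ works for all $k$, the argument is routine bookkeeping with the biconnected-component property, essentially identical to the proof of Lemma~\ref{lem:FDGoodStop}.
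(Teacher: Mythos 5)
Your proof is correct and takes essentially the same approach as the paper: identical choices of $I=\{a\}$, $S=\{b\}$, $J_1=J_B$, $J_2$ with $z$ adjacent to an $a$-pin and a $c$-pin (giving $\Omega_z=\{b,d\}$), and $J_3$ an edge, with the same biconnected-component argument for $|\Sigma_{o,d}|=0$. Your framing of this as a departure from Lemma~\ref{lem:FDGoodStop} is accurate, but it is exactly the departure the paper itself makes.
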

\begin{proof}
	By the definition of a good stop, 
	$R_B= \NH{b}\setminus V(B)$ has odd cardinality and 
	there is a gadget $(J_B, z_B)$ such that $\{ v \in V(H) \mid \>\abs{\hom{(J_B, z_B)}{(H,v)}}\text{ is odd}\,\}= \{a\} \cup R_B$.
	We give a hardness gadget $(I,S,(J_1, y), (J_2,z), (J_3,y,z))$ for $H$ as follows:
	\begin{myitemize}
		\item $I=\{a\} $ and $S=\{b\}$.
		\item $J_1$ is the gadget $J_B$ with $y=z_B$ so $\Omega_{y} = \{a\} \cup R_B$, which has even cardinality, as required.
		\item $J_2$ is the graph where $z$ is adjacent to an $a$-pin and a $c$-pin so 
		$\Omega_z = \{b,d\}$, which has
		even cardinality, as required.
		\item $J_3$ is an edge between $y$ and $z$. 
	\end{myitemize}
	Note that $a$ is adjacent to every vertex in $\Omega_{z}$, and $b$ is adjacent to every vertex in $\Omega_{y}$, as required. Since $R_B= \NH{b}\setminus V(B)$ and $B$ is a biconnected component, note that there are no edges between $\Omega_{y}\setminus I=R_B$ and $\Omega_{z}\setminus S=\{d\}$, as required.
\end{proof}

\begin{lem}\label{lem:SklGoodStop}
	Let $B$ be a biconnected component of a graph $H$.
	Suppose that $B$ is an impasse (Definition~\ref{def:impasse}) and that $(a,b)$ is a pair of connectors of $B$. Let $m_1$ be the unique common neighbour of $a$ and $b$ in $H$ (see Figure~\ref{fig:bipbiconnchordalComponents}). Suppose further that $\deg_H(m_1)$ is odd. If $(\{m_1\},b)$ is a good stop in $B$ then  $H$ has a hardness gadget.
\end{lem}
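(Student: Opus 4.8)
The plan is to build a hardness gadget $(I,S,(J_1,y),(J_2,z),(J_3,y,z))$ for $H$ directly, in the spirit of Lemmas~\ref{lem:FDGoodStop} and~\ref{lem:BDGoodStop}, by feeding the gadget supplied by the good stop into it. I will use the vertex labels of Figure~\ref{fig:bipbiconnchordalComponents} for the copy of $S_{k,\ell}$ inside $B$, so that $\Gamma_B(a)=\{d,m_1\}$, $\Gamma_B(b)=\{c,m_1\}$, the vertex $m_2$ is the unique common neighbour of $d$ and $c$ in $B$, the integers $k,\ell$ are odd, and $a,b,y_1,\dots,y_k,z_1,\dots,z_\ell$ all have degree $2$ in $B$. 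I would take $I=S=\{m_1\}$. Since $(\{m_1\},b)$ is a good stop, there is a connected gadget $(J_B,z_B)$ with
\[
\{\,v\in V(H)\mid \abs{\hom{(J_B,z_B)}{(H,v)}}\text{ is odd}\,\}=\{m_1\}\cup R_B ,
\]
where $R_B=\Gamma_H(b)\setminus V(B)$ and $\abs{R_B}$ is odd; I would set $J_1=J_B$ and $y=z_B$, so that $\Omega_y=\{m_1\}\cup R_B$ has even cardinality and properly contains $I$ (using $m_1\notin R_B$ and $\abs{R_B}\ge 1$). For $J_2$ I would take the gadget in which $z$ is adjacent to an $a$-pin and an $m_2$-pin, and for $J_3$ a path of length $2$ between $y$ and $z$.

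The first thing to verify is that $\Omega_z=\{m_1,d\}$. The workhorse here — and for the rest of the proof — is the fact that, since $B$ is a biconnected component of $H$, every vertex of $V(H)\setminus V(B)$ is adjacent to at most one vertex of $V(B)$: two such adjacencies, joined by an internal $B$-path, would yield a cycle through two vertices of $B$, and, because distinct biconnected components share at most one vertex, that cycle (hence the offending vertex) would lie in $B$. Applying this with $a,m_2\in V(B)$, every length-$2$ path of $H$ between $a$ and $m_2$ has its middle vertex in $V(B)$, hence lies in $B$; since $B$ is a $(1,2)$-supergraph of $S_{k,\ell}$ and $a,m_2\in V(S_{k,\ell})$, such a path lies in $S_{k,\ell}$, and $S_{k,\ell}$ has exactly the two $2$-paths $a\text{-}m_1\text{-}m_2$ and $a\text{-}d\text{-}m_2$ between $a$ and $m_2$. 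So $\Omega_z=\Gamma_H(a)\cap\Gamma_H(m_2)=\{m_1,d\}$, which has even cardinality and properly contains $S$.

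It then remains to check Definition~\ref{defn:hardness-gadget}(3), where $i=s=m_1$, $o$ ranges over $\Omega_y\setminus I=R_B$, and $x=d$ is the unique element of $\Omega_z\setminus S$. As $J_3$ is a $2$-path, $\abs{\Sigma_{p,q}}$ counts the common neighbours of $p,q$ in $H$ when $p\ne q$ and equals $\deg_H(p)$ when $p=q$. So $\abs{\Sigma_{i,s}}=\deg_H(m_1)$ is odd by hypothesis; and $\abs{\Sigma_{i,x}}=\abs{\Sigma_{m_1,d}}$ is the number of $2$-paths of $S_{k,\ell}$ between $m_1$ and $d$ (again by the $(1,2)$-supergraph property), i.e.\ $\abs{\{m_2,a,y_1,\dots,y_k\}}=k+2$, odd since $k$ is odd. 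For the conditions involving $o\in R_B$: since $o$ lies outside $V(B)$ and is adjacent to $b$, the observation above gives that $b$ is the only vertex of $V(B)$ adjacent to $o$, and moreover $o$ has no common neighbour with a vertex of $V(B)$ other than possibly $b$ — a common neighbour $w\notin V(B)$ of $o$ and some $v\in V(B)\setminus\{b\}$ would, via $\{o,w\},\{w,v\}$, an internal $v$-to-$b$ path of $B$, and $\{b,o\}$, close a cycle through the $B$-vertices $b$ and $v$, again forcing $o,w\in V(B)$. Applying this with $v=m_1$ (which is adjacent to $b$) yields $\Gamma_H(o)\cap\Gamma_H(m_1)=\{b\}$, so $\abs{\Sigma_{o,s}}=1$; applying it with $v=d$ (which is not adjacent to $b$, as an edge $\{b,d\}$ would lie in $B$ yet $b,d$ are non-adjacent in $S_{k,\ell}$) yields $\Gamma_H(o)\cap\Gamma_H(d)=\emptyset$, so $\abs{\Sigma_{o,x}}=0$. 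Hence $(I,S,(J_1,y),(J_2,z),(J_3,y,z))$ is a hardness gadget for $H$.

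I expect the only delicate point to be this structural claim — that a vertex of $R_B$ shares no common neighbour with a vertex of $B$ except, possibly, $b$ — together with the bookkeeping that reads off $\Omega_y$, $\Omega_z$ and the relevant common-neighbour sets precisely from the $(1,2)$-supergraph structure of the impasse; everything else reduces to checking the parities of $k$, $\abs{R_B}$ and $\deg_H(m_1)$.
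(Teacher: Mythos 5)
Your proof is correct and follows essentially the same route as the paper's: identical choice of $I=S=\{m_1\}$, the same three gadgets (reuse $J_B$ for $J_1$; $z$ adjacent to an $a$-pin and an $m_2$-pin for $J_2$; a $2$-path for $J_3$), and the same parity checks on $\Omega_y$, $\Omega_z$ and the $\Sigma$ sets. The only cosmetic difference is that you spell out the block-structure facts the paper invokes tersely (middle vertices of $2$-paths between $B$-vertices lie in $B$; vertices in $R_B$ share no common neighbour with $B$-vertices other than via $b$) rather than just citing ``$B$ is a biconnected component.''
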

\begin{proof}
	By the definition of a good stop, $R_B= \NH{b}\setminus V(B)$
	has odd cardinality and 
	there is a gadget $(J_B, z_B)$ such that $\{ v \in V(H) \mid \>\abs{\hom{(J_B, z_B)}{(H,v)}}\text{ is odd}\,\}=\{m_1\} \cup R_B$, 
	Using the vertex labels from Figure~\ref{fig:bipbiconnchordalComponents}, we give a hardness gadget $(I,S,(J_1, y), (J_2,z), (J_3,y, z))$ for $H$ as follows:
	\begin{myitemize}
		\item $I=\{m_1\}$ and $S=\{m_1\}$.
		\item $J_1$ is the gadget $J_B$ with $y=z_B$ so $\Omega_{y} = \{m_1\} \cup R_B$, which has even cardinality, as required.
		\item $J_2$ is the graph where $z$ is adjacent to an $a$-pin and an $m_2$-pin so $\Omega_{z} = \{m_1,d\}$, which has even cardinality, as required. 
		\item $J_3$ is a $2$-path between $y$ and $z$. 
	\end{myitemize}
	There are an odd number of $2$-walks from $m_1$ to itself since $\deg_H(m_1)$ is odd by assumption. 
	There are an odd number of $2$-walks from $m_1$ to $d$ since $k$ is odd and no pair of vertices of $S_{k, \ell}$ has common neighbours outside of $S_{k,\ell}$. Since $R_B= \NH{b}\setminus V(B)$ and $B$ is biconnected there is exactly one $2$-walk from $m_1$ to each vertex in $R_B$. Thus, for $s\in S=\{m_1\}$, $i\in I=\{m_1\}$, $o\in \Omega_{y}\setminus I=R_B$, $x\in \Omega_{z}\setminus S=\{d\}$, we have shown that $\abs{\Sigma_{i,s}}$, $\abs{\Sigma_{o,s}}$ and $\abs{\Sigma_{i,x}}$ are odd, as required.
	Finally, since $B$ is a biconnected component there are no $2$-walks from $d$ to a vertex in $R_B$ and therefore $\abs{\Sigma_{o,x}}$ is even, as required.
\end{proof}

\subsection{Hardness Results}

In this section we establish hardness results which are used to prove Lemma~\ref{lem:CaterpillarForBipChordalComponents} in Section~\ref{sec:chordalbipartitesequence}.

\begin{lem}\label{lem:oddFDTrivialHardness}
	Let $H$ be a graph and let $B$ be a biconnected component of $H$.
	Suppose that, for an \emph{odd} non-negative integer $k$, $B$ is a graph of the form $FD_k$ with vertex labels as given in Figure~\ref{fig:bipbiconnchordalComponents}. 
	If $\deg_H(a)$ is even then $H$ has a hardness gadget.
\end{lem}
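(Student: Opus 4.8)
The plan is to build a hardness gadget for $H$ directly, in the spirit of the good-stop lemmas (Lemmas~\ref{lem:FDGoodStop}, \ref{lem:BDGoodStop}, \ref{lem:SklGoodStop}), but feeding in the parity of $\deg_H(a)$ where those proofs used a good-stop gadget. First I would record the relevant local structure of $B=FD_k$ with the vertex labels of Figure~\ref{fig:bipbiconnchordalComponents}: we have $\Gamma_{FD_k}(a)=\Gamma_{FD_k}(c)=\{d,b,z_1,\dots,z_k\}$, a set of $k+2$ vertices, and $\Gamma_{FD_k}(b)=\Gamma_{FD_k}(d)=\{a,c\}$. Since $B$ is a biconnected component of $H$, any vertex outside $V(B)$ has at most one neighbour in $V(B)$ (otherwise it could be added to $B$, contradicting maximality); applying this twice gives $\Gamma_H(b)\cap\Gamma_H(d)=\{a,c\}$ and, writing $R:=\Gamma_H(a)\setminus V(B)$, that no vertex of $R$ is adjacent to $c$. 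Finally, since $\deg_B(a)=k+2$ is odd (because $k$ is odd) and $\deg_H(a)=(k+2)+|R|$ is even, the set $R$ has odd cardinality; in particular $R\neq\emptyset$.

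Then I would exhibit the gadget $(I,S,(J_1,y),(J_2,z),(J_3,y,z))$ as follows. Put $I=\Gamma_B(a)=\{d,b,z_1,\dots,z_k\}$ and $S=\{a\}$; both have odd cardinality. Let $J_1$ be the graph in which $y$ is adjacent to a single $a$-pin, so that $\Omega_y=\Gamma_H(a)=\Gamma_B(a)\cup R$; this set has even cardinality and $\Omega_y\setminus I=R$. Let $J_2$ be the graph in which $z$ is adjacent to a $b$-pin and a $d$-pin, so that $\Omega_z=\Gamma_H(b)\cap\Gamma_H(d)=\{a,c\}$; this set has even cardinality and $\Omega_z\setminus S=\{c\}$. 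Let $J_3$ be the single edge $\{y,z\}$. Verifying the conditions of Definition~\ref{defn:hardness-gadget} then amounts to adjacency checks among the relevant vertices: for every $i\in I$, $o\in R$, $s=a$ and $x=c$ we get $|\Sigma_{i,s}|=|\Sigma_{o,s}|=1$ because every vertex of $\Gamma_H(a)$ is adjacent to $a$; $|\Sigma_{i,x}|=1$ because every vertex of $\Gamma_B(a)=\{d,b,z_1,\dots,z_k\}$ is adjacent to $c$ in $FD_k$; and $|\Sigma_{o,x}|=0$ because no vertex of $R$ is adjacent to $c$. This is the desired hardness gadget, so $H$ has a hardness gadget.

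I do not anticipate a genuine obstacle; the two points needing a little care are already visible. The first is that $\Omega_y\setminus I$ must be exactly $R$ (and not contain $d$ or any $z_j$), which forces $I$ to be all of $\Gamma_B(a)$ — and the decisive parity fact is that $|\Gamma_B(a)|=k+2$ is odd precisely when $k$ is, so $I$ has the required odd cardinality while $|\Omega_y|$ stays even thanks to $|R|$ being odd. The second is the computation $\Omega_z=\{a,c\}$ together with $\Gamma_H(c)\cap R=\emptyset$, which is exactly the standard structural fact, used repeatedly in Section~\ref{sec:generalisingCaterpillar}, that a vertex lying outside a biconnected component can have at most one neighbour inside it.
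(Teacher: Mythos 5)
Your construction is correct and is essentially the paper's gadget with the roles of $(I,J_1,y)$ and $(S,J_2,z)$ swapped (the paper takes $I=\{a\}$, $\Omega_y=\{a,c\}$ via a $b$-pin and $d$-pin, and $S=\{b,d,z_1,\dots,z_k\}$, $\Omega_z=\Gamma_H(a)$ via an $a$-pin); since $J_3$ is a single edge, Definition~\ref{defn:hardness-gadget} is symmetric under that swap, so the two are the same gadget.
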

\begin{proof}
	We give a hardness gadget $(I,S,(J_1, y), (J_2,z), (J_3,y,z))$ for $H$ as follows:
	\begin{myitemize}
		\item $I=\{a\}$ and $S=\{b,d, z_1, \dots z_k\}$.
		\item $J_1$ is the graph where $y$ is adjacent to a $b$-pin and a $d$-pin so $\Omega_{y} = \{a,c\}$, which has even cardinality, as required. 
		\item $J_2$ is the graph where $z$ is adjacent to an $a$-pin so $\Omega_{z} = \NH{a}$, which has even cardinality, as required. 
		\item $J_3$ is an edge between $y$ and $z$. 
	\end{myitemize} 
	Note that $a$ is adjacent to every vertex in $\Omega_{z}$, and each vertex of $S$ is adjacent to every vertex in $\Omega_{y}$, as required. Since $B$ is a biconnected component there are no edges between $\Omega_{y}\setminus I=\{c\}$ and $\Omega_{z}\setminus S=\NH{a}\setminus V(B)$, as required.
\end{proof}

\begin{lem}\label{lem:BDFDHardness}
	Let $H$ be a graph and let $A$ and $B$ be biconnected components of $H$.
	Suppose that, for \emph{odd} integers $k\ge 1$ and $\ell\ge 1$, there is an isomorphism $f$ from the graph $BD_k$ to $A$ and an isomorphism $g$ from the graph $FD_\ell$ to $B$. Suppose that there is a vertex $w=f(b)=g(a)$ such that  $\deg_H(w)$ is odd. Then $H$ has a hardness gadget.
\end{lem}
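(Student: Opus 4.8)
The plan is to exhibit a hardness gadget $(I,S,(J_1,y),(J_2,z),(J_3,y,z))$ for $H$ in the sense of Definition~\ref{defn:hardness-gadget} and then apply Theorem~\ref{thm:hardness-gadget}. The first step is to unpack the two blocks. Both $BD_k$ and $FD_\ell$ are complete bipartite graphs: $BD_k\cong K_{2,k+2}$ with the high-degree side $\{b,d\}$, and $FD_\ell\cong K_{2,\ell+2}$ with the high-degree side $\{a,c\}$; under $f$ and $g$ the common vertex $w=f(b)=g(a)$ is a high-degree vertex of $A$ and of $B$. I would write $d_A=f(d)$ and $c_B=g(c)$ for the \emph{other} high-degree vertices of $A$ and of $B$, put $M_A=\Gamma_A(w)$ (the $k+2$ degree-$2$ vertices of $A$) and $M_B=\Gamma_B(w)$ (the $\ell+2$ degree-$2$ vertices of $B$), and record that $\{f(a),f(c)\}\subseteq M_A$, $\{g(d),g(b)\}\subseteq M_B$, $\Gamma_A(d_A)=M_A$, $\Gamma_B(c_B)=M_B$, and $d_A\ne w$, $c_B\ne w$. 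I will assume $A\ne B$; in the intended applications the two blocks are distinct, and if $A=B$ then $k=\ell$ and $d_A=c_B$, a degenerate situation that would be handled separately.

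Next I would collect the few block-cut facts the construction rests on. Since $A$ is a maximal biconnected subgraph, any vertex of $H$ adjacent to two vertices of $A$ already lies in $A$; hence for $u,u'\in V(A)$ we have $\Gamma_H(u)\cap\Gamma_H(u')=\Gamma_A(u)\cap\Gamma_A(u')$, and likewise for $B$. Since $A\ne B$ and the two blocks meet only in the cut vertex $w$, there is no edge of $H$ between $V(A)\setminus\{w\}$ and $V(B)\setminus\{w\}$: such an edge, joined to a path from $w$ inside $A$ and a path from $w$ inside $B$, would form a cycle meeting both blocks, so that $A\cup B$ would be biconnected, contradicting maximality. Running the same argument once more, via a case analysis on where a hypothetical common neighbour $p$ of $d_A$ and $c_B$ could lie (in $A$, in $B$, or in neither), yields $\Gamma_H(d_A)\cap\Gamma_H(c_B)=\emptyset$. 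Together with the complete bipartite structure, these give the identities I need: $\Gamma_H(f(a))\cap\Gamma_H(f(c))=\{w,d_A\}$, $\Gamma_H(g(d))\cap\Gamma_H(g(b))=\{w,c_B\}$, $|\Gamma_H(w)\cap\Gamma_H(d_A)|=|M_A|=k+2$, and $|\Gamma_H(w)\cap\Gamma_H(c_B)|=|M_B|=\ell+2$.

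Then I would define the gadget. Take $I=S=\{w\}$, which is odd. Let $J_1$ be the graph where $y$ is adjacent to an $f(a)$-pin and an $f(c)$-pin, so that $\Omega_y=\Gamma_H(f(a))\cap\Gamma_H(f(c))=\{w,d_A\}$, which has even size and properly contains $I$. Let $J_2$ be the graph where $z$ is adjacent to a $g(d)$-pin and a $g(b)$-pin, so that $\Omega_z=\{w,c_B\}$, even and properly containing $S$. Let $J_3$ be a path of length $2$ from $y$ to $z$, so that $|\Sigma_{a,b}|=|\Gamma_H(a)\cap\Gamma_H(b)|$ for all $a,b$. The only quadruple to verify is $i=w$, $o=d_A$, $s=w$, $x=c_B$: there $|\Sigma_{o,x}|=|\Gamma_H(d_A)\cap\Gamma_H(c_B)|=0$ is even, $|\Sigma_{i,s}|=|\Gamma_H(w)|=\deg_H(w)$ is odd by hypothesis, and $|\Sigma_{o,s}|=|M_A|=k+2$ and $|\Sigma_{i,x}|=|M_B|=\ell+2$ are odd because $k$ and $\ell$ are odd. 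This verifies all conditions of Definition~\ref{defn:hardness-gadget}, so $H$ has a hardness gadget and Theorem~\ref{thm:hardness-gadget} finishes the proof.

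The final paragraph is routine bookkeeping; the part that needs care is the middle one, and within it the single nontrivial point is that $d_A$ and $c_B$ have no common neighbour in $H$. This is the place where a careless argument could fail — for instance by overlooking a common neighbour lying outside $A\cup B$, or by ignoring the degenerate possibility that $A$ and $B$ coincide — and it is also the one step that genuinely uses the block structure rather than any minor-freeness of $H$.
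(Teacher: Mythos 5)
Your proposal is correct and takes essentially the same approach as the paper: the gadget is identical ($I=S=\{w\}$, $\Omega_y$ and $\Omega_z$ obtained by pinning two low-degree vertices of each block, and $J_3$ a $2$-path), and the verification uses the same counts of common neighbours. The paper compresses the block-cut-tree bookkeeping into the phrase "by the fact that $A$ and $B$ are biconnected components," which you unpack explicitly; you also flag the degenerate possibility $A=B$, which the paper leaves implicit (and which indeed never arises in the lemma's applications, since it is invoked with distinct blocks $B_p$ and $B_j$).
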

\begin{proof}
	We give a hardness gadget $(I,S,(J_1, y), (J_2,z), (J_3,y, z))$ for $H$ as follows:
	\begin{myitemize}
		\item $I=\{w\}$ and $S=\{w\}$.
		\item $J_1$ is the graph where $y$ is adjacent to an $f(a)$-pin and an $f(c)$-pin so $\Omega_{y} = \{f(d),f(b)\}=\{f(d),w\}$, which has even cardinality, as required. 
		\item $J_2$ is the graph where $z$ is adjacent to a $g(b)$-pin and an $g(d)$-pin so $\Omega_{z} = \{g(c),g(a)\}=\{g(c),w\}$, which has even cardinality, as required. 
		\item $J_3$ is a $2$-path between $y$ and $z$. 
	\end{myitemize} 
	By the fact that $A$ and $B$ are biconnected components, there are exactly $k+2$ walks of length $2$ from $f(d)$ to $w$, and there are exactly $\ell+2$ walks of length $2$ from $g(c)$ to $w$, where $k$ and $\ell$ are odd. Since $\deg_H(w)$ is odd, there is a an odd number of 
	length-$2$ walks  from $w$ to itself. Finally, there are no length-$2$ walks   from $f(d)$ to $g(c)$, as required.
\end{proof}

\begin{lem}\label{lem:BDList}
	Let $H$ be a graph and let $B$ be a biconnected component of $H$ that is of the form $BD_k$ for some integer $k\ge 0$. Using the vertex names from Figure~\ref{fig:bipbiconnchordalComponents}, there is a gadget $(J,z)$ such that $\{ v \in V(H) \mid \>\abs{\hom{(J, z)}{(H,v)}}\text{ is odd}\,\}=\NH{b}\setminus V(B)$.
\end{lem}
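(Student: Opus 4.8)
The plan is to exhibit an explicit gadget and verify it by a one‑line parity computation, the key being a structural feature of $BD_k$ together with two routine facts about biconnected components. Using the vertex labels of Figure~\ref{fig:bipbiconnchordalComponents}, recall that in $B\cong BD_k$ the vertices $b$ and $d$ are \emph{twins}: $\Gamma_B(b)=\Gamma_B(d)=\{a,c,y_1,\dots,y_k\}$, while each of $a,c,y_1,\dots,y_k$ has neighbourhood $\{b,d\}$ in $B$. I would first record the two facts we need from the hypothesis that $B$ is a biconnected component of $H$: (i) every edge of $H$ joining two vertices of $B$ lies in $B$, so $\NH{b}\cap V(B)=\Gamma_B(b)$; and (ii) two vertices of $B$ have no common neighbour in $H$ outside $V(B)$ (a common outside neighbour $w$ of $u_1,u_2\in V(B)$ would, together with a $u_1$–$u_2$ path in $B$, close a cycle through $w$, forcing $w\in B$). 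Combining (i), (ii) and the twin property gives the two identities that drive everything: $\NH{a}\cap\NH{c}=\{b,d\}$, and $\NH{b}\cap\NH{d}=\NH{b}\cap V(B)=\Gamma_B(b)$. The second identity yields $\NH{b}\setminus V(B)=\NH{b}\setminus\NH{d}=\{v\in V(H)\mid v\sim b\text{ and }v\not\sim d\}$, which is exactly the target set.

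Next I would define the gadget. Let $(J_0,z_0)$ be the gadget in which $z_0$ is adjacent to an $a$-pin and a $c$-pin; then $\{v\mid\abs{\hom{(J_0,z_0)}{(H,v)}}\text{ is odd}\}=\NH{a}\cap\NH{c}=\{b,d\}$, since for $z_0\mapsto v$ there is exactly one homomorphism when $v\in\NH{a}\cap\NH{c}$ and none otherwise. Now let $(J,z)$ be obtained from $(J_0,z_0)$ by adding a new (unpinned) vertex $z$ joined to $z_0$ by an edge and adjacent to a $b$-pin. Informally, $z$ carries a $b$-pin and a single edge to $z_0$, and $z_0$ is squeezed into the common neighbourhood $\{b,d\}$ of $a$ and $c$; note $J$ is connected.

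The verification is a short parity count. Fix $v\in V(H)$ and consider $\hom{(J,z)}{(H,v)}$. The $b$-pin forces $v\sim b$, so this set is empty unless $v\in\NH{b}$; and when $v\in\NH{b}$ the only free choice is the image of $z_0$, which must lie in $\NH{v}\cap\NH{a}\cap\NH{c}=\NH{v}\cap\{b,d\}$. Hence $\abs{\hom{(J,z)}{(H,v)}}=\abs{\NH{v}\cap\{b,d\}}$ if $v\sim b$, and $0$ otherwise. When $v\sim b$ this count is $1+[v\sim d]\bmod 2$, which is odd precisely when $v\not\sim d$. Thus the count is odd exactly when $v\sim b$ and $v\not\sim d$, i.e.\ exactly when $v\in\NH{b}\setminus V(B)$ by the identity from the first paragraph, giving $\{v\in V(H)\mid\abs{\hom{(J,z)}{(H,v)}}\text{ is odd}\}=\NH{b}\setminus V(B)$.

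I expect no real obstacle here; the only points that need to be stated carefully rather than waved through are the two biconnected‑component facts (especially fact (ii), "two vertices of a biconnected component have no common neighbour outside it"), which are what let us identify $\NH{a}\cap\NH{c}$ with $\{b,d\}$ and $\NH{b}\setminus\NH{d}$ with $\NH{b}\setminus V(B)$, and the observation that the degenerate case $k=0$ (where $BD_0$ is just the $4$-cycle $a,b,c,d$) is covered verbatim by the same argument.
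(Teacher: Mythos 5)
Your proof is correct, and the overall plan matches the paper's: both exploit that $b$ and $d$ are twins in $BD_k$, that $\Gamma_H(a)\cap\Gamma_H(c)=\{b,d\}$, and that two vertices of a biconnected component have no common neighbour outside it. The gadget itself differs slightly. The paper's $J$ attaches the $a$-pin and the $c$-pin to $z$ by two separate length-$2$ paths, so the count $\abs{\hom{(J,z)}{(H,v)}}$ is the product $\abs{\Gamma_H(v)\cap\Gamma_H(a)}\cdot\abs{\Gamma_H(v)\cap\Gamma_H(c)}$; the parity analysis then checks that each factor is $2$ or $1$ in the respective cases. Your $J$ uses one unpinned vertex $z_0$ adjacent to both the $a$-pin and the $c$-pin, so the count collapses to $\abs{\Gamma_H(v)\cap\{b,d\}}=1+[v\sim d]$, which makes the parity computation a single indicator. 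Yours is marginally more economical (one fewer internal vertex and no product to analyze); the paper's is the one that generalizes naturally to Lemma~\ref{lem:SquarefreePathHardness} where such products are the point. Both are fine. The two biconnected-component facts you isolate and the degenerate-$k$ remark are exactly the hypotheses that need to be stated, and your justification of fact (ii) via the cycle-through-$w$ argument is correct.
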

\begin{proof}
	The graph $J$ has three pinned vertices ---  an $a$-pin, a $b$-pin, and a $c$-pin.
	The $b$-pin is adjacent to the vertex~$z$ and the other two pins are attached to~$z$ by paths of length~$2$.
	
	We will now consider each $v\in V(H)$ to determine whether $\abs{\hom{(J, z)}{(H,v)}}$ is odd.
	Since $z$ is adjacent to a $b$-pin in~$J$, this can only be true for $v\in \NH{b}$.
	
	First, consider a vertex $v\in \NH{b}\cap V(B)$.
	\begin{itemize}
		\item If $v\in \{a,y_1,\ldots,y_k\}$ then $v$ has exactly two length-2 walks to~$c$, so  $\abs{\hom{(J, z)}{(H,v)}}$ is even.
		\item If $v=c$ then $v$ has exactly two length-2 walks to $a$ so  $\abs{\hom{(J, z)}{(H,v)}}$ is even.
	\end{itemize}
	
	Finally, consider a vertex $v\in \NH{b} \setminus V(B)$.
	There is exactly one $2$-walk to~$a$, and exactly one $2$-walk to~$c$, so  $\abs{\hom{(J, z)}{(H,v)}}$ is odd.
\end{proof}

The following lemma is essentially the same as Lemma~\ref{lem:BDList}.

\begin{lem}\label{lem:FDList}
	Let $H$ be a graph and let $B$ be a biconnected component of $H$ that is of the form $FD_k$ for some integer $k\ge 0$. Using the vertex names from Figure~\ref{fig:bipbiconnchordalComponents}, there is a gadget $(J,z)$ such that $\{ v \in V(H) \mid \>\abs{\hom{(J, z)}{(H,v)}}\text{ is odd}\,\}=\NH{a}\setminus V(B)$.
\end{lem}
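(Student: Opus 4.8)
The plan is to follow the proof of Lemma~\ref{lem:BDList} almost verbatim, interchanging the roles of $b$ and $a$ (and of the ``square corners'' and the ``diamond vertices'') so as to match the shape of $FD_k$ rather than $BD_k$. Reading the edges of $FD_k$ off Figure~\ref{fig:bipbiconnchordalComponents}, the set $\NH{a}\cap V(B)$ equals $\{d,b,z_1,\dots,z_k\}$, and each of $d,b,z_1,\dots,z_k$ has, inside $B$, neighbourhood exactly $\{a,c\}$. Accordingly I would let $J$ be the partially $H$-labelled graph with distinguished vertex $z$ having three pinned vertices: an $a$-pin adjacent to $z$, and a $b$-pin and a $d$-pin, each attached to $z$ by a path of length~$2$. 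Counting the extensions of the map $z\mapsto v$ then gives $\abs{\hom{(J,z)}{(H,v)}}=0$ for $v\notin\NH{a}$, and $\abs{\hom{(J,z)}{(H,v)}}=\abs{\NH{v}\cap\NH{b}}\cdot\abs{\NH{v}\cap\NH{d}}$ for $v\in\NH{a}$; so it suffices to show that this product is odd exactly when $v\in\NH{a}\setminus V(B)$.

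For $v\in\NH{a}\cap V(B)=\{d,b,z_1,\dots,z_k\}$ I would argue as follows. Since $B$ is a biconnected component of $H$, any two vertices of $B$ have all of their common neighbours inside $B$; hence for distinct $v,w\in\{d,b,z_1,\dots,z_k\}$ we get $\abs{\NH{v}\cap\NH{w}}=\abs{\{a,c\}}=2$. In each of the cases $v=d$, $v=b$, $v=z_j$, one of the two factors $\abs{\NH{v}\cap\NH{b}}$, $\abs{\NH{v}\cap\NH{d}}$ is indexed by the element of $\{b,d\}$ distinct from $v$, and that factor equals $2$; so the product is even, as required.

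For $v\in\NH{a}\setminus V(B)$ I would use the standard consequence of $B$ being a biconnected component (used repeatedly in this section) that $a$ is the only neighbour of $v$ lying in $V(B)$. From this it follows, for $w\in\{b,d\}$, that $a$ is the unique common neighbour of $v$ and $w$: it is one, since $v\sim a$ and $a\sim w$ (recall $b$ and $d$ are neighbours of $a$ in $FD_k$); and any other common neighbour $u$ would either lie in $V(B)$, forcing $u=a$, or lie outside $V(B)$, in which case the edges $\{v,a\}$, $\{v,u\}$, $\{u,w\}$ together with a path through $B$ from $w$ to $a$ would form a cycle through $v$, which, being $2$-connected, lies in a single biconnected component --- necessarily $B$, as it contains an edge of $B$ --- forcing $v\in V(B)$, a contradiction. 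Hence $\abs{\NH{v}\cap\NH{b}}=\abs{\NH{v}\cap\NH{d}}=1$ and the product is odd. Combining the three cases ($v\notin\NH{a}$, $v\in\NH{a}\cap V(B)$, $v\in\NH{a}\setminus V(B)$) gives $\{v\in V(H)\mid\abs{\hom{(J,z)}{(H,v)}}\text{ is odd}\}=\NH{a}\setminus V(B)$.

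I do not anticipate a genuine obstacle, as this is a direct translation of Lemma~\ref{lem:BDList}; the only point needing a line of care --- exactly as in that lemma --- is the ``no further common neighbour'' claim for $v\notin V(B)$, since a priori $a$ (and also $b$, $d$) may have neighbours outside $B$, so one must rule out the possibility that such a neighbour is simultaneously adjacent to $v$. The cycle argument above handles this: $v$ lies in a block that meets $B$ only at the cut vertex $a$, so every walk from $v$ to $b$ or to $d$ must pass through $a$, and a walk of length~$2$ can therefore use only $a$ as its midpoint.
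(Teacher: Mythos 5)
Your proof is correct and uses exactly the same gadget as the paper ($z$ adjacent to an $a$-pin, with length-$2$ paths from $z$ to a $b$-pin and a $d$-pin), with the same three-way case split on $v$; your extra care about why common neighbours of distinct $B$-vertices lie inside $B$, and why $v\notin V(B)$ forces $a$ to be the unique common neighbour, is sound and is exactly the (slightly elided) justification behind the paper's terse ``exactly one/two length-$2$ walks'' counts. (Incidentally, the paper's final case is stated as $v\in\NH{b}\setminus V(B)$, which is a typo for $v\in\NH{a}\setminus V(B)$; your version has it right.)
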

\begin{proof} 
	The graph $J$ has three pinned vertices ---  an $a$-pin, a $b$-pin, and a $d$-pin.
	The $a$-pin is adjacent to the vertex~$z$ and the other two pins are attached to~$z$ by paths of length~$2$.

	We will now consider each $v\in V(H)$ to determine whether $\abs{\hom{(J, z)}{(H,v)}}$ is odd.
	Since $z$ is adjacent to an $a$-pin in~$J$, this can only be true for $v\in \NH{a}$.
	
	First, consider a vertex $v\in \NH{a}\cap V(B)$. 
	\begin{itemize}
		\item If $v\in \{d,z_1,\ldots,z_k\}$ then $v$ has exactly two length-2 walks to~$b$, so  $\abs{\hom{(J, z)}{(H,v)}}$ is even.
		\item If $v=b$ then $v$ has exactly two length-2 walks to $d$ so  $\abs{\hom{(J, z)}{(H,v)}}$ is even.
	\end{itemize}
	
	Finally, consider a vertex $v\in \NH{b} \setminus V(B)$.
	There is exactly one $2$-walk to~$b$, and exactly one $2$-walk to~$d$, so  $\abs{\hom{(J, z)}{(H,v)}}$ is odd.
\end{proof}

We obtain the following lemma, which is a generalisation of
\cite[Lemma 4.5]{squarefree}.

\begin{lem}\label{lem:SquarefreePathHardness}
	For an integer $q\ge 1$, let $P=v_0, \dots, v_q$ be a path in a graph $H$. Suppose that no edge of $P$ is part of a square in $H$ and that $\deg_H(v_j)$ is odd for all $j\in [q-1]$.
	Suppose that
	{
		\renewcommand{\theenumi}{1\,\alph{enumi}}
		\renewcommand{\labelenumi}{\theenumi)}
		\begin{myenumerate}
			\item $\deg_H(v_0)$ is even, or
			\item $\deg_H(v_0)$ is odd and there is a biconnected component $B_0$ that is isomorphic to $BD_k$ for some \emph{odd} integer $k\ge 1$, where the isomorphism maps $v_0$ to the vertex $b$ from Figure~\ref{fig:bipbiconnchordalComponents}.
		\end{myenumerate}
	}
	\noindent Suppose further that
	{
		\renewcommand{\theenumi}{2\,\alph{enumi}}
		\renewcommand{\labelenumi}{\theenumi)}
		\begin{myenumerate}
			\item $\deg_H(v_q)$ is even, or
			\item $\deg_H(v_q)$ is odd and there is a biconnected component $B_{q+1}$ that is isomorphic to $FD_k$ for some \emph{odd} integer $k\ge 1$, where the isomorphism maps $v_q$ to the vertex $a$ from Figure~\ref{fig:bipbiconnchordalComponents}.
		\end{myenumerate}
	}
	\noindent Then $H$ has a hardness gadget.
\end{lem}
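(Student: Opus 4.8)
The plan is to build a hardness gadget $(I,S,(J_1,y),(J_2,z),(J_3,y,z))$ for $H$ directly, in the sense of Definition~\ref{defn:hardness-gadget}. I take $J_3$ to be the path gadget $J_P$ of Definition~\ref{defn:caterpillar} associated with $P$, with its distinguished vertices $y,z$, and I set $I=\{v_1\}$ and $S=\{v_{q-1}\}$, which are odd-cardinality sets. The graphs $J_1$ (distinguished vertex $y$) and $J_2$ (distinguished vertex $z$) are chosen according to which of the endpoint hypotheses holds. Writing $\Omega_y$ and $\Omega_z$ for the corresponding odd-homomorphism sets, the three defining properties of a hardness gadget then reduce to: (i) $I\subset\Omega_y\subseteq\Gamma_H(v_0)$, $S\subset\Omega_z\subseteq\Gamma_H(v_q)$, and $|\Omega_y|$, $|\Omega_z|$ are even; and (ii) the $\Sigma$-conditions for $J_3=J_P$. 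Part (ii) is immediate: once (i) holds we may invoke Lemma~\ref{lem:SquarefreePathPrehardness} (its hypotheses — $q\ge 1$, no edge of $P$ in a square, $\deg_H(v_j)$ odd for $j\in[q-1]$ — are exactly our standing hypotheses), and with $i=v_1$, $s=v_{q-1}$ it yields $|\Sigma_{o,x}|=0$, $|\Sigma_{o,s}|=|\Sigma_{i,x}|=1$ and $|\Sigma_{i,s}|$ odd for all $o\in\Omega_y\setminus I$, $x\in\Omega_z\setminus S$, which is property~3. All of $J_1,J_2,J_3$ are visibly connected. So the whole proof is the verification of~(i).

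For $J_1$, in case~1a I take the single-pin gadget in which $y$ is adjacent to a $v_0$-pin; then $\Omega_y=\Gamma_H(v_0)$, which contains $v_1$ and has even cardinality because $\deg_H(v_0)$ is even. In case~1b I take the gadget supplied by Lemma~\ref{lem:BDList}, applied to the biconnected component $B_0\cong BD_k$ with $v_0$ the image of the vertex $b$ of $BD_k$, renaming its distinguished vertex to $y$; this gives $\Omega_y=\Gamma_H(v_0)\setminus V(B_0)$. Here the crucial parity computation is that $v_0$ (the image of $b$) has exactly $k+2$ neighbours inside $B_0$, namely the images of $a,c,y_1,\dots,y_k$; since a biconnected component of $H$ contains every edge of $H$ between two of its vertices, $B_0$ is an induced subgraph, so $\Gamma_H(v_0)$ is the disjoint union of this set of size $k+2$ with $\Gamma_H(v_0)\setminus V(B_0)$, whence $|\Omega_y|=\deg_H(v_0)-(k+2)$ is even (odd minus odd). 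Finally $v_1\in\Omega_y$: every edge of $BD_k$ incident to $b$ lies in a square of $BD_k$ (e.g.\ $b\!-\!a\!-\!d\!-\!c\!-\!b$, $b\!-\!c\!-\!d\!-\!a\!-\!b$, $b\!-\!y_j\!-\!d\!-\!a\!-\!b$), so the edge $\{v_0,v_1\}$ of $P$, which lies in no square of $H$, is not an edge of $B_0$; hence $v_1\notin V(B_0)$ and $v_1\in\Gamma_H(v_0)\setminus V(B_0)=\Omega_y$.

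The construction of $J_2$ is entirely symmetric: in case~2a I use the single-pin gadget at $v_q$, giving $\Omega_z=\Gamma_H(v_q)$ of even cardinality and containing $v_{q-1}$; in case~2b I use Lemma~\ref{lem:FDList} applied to $B_{q+1}\cong FD_\ell$ with $v_q$ the image of $a$, giving $\Omega_z=\Gamma_H(v_q)\setminus V(B_{q+1})$, where $v_q$ has exactly $\ell+2$ neighbours inside $B_{q+1}$, so $|\Omega_z|=\deg_H(v_q)-(\ell+2)$ is even, and $v_{q-1}\in\Omega_z$ because every edge of $FD_\ell$ incident to $a$ lies in a square of $FD_\ell$ while $\{v_{q-1},v_q\}$ does not. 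This completes~(i), and applying Lemma~\ref{lem:SquarefreePathPrehardness} as above completes the hardness gadget. I expect the main obstacle to be precisely the bookkeeping of case~1b/2b: getting the count of neighbours of the connector vertex inside its special component to be exactly $k+2$ (so that oddness of $k$ flips the parity of $|\Omega_y|$) and confirming that $v_1$ (resp.\ $v_{q-1}$) lies outside that component — both of which hinge on the hypothesis that no edge of $P$ belongs to a square of $H$ — together with checking the handful of degenerate situations ($q=1$, where $J_P$ is a single edge and $I=\{v_q\}$, $S=\{v_0\}$).
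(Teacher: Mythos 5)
Your proof is correct and follows essentially the same route as the paper: $I=\{v_1\}$, $S=\{v_{q-1}\}$, $J_3=J_P$, with $J_1,J_2$ chosen as the single-pin gadget or the gadget from Lemma~\ref{lem:BDList}/Lemma~\ref{lem:FDList} according to the case, then invoking Lemma~\ref{lem:SquarefreePathPrehardness}. You in fact supply the two verifications the paper states without proof (the parity count $|\Omega_y|=\deg_H(v_0)-(k+2)$ and the observation that every edge of $BD_k$ or $FD_k$ at the connector vertex lies in a square, so the path edge cannot enter the component), and both are carried out correctly.
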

\begin{proof}
	We give a hardness gadget $(I,S,(J_1, y), (J_2,z), (J_3,y, z))$ for $H$ as follows:
	\begin{myitemize}
		\item $I=\{v_{1}\}$ and $S=\{v_{q-1}\}$.
		\item If 1\,a) holds, then $J_1$ is the graph where $y$ is adjacent to a $v_0$-pin so $\Omega_{y} = \NH{v_0}$, which has even cardinality as required. If 1\,b) holds then $(J_1, y)$ is the gadget from Lemma~\ref{lem:BDList} and $\Omega_{y} = \NH{v_0}\setminus V(B_0)$, which has even cardinality as required.
		The vertex~$v_1$ is in $\Omega_y$ because the edge $\{v_0,v_1\}$ is not part of a square in~$H$.		
		
		\item If 2\,a) holds, then $J_2$ is the graph where $z$ is adjacent to a $v_q$-pin so $\Omega_{z} = \NH{v_q}$, which has even cardinality as required. If 2\,b) holds then $(J_2, z)$ is the gadget from Lemma~\ref{lem:FDList} and $\Omega_{z} = \NH{v_q}\setminus V(B_{q+1})$, which has even cardinality as required.
		The vertex~$v_{q-1}$ is in $\Omega_z$ because the edge $\{v_{q-1},v_q\}$ is not part of a square in~$H$.		
		\item $J_3$ is the path gadget $J_P$.
	\end{myitemize} 
	This is a hardness gadget by Lemma~\ref{lem:SquarefreePathPrehardness}.
\end{proof}

\subsection{Chordal Bipartite Sequence Lemma} \label{sec:chordalbipartitesequence}

\begin{lem}[Chordal Bipartite Sequence Lemma]\label{lem:CaterpillarForBipChordalComponents}
	For an integer $q\ge 1$, let $B_1, \dots, B_q$ be biconnected components of a graph $H$ and let $b_0, \dots, b_q$ be vertices such that, for all $i\in [q]$, $b_{i-1}$ and $b_i$ are distinct vertices of $B_i$, and $B_i$ satisfies one of the following:
	\begin{myitemize}
		\item $B_i$ is an edge from $b_{i-1}$ to $b_i$, 
		\item $B_i$ is a diamond in which $\{b_{i-1}, b_i\}$ is an edge, or
		\item $B_i$ is an impasse, where $(b_{i-1},b_i)$ is a pair of connectors of $B_i$. In this case, let $d_i$ be the unique
		common neighbour of~$b_{i-1}$ and~$b_i$ in~$H$. 
	\end{myitemize}
	If  $\abs{\NH{b_0}\setminus V(B_1)}$ is odd, then at least one of the following holds:
	\begin{myitemize}
		\item $B_q$ is an edge or a diamond and $(\{b_{q-1}\},b_q)$ is a good start in~$B_q$ but not a good stop in~$B_q$, 
		\item $B_q$ is an impasse 
		and $(\{d_q\}, b_q)$ is a good start in $B_q$ but not a good stop in~$B_q$, or
		\item $H$ has a hardness gadget.
	\end{myitemize}
\end{lem}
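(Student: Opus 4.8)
The plan is to induct on $q$, carrying along the sequence a single object: a \emph{good start} at the current end‑vertex which is \emph{not} a good stop. The seed of the induction ($i=1$) comes from the hypothesis: let $A_0$ be the subgraph of $H$ induced by $\{b_0\}\cup(V(H)\setminus V(B_1))$, so $V(A_0)\cap V(B_1)=\{b_0\}$. The one‑vertex gadget whose vertex $z$ is adjacent to a $b_0$‑pin realises exactly $\NH{b_0}$, which splits as $L_{A_0}\cup R_{A_0}$ with $L_{A_0}=\NH{b_0}\setminus V(B_1)$ and $R_{A_0}=\NH{b_0}\cap V(B_1)$; since $|L_{A_0}|$ is odd by hypothesis, $(L_{A_0},b_0)$ is a good start in $A_0$. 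For the inductive step at $i\ge 2$, I apply the induction hypothesis to the prefix $B_1,\dots,B_{i-1}$ (a shorter instance of the lemma); if it already yields a hardness gadget we are done, and otherwise it hands us a good start, not a good stop, at $b_{i-1}$ inside $B_{i-1}$ — and $V(B_{i-1})\cap V(B_i)=\{b_{i-1}\}$, while every path in $H$ from $V(B_{i-1})$ to $B_i$ passes through $b_{i-1}$ (standard block‑cut‑tree facts; one may assume the $B_i$ pairwise distinct). So in all cases I enter the step with a good start $(L,b_{i-1})$ in a subgraph $A$ with $V(A)\cap V(B_i)=\{b_{i-1}\}$ that is not a good stop.

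Now I extend across $B_i$ by cases on its type, using the "good start" lemmas of Section~\ref{sec:generalisingCaterpillar}. If $B_i$ is an edge, Lemma~\ref{lem:EdgeGoodStart} gives that $(\{b_{i-1}\},b_i)$ is a good start in $B_i$ (this ignores $A$). If $B_i$ is a diamond in which $b_{i-1}$ is a ``middle'' vertex, then $B_i\cong BD_{k'}$ and Lemma~\ref{lem:BDGoodStart} applies. If $b_{i-1}$ is a ``centre'' of the diamond $B_i$ and the number of its middle vertices is even, then $B_i\cong FD_{k'}$ with $k'$ even and Lemma~\ref{lem:FDGoodStart} applies. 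If $B_i$ is an impasse: when $\deg_H(d_i)$ is even, Lemma~\ref{lem:hard_impasses} already produces a hardness gadget; otherwise $\deg_H(d_i)$ is odd and Lemma~\ref{lem:SklGoodStart} gives that $(\{d_i\},b_i)$ is a good start in $B_i$. In each of these cases I then check whether the good start just produced is a good stop. If it is not, the induction continues (and for $i=q$ this is precisely the first or second bullet of the conclusion). If it is a good stop, I convert it into a hardness gadget: for a diamond via Lemma~\ref{lem:FDGoodStop} or Lemma~\ref{lem:BDGoodStop}, for an impasse via Lemma~\ref{lem:SklGoodStop}, and for an edge by the path argument described below.

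The one remaining case is $B_i\cong FD_{k'}$ with $k'$ \emph{odd} (equivalently, $b_{i-1}$ a centre of $B_i$ with an odd number of middles), where Lemma~\ref{lem:FDGoodStart} fails and the goal is to extract a hardness gadget outright. A short parity computation — using $\deg_H(b_0)=|\NH{b_0}\setminus V(B_1)|+|\NH{b_0}\cap V(B_1)|$ at the front end, and, for every other relevant vertex $v$, that $\deg_H(v)=\deg_{B}(v)+|\NH{v}\setminus V(B)|$ where $B$ is the block realising its good start and that the good start is not a good stop — shows the following trichotomy. Either $\deg_H(b_{i-1})$ is even, in which case Lemma~\ref{lem:oddFDTrivialHardness} gives a hardness gadget; or $B_{i-1}$ is a $BD_\ell$‑diamond with $\ell$ odd in which $b_{i-1}$ is a centre, in which case Lemma~\ref{lem:BDFDHardness} applies (with $w=b_{i-1}$, $\deg_H(w)$ odd); or $B_{i-1}$ is an edge. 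In the last sub‑case, let $B_j,\dots,B_{i-1}$ be the maximal run of edge‑blocks ending there and apply Lemma~\ref{lem:SquarefreePathHardness} to the path $(b_{j-1},b_j,\dots,b_{i-1})$: its edges are bridges, hence lie in no square; its interior vertices $b_j,\dots,b_{i-1}$ have odd degree because the good start at each $b_t$ — obtained by invoking the induction hypothesis on the prefix $B_1,\dots,B_t$ — is not a good stop; the right endpoint $b_{i-1}$ satisfies the ``$FD$'' alternative (1b) since $b_{i-1}$ is the centre $a$ of $B_i\cong FD_{k'}$ with $k'$ odd; and the left endpoint satisfies (1a) or (1b) according to whether $b_{j-1}$ is a middle/connector or an odd‑$BD_\ell$‑centre in $B_{j-1}$, or, if $j=1$, because then $B_1$ is an edge so $\deg_H(b_0)=|\NH{b_0}\setminus V(B_1)|+1$ is even. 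The ``good‑stop at an edge'' situation flagged above is handled by the very same path, now using the even‑degree alternative (2a) at the right endpoint $b_i$.

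The main obstacle is precisely the asymmetric behaviour of edge‑blocks: a diamond or impasse ``consumes'' the good‑start‑not‑good‑stop hypothesis through the corresponding lemma, but an edge does not, so a maximal run of edges must be treated globally by Lemma~\ref{lem:SquarefreePathHardness}, which forces one to verify that every internal joint of the run has odd degree (this is where re‑applying the induction hypothesis to shorter prefixes is essential) and to identify, at each end of the run, which of the clauses (1a)/(1b)/(2a)/(2b) holds — a finite but delicate parity bookkeeping that also underlies the $FD_{k'}$‑with‑$k'$‑odd trichotomy. Beyond that, the proof is a routine case analysis driven by the parity of degrees and by whether each distinguished vertex sits at the centre or at a middle of its diamond.
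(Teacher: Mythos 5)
Your proof is essentially correct and relies on the same chain of supporting lemmas as the paper (Lemmas~\ref{lem:EdgeGoodStart}--\ref{lem:SklGoodStop}, \ref{lem:oddFDTrivialHardness}, \ref{lem:BDFDHardness}, \ref{lem:SquarefreePathHardness}), the same parity bookkeeping, and the same identification of the two genuinely delicate configurations: a block of the form $FD_{k'}$ with $k'$ odd (where the forward ``good start'' lemma fails) and a maximal run of edge-blocks that must be treated globally by the path gadget. The organizational difference is that you run a strong induction on $q$ and repeatedly re-invoke the induction hypothesis on shorter prefixes to harvest the degree parities $\deg_H(b_t)$ odd along the edge-run, whereas the paper avoids nested applications by introducing a minimal index $j$ at which one of three properties (P1)--(P3) first fails, then proving two auxiliary ``Claims'' (a forward one that the good-start propagates up to $j$, and a backward one locating the head of the edge-run or a $BD_{k'}$-block). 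The two are morally equivalent; yours trades the explicit backward Claim~B for re-entry into the induction, which is slightly cleaner to state at the cost of making the well-foundedness less visible. One small slip: when applying Lemma~\ref{lem:SquarefreePathHardness} in the $FD_{k'}$-odd case, the right endpoint $b_{i-1}$ plays the role of $v_q$ and should satisfy condition (2b), not (1b) as you wrote; the substance (odd degree plus adjacency to an $FD_{k'}$-block at the centre vertex $a$) is correct.
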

\begin{proof}

	We start by collecting some facts that we will need.   
	
	\noindent  {\bf Fact 1.} \textit{If $i\in [q]$ and  $B_i$ is a diamond, then at least one of the following holds:
		\begin{myitemize}
			\item for some non-negative integer~$k$ there is an isomorphism from $FD_k$ to~$B_i$, mapping the vertex~$a$ 
			from 
			Figure~\ref{fig:bipbiconnchordalComponents}
			to~$b_{i-1}$ and the vertex~$b$ to~$b_i$
			(we refer to this situation below by saying ``$B_i$ is of the form $FD_k$''), or
			\item for some non-negative integer~$k$ there is an isomorphism from $BD_k$ to~$B_i$, mapping the vertex~$a$
			from 
			Figure~\ref{fig:bipbiconnchordalComponents}
			to~$b_{i-1}$ and vertex~$b$ to~$b_i$. (We refer to this situation as ``$B_i$ is of the form $BD_k$'').
	\end{myitemize}}
	
	\noindent {\bf Fact 2.} \textit{If $B_1$ is an edge 
		or a biconnected component of the form $FD_k$ for an odd integer~$k$
		then
		$\NH{b_0}$ is even.}
	(This is because $b_0$ has and odd number of neighbours in~$B_1$ and an odd number outside of~$B_1$, by assumption.)
	\medskip

	Let $L_0=\NH{b_0}\setminus V(B_1)$ and let $B_0$ be the subgraph of~$H$ induced by the vertices in~$L_0\cup \{b_0\}$. 
	For every $i\in[q]$ such that $B_i$ is an edge or a diamond, let $L_i = \{b_{i-1}\}$. For every $i\in[q]$ such that $B_i$
	is an impasse, let $L_i = \{d_i\}$.
	For every $i\in \{0,\ldots,q\}$, let $R_i=\NH{b_i}\setminus V(B_i)$.
	
	We start by considering $i\in\{0,1\}$ and showing  that $(L_i,b_i)$ is a good start in~$B_i$ or that $H$ has a hardness gadget. 
	We first deal with the easy case $i=0$. $\abs{L_0}$ is odd by assumption 
	so, to show that $(L_0,b_0)$ is a good start in~$B_0$,
	it suffices to use the gadget $(J,z)$ in which $z$ is adjacent to a $b_0$-pin.
	Note that   $R_0= \NH{b_0}\cap V(B_1)$.
	We next deal with $i=1$ by considering four cases depending on~$B_1$:
	\begin{myitemize}
		\item If $B_1$ is an edge from $b_0$ to $b_1$ then, by Lemma~\ref{lem:EdgeGoodStart}, $(L_1, b_1)$ is a good start in $B_1$.
		
		\item If $B_1$ is a diamond of the form $FD_k$ for an  {odd} integer $k\ge0$ 
		then    $H$ has a hardness gadget by Fact~2 and Lemma~\ref{lem:oddFDTrivialHardness}.
		
		\item If $B_1$ is a diamond of the form $FD_k$ for an  {even} integer $k\ge0$ 
		then    $R_0$ has even cardinality. 
		Therefore $(L_0, b_0)$ is not a good stop in $B_0$ and we can apply Lemma~\ref{lem:FDGoodStart} 
		to show that  $(L_1, b_1)$ is a good start in $B_1$.
		
		\item If $B_1$ is a diamond   of the form $BD_k$ 
		then    $R_0 $ has even cardinality. 
		Therefore $(L_0, b_0)$ is not a good stop in $B_0$ and we can apply Lemma~\ref{lem:BDGoodStart} 
		to show that  $(L_1, b_1)$ is a good start in $B_1$.

		\item If $B_1$ is an impasse where $(b_{0},b_1)$ is a pair of connectors. Then $b_0$ has $2$ neighbours in $B_1$ and hence  $R_0 $ has even cardinality. Therefore $(L_0, b_0)$ is not a good stop in $B_0$. 
		Recall that
		$d_1$ be the unique
		common neighbour of~$b_{0}$ and~$b_1$ in~$H$. 
		If $\deg_H(d_1)$ is even then $H$ has a hardness gadget by Lemma~\ref{lem:hard_impasses}. Otherwise  Lemma~\ref{lem:SklGoodStart}   shows that $(L_1, b_1)$ is a good start in $B_1$.
	\end{myitemize}
	
	For the rest of the proof, let $j$ be the smallest index in $[q]$ that satisfies one of the following properties:   
	\begin{enumerate}[(P1)]
		\item \label{prop1}\label{equ:CaterpillarBipChordal1}	
		$|R_j|$ is odd and
		there is no odd integer~$k$ such that
		$B_j$ is   of the form $FD_k$.
		\item \label{prop2}\label{equ:CaterpillarBipChordal2}
		There is an odd integer~$k$ such that
		$B_j$ is of the form $FD_k$.
		\item \label{prop3} \label{equ:CaterpillarBipChordal3}
		$j=q$ and
		$|R_j|$ is even and there is no odd integer $k$ such that $B_j$ is of the form $FD_k$.
	\end{enumerate}
	
	We will use the following claims.

	\medskip
	\noindent \textbf{Claim A} \textit{Suppose that $j$ does not satisfy (P\ref{equ:CaterpillarBipChordal2}). 
		Then $H$ has a hardness gadget or, for all $\ell \in  [j]$, the following are satisfied.
		\begin{myitemize}
			\item
			$(L_\ell, b_\ell)$ is a good start in $B_\ell$, and
			\item If $\ell>1$ then
			$(L_{\ell-1},b_{\ell-1})$ is not a good stop in $B_{\ell-1}$.
		\end{myitemize}
	}
	\medskip
	
	\begin{claimproof}
		The proof of  Claim~A is by induction on~$\ell$. We have already established the base case $\ell=1$.
		Now  fix $\ell\in \{2,\ldots,j\}$ and suppose (from the inductive hypothesis) that $(L_{\ell-1}, b_{\ell-1})$ is a good start in $B_{\ell-1}$. By the minimality of $j$, 
		$B_{\ell-1}$ is not of the form $FD_k$ for an odd integer $k$ (otherwise $\ell-1$ would satisfy~(P\ref{equ:CaterpillarBipChordal2})).
		Again, by minimality of~$j$, $|R_{\ell-1}|$  is even
		(otherwise $\ell-1$ would satisfy~(P\ref{equ:CaterpillarBipChordal1})).
		By the definition of good stop,  $(L_{\ell-1}, b_{\ell-1})$ is not a good stop in $B_{\ell-1}$. 
		Since $j$ does not satisfy~(P\ref{equ:CaterpillarBipChordal2}),
		$B_\ell$ is not of the form $FD_k$ for an {odd} integer $k$. Thus, we can apply one of Lemmas~\ref{lem:EdgeGoodStart},~\ref{lem:FDGoodStart},~\ref{lem:BDGoodStart} or~\ref{lem:SklGoodStart} depending on the form of $B_\ell$ to show that $(L_\ell, b_\ell)$ is a good start in $B_\ell$. This completes the proof of Claim~A.\end{claimproof}

	\medskip 
	\noindent \textbf{Claim B} \textit{Suppose that 
		$B_j$ satisfies one of the following.
		\begin{enumerate}[(B1)]
			\item \label{B1} $B_j$ is the edge $\{b_{j-1},b_j\}$ and $\deg_H(b_j)$ is even, or
			\item \label{B2} there is an odd integer~$k$ such that $B_j$ is of the form $FD_k$.
		\end{enumerate}
		Suppose that there is an integer $\ell$ in the range $1\leq \ell\leq j$ such that, 
		for $i\in \{\ell,\ldots,j-1\}$, $B_i$ is the edge $\{b_{i-1},b_i\}$
		and for $i\in \{\ell,\ldots,j\}$, 
		$\NH{b_{i-1}}$   and
		$\Gamma_H(b_{i-1}) \cap V(B_{i-1})$ have odd cardinality.  
		Then $H$ has a hardness gadget or
		there is an integer $p$ in the range $1\leq p \leq \ell-1$ and an odd integer~$k'$ 
		such that 
		$B_{p}$ is  of the form $BD_{k'}$. Also, for $i\in\{p+1,\ldots,j-1\}$, $B_i$ is  the edge  
		$\{b_{i-1},b_i\}$ 
		where  $\NH{b_{i-1}}$   has odd cardinality.}
	\medskip
	
	\begin{claimproof}
		The proof of Claim~B is by induction on~$\ell$. The base case $\ell=1$ is vacuous ---   taking $i=1$, the precondition of the claim ensures that $|\Gamma_H(b_0)|$  is odd, contrary to Fact~2.
		So  consider some $\ell>1$ for which we wish to prove the claim.  
		Since taking $i=\ell$ guarantees that
		$|\Gamma_H(b_{\ell-1}) \cap V(B_{\ell-1})|$ is odd, 
		$B_{\ell-1}$ is either  
		an edge or it is of the form $BD_{k'}$ for an odd integer $k'$. We consider each case.
		
		\begin{myitemize}
			\item $B_{\ell-1}$ is an edge:
			If $\deg_H(b_{\ell-2})$ is even  
			$H$ has a hardness gadget by Lemma~\ref{lem:SquarefreePathHardness}
			(take   $v_0,\ldots,v_q = b_{\ell-2},\ldots, b_j$ in Case (B\ref{B1}) and 
			$v_0,\ldots,v_q = b_{\ell-2},\ldots, b_{j-1}$ in Case (B\ref{B2}). 
			Thus,   assume that $\deg_H(b_{\ell-2})$ is odd.
			By Fact~2, $\ell-2\ge 1$ and consequently (by Claim~A),  $(L_{\ell-2},b_{\ell-2})$ is a good start in $B_{\ell-2}$ that is not 
			a good stop in~$B_{\ell-2}$. This implies that   $|R_{\ell-2}|$  is even, which together with the fact that $\deg_H(b_{\ell-2})$ is odd implies that $b_{\ell-2}$ has an odd number of neighbours in $B_{\ell-2}$.  
			So the preconditions of the claim are met with $i=\ell-1$ and we can finish by induction.

			\item $B_{\ell-1}$ is of the form $BD_{k'}$ for an odd integer $k'$. 
			The claim follows  by taking $p=\ell-1$.
			
		\end{myitemize} 
		This concludes the proof of Claim~B.\end{claimproof}

	We now make a case distinction, depending on which property~$j$ satisfies.
	\begin{description}

		\item[Case~(P\ref{equ:CaterpillarBipChordal1}).] 
		We will show that $H$ has a hardness gadget.
		
		By Claim~A, either  $H$ has a hardness gadget (in which case we are finished) or $(L_{j}, b_{j})$ is a good start in $B_{j}$. Since $|R_j|$
		is odd, $(L_{j}, b_{j})$ is a good stop in $B_{j}$. 
		We now distinguish several cases, depending on the form of~$B_j$. 
		\begin{itemize}
			\item
			If $B_j$ is of the form $FD_k$ for even $k$, or of the form $BD_k$, then  $H$ has a hardness gadget by Lemmas~\ref{lem:FDGoodStop} or~\ref{lem:BDGoodStop}, respectively. 
			
			\item If $B_j$ is an impasse, then depending on the degree of $d_j$,  $H$ has a hardness gadget either by Lemma~\ref{lem:hard_impasses} (if the degree of $d_j$ is even) or by Lemma~\ref{lem:SklGoodStop} (if the degree of $d_j$ is odd).
			
			\item  Finally, suppose that $B_j$ is an edge. 
			We will use Claim~B with $\ell=j$ to show that $H$ has a hardness gadget.
			The first step is to show that (unless $H$ has a hardness gadget) the preconditions of the
			claim are met  --- that is  
			$\deg_H(b_j)$ is even,
			$\deg_H(b_{j-1})$ is odd,  and
			$|\Gamma_H(b_{j-1}) \cap V(B_{j-1})|$  is odd. 
			
			By (P\ref{prop1}),  $|R_j| $ is odd. Since $b_j$ has only one neighbour in $B_j$,
			$\deg_H(b_j)$ is even. If $\deg_H(b_{j-1})$ is even,  $H$ has a  hardness gadget by Lemma~\ref{lem:SquarefreePathHardness}
			(taking $q=1$, $v_0 = b_{j-1}$ and $v_1 = b_j$). 
			From now on, we   assume that $\deg_H(b_{j-1})$ is odd. 
			By Fact~2, $j-1\geq 1$.
			By the minimality of $j$, $|R_{j-1}| $  is even, 
			which implies that $b_{j-1}$ has an odd number of neighbours in $B_{j-1}$.

			Applying Claim~B with $\ell=j$, 
			either $H$ has a hardness gadget. or
			there is an integer $p$ in the range $1\leq p \leq j-1$ and an odd integer~$k'$ 
			such that 
			$B_{p}$ is  of the form $BD_{k'}$. 
			Also, for $i\in\{p+1,\ldots,j-1\}$, $B_i$ is  the edge  
			$\{b_{i-1},b_i\}$ 
			where  $\NH{b_{i-1}}$   has odd cardinality.

			Now we apply Lemma~\ref{lem:SquarefreePathHardness}  
			with the path $v_0,\ldots,v_q$ equal to $b_p,\ldots,b_j$.
			The degrees of $v_0,\ldots,v_{q-1}$ are odd  and the degree of $v_q$ is even.
			$v_0$ is in the biconnected component $B_p$.	This shows that $H$ has a hardness gadget.
			
		\end{itemize}

		\item[Case~(P\ref{equ:CaterpillarBipChordal2}).] 
		We will use Claim~B with $\ell=j$ to show that $H$ has a hardness gadget.
		 
		The first step is to show that (unless $H$ has a hardness gadget) the preconditions of the
		claim are met  --- that is   
		$\deg_H(b_{j-1})$ is odd,  and
		$|\Gamma_H(b_{j-1}) \cap V(B_{j-1})|$  is odd.

		If $\deg_H(b_{j-1})$ is even then  $H$ has a hardness gadget by Lemma~\ref{lem:oddFDTrivialHardness}. 
		From now on, we   assume that $\deg_H(b_{j-1})$ is odd. 
		By Fact~2, $j-1\geq 1$.
		By the minimality of $j$, $|R_{j-1}| $  is even, 
		which implies that $b_{j-1}$ has an odd number of neighbours in $B_{j-1}$.
		
		Applying Claim~B with $\ell=j$, 
		either $H$ has a hardness gadget. or
		there is an integer $p$ in the range $1\leq p \leq j-1$ and an odd integer~$k'$ 
		such that 
		$B_{p}$ is  of the form $BD_{k'}$. 
		Also, for $i\in\{p+1,\ldots,j-1\}$, $B_i$ is  the edge  
		$\{b_{i-1},b_i\}$ 
		where  $\NH{b_{i-1}}$   has odd cardinality.

		If $p=j-1$  then $H$ has a hardness gadget by Lemma~\ref{lem:BDFDHardness}.
		Otherwise, we apply Lemma~\ref{lem:SquarefreePathHardness}  
		with the path $v_0,\ldots,v_q$ equal to $b_{p},\ldots,b_{j-1}$.
		The degrees of $v_0,\ldots,v_{q}$ are odd.
		$v_0$ is in the biconnected component $B_p$
		and $b_q$ is in the biconnected component $B_j$.
		This shows that $H$ has a hardness gadget.

		\item[Case~(P\ref{equ:CaterpillarBipChordal3})] 
		By Claim~A,  $H$ has a hardness gadget or $(L_q, b_q)$ is a good start in $B_q$.
		In the latter case, since $|R_q|$ is even, $(L_q,b_q)$ is not a good stop in~$B_q$.

	\end{description}

\end{proof}

\section{$K_4$-minor-free Components}\label{sec:K4freeGraphs}
In this section, we establish a structural classification for biconnected $K_4$-minor-free graphs. Recall that such a classification has already been achieved for biconnected \emph{chordal bipartite} graphs in Section~\ref{sec:chordalBipartiteComps} (Lemma~\ref{lem:X}). For this reason, we focus in what follows on biconnected $K_4$-minor-free graphs that are not chordal bipartite, which is equivalent to 
focusing on biconnected $K_4$-minor-free graphs 
which have  an induced cycle  that is not a square. The main result of this section is presented in Lemma~\ref{lem:main_biconnected}. Informally, it states that every biconnected $K_4$-minor-free graph $B$ either induces a hardness gadget for every $K_4$-minor-free graph containing $B$ as a biconnected component, or $B$ is an edge, a diamond, an impasse, or a so-called obstruction (Definition~\ref{def:obstruction}). Together with our insights from the previous section, obstructions will be the final building block in our construction of global hardness gadgets in Section~\ref{sec:k4MinorFreeMain}. 

\begin{defn}[separation, separator]	\label{def:separator}
	Let $G$ be a graph and let $A$ and $B$ be subsets of $V(G)$.
	The pair $(A,B)$   is a \emph{separation} of   $G$ if $V(G)=A \cup B$ and  $G$ has no edges between $A\setminus B$ and $B \setminus A$. 
	The set $A \cap B$ is called the \emph{separator} of this separation.
\end{defn}

\subsection{Induced Cycles}\label{sec:induced_cycles}

Recall Definition~\ref{def:walkneighbourset}, which defines for a closed walk $W = (w_0,\ldots,w_{q-1},w_0)$  
in a graph~$H$ the walk-neighbour-set $N_{W,H}(w_i)=\Gamma_H(w_{i-1}) \cap \Gamma_H(w_{i+1})$, where the indices are taken modulo~$q$. In this section we will use this notion mainly for cycles.		

\begin{lem}\label{lem:disjoint}
	Let $H$ be a biconnected $K_4$-minor-free graph containing an induced cycle $C=(c_0,\dots,c_{q-1},c_0)$ for some $q\neq4$. Then the walk-neighbour-sets $N_{C,H}(c_0), \dots, N_{C,H}(c_{q-1})$ are pairwise disjoint.
\end{lem}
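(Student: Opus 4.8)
The plan is to prove $\lem{disjoint}$ by contradiction: assume two distinct walk-neighbour-sets share a vertex and extract a $K_4$-minor. First I would set up notation. Suppose $N_{C,H}(c_i)$ and $N_{C,H}(c_j)$ share a common vertex $w$ for some $i\neq j$, with indices taken modulo $q$. By definition, $w$ is adjacent to $c_{i-1}, c_{i+1}$ and also to $c_{j-1}, c_{j+1}$. Since $w \in N_{C,H}(c_i)$ means $w$ is a common neighbour of $c_{i-1}$ and $c_{i+1}$, and similarly for $j$, the vertex $w$ has (at least) four neighbours among the cycle vertices $\{c_{i-1},c_{i+1},c_{j-1},c_{j+1}\}$. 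The case $q=3$ should be handled first and is essentially immediate: if $q=3$ then $N_{C,H}(c_i) = \Gamma_H(c_{i-1})\cap\Gamma_H(c_{i+1})$, and since $C$ is a triangle the two ``other'' vertices $c_{i-1},c_{i+1}$ are the same for different $i$ only in a degenerate sense — more carefully, for a triangle, $N_{C,H}(c_0)=\Gamma_H(c_1)\cap\Gamma_H(c_2)$ and this already contains $c_0$ itself; if it also contained some $w\neq c_0$ then $w,c_0,c_1,c_2$ would form a $K_4$-minor (in fact a $K_4$ subgraph after noting $C$ is a cycle so all edges among $c_0,c_1,c_2$ are present), contradiction. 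But actually, re-examining: the sets $N_{C,H}(c_i)$ for a triangle are $\Gamma_H(c_{i-1})\cap\Gamma_H(c_{i+1})$ — and $c_i \in N_{C,H}(c_i)$ always; these are distinct for distinct $i$, so disjointness of the three sets means precisely that no $w$ is a common neighbour of two different pairs, which is exactly the $K_4$ extraction. So the $q=3$ case reduces to: a common neighbour of $c_0$ and $c_1$ other than $c_2$, together with $c_2$, yields $K_4$ on $\{w, c_0, c_1, c_2\}$ using the edges of the triangle plus $wc_0, wc_1$ — wait, we also need $wc_2$ or a path; actually $w\in N_{C,H}(c_0)\cap N_{C,H}(c_1)$ gives $w$ adjacent to $c_1,c_2$ (from the first) and to $c_0,c_2$ (from the second), so $w$ is adjacent to all of $c_0,c_1,c_2$ and we get $K_4$ directly.

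For the main case $q\geq 5$ (and $q\neq 4$ is given, so $q=3$ or $q\geq 5$), I would argue as follows. Fix $w\in N_{C,H}(c_i)\cap N_{C,H}(c_j)$ with $i\neq j$. The idea is to pick three vertices of $C$ among $\{c_{i-1},c_{i+1},c_{j-1},c_{j+1}\}$ that are pairwise ``separated'' along the cycle by at least one vertex, so that $C$ provides three internally-disjoint paths connecting them, and $w$ is adjacent to all three — this branches a $K_4$: the three chosen cycle-vertices as three branch vertices connected pairwise via arcs of $C$, and $w$ as the fourth branch vertex connected directly to each. The combinatorial heart is: among the (up to four, at least two distinct — note $c_{i-1}\neq c_{i+1}$ since $q\geq 3$, and if $q\geq 5$ then $c_{i-1},c_{i+1},c_{j-1},c_{j+1}$ cannot all coincide pairwise in a way that collapses things) neighbours of $w$ on $C$, I can always extract three that are at pairwise distance $\geq 2$ on $C$ — or handle the degenerate coincidences separately (e.g. $j = i+2$ forces $c_{i+1}=c_{j-1}$, so $w$ is adjacent to $c_{i-1}, c_{i+1}, c_{i+3}$, three vertices which for $q\geq 5$ are distinct and $c_{i-1}$ to $c_{i+3}$ has an arc of length $\geq 2$ the long way round; similarly $j=i+1$ gives $w$ adjacent to $c_{i-1},c_{i+1},c_{i},c_{i+2}$ and we take $c_{i-1}, c_{i}, c_{i+2}$, distinct for $q\geq 5$).

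So the main obstacle — and the bulk of the writing — is the careful case analysis on the relative positions of $i$ and $j$ around the cycle of length $q\geq 5$, verifying in each subcase that we can choose three of $w$'s cycle-neighbours that are distinct and such that the three arcs of $C$ joining them consecutively are internally vertex-disjoint (equivalently, the three vertices are distinct and genuinely split $C$ into three arcs, each nonempty as an edge set — which holds as long as no two of the three are equal and they are not ``all on a short stretch'' in a way that makes an arc empty; distinctness plus $q\geq 5$ suffices once we've chosen them to be pairwise non-adjacent consecutive-on-$C$). I would organize it as: (i) $q=3$ case, direct $K_4$ subgraph; (ii) $q\geq 5$, and within it subcases by $\min(|i-j|, q-|i-j|) \in \{1, 2, \geq 3\}$, in each extracting three suitable cycle vertices $x,y,z$ all adjacent to $w$, then concluding that $\{w,x,y,z\}$ branches a $K_4$-minor in $H$ (three arcs of $C$ avoiding each other's endpoints give the $xy$, $yz$, $zx$ connections; $w$'s edges give the rest), contradicting $K_4$-minor-freeness of $H$. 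Biconnectedness of $H$ is not actually needed for this lemma — the hypothesis is presumably stated for uniformity with later lemmas — so I would not invoke it. Finally, conclude: no such shared vertex $w$ exists, hence the walk-neighbour-sets are pairwise disjoint. $\qed$
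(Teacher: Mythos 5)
Your approach matches the paper's almost exactly: for $q=3$ you extract a $K_4$ subgraph directly from the no-self-loop hypothesis, and for $q\ge 5$ you observe that a shared vertex $w$ must be adjacent to at least three distinct vertices of the induced cycle and then branch a $K_4$-minor using arcs of $C$. The one thing your plan overlooks is the possibility that $w$ itself lies on $C$. Your $K_4$-minor construction silently assumes the three arcs of $C$ joining your branch vertices $x,y,z$ avoid $w$, which fails if $w$ is interior to one of them. The paper handles this case separately by pointing out that a cycle vertex in $N_{C,H}(c_i)\cap N_{C,H}(c_j)$ with $i\ne j$ and $q>4$ produces a chord of $C$, contradicting inducedness. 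In your setup the repair is a one-liner: once you have established that $w$ is adjacent to three distinct vertices of $C$, and $C$ is induced, you get $w\notin V(C)$ for free, since a vertex of an induced cycle has only its two cycle-neighbours among $V(C)$. Two smaller notes: your worry about ``empty arcs'' is unfounded, as any three distinct vertices of a cycle cut it into three arcs each containing at least one edge; and your observation that biconnectedness is not used is correct --- the paper's proof does not invoke it either, so the hypothesis is indeed there only for uniformity with the surrounding lemmas.
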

\begin{proof}
	If $q=3$ then the fact that we do not allow self-loops in $H$ together with the fact that $H$ does not contain $K_4$ as a subgraph ensures that the $N_{C,H}(c_i)$ are pairwise disjoint.
	
	Suppose $q>4$. Assume for contradiction that there exists a vertex $w\in N_{C,H}(c_i) \cap N_{C,H}(c_j)$ for some $i\neq j$. If $w$ is part of the cycle $C$, then we obtain a chord (note that $q>4$), contradicting the fact that $C$ is induced. If $w$ is not part of the cycle $C$, then $w$ is adjacent to at least $3$ vertices of the cycle, yielding a $K_4$-minor.
\end{proof}

\begin{lem}\label{lem:separators} 
	Let $H$ be a biconnected $K_4$-minor-free graph containing an induced cycle $C=(c_0,\dots,c_{q-1},c_0)$.
	If $q>4$ and $|N_{C,H}(c_i)|>1$ for some $i\in \{0, \dots, q-1\}$ then there exists a separation $(A,B)$ of $H$ such that $C\setminus\{c_i\}  \subseteq A$, $N_{C,H}(c_i)\subseteq B$ and $A\cap B =\{c_{i-1},c_{i+1}\}$. Furthermore, $H$ is a (1,2)-supergraph of $H[B]$.
\end{lem}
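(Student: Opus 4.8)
The plan is to extract the separation from the structure of a $K_4$-minor-free graph, using the fact that $C$ is an induced cycle of length $q>4$ and that $c_i$ has (at least) two distinct common neighbours of $c_{i-1}$ and $c_{i+1}$. First I would fix a vertex $w\in N_{C,H}(c_i)$ with $w\neq c_i$; such a $w$ exists by the hypothesis $|N_{C,H}(c_i)|>1$ together with the observation that $c_i$ itself is always a member of $N_{C,H}(c_i)$. Note $w$ is not on $C$: if it were, then since $w$ is adjacent to both $c_{i-1}$ and $c_{i+1}$ and $q>4$, the edge $\{w,c_{i-1}\}$ (or $\{w,c_{i+1}\}$) would be a chord of $C$, contradicting that $C$ is induced. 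So $w\notin V(C)$, and $w,c_i$ are two internally disjoint ``bridges'' across the pair $\{c_{i-1},c_{i+1}\}$.

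The key step is to show that $\{c_{i-1},c_{i+1}\}$ is a separator of $H$ that puts $c_i$ and $N_{C,H}(c_i)$ on opposite sides. Consider $H' = H - \{c_{i-1},c_{i+1}\}$. I claim that in $H'$, the vertex $c_i$ and every vertex of $N_{C,H}(c_i)\setminus\{c_i\}$ lie in different connected components. Suppose not: then there is a path $P$ in $H'$ from $c_i$ to some $w\in N_{C,H}(c_i)\setminus\{c_i\}$ avoiding $c_{i-1}$ and $c_{i+1}$. Now I would exhibit a $K_4$-minor on the branch sets $\{c_{i-1}\}$, $\{c_{i+1}\}$, $V(C)\setminus\{c_i,c_{i-1},c_{i+1}\}$ (this is the ``long arc'' of $C$, which is connected and nonempty since $q>4$, hence $q\ge 5$), and $V(P)\cup\{c_i,w\}$ — wait, more carefully: the four branch sets are $\{c_{i-1}\}$, $\{c_{i+1}\}$, the long arc $A_0 = \{c_{i+2},\dots,c_{i-2}\}$ of $C$, and $B_0 = V(P)$ (which contains both $c_i$ and $w$). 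Each of $A_0$ and $B_0$ is connected; $c_{i-1}$ is adjacent to $c_{i-2}\in A_0$ and to $c_i\in B_0$; $c_{i+1}$ is adjacent to $c_{i+2}\in A_0$ and to $c_i\in B_0$; $c_{i-1}$ and $c_{i+1}$ are adjacent iff $q=3$, but since $q\ge 5$ I instead need an edge between $\{c_{i-1}\}$ and $\{c_{i+1}\}$ — here I must be careful, because for $q\ge 5$ there is no such edge. The fix is to route differently: take branch sets $\{c_{i-1}\}$, $\{c_{i+1}\}$, $\{c_i\}$, and $V(P_w)$ where $P_w$ is a second path. Actually the cleanest argument: $c_i$ and $w$ are both common neighbours of $c_{i-1}$ and $c_{i+1}$, and the long arc $A_0$ together with $c_{i-1},c_{i+1}$ forms a path $Q$ from $c_{i-1}$ to $c_{i+1}$ internally disjoint from $\{c_i,w\}\cup V(P)$. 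Contract $Q$ to a single branch set $X$; then $X$ is adjacent to $c_i$, to $w$, and the set $V(P)$ (a $c_i$–$w$ path in $H'$) together with $c_i,w$ gives a fourth connected piece $Y$ with $c_i,w\in Y$; but now I have only three ``terminals'' $c_i$, $w$, $X$ forming a triangle-like structure, not four. Let me reconsider: the right branch sets are $\{c_{i-1}\}$, $\{c_{i+1}\}$, $A_0$ (long arc), and $Y=\{c_i\}\cup V(P)\cup\{w\}$. Adjacencies needed: $c_{i-1}\sim A_0$ ✓ (via $c_{i-2}$), $c_{i-1}\sim Y$ ✓ (via $c_i$, or via $w$), $c_{i+1}\sim A_0$ ✓ (via $c_{i+2}$), $c_{i+1}\sim Y$ ✓, $A_0\sim Y$: need an edge from the long arc to $Y$ — $c_i$ is adjacent to $c_{i-1}$ and $c_{i+1}$ only on $C$, and $C$ is induced, so $c_i$ has no neighbour in $A_0$; and $w\notin V(C)$, so $w$ adjacent to a vertex of $A_0$ would give $w$ three neighbours on $C$, a $K_4$-minor already — so actually if $w$ has a neighbour on $A_0$ we are immediately done, and otherwise $A_0\not\sim Y$ and the four-set configuration fails, but then we don't have a $K_4$-minor from this configuration. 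Hmm.

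The genuinely correct route, which I would carry out in detail: since $H$ is biconnected and $\{c_{i-1},c_{i+1}\}$ might not be a separator a priori, I instead use the two internally-disjoint $c_{i-1}$–$c_{i+1}$ paths $c_{i-1}\,c_i\,c_{i+1}$ and $c_{i-1}\,w\,c_{i+1}$ together with the long arc path $c_{i-1}\,c_{i-2}\cdots c_{i+2}\,c_{i+1}$: these are three internally-disjoint $c_{i-1}$–$c_{i+1}$ paths, so contracting each to an edge (i.e. contracting each internal portion to a single vertex) yields a $\Theta$-graph, and adding any one further connection forces $K_4$. So if there were a path $P$ in $H - \{c_{i-1}, c_{i+1}\}$ from an internal vertex of one of these three paths to an internal vertex of another, we would get $K_4$ as a minor — contradiction. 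In particular no path in $H-\{c_{i-1},c_{i+1}\}$ connects $c_i$ or $w$ to the long arc, and moreover the component of $H-\{c_{i-1},c_{i+1}\}$ containing $c_i$ is disjoint from the one containing $w$ (else we connect internal vertices of the first two paths). Now set $B$ to be $\{c_{i-1},c_{i+1}\}$ together with the union of the vertex sets of all components of $H-\{c_{i-1},c_{i+1}\}$ that contain a vertex of $N_{C,H}(c_i)\setminus\{c_i\}$, and set $A = (V(H)\setminus B)\cup\{c_{i-1},c_{i+1}\}$. By construction $A\cap B=\{c_{i-1},c_{i+1}\}$, there are no edges between $A\setminus B$ and $B\setminus A$, $N_{C,H}(c_i)\subseteq B$, and $C\setminus\{c_i\}\subseteq A$ (the long arc and $c_i$ are in $A$ by the $\Theta$-argument; $c_{i\pm1}\in A\cap B\subseteq A$). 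Finally, $H$ being a $(1,2)$-supergraph of $H[B]$ is immediate: any edge or length-$2$ path of $H$ joining two vertices of $B$ must stay inside $B$, since leaving $B$ would require passing through the separator $\{c_{i-1},c_{i+1}\}$ and coming back, which an edge cannot do and a length-$2$ path can only do by having its midpoint in $A\setminus B$ while both endpoints in $B\setminus A$ — impossible as there are no such edges; the degenerate cases where an endpoint is $c_{i-1}$ or $c_{i+1}$ are handled by noting $c_{i\pm 1}\in B$ and any neighbour of it used by such a path is either in $B$ or in $A\setminus B$, and in the latter case the path's other endpoint, lying in $B\setminus A$, cannot be reached without a second edge through the separator.

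The main obstacle is the careful combinatorial bookkeeping in the $\Theta$-graph / $K_4$-minor argument: one must correctly identify the three internally-disjoint $c_{i-1}$–$c_{i+1}$ paths (the two via $c_i$ and $w$, and the long arc of length $q-2\ge 3$), verify they are genuinely internally disjoint (using that $C$ is induced and $q>4$, so $w\notin V(C)$), and then argue that any additional connection among their interiors — equivalently, any failure of $\{c_{i-1},c_{i+1}\}$ to separate $\{c_i\}$, $\{w\}$, and the long arc into three different pieces — produces $K_4$ as a minor. Once that separation claim is in hand, defining $A$ and $B$ and checking the $(1,2)$-supergraph property are routine.
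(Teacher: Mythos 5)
Your central $\Theta$-graph argument is sound and is essentially the same idea the paper uses (take the three internally-disjoint $c_{i-1}$--$c_{i+1}$ paths --- via $c_i$, via $w$, and along the long arc --- and observe that any further connection between their interiors in $H-\{c_{i-1},c_{i+1}\}$ yields a $K_4$-subdivision). However, there is a genuine error in how you assemble $A$ and $B$ at the end.

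You set $B$ to be $\{c_{i-1},c_{i+1}\}$ together with the vertex sets of all components of $H-\{c_{i-1},c_{i+1}\}$ that contain a vertex of $N_{C,H}(c_i)\setminus\{c_i\}$, and $A$ the rest. By your own $\Theta$-argument, the component of $H-\{c_{i-1},c_{i+1}\}$ containing $c_i$ is disjoint from the component of every $w\in N_{C,H}(c_i)\setminus\{c_i\}$. Hence $c_i$'s component is \emph{not} included in $B$, so $c_i\in A\setminus B$. But as you yourself note at the start, $c_i\in N_{C,H}(c_i)$; so under your definition $N_{C,H}(c_i)\not\subseteq B$, which is exactly what the lemma requires. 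You actually assert both ``$N_{C,H}(c_i)\subseteq B$'' and ``$c_i\in A$'' in your final paragraph, which are mutually inconsistent given $A\cap B=\{c_{i-1},c_{i+1}\}$.

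The fix is to invert the construction: let $S_1$ be the component of $H-\{c_{i-1},c_{i+1}\}$ that contains the long arc $C\setminus\{c_{i-1},c_i,c_{i+1}\}$, set $A=V(S_1)\cup\{c_{i-1},c_{i+1}\}$, and let $B$ be everything else together with $\{c_{i-1},c_{i+1}\}$. Your $\Theta$-argument (applied to an arbitrary $w\in N_{C,H}(c_i)\setminus\{c_i\}$) already shows that neither $c_i$ nor any such $w$ can lie in $S_1$; hence $N_{C,H}(c_i)\subseteq B$ and $C\setminus\{c_i\}\subseteq A$ both hold. This is exactly the paper's construction. The $(1,2)$-supergraph verification you sketch is fine once $B$ is defined this way, since all common neighbours of $c_{i-1}$ and $c_{i+1}$ lie in $B$.
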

\begin{proof}
	Let $S_1,\dots, S_k$ be the connected components of the graph obtained from $H$ by deleting $c_{i-1}$, $c_{i+1}$, and all edges incident to $c_{i-1}$ and $c_{i+1}$. Then w.l.o.g.\ we can assume that $c_j\in V(S_1)$ for all $j\notin \{i-1,i,i+1\}$. Set $A= V(S_1)\cup\{c_{i-1},c_{i+1}\}$ and $B= V(S_2)\cup\dots\cup V(S_k)\cup \{c_{i-1},c_{i+1}\}$. By construction, we have $A\cup B=V(H)$ and $A\cap B=\{c_{i-1},c_{i+1}\}$, and there are no edges between $A\setminus B$ and $B\setminus A$.
	We claim that $N_{C,H}(c_i) \cap V(S_1) =\emptyset$ as, otherwise, we obtain the following $K_4$-minor; recall that $|N_{C,H}(c_i)|>1$ and $q>4$:
	
	\medskip
	\begin{tikzpicture}[scale=2.5, node distance = 1.4cm,thick]
	\tikzstyle{dot}   =[fill=black, draw=black, circle, inner sep=0.15mm]
	\tikzstyle{vertex}=[  draw=black, circle, inner sep=1.5pt]
	\tikzstyle{dist}  =[fill=white, draw=black, circle, inner sep=2pt]
	\tikzstyle{pinned}=[draw=black, minimum size=10mm, circle, inner sep=0pt]	
	\begin{scope}[xshift=2cm] 
	\node[vertex] (cim1) at (3  ,1) [label=45: $c_{i-1}$] {};
	\node[vertex] (cip1) at (3  ,-1) [label=-45: $c_{i+1}$] {};
	\node[vertex] (ci1) at (2  ,0) {};
	\node[vertex] (ci2) at (4  ,0) {};
	\node[vertex] (v) at (0  ,0) [label=180: $c_{j}$]{};
	\draw[dotted] (v) -- (ci1); \draw[dashed] (v) -- (cim1);
	\draw[dashed] (v) -- (cip1); \draw (cip1) -- (ci1);
	\draw (cip1) -- (ci2); \draw (cim1) -- (ci1); \draw (cim1) -- (ci2);
	\end{scope}
	\end{tikzpicture} 
	\medskip
	
	\noindent Here, the dashed lines depict the path $c_{i-1},\dots,c_j,\dots,c_{i+1}$ which is $C\setminus\{c_i\}$. Further, $c_j$ is a vertex of $C$ satisfying that there exists a (shortest) path $P$ from a vertex in $N_{C,H}(c_i)$ to $c_j$ such that the internal vertices of $P$ are disjoint from $C\cup N_{C,H}(c_i)$. Note that $c_j$ exists if $N_{C,H}(c_i) \cap V(S_1)$ is not empty. We depict $P$ in the above picture with a dotted line. In particular, $P$ has length at least one, i.e., $c_j\notin N_{C,H}(c_i)$ by Lemma~\ref{lem:disjoint}. Hence we obtain indeed a $K_4$-minor.
	Consequently, no vertex of $N_{C,H}(c_i)$ is contained in $A$, and thus $N_{C,H}(c_i)\subseteq B$.

	It remains to show that $H$ is a (1,2)-supergraph of $H[B]$: It is immediate that an edge $e$ between two vertices in $B$ is present in $H$ if and only if it is present in $H[B]$. By the definition of $B$, $H$ and $H[B]$ cannot have a different number of $2$-paths between two different vertices $b_1$ and $b_2$ in $B$, unless $\{b_1,b_2\}=\{c_{i-1},c_{i+1}\}$. However, regarding the latter case, all common neighbours of $c_{i-1}$ and $c_{i+1}$ are contained in $N_{C,H}(c_i) \subseteq B$ and thus the claim also holds for those two vertices. 
\end{proof}

\begin{cor}\label{cor:diamonds_disjoint}
	Let $H$ be a biconnected $K_4$-minor-free graph containing an induced cycle $C=(c_0,\dots,c_{q-1},c_0)$.
	If $q>4$ then, for all $i\in \{0, \dots, q-1\}$, we have that at least one of $N_{C,H}(c_i)$ and $N_{C,H}(c_{i+1})$ has cardinality~$1$.
\end{cor}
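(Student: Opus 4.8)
The plan is to deduce the corollary directly from Lemma~\ref{lem:separators}, by tracking on which side of the resulting separation the relevant cycle-vertices lie. Fix $i\in\{0,\dots,q-1\}$ and suppose, aiming at the contrapositive of the stated conclusion, that $|N_{C,H}(c_i)|>1$; I will show that then $N_{C,H}(c_{i+1})=\{c_{i+1}\}$, so in particular it has cardinality $1$. Since $q>4$, we have $q\ge 5$, so $c_{i-1},c_i,c_{i+1},c_{i+2}$ are four distinct vertices of~$C$, and $c_{i-1}$ and $c_{i+2}$ are non-consecutive on~$C$; as $C$ is induced, $\{c_{i-1},c_{i+2\}}$ is therefore not an edge of~$H$.

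By Lemma~\ref{lem:separators}, there is a separation $(A,B)$ of~$H$ with $C\setminus\{c_i\}\subseteq A$, $N_{C,H}(c_i)\subseteq B$ and $A\cap B=\{c_{i-1},c_{i+1}\}$. First I locate the vertices precisely. Since $H$ has no self-loops, $N_{C,H}(c_i)=\NH{c_{i-1}}\cap\NH{c_{i+1}}$ contains neither $c_{i-1}$ nor $c_{i+1}$, hence $N_{C,H}(c_i)\subseteq B\setminus A$; in particular it is non-empty, so pick $w\in N_{C,H}(c_i)\subseteq B\setminus A$. The vertex $c_i$ is adjacent to $w\in B\setminus A$, so the no-edge condition of the separation forbids $c_i\in A\setminus B$; and $c_i\notin\{c_{i-1},c_{i+1}\}=A\cap B$; hence $c_i\in B\setminus A$. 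On the other hand $c_{i+2}\in C\setminus\{c_i\}\subseteq A$ and $c_{i+2}\notin\{c_{i-1},c_{i+1}\}$, so $c_{i+2}\in A\setminus B$.

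Now take any $w'\in N_{C,H}(c_{i+1})=\NH{c_i}\cap\NH{c_{i+2}}$. Since $w'$ is adjacent to $c_i\in B\setminus A$, it cannot lie in $A\setminus B$; since $w'$ is adjacent to $c_{i+2}\in A\setminus B$, it cannot lie in $B\setminus A$; hence $w'\in A\cap B=\{c_{i-1},c_{i+1}\}$. But $c_{i-1}$ is not adjacent to $c_{i+2}$ (noted above), so $w'\ne c_{i-1}$, forcing $w'=c_{i+1}$. Thus $N_{C,H}(c_{i+1})\subseteq\{c_{i+1}\}$, and since $c_{i+1}$ is a common neighbour of $c_i$ and $c_{i+2}$ we in fact get $N_{C,H}(c_{i+1})=\{c_{i+1}\}$, as wanted.

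There is no real obstacle here beyond bookkeeping: the only points requiring a little care are (i) checking that $N_{C,H}(c_i)$, and then $c_i$ itself, land in $B\setminus A$ rather than in the separator, which is where absence of self-loops and $q\ge5$ are used, and (ii) using that $C$ is induced to rule out the chord $\{c_{i-1},c_{i+2}\}$, which is exactly what kills the spurious candidate for $w'$. Everything else is immediate from the defining property of a separation.
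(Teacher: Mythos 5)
Your proof follows the same route as the paper's: invoke Lemma~\ref{lem:separators} to get the separation $(A,B)$ with separator $\{c_{i-1},c_{i+1}\}$, place $c_i$ in $B\setminus A$ and $c_{i+2}$ in $A\setminus B$, and then force any $w'\in N_{C,H}(c_{i+1})$ into the separator, where the chord-free property of the induced cycle kills $c_{i-1}$ and leaves $w'=c_{i+1}$. The paper phrases this as a contradiction, you phrase it as a contrapositive, but the content is identical.

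There is one incorrect justification, though the conclusion it supports is still true. You place $c_i$ in $B\setminus A$ by picking $w\in N_{C,H}(c_i)\subseteq B\setminus A$ and asserting that $c_i$ is adjacent to $w$. That assertion is false: $N_{C,H}(c_i)=\Gamma_H(c_{i-1})\cap\Gamma_H(c_{i+1})$ is the set of common neighbours of $c_{i-1}$ and $c_{i+1}$, and there is no reason for $c_i$ to be adjacent to such a $w$. Indeed, $c_i$ is adjacent to no vertex of $N_{C,H}(c_i)$ at all: it is not adjacent to itself (no self-loops), and if $c_i$ were adjacent to some $w\neq c_i$ in that set, then $\{c_{i-1},c_i,c_{i+1},w\}$ together with the $c_{i-1}$--$c_{i+1}$ arc of $C$ avoiding $c_i$ would give a $K_4$-minor. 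The correct and simpler route to $c_i\in B\setminus A$ is the one the paper uses implicitly: $c_i$ is itself a common neighbour of $c_{i-1}$ and $c_{i+1}$, so $c_i\in N_{C,H}(c_i)\subseteq B$, and $c_i\notin\{c_{i-1},c_{i+1}\}$, hence $c_i\in B\setminus A$. With that one-line repair your argument is complete and matches the paper's.
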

\begin{proof}
	Assume for contradiction that for some $i$, both, $N_{C,H}(c_i)$ and $N_{C,H}(c_{i+1})$, have cardinality greater than $1$. We invoke Lemma~\ref{lem:separators} for $C$ and $i$, which yields a separation $(A,B)$ of $H$ such that $C\setminus\{c_i\}\subseteq A$, $N_{C,H}(c_i)\subseteq B$, and $A\cap B=\{c_{i-1},c_{i+1}\}$. However, by assumption, there exists $c'\in N_{C,H}(c_{i+1})\setminus\{c_{i+1}\}$. Note further, that $c'\neq c_{i-1}$ as $q>4$ and $C$ is induced. Thus there is a path connecting $c_i\in B$ and $c_{i+2}\in A$ which does not pass through either one of $c_{i-1}$ and $c_{i+1}$ contradicting the assumption that $(A,B)$ is a separation with $A\cap B=\{c_{i-1},c_{i+1}\}$.
\end{proof}

\begin{cor}\label{cor:diamonds_disjoint*}
	Let $H$ be a biconnected $K_4$-minor-free graph containing an induced cycle $C=(c_0,\dots,c_{q-1},c_0)$.
	If $q\neq 4$ and $H$ does not have a hardness gadget then, for all $i\in \{0, \dots, q-1\}$, we have that at least one of $N_{C,H}(c_i)$ and $N_{C,H}(c_{i+1})$ has cardinality~$1$.	
\end{cor}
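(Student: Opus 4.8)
The plan is to split on the length $q$ of the induced cycle $C$. When $q>4$ there is nothing left to do: Corollary~\ref{cor:diamonds_disjoint} already establishes exactly this statement in that range, and in fact does so unconditionally, without invoking the hypothesis that $H$ has no hardness gadget. Since a cycle has length at least $3$, the only outstanding case is $q=3$, i.e.\ the case in which $C=(c_0,c_1,c_2,c_0)$ is a triangle, and here I would argue by contradiction.

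So suppose $q=3$ and that, for some index $i$, both $N_{C,H}(c_i)$ and $N_{C,H}(c_{i+1})$ have cardinality at least $2$; note that each set $N_{C,H}(c_j)$ is automatically non-empty, since it contains $c_j$, so "not of cardinality $1$" genuinely means "of cardinality at least $2$". By relabelling the triangle we may take $i=0$. Unwinding Definition~\ref{def:walkneighbourset}, $N_{C,H}(c_0)=\Gamma_H(c_2)\cap\Gamma_H(c_1)$, so $c_1$ and $c_2$ admit a common neighbour $w_0\neq c_0$; likewise $N_{C,H}(c_1)=\Gamma_H(c_0)\cap\Gamma_H(c_2)$ yields a common neighbour $w_1\neq c_1$ of $c_0$ and $c_2$.

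The next step is to check that $c_0,c_1,c_2,w_0,w_1$ are five distinct vertices. Indeed $w_0\neq c_0$ and $w_1\neq c_1$ by choice; $w_0$ is adjacent to $c_1$ and $c_2$ and $w_1$ is adjacent to $c_0$ and $c_2$, so (there being no self-loops) $w_0\notin\{c_1,c_2\}$ and $w_1\notin\{c_0,c_2\}$; and $w_0\neq w_1$, since otherwise this single vertex would be adjacent to all three of $c_0,c_1,c_2$, giving a $K_4$ subgraph and contradicting $K_4$-minor-freeness. With this in hand, the five vertices $\{c_0,c_1,c_2,w_0,w_1\}$, together with their relevant edges, form a copy of the graph $J$ of Lemma~\ref{lem:good_triangle_gadget} as a subgraph of $H$, via the identification $x\mapsto c_2$, $i\mapsto c_1$, $s\mapsto c_0$, $v\mapsto w_0$, $u\mapsto w_1$: the three triangle edges of $J$ (namely $sx$, $si$, $ix$) are precisely the edges of $C$, and the remaining four edges ($iv$, $vx$, $su$, $ux$) are present because $w_0$ is adjacent to $c_1$ and $c_2$ and $w_1$ is adjacent to $c_0$ and $c_2$. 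Since $J$ is a strong hardness gadget (Lemma~\ref{lem:good_triangle_gadget} and Definition~\ref{def:stronghardnessgadget}) and $H$ is $K_4$-minor-free, $H$ has a hardness gadget, contradicting the hypothesis. Hence at least one of $N_{C,H}(c_i)$, $N_{C,H}(c_{i+1})$ has cardinality $1$, as required.

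This argument is mostly bookkeeping: the substantive work is already packaged in Corollary~\ref{cor:diamonds_disjoint} (for $q\geq 5$) and in Lemma~\ref{lem:good_triangle_gadget} (for $q=3$). The only points that need a little care are the distinctness of the five vertices --- in particular $w_0\neq w_1$, which is exactly where $K_4$-minor-freeness is used --- and lining up the vertex labels of $J$ so that all seven of its edges are accounted for. I do not anticipate a genuine obstacle.
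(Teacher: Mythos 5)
Your proof is correct and takes the same route as the paper: the paper's own proof of this corollary simply says that the $q>4$ case follows from Corollary~\ref{cor:diamonds_disjoint} and the $q=3$ case from Lemma~\ref{lem:good_triangle_gadget}. You have merely filled in the bookkeeping (distinctness of the five vertices, the explicit vertex identification with $J$) that the paper leaves to the reader, and you have done so accurately.
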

\begin{proof}
	If $q>4$ the statement follows from Corollary~\ref{cor:diamonds_disjoint}.
	If $q=3$ then $C$ is a triangle and the
	statement follows directly from Lemma~\ref{lem:good_triangle_gadget}.
\end{proof}

\subsection{Pre-Hardness Gadgets and Obstructions}

\begin{defn}[obstruction]\label{def:obstruction}
	Let $B$ be a $K_4$-minor-free biconnected graph
	and let $C$ be an induced cycle of~$B$ whose length is not~$4$.
	We say that $B$ is an \emph{obstruction with cycle~$C$}
	if  every even-cardinality walk-neighbour-set of~$C$ in~$B$ 
	only contains vertices whose degree in~$B$ is~$2$.
	We say that $B$ is an obstruction if, for some~$C$, it is an obstruction with cycle~$C$.
	We use $\cycles(B)$ to denote $\{C \mid \mbox{$B$ is an obstruction with cycle~$C$}\}$.
\end{defn}

\begin{defn}[pre-hardness gadget]
	\label{def:prehardness}
	Let $J$ be a connected graph.
	We say that $J$ is a \emph{pre-hardness gadget} if, for every (1,2)-supergraph~$H$ of~$J$ without $K_4$-minors,
	$H$ has a hardness gadget. 
\end{defn}

Note that if $J$ is a biconnected graph that is a pre-hardness gadget, then 
every $K_4$-minor-free graph~$H$ which contains $J$ as a biconnected component  
has a hardness gadget.

It will be convenient to establish the following special case of an obstruction.

\begin{lem}\label{lem:triangles_general}
	Let $J$ be a $K_4$-minor-free  biconnected graph such that the largest induced cycle of~$J$ is a square. If $J$ contains a triangle then $J$ is either a pre-hardness gadget or an obstruction.
\end{lem}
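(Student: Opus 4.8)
The plan is to analyse the triangle inside $J$ and use the hypothesis that every induced cycle of $J$ has length at most $4$ to severely constrain the structure around the triangle. Let $C=(a,b,c,a)$ be a triangle in $J$. First I would apply Lemma~\ref{lem:good_triangle_gadget}: if any two of the three pairs $\{a,b\},\{b,c\},\{a,c\}$ have a common neighbour outside $\{a,b,c\}$, then $J$ contains the graph of Lemma~\ref{lem:good_triangle_gadget} as a subgraph, hence is a strong hardness gadget, and in particular a pre-hardness gadget, so we are done. (A subtlety: a strong hardness gadget is defined for graphs containing $J$ as a \emph{subgraph}, but a pre-hardness gadget asks only for $(1,2)$-supergraphs $H$ of $J$ without $K_4$-minor; since any $(1,2)$-supergraph is in particular a supergraph, every strong hardness gadget is a pre-hardness gadget.) So from now on I may assume that at most one of the three edges of the triangle lies in a second triangle-like configuration, i.e.\ at most one pair among the edges has an extra common neighbour.

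Next I would show that $C$ is (up to relabelling) a walk-neighbour configuration witnessing the obstruction property. Consider $C$ as an induced cycle of $J$ of length $3\neq 4$ (note triangles are always induced). By Corollary~\ref{cor:diamonds_disjoint*}, unless $J$ (or rather the $(1,2)$-supergraph $H$) has a hardness gadget, at least one of $N_{C,J}(a), N_{C,J}(b)$ has cardinality $1$, and similarly for the other consecutive pairs around the triangle. Combined with the case analysis of the previous paragraph — where we already dealt with the situation that two of these walk-neighbour-sets exceed size $1$ — we are left with the case that exactly one of the walk-neighbour-sets, say $N_{C,J}(c)=\Gamma_J(a)\cap\Gamma_J(b)$, may be large (of any odd or even size $\geq 1$), while $N_{C,J}(a)=\{b\}$ and $N_{C,J}(b)=\{a\}$ have size $1$ (recall $b\in\Gamma_J(a)\cap\Gamma_J(c)$ always, and $J$ has no self-loops, so these singletons are forced once there is no extra common neighbour). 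The heart of the argument is then: the even-cardinality walk-neighbour-sets of $C$ are among $\{N_{C,J}(a),N_{C,J}(b),N_{C,J}(c)\}$, and the first two are singletons (odd cardinality, hence irrelevant to the obstruction condition), so the only even-cardinality walk-neighbour-set that can occur is $N_{C,J}(c)$, and only when $|\Gamma_J(a)\cap\Gamma_J(b)|$ is even.

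Therefore I split into two final cases. If $|\Gamma_J(a)\cap\Gamma_J(b)|$ is odd (or more generally whenever \emph{every} walk-neighbour-set of $C$ in $J$ has odd cardinality), then the obstruction condition in Definition~\ref{def:obstruction} is satisfied vacuously — there are no even-cardinality walk-neighbour-sets — so $J$ is an obstruction with cycle $C$. If $|\Gamma_J(a)\cap\Gamma_J(b)|$ is even, I need to show that either $J$ is a pre-hardness gadget, or every vertex of $N_{C,J}(c)=\Gamma_J(a)\cap\Gamma_J(b)$ has degree $2$ in $J$; in the latter subcase $J$ is again an obstruction with cycle $C$. This last subcase is what I expect to be the main obstacle: I would argue that if some $w\in\Gamma_J(a)\cap\Gamma_J(b)$ has a neighbour $w'\notin\{a,b\}$ in $J$, then, since $J$ is $K_4$-minor-free and biconnected and has no induced cycle longer than a square, $w'$ together with the even set $\Gamma_J(a)\cap\Gamma_J(b)$ and the edge $\{a,b\}$ furnishes the configuration of Lemma~\ref{lem:adj_odd_neighbours} or Lemma~\ref{lem:diamond_square_gadget} (one of $a,b,w$ forms the apex of a ``diamond + square'' after routing $w'$ back to $a$ or $b$ — the biconnectivity of $J$ guarantees such a return path exists, and the no-long-induced-cycle hypothesis forces it to be short, creating either a larger odd-common-neighbourhood or the graph of Lemma~\ref{lem:diamond_square_gadget} as a $(1,2)$-subconfiguration). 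Making the routing of $w'$ precise — controlling its interaction with $\Gamma_J(a)\cap\Gamma_J(b)$ and with $C$ so that one genuinely lands in the hypothesis of an already-established strong/pre-hardness lemma rather than merely producing a $K_4$-minor — is the delicate bookkeeping step, and I would handle it by a minimal-counterexample / shortest-path choice for $w'$'s return route, exactly in the style of Claim~A inside the proof of Lemma~\ref{lem:main_chordal_bipartite}.
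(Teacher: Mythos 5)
Your plan follows the same skeleton as the paper's proof — case-analyse the walk-neighbour-sets of the triangle, dispatch the ``two large sets'' case via Lemma~\ref{lem:good_triangle_gadget}, and handle the remaining cases with biconnectivity plus the square-ceiling on induced cycles — but it takes a genuine shortcut in the middle. Where the paper invokes Corollary~\ref{cor:CycleOfSquares} (for the case that all three walk-neighbour-sets are singletons) and Lemma~\ref{lem:adj_odd_neighbours} (for an odd common neighbourhood of size $>1$) to conclude that $J$ is a pre-hardness gadget, you observe that whenever every walk-neighbour-set of the triangle has odd cardinality, Definition~\ref{def:obstruction} is satisfied vacuously, so $J$ is an obstruction with that triangle as its cycle and nothing further is needed. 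This is correct (a triangle is always induced and has length $3\neq 4$, so it is an admissible cycle for the definition) and is strictly simpler within the scope of this lemma; the paper's longer route buys the incidentally stronger pre-hardness conclusion in those subcases and keeps the argument aligned with lemmas it needs elsewhere.

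Two small imprecisions in the final, even-cardinality subcase, neither fatal. When some $w\in\Gamma_J(a)\cap\Gamma_J(b)$ has a neighbour $w'\notin\{a,b\}$, the shortest induced escape path (with internal vertices outside the configuration) returning to the configuration can terminate either at a triangle vertex or at another vertex of $\Gamma_J(a)\cap\Gamma_J(b)$; the second outcome gives a $K_4$-minor, which is a welcome contradiction that eliminates the subcase, not something to be ``avoided'' as your wording suggests. For the first outcome, the ``largest induced cycle is a square'' hypothesis bounds the path length to $2$ or $3$; the landing lemma for length $2$ is Lemma~\ref{lem:good_triangle_gadget} (it creates a second walk-neighbour-set of size $\geq 2$), not Lemma~\ref{lem:adj_odd_neighbours}, since the common neighbourhood here is even. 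You do correctly cite Lemma~\ref{lem:diamond_square_gadget} for the length-$3$ return. With those corrections your sketch of the delicate step lines up with the paper's actual argument.
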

\begin{proof}
	Let $(a,b,c,a)$ be a triangle of $J$, let $a_1,\dots,a_k$ be the common neighbours of $b$ and~$c$ with $a=a_1$, let $b_1,\dots,b_\ell$ be the common neighbours of $a$ and~$c$ with $b=b_1$, and let $c_1,\dots,c_m$ be the common neighbours of $a$ and $b$ with $c=c_1$. If at least two of $k,\ell$, and $m$ are at least $2$, then $J$ is a strong hardness gadget by Lemma~\ref{lem:good_triangle_gadget}. In particular, every strong hardness gadget is also a pre-hardness gadget. If $k=\ell=m=1$ then $J$ is a pre-hardness gadget by Corollary~\ref{cor:CycleOfSquares}, as follows. Let $H$ be a (1,2)-supergraph of $J$, let $q=3$, and let $C=(a,b,c,a)$. Then $\abs{N_{C,H}(a)}=\abs{N_{C,H}(b)}=\abs{N_{C,H}(c)}=1$ since $k=\ell=m=1$. Also, suppose for contradiction that there exists a walk $D=(d_a,d_b,d_c,d_a)$ with $d_a\in \NH{a}\setminus\{b,c\}$, $d_b\in \NH{b}\setminus\{a,c\}$ and $d_c\in \NH{c}\setminus\{a,b\}$. Consequently, as we do not allow self-loops in $H$, $d_a\neq a$, $d_b\neq b$ and $d_c\neq c$. Then the vertices $d_a, a, b, c$ induce a $K_4$-minor (contract the edges $\{b,d_b\}$ and $\{c, d_c\}$ to obtain a $K_4$).
	
	Hence assume w.l.o.g.\ that $k>1$ and $\ell=m=1$. If $k$ is odd then $J$ is a pre-hardness gadget by Lemma~\ref{lem:adj_odd_neighbours}. If $k$ is even and all $a_j$ have degree $2$ then $J$ is an obstruction. Otherwise, 
	for some 	$j\in\{1,\dots,k\}$,	let $a_j$ have degree at least $3$. As~$J$ is biconnected, there exists a shortest (induced) path $P$ of length at least $2$ from $a_j$ to one of the vertices $b,c$ or 
	to some $a_i$ with $i\in[k]\setminus\{j\}$. The internal vertices of $P$ are disjoint from $b,c$ and 
	$\{ a_i \mid i\in[k]\}$. If the endpoint of $P$ is one of the other $a_i$, we obtain a $K_4$-minor, hence the endpoint must be $b$ or $c$; suppose w.l.o.g.\ that it is $c$. As the largest induced cycle of $J$ is a square, $P$ has either length $2$ or $3$. In the former case, we obtain a strong (and thus also a pre-) hardness gadget by Lemma~\ref{lem:good_triangle_gadget}. In the latter case, $J$ is a strong (and thus also a pre-) hardness gadget by Lemma~\ref{lem:diamond_square_gadget}.
\end{proof}

\begin{lem}\label{lem:main_biconnected_c5}
	Let $H$ be a biconnected $K_4$-minor-free graph. If $H$ contains an induced cycle of length at least $5$ then $H$ is either an obstruction or a pre-hardness gadget.
\end{lem}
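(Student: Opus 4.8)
The plan is to take a biconnected $K_4$-minor-free graph $H$ that contains an induced cycle $C=(c_0,\dots,c_{q-1},c_0)$ with $q\ge 5$, assume that $H$ is not a pre-hardness gadget, and deduce that $H$ is an obstruction with cycle $C$. By Definition~\ref{def:obstruction}, it suffices to show that every even-cardinality walk-neighbour-set $N_{C,H}(c_i)$ contains only vertices of degree $2$ in $H$. First I would extract the structural information already available about $C$: by Lemma~\ref{lem:disjoint} the sets $N_{C,H}(c_0),\dots,N_{C,H}(c_{q-1})$ are pairwise disjoint, and by Corollary~\ref{cor:diamonds_disjoint} (using $q>4$) no two consecutive walk-neighbour-sets both have cardinality at least $2$. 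Note that each $N_{C,H}(c_i)$ is non-empty since it contains $c_i$ itself (as $C$ is a cycle and hence $c_{i-1},c_i,c_{i+1}$ are consecutive). The heart of the argument is a case analysis on the cardinalities of the $N_{C,H}(c_i)$ and on whether an individual vertex $w\in N_{C,H}(c_i)$ has a neighbour outside $C\cup N_{C,H}(c_i)$, or outside its own ``diamond.''

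The key case is when $|N_{C,H}(c_i)|\ge 2$, so this set forms a diamond between $c_{i-1}$ and $c_{i+1}$. By Lemma~\ref{lem:separators} there is a separation $(A,B)$ with $C\setminus\{c_i\}\subseteq A$, $N_{C,H}(c_i)\subseteq B$, $A\cap B=\{c_{i-1},c_{i+1}\}$, and $H$ is a $(1,2)$-supergraph of $H[B]$. If $|N_{C,H}(c_i)|$ is even, I want to show every $w\in N_{C,H}(c_i)$ has degree $2$. Suppose some such $w$ has degree $\ge 3$. Since $w$'s only forced neighbours are $c_{i-1}$ and $c_{i+1}$, it has a third neighbour $w'$; because $B$ is $2$-connected within the separation (the separator is $\{c_{i-1},c_{i+1}\}$ and both lie in $N_{C,H}(c_{i-1})\cap\cdots$ — more precisely, biconnectivity of $H$ forces a path in $H[B]$ from $w$ avoiding one of $c_{i\pm1}$), one locates a short path $P$ inside $H[B]$ from $w$ to another vertex of $C\cup N_{C,H}(c_i)$ avoiding $w$ itself. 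Tracing through the possible endpoints of $P$ and using that the \emph{longest} induced cycle consideration does not apply here (since $q\ge 5$), I would show that this either produces a $K_4$-minor (contradiction), or produces a configuration in which one of the strong/pre-hardness gadget lemmas applies: most likely Lemma~\ref{lem:diamond_square_gadget} (two squares sharing an edge with an extra chord) or Lemma~\ref{lem:zigzag}/Lemma~\ref{lem:3x1}, contradicting the assumption that $H$ is not a pre-hardness gadget. The remaining small cases are: $|N_{C,H}(c_i)|=1$ with $N_{C,H}(c_i)=\{c_i\}$, where we must handle a vertex $w=c_i$ of degree $\ge 3$; here a neighbour of $c_i$ off the cycle, combined with biconnectivity (giving a path back to $C$), yields either a $K_4$-minor or an induced cycle one can feed to Corollary~\ref{cor:CycleOfSquares} or Lemma~\ref{lem:good_triangle_gadget} to get a pre-hardness gadget. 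I also need to separately rule out the case $|N_{C,H}(c_i)|$ odd and $\ge 3$, which falls to Lemma~\ref{lem:adj_odd_neighbours} applied to the edge $\{c_{i-1},c_i\}$ or $\{c_i,c_{i+1}\}$ (a diamond with an odd number $\ge 3$ of vertices on one side), again yielding a pre-hardness gadget.

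Putting the cases together: if $H$ is not a pre-hardness gadget then every $N_{C,H}(c_i)$ has cardinality $1$ or an even cardinality with all vertices of degree $2$; and when the cardinality is $1$, i.e.\ $N_{C,H}(c_i)=\{c_i\}$, the single vertex $c_i$ either has degree $2$ (in which case it trivially satisfies the obstruction condition, which only constrains \emph{even}-cardinality sets anyway — a singleton has odd cardinality, so there is nothing to check), so the obstruction condition is met with cycle $C$. Hence $C\in\cycles(H)$ and $H$ is an obstruction.

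The step I expect to be the main obstacle is the ``degree $\ge 3$ inside an even diamond'' case: one must carefully route the path $P$ obtained from biconnectivity and check every possible landing point (another vertex of the same diamond $N_{C,H}(c_i)$; the vertex $c_{i-2}$ or $c_{i+2}$; a far cycle vertex $c_j$; or a vertex of another diamond $N_{C,H}(c_j)$), verifying in each subcase that we obtain either a genuine $K_4$-minor or one of the previously-established strong hardness gadget configurations $F$-of-type-$\TV$-with-chord, $S_{k,\ell}$-with-extra-path, etc. Keeping this case analysis exhaustive and consistent with the $(1,2)$-supergraph bookkeeping from Lemma~\ref{lem:separators} will be the delicate part; everything else is a direct appeal to the toolbox lemmas of Sections~\ref{sec:toolbox} and~\ref{sec:chordalBipartiteComps}.
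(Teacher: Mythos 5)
Your plan correctly locates the crux — a vertex $w\in N_{C,H}(c_i)$ of degree $\ge 3$ inside an even-cardinality diamond, the separation from Lemma~\ref{lem:separators}, and a path $P$ from biconnectivity — and you even flag the ensuing case analysis on ``where does $P$ land'' as the delicate part. But that case analysis cannot be made finite as you envisage it, and this is exactly where the paper's argument diverges from yours. Once Lemma~\ref{lem:separators} hands you the separation $(A,B)$ with separator $\{c_{i-1},c_{i+1}\}$, the graph $H[B]$ is a biconnected $K_4$-minor-free $(1,2)$-restriction of $H$ that can itself contain induced cycles of length $\ge 5$, so the extra path $P$ from $w$ back to $\{c_{i-1},c_{i+1}\}$ can wind through an arbitrarily nested hierarchy of diamonds and long cycles; you cannot enumerate its landing configurations against a fixed list of strong-hardness-gadget shapes. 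The paper resolves this by running a genuine induction on $|V(H)|$: it shows that $H[B]$ is biconnected (Claim~B), that $H$ is a $(1,2)$-supergraph of $H[B]$, and that obstruction-hood is inherited from $H[B]$ to $H$ (Claim~C), and then splits on the longest induced cycle length $L$ of $H[B]$ — recursing via the inductive hypothesis when $L\ge 5$, and only doing an explicit finite case analysis (via Lemma~\ref{lem:triangles_general} for $L\le 4$ with $P$ an edge, or Lemmas~\ref{lem:typeV_hard}/\ref{lem:even_diamond} for $P$ a $2$-path) in the short-path base case. Without this recursion your proof has a genuine hole.

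Two smaller points. First, you propose to ``rule out'' the case $|N_{C,H}(c_i)|$ odd and $\ge 3$ via Lemma~\ref{lem:adj_odd_neighbours}, but that lemma requires the two vertices whose common neighbourhood is counted to be \emph{adjacent}, and $c_{i-1}$, $c_{i+1}$ are not adjacent for an induced cycle of length $\ge 5$. Fortunately you do not need to rule this case out at all: Definition~\ref{def:obstruction} constrains only even-cardinality walk-neighbour-sets, so odd sets (of any size, containing vertices of any degree) are automatically compatible with being an obstruction. Second, when arguing that a walk-neighbour-set is ``non-empty because it contains $c_i$,'' note that the correct reason is simply that $c_i$ is a common neighbour of $c_{i-1}$ and $c_{i+1}$, not any assertion about $c_{i-1},c_i,c_{i+1}$ being ``consecutive'' — the latter is true but not by itself the reason $c_i\in N_{C,H}(c_i)$.
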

\begin{proof}
	We perform induction on $|V(H)|$: Let $C=(c_0,\dots,c_{q-1},c_0)$ be an induced cycle of length $q\geq 5$.  If $H$ is not an obstruction then, by Definition~\ref{def:obstruction} there exists $i$ such that $N_{C,H}(c_i)$ has even cardinality and contains a vertex of degree not equal to $2$. Assume w.l.o.g.\ that $i=1$. So we can assume that $N_{C,H}(c_1) = \{c_1^1,\dots,c_1^k\}$ where $k>0$ is even and $\deg_H(c_1^1)\neq 2$.
	
	We invoke Lemma~\ref{lem:separators} and obtain a separation $(A,B)$ of $H$ such that $C\setminus\{c_1\}  \subseteq A$, $N_{C,H}(c_1)\subseteq B$ and $A\cap B =\{c_0,c_2\}$. Furthermore, $H$ is a (1,2)-supergraph of $H[B]$. Now consider the neighbours of $c_1^1$: We have that $c_0\in \Gamma_H(c_1^1)$ and $c_2\in \Gamma_H(c_1^1)$ by the definition of $N_{C,H}(c_1)$. As $\deg_H(c_1^1)\neq 2$, there exists another neighbour $w\in \Gamma_H(c_1^1)$. By the properties of the separation $(A,B)$, for any $w\in \Gamma_H(c_1^1)\setminus \{c_0, c_2\}$, $w\in B$.
	
	\medskip
	\noindent \textbf{Claim A:} \textit{
		There is a vertex $w$ in $\Gamma_H(c_1^1)\setminus \{c_0, c_2\}$ and an induced path $P$ in $H[B]$ from $w$ to either $c_0$ or $c_2$ such that all internal vertices of $P$ are contained in $B\setminus (N_{C,H}(c_1) \cup \{c_0,c_2\})$. Furthermore, no internal vertex of $P$ is a neighbour of $c_1^1$.
	}
	\begin{claimproof}
		Let $w'\in \Gamma_H(c_1^1)\setminus \{c_0, c_2\}$.
		As $H$ is biconnected, the vertex $c_1^1$ is not an articulation point. Consequently, there exists a path $P'$ from $w'$ to $c_0$ not containing $c_1^1$ as internal vertex. We can assume $P'$ to be induced by taking possible ``shortcuts''. W.l.o.g. we have that $P'$ does not visit $c_2$ as internal vertex as, otherwise, we can just continue with $c_2$ instead of $c_0$. 
		
		Assume first that $P'$ contains a vertex in $A\setminus B$. As $(A,B)$ is a separation and $w'\in B$, we have that $P'$ is of the form 
		\[w'\stackrel{P_1}{\rightarrow}x\stackrel{P_2}{\rightarrow}c_0\,,\]
		such that $P_1$ is contained in $H[B]$ and $x\in A\cap B=\{c_0,c_2\}$. However, as $P'$ is a path that does not contain $c_2$ as internal vertex, we obtain that $P_2=\emptyset$ and $x=c_0$, contradicting the assumption.
		
		Next assume that $P'$ contains an internal vertex~$z$ in~$N_{C,H}(c_1)\setminus\{c_1^1\}$; we obtain the contradiction by identifying a $K_4$-minor in $H$ as depicted in Figure~\ref{fig:k_4-minorC5}. We have now shown that there is an induced path $P'$ in $H[B]$ from $w'$ to $c_0$ or $c_2$ such that all
		internal vertices of $P'$ are contained in $B\setminus  (N_{C,H}(c_1) \cup \{c_0,c_2\})$. Now choose $w$ to be the first neighbour of $c_1^1$ along $P'$ from $c_0$ or $c_2$, respectively, and let $P$ be the sub-path of $P'$
		going from $c_0$ or $c_2$, respectively, to $w$.
		
		\begin{figure}
			\centering
			\begin{tikzpicture}[scale=2, node distance = 1.4cm,thick]
			\tikzstyle{dot}   =[fill=black, draw=black, circle, inner sep=0.15mm]
			\tikzstyle{vertex}=[  draw=black, circle, inner sep=1.5pt]
			\tikzstyle{dist}  =[fill=white, draw=black, circle, inner sep=2pt]
			\tikzstyle{pinned}=[draw=black, minimum size=10mm, circle, inner sep=0pt]	
			\begin{scope}[xshift=2cm] 
			\node[vertex] (c0) at (3  ,1) [label=90: $c_{0}$] {};
			\node[vertex] (c2) at (3  ,-1) [label=-90: $c_{2}$] {};
			\node[vertex] (c1p) at (2  ,0) [label=180: $z$] {};
			\node[vertex] (c1) at (4  ,0) [label=0: $c_1^1$] {}; 
			\node[vertex] (w) at (3.5  ,0) [label=90: $w'$] {}; 
			
			\draw (c2) -- (c1);\draw (c2) -- (c1p);\draw (c0) -- (c1);\draw (c0) -- (c1p);
			\draw[dashed] (c0) -- (c2); \draw (w) -- (c1); \draw[dotted] (c1p) -- (w);
			\end{scope}
			\end{tikzpicture} 
			\caption{\label{fig:k_4-minorC5}The $K_4$-minor used in the proof of Claim A in Lemma~\ref{lem:main_biconnected_c5}. The dashed line depicts the remainder of the cycle, and the dotted line depicts $P'$.}
		\end{figure}
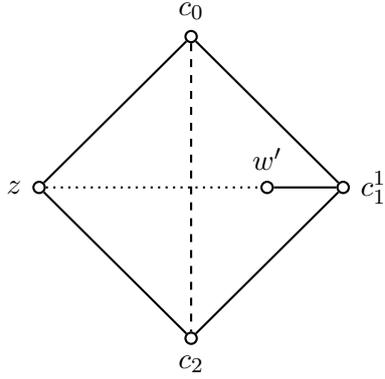
	\end{claimproof}
	
	We assume in the remainder of the proof that the Claim A holds for $c_0$; the case of $c_2$ is completely symmetric (by substituting every subsequent appearance of $c_0$ by $c_2$ and vice versa). For convenience, we also provide an illustration of our current situation in Figure~\ref{fig:cycles_biconnected}. 
	\begin{figure}
		\centering
		\begin{tikzpicture}[scale=2.5, node distance = 1.4cm,thick]
		\tikzstyle{dot}   =[fill=black, draw=black, circle, inner sep=0.15mm]
		\tikzstyle{vertex}=[  draw=black, circle, inner sep=1.5pt]
		\tikzstyle{dist}  =[fill=white, draw=black, circle, inner sep=2pt]
		\tikzstyle{pinned}=[draw=black, minimum size=10mm, circle, inner sep=0pt]	
		\begin{scope}[xshift=2cm] 
		\node[vertex] (c0) at (3  ,1) [label=90: $c_{0}$] {};
		\node[vertex] (c2) at (3  ,-1) [label=-90: $c_{2}$] {};
		\node[vertex] (c1k) at (2  ,0) [label=180: $c_1^k$] {};
		\node[vertex] (c11) at (4  ,0) [label=0: $c^1_1$] {}; 
		\node[vertex] (c12) at (3.5  ,0) [label=0: $c_1^2$] {};
		\node[vertex] (c1km1) at (2.5  ,0) [label=180: $c^{k-1}_1$] {}; 
		\node[vertex] (w) at (5  ,1) [label=90: $w$] {}; 
		
		\node[dot] (d) at (3  ,0) {}; 
		\node[dot] (d) at (2.75  ,0) {}; 
		\node[dot] (d) at (3.25  ,0) {}; 
		
		\draw (c2) -- (c12); \draw (c2) -- (c1km1); \draw (c0) -- (c12); \draw (c0) -- (c1km1);
		\draw (c2) -- (c11);\draw (c2) -- (c1k);\draw (c0) -- (c11);\draw (c0) -- (c1k);
		\draw[dashed] (c0) to[out=180,in=180,distance=2.5cm] (c2); \draw (w) -- (c11); \draw[dotted] (c0) -- (w);
		\end{scope}
		\end{tikzpicture} 
		\medskip
		\caption{\label{fig:cycles_biconnected} Illustration of cycle $C$ consisting of the dashed line and one of the vertices $c_1^i$, and path $P$ (dotted) in the proof of Lemma~\ref{lem:main_biconnected_c5}.}
	\end{figure}
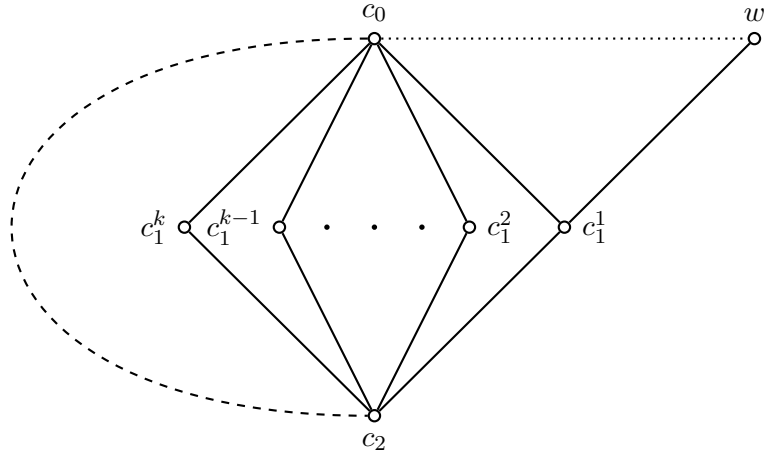
	For the remainder of the proof, we need the following observation:
	
	\medskip
	\noindent \textbf{Claim B:} \textit{
		$H[B]$ is biconnected.
	}
	\begin{claimproof}\label{clm:biconnected}
		By Menger's Theorem, we have to show that there are two internally vertex-disjoint paths (in $H[B]$) between every pair of different vertices $x$ and $y$ in $B$. As $H$ is biconnected, there are two such paths $Q_1$ and $Q_2$ connecting $x$ and $y$ in $H$. If $\{x,y\}=\{c_0,c_2\}$, then the claim follows immediately as $N_{C,H}(c_1)\subseteq B$ and $|N_{C,H}(c_i)|\geq 2$.
		
		Hence we can assume that $\{x,y\} \neq \{c_0,c_2\}$. The next step is to show that 
		at least one of $Q_1$ and $Q_2$ is fully contained in $H[B]$.
		If $\{x,y\}$ intersects $\{c_0,c_2\}$ (for example, if $y=c_2$) then this is clear because $c_0$ can only be on one of $Q_1, Q_2$.
		Otherwise, suppose that one of the paths, say $Q_2$, from $x$ to $y$, leaves $H[B]$.
		It leaves by one of the vertices in the separator $\{c_0,c_2\}$ and returns by the other. So $Q_1$ stays within $H[B]$. If $Q_2$ is also fully contained in $H[B]$ we are done. 
		
		Otherwise we have that w.l.o.g.\ (otherwise switch $c_0$ and $c_2$ and proceed symmetrically):
		\[Q_2= x \stackrel{Q^1_2}{\rightarrow} c_0 \stackrel{Q^2_2}{\rightarrow} c_2 \stackrel{Q^3_2}{\rightarrow} y \,,\]
		where $Q^1_2$ and $Q^3_2$ are in $H[B]$ and $Q^2_2$ is non-empty in $H[A\setminus B]$. Next we claim that $Q_1$ contains at most one vertex in $N_{C,H}(c_1)$ as internal vertex. Assuming otherwise, we have 
		\[Q_1= x \stackrel{Q^1_1}{\rightarrow} c_1^1 \stackrel{Q^2_1}{\rightarrow} c^2_1 \stackrel{Q^3_1}{\rightarrow} y \,,\]
		where $c_1^1\neq c_1^2 \in N_{C,H}(c_1)$. As $Q_1$ is fully contained in $H[B]$, we obtain a $K_4$-minor, unless $Q^2_1$ contains $c_0$ or $c_2$ as internal vertices: The $K_4$-minor is induced by $c_0,c_1^1,c_1^2,c_2$ --- note that $c_0$ is connected to $c_2$ by $C\setminus\{c_1\}$, and $c_1^1$ is connected to $c_1^2$ by $Q^2_1$.
		
		Thus we can assume that $Q^2_1$ contains $c_0$ or $c_2$ as internal vertices.
		\begin{itemize}
			\item If $c_2$ is an internal vertex of $Q^2_1$, then $y\neq c_2$. In this case, however, $Q_1$ and $Q_2$ share $c_2$ as internal vertex, which leads to a contradiction.
			\item If $c_0$ is an internal vertex of $Q^2_1$, then $x\neq c_0$. In this case, however, $Q_1$ and $Q_2$ share $c_0$ as internal vertex, which leads to a contradiction.
		\end{itemize}
		Consequently, $Q_1$ contains at most one vertex in $N_{C,H}(c_1)$. As $N_{C,H}(c_1)$ is of even positive cardinality, there exists hence a vertex $z \in N_{C,H}(c_1)$ which is not part of $Q_1$. Finally, this enables us to modify $Q_2$ be substituting $Q_2^2$ by the path $c_0,z,c_2$. The resulting path is fully contained in $H[B]$ and, by the previous analysis, internally vertex-disjoint from $Q_1$. This concludes the proof of Claim~B.
	\end{claimproof}
	We proceed with the following claim.
	
	\medskip
	\noindent \textbf{Claim C:} \textit{
		If $H[B]$ is an obstruction, then so is $H$.
	}
	\begin{claimproof}
		If $H[B]$ is an obstruction then it contains an induced cycle $D$ satisfying the requirements of Definition~\ref{def:obstruction}. For the sake of readability, we state those requirements explicitly:
		The graph~$H[B]$ contains an induced cycle $D=(d_0,\dots,d_{r-1},d_0)$ for some $r\neq 4$. Furthermore, we have that for all $i$, every vertex in $N_{D,H[B]}(d_i)$ has degree~$2$ in~$H[B]$, unless $|N_{D,H[B]}(d_i)|$ is odd.
		
		We claim that $H$ is an obstruction with cycle $D$: Observe that
		\[N_{D,H[B]}(d_i) = \Gamma_{H[B]}(d_{i-1})\cap \Gamma_{H[B]}(d_{i+1}) = \Gamma_H(d_{i-1})\cap \Gamma_H(d_{i+1}) = N_{D,H}(d_i)\,,\]
		where the second equality is true as $H$ is a (1,2)-supergraph of $H[B]$. Consequently, it remains to show that for all $i$ with $|N_{D,H}(d_i)|$ even, every vertex in $N_{D,H}(d_i)$ has degree $2$ in $H$. For the sake of contradiction, we assume w.l.o.g.\ that $N_{D,H}(d_1)$ is of even cardinality and contains a vertex $\hat{d}_1$ such that $\hat{d}_1$ has degree $2$ in $H[B]$, but degree at least $3$ in $H$. As the separator of $(A,B)$ is $\{c_0,c_2\}$, the only possibility for this to happen is $\hat{d}_1 = c_0$ or $\hat{d}_1 = c_2$. However, $\hat{d}_1 = c_0$ is impossible, as $c_0$ has at least three neighbours already in $H[B]$: $c_0$ is adjacent to every vertex in $N_{C,H}(c_1)$, which is of positive even cardinality (i.e., of size at least $2$), and $c_0$ is adjacent to the first vertex in the path $P$ from $c_0$ to $w$ (see Figure~\ref{fig:cycles_biconnected}).
		
		Hence the remaining possibility is $\hat{d}_1 = c_2$. Recall that $\hat{d}_1 = c_2$ has neighbours $c_1^1,\dots,c_1^k$ in $B$.
		So if there are only two of them, then $k=2$ and $|N_{C,H}(c_1)|=2$. However, as $\hat{d}_1 \in N_{D,H}(d_1)$ has degree $2$ in $H[B]$, and $c^1_1$ and $c^2_1$ are adjacent to $\hat{d}_1=c_2$ in $H[B]$, we obtain that $\{c_1^1,c^2_1\} = \{d_0,d_2\}$ --- recall that $\hat{d}_1\in \Gamma_{H[B]}(d_0)\cap\Gamma_{H[B]}(d_2)$. Finally, $N_{D,H}(d_1)$ has positive, even cardinality. Thus there exists a vertex $d'_1\neq \hat{d}_1$ in $N_{D,H}(d_1)$ which is also adjacent to $d_0$ and $d_2$. This yields the following $K_4$-minor of $H$; note that $N_{D,H}(d_1) \subseteq B$ and the dashed line is $C\setminus c_1$, which is in $A$.
		
		\medskip
		\begin{tikzpicture}[scale=1.75, node distance = 1.4cm,thick]
		\tikzstyle{dot}   =[fill=black, draw=black, circle, inner sep=0.15mm]
		\tikzstyle{vertex}=[  draw=black, circle, inner sep=1.5pt]
		\tikzstyle{dist}  =[fill=white, draw=black, circle, inner sep=2pt]
		\tikzstyle{pinned}=[draw=black, minimum size=10mm, circle, inner sep=0pt]	
		\begin{scope}[xshift=2cm] 
		\node[vertex] (c0) at (3  ,1) [label=90: $c_{0}$] {};
		\node[vertex] (c2) at (3  ,-1) [label=-90: ${c_{2}=\hat{d}_1}$] {};
		\node[vertex] (c12) at (2  ,0) [label=180: ${c_1^2 = d_0}$] {};
		\node[vertex] (c11) at (4  ,0) [label=0: ${c^1_1 = d_2}$] {}; 
		\node[vertex] (dhp) at (3  ,0) [label=45: $d'_1$] {}; 
		
		\draw (c2) -- (c12);  \draw (c0) -- (c12);
		\draw (c2) -- (c11);;\draw (c0) -- (c11);
		\draw[dashed] (c0) to[out=180,in=180,distance=3cm] (c2); \draw (dhp) -- (c11); 
		\draw (dhp) -- (c12); 
		\end{scope}
		\end{tikzpicture} 
		\medskip
		
		\noindent This concludes the proof of Claim~C.
	\end{claimproof}
	
	In what follows, we perform a case distinction along the length $L$ of the largest induced cycle in $H[B]$.
	\begin{enumerate}[(I)]
		\item $L\geq 5$. This allows us to invoke the induction hypothesis to the graph $H[B]$; note that $|V(H[B])|=|B|$ is indeed strictly smaller than $|V(H)|$ as the cycle $C$ has length at least~$5$, and thus $A$ is not empty. Furthermore, $H[B]$ is biconnected by Claim B. If $H[B]$ is a pre-hardness gadget, then so is $H$, as $H$ is a (1,2)-supergraph of~$H[B]$. If $H[B]$ is an obstruction, then so is $H$ by Claim~C.
		\item $L\leq 4$. Consider again the path $P$ in Figure~\ref{fig:cycles_biconnected}. By the assumption of this case, $P$ is either an edge or a 2-path. If $P$ is an edge, then $H[B]$ satisfies all conditions of Lemma~\ref{lem:triangles_general}. Consequently, $H[B]$ is a pre-hardness gadget or an obstruction. In the former case, we are done as $H$ is a (1,2)-supergraph of $H[B]$ and thus also a pre-hardness gadget. In the latter case, we obtain that $H$ is an obstruction as well by invoking Claim~C.
		
		Finally, assume that $P$ is a 2-path. We claim that $H$ is a pre-hardness gadget. To this end, let $H'$ be a $K_4$-minor-free (1,2)-supergraph of $H$. Then $H'$ contains the following subgraph: 
		
		\medskip
		\begin{tikzpicture}[scale=1.5, node distance = 1.4cm,thick]
		\tikzstyle{dot}   =[fill=black, draw=black, circle, inner sep=0.15mm]
		\tikzstyle{vertex}=[  draw=black, circle, inner sep=1.5pt]
		\tikzstyle{dist}  =[fill=white, draw=black, circle, inner sep=2pt]
		\tikzstyle{pinned}=[draw=black, minimum size=10mm, circle, inner sep=0pt]	
		\begin{scope}[xshift=2cm] 
		\node[vertex] (c0) at (3  ,1) [label=90: $c_{0}$] {};
		\node[vertex] (c2) at (3  ,-1) [label=-90: $c_{2}$] {};
		\node[vertex] (c1k) at (2  ,0) [label=180: $c_1^k$] {};
		\node[vertex] (c11) at (4  ,0) [label=0: $c^1_1$] {}; 
		\node[vertex] (c12) at (3.5  ,0)  {};
		\node[vertex] (c1km1) at (2.5  ,0)  {}; 
		\node[vertex] (w) at (5  ,1) [label=0: $w$] {}; 
		\node[vertex] (x) at (4,2) [label=90: $x$] {};
		
		\node[dot] (d) at (3  ,0) {}; 
		\node[dot] (d) at (2.75  ,0) {}; 
		\node[dot] (d) at (3.25  ,0) {}; 
		
		\draw (c2) -- (c12); \draw (c2) -- (c1km1); \draw (c0) -- (c12); \draw (c0) -- (c1km1);
		\draw (c2) -- (c11);\draw (c2) -- (c1k);\draw (c0) -- (c11);\draw (c0) -- (c1k);
		\draw[dashed] (c0) to[out=180,in=180,distance=3cm] (c2); \draw (w) -- (c11); 
		\draw (c0) -- (x); \draw (w) -- (x);
		\end{scope}
		\end{tikzpicture}
		\medskip
		
		\noindent In particular, we have that $c_1^1$ and $c_1^k$ have no common neighbours in $H'$ apart from $c_0$ and $c_2$: This is due to the fact that they have no further common neighbours in $H$, as otherwise $H$  has a $K_4$-minor similarly as in the proof of Claim~A, and as $H'$ is a (1,2)-supergraph, it cannot add common neighbours to vertices. Furthermore, we have that $k>0$ is even and that $c_1^1,\dots,c_1^k$ are all common neighbours of $c_0$ and $c_2$ in $H$ and thus in $H'$. We apply Lemma~\ref{lem:typeV_hard} to the subgraph of $H'$ induced by the vertices $c_1^k,c_0,x,w,c_1^1,c_2$ and obtain a hardness gadget in $H'$, unless this subgraph, call it $F$, is of type $\TV$. By the previous argument, the only possibility for $F$ being of type $\TV$ is $k$ being strictly greater than~$2$. However, as $k$ is even, we obtain a hardness gadget in $H'$ in this case as well: We found an instance of Lemma~\ref{lem:even_diamond}. 
	\end{enumerate}
\end{proof}

\subsection{$K_4$-minor-free Component Lemma}

\begin{lem}[$K_4$-minor-free Component Lemma]\label{lem:main_biconnected}
	Let $B$ be a biconnected $K_4$-minor-free graph. If $B$ is not an edge then at least one of the following is true:
	\begin{enumerate}[(a)]
		\item $B$ is a diamond.
		\item $B$ is an obstruction.
		\item $B$ is an impasse.
		\item For every $K_4$-minor-free graph $H$ containing $B$ as a biconnected component, $H$ has a hardness gadget.
	\end{enumerate}
\end{lem}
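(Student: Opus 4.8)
The plan is to prove the lemma by a straightforward case distinction that simply reassembles the structural results already established in Sections~\ref{sec:chordalBipartiteComps} and~\ref{sec:K4freeGraphs}. The three cases are: (1)~$B$ is chordal bipartite; (2)~$B$ is not chordal bipartite but has an induced cycle of length at least $5$; (3)~$B$ is not chordal bipartite and all its induced cycles have length at most $4$. These are clearly exhaustive, since a graph that is not chordal bipartite has an induced cycle whose length is not $4$, and a cycle of length neither $4$ nor $\geq 5$ is a triangle; so in case~(3) the graph $B$ necessarily contains a triangle.

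In case~(1), I would apply the Chordal Bipartite Component Lemma (Lemma~\ref{lem:main_chordal_bipartite}). Since $B$ is biconnected and not an edge, for \emph{any} $K_4$-minor-free graph $H$ having $B$ as a biconnected component (in particular $H=B$), at least one of the alternatives ``$B$ is a diamond'', ``$H$ has a hardness gadget'', ``$B$ is an impasse'' holds. The first and third alternatives depend only on $B$, so either $B$ is a diamond (conclusion~(a)), or $B$ is an impasse (conclusion~(c)), or both of these fail and then the middle alternative holds for every such $H$, which is exactly conclusion~(d).

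In case~(2), $B$ has an induced cycle of length $\geq 5$, so Lemma~\ref{lem:main_biconnected_c5} applies directly to $B$ and tells us that $B$ is either an obstruction (conclusion~(b)) or a pre-hardness gadget. In the pre-hardness-gadget case, because $B$ is biconnected, the remark following Definition~\ref{def:prehardness} shows that every $K_4$-minor-free graph $H$ containing $B$ as a biconnected component has a hardness gadget, which is conclusion~(d). In case~(3), $B$ contains a triangle and has no induced cycle longer than a square, so Lemma~\ref{lem:triangles_general} applies to $B$; it yields that $B$ is a pre-hardness gadget (hence conclusion~(d), again via the remark after Definition~\ref{def:prehardness}) or an obstruction (conclusion~(b)). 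In every branch we land on one of (a)--(d), completing the proof.

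Since this lemma is purely an aggregation of earlier results, I do not expect a genuine obstacle; the only points requiring care are the bookkeeping ones: verifying that the case split is exhaustive (in particular that ``not chordal bipartite with no long induced cycle'' forces a triangle), reading off conclusion~(d) from the $H$-dependent alternative of Lemma~\ref{lem:main_chordal_bipartite} and from the ``biconnected pre-hardness gadget'' observation, and checking that the hypothesis of Lemma~\ref{lem:triangles_general} (no induced cycle longer than a square) is exactly what case~(3) provides.
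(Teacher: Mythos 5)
Your proof is correct and takes essentially the same approach as the paper: the paper's proof also distinguishes three cases according to the largest induced cycle length ($L\geq 5$, $L\leq 4$ with a triangle, or $B$ chordal bipartite), invokes Lemmas~\ref{lem:main_biconnected_c5}, \ref{lem:triangles_general}, and~\ref{lem:main_chordal_bipartite} in the respective cases, and reads off conclusion~(d) from the pre-hardness gadget/$(1,2)$-supergraph observation exactly as you do. The only (inherited, not introduced) point worth noting is that Lemma~\ref{lem:triangles_general} is stated for graphs whose largest induced cycle is a square, whereas your case~(3), like the paper's, also admits $L=3$; the proof of Lemma~\ref{lem:triangles_general} does in fact cover that subcase, since the path $P$ there is forced to have length $2$ when $L=3$.
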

\begin{proof}
	Let $L$ be the size of the largest induced cycle of $B$. Note that $L\geq 3$ is well-defined as $B$ is biconnected, but not an edge. If $L\geq 5$ we obtain by Lemma~\ref{lem:main_biconnected_c5} that $B$ is either a pre-hardness gadget or an obstruction. In the latter case, (b) holds. In the former case, (d) holds, as every $K_4$-minor-free graph $H$ containing $B$ as a biconnected component is a (1,2)-supergraph of $B$.
	
	If $L\leq 4$ and $B$ contains a triangle, then $B$ is either a pre-hardness gadget or an obstruction by Lemma~\ref{lem:triangles_general}. Similarly as before, (b) or (d) hold.
	
	In the remaining case, $B$ is chordal bipartite and we can invoke Lemma~\ref{lem:main_chordal_bipartite}, yielding that either (a), (c) or (d) hold. 
\end{proof}

\section{$K_4$-minor-free Graphs}\label{sec:k4MinorFreeMain}
\subsection{Suitable Connectors}

\begin{defn}[suitable connector]\label{def:suitableconnector}
	Let $H$ be a graph, let $B$ be a biconnected component of $H$, and let $A\subseteq V(B)$ be a set of articulation points of $H$. We say that $(B,A)$ is a \emph{suitable connector} in $H$ if one of the following cases holds:
	\begin{myitemize}
		\item $B$ is an edge $\{a,b\}$ and $A=\{a,b\}$, or
		\item $B$ is a diamond (Definition~\ref{def:diamond}) that contains an edge $\{a,b\}$ such that $A=\{a,b\}$, or
		\item $B$ is an impasse (Definition~\ref{def:impasse}) that has a pair of connectors $(a,b)$ such that $A=\{a, b\}$, or
		\item $B$ is an obstruction (Definition~\ref{def:obstruction}).
		In this case $(B,A)$ is a suitable connector in~$H$
		if there is a cycle $C\in \cycles(B)$
		such that  
		$A=\{c \in C \mid \text{the cardinality of } N_{C,H}(c)  \text{ is even}\}$. 
		Note that $A$ could be the empty set.
		If $(B,A)$ is a suitable connector in~$H$
		then we fix a particular cycle $C(B,A)\in \cycles(B)$
		such that 
		$$A=\{c \in C(B,A) \mid \text{the cardinality of } N_{C(B,A),H}(c)  \text{ is even}\}.$$ 
		(It does not matter if there are multiple possibilities for $C(B,A)$ in $\cycles(B)$ --- we just fix one, for example, the lexicographically least one.)
	\end{myitemize}
\end{defn}

\begin{lem}\label{lem:diamondsuitableconnector}
	Let $B$ be a biconnected component in an involution-free graph $H$. 
	If $B$ is a diamond then there exists a set $A\subseteq V(B)$ of articulation points of $H$ such that $(B,A)$ is a suitable connector in $H$.
\end{lem}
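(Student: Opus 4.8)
The statement to prove is Lemma~\ref{lem:diamondsuitableconnector}: if a diamond $B$ is a biconnected component of an involution-free graph $H$, then some set $A$ of articulation points of $H$ makes $(B,A)$ a suitable connector. By Definition~\ref{def:suitableconnector}, for a diamond $B$ it suffices to exhibit an edge $\{a,b\}$ of $B$ such that both $a$ and $b$ are articulation points of $H$. Recall a diamond has $V(B) = \{s,t,x_1,\dots,x_k\}$ with $k\geq 2$, and the edges are exactly $\{s,x_i\}$ and $\{x_i,t\}$ for $i\in[k]$; so the edges of $B$ are precisely the pairs $\{s,x_i\}$ and $\{t,x_i\}$. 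Thus the plan is to show that $s$ and $t$ are articulation points of $H$, or failing that, that some $x_i$ together with $s$ (or $t$) works; and that at least one such edge has both endpoints as articulation points.

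\textbf{First step: the $x_i$'s cannot all be articulation points in the "wrong" way, and symmetry among them is controlled by involution-freeness.} Since $B$ is a \emph{maximal} biconnected subgraph and $k\geq 2$, each $x_i$ has degree exactly $2$ in $B$ (its only $B$-neighbours are $s$ and $t$). If some $x_i$ has degree $2$ in all of $H$ — i.e. $x_i$ has no neighbour outside $B$ — then consider the transposition $\sigma$ swapping $x_i$ and $x_j$ for another index $j$ such that $x_j$ also has degree $2$ in $H$ (fixing everything else). If two such indices exist, $\sigma$ is a non-trivial involution of $H$ (it preserves all edges: the only edges touching $x_i,x_j$ are $\{s,x_i\},\{t,x_i\},\{s,x_j\},\{t,x_j\}$, which $\sigma$ permutes among themselves, and no vertex outside $B$ is adjacent to $x_i$ or $x_j$). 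This contradicts involution-freeness. Hence at most one of $x_1,\dots,x_k$ has degree $2$ in $H$; all the others have a neighbour outside $B$ and are therefore articulation points of $H$ (removing such an $x_i$ disconnects its outside-neighbours from the rest, since the only paths from those neighbours into $B\setminus\{x_i\}$ would have to go through $x_i$, as $B$ is a biconnected component and $x_i$'s only $B$-neighbours are $s,t$ — more carefully: a neighbour $w\notin V(B)$ of $x_i$ lies in a different biconnected component meeting $B$ only at $x_i$, so $x_i$ separates $w$ from $s$).

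\textbf{Second step: locate the edge with both endpoints articulation points.} Since $k\geq 2$, there are at least two indices $i$ with $x_i$ an articulation point of $H$; pick one such $x_i$. Now I need one of $s$, $t$ (say $s$) also to be an articulation point, so that $\{s,x_i\}$ is the desired edge with $A=\{s,x_i\}$. The vertices $s$ and $t$ have degree $k\geq 2$ in $B$. If $s$ has no neighbour outside $B$, consider whether $t$ also has no neighbour outside $B$: in that case $B$ would be a connected component of $H$ equal to the diamond, and then the transposition swapping $s$ and $t$ is a non-trivial involution of $H$ (it maps each edge $\{s,x_i\}$ to $\{t,x_i\}$ and vice versa, fixing all $x_i$), again contradicting involution-freeness. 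So at least one of $s,t$ — say $s$ after possibly renaming — has a neighbour outside $B$, hence is an articulation point of $H$ by the same reasoning as for $x_i$. Then $\{s,x_i\}$ is an edge of the diamond $B$ with both endpoints articulation points of $H$, and setting $A=\{s,x_i\}$ shows $(B,A)$ is a suitable connector.

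\textbf{Expected main obstacle.} The delicate point is the bookkeeping around the case $k=2$, where the diamond is just a square $s\,x_1\,t\,x_2\,s$ and "swapping the $x_i$'s" and "swapping $s,t$" are both available involutions — one must check that in every configuration where neither $s$ nor $t$ is an articulation point, \emph{and} at most one $x_i$ is, a non-trivial involution of $H$ is forced. I expect a short exhaustive check: if $s,t$ both have no outside neighbour then swap $s,t$; otherwise (WLOG) $s$ is an articulation point, and then I only need one articulation point among $x_1,\dots,x_k$, which the first step guarantees since $k\geq 2$ and at most one $x_i$ has degree $2$. The rest is routine verification that the exhibited transpositions are graph automorphisms and that a degree-$2$ vertex of $B$ with an external neighbour is an articulation point of $H$.
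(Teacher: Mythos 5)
Your proof is correct and follows essentially the same approach as the paper's, which is a one-line assertion that involution-freeness forces an articulation point in each of $\{s,t\}$ and $\{x_1,\dots,x_k\}$; you supply the missing details (the swap-of-two-degree-$2$-vertices argument and the observation that a vertex of $B$ with a neighbour outside $B$ must be an articulation point). Two minor slips worth fixing: the claim that there are ``at least two'' articulation points among the $x_i$ should read ``at least one'' (when $k=2$, exactly one $x_i$ might have no external neighbour, leaving only one articulation point among them — which is all you use anyway), and the assertion that $B$ would be a connected component of $H$ when neither $s$ nor $t$ has an external neighbour is false in general (some $x_i$ may still have external neighbours), though it is harmless since the parenthetical check that the transposition $(s\,t)$ preserves all edges is correct and does not depend on that claim.
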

\begin{proof}
	If $B$ is a diamond with vertices as given in Definition~\ref{def:diamond}, then as $H$ is involution-free there exist articulation points $a\in \{s,t\}$ and $b\in \{x_1, \dots, x_k\}$. Hence, for $A=\{a,b\}$, $(B,A)$ is a suitable connector in $H$.
\end{proof}

\begin{lem}\label{lem:impassesuitableconnector}
	Let $B$ be a biconnected component in an involution-free graph $H$. If $B$ is an impasse then there exists a set $A\subseteq V(B)$ of articulation points of $H$ such that $(B,A)$ is a suitable connector in $H$. 
\end{lem}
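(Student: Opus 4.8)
The plan is to mirror the proof of Lemma~\ref{lem:diamondsuitableconnector}, but using the structural description of an impasse to pin down which vertices must be articulation points. Recall that $B$ being an impasse means there are odd positive integers $k$ and $\ell$ such that $B$ is a $(1,2)$-supergraph of $S_{k,\ell}$, with the vertices $\{v_1,y_1,\dots,y_k,v_3,z_1,\dots,z_\ell\}$ all having degree $2$ in $B$ (using the labels from Definition~\ref{def:Skl}). The pair $(v_1,v_3)$ is a pair of connectors, and so is any pair obtained by replacing $v_1$ by one of $y_1,\dots,y_k$ and/or $v_3$ by one of $z_1,\dots,z_\ell$. So the collection of ``potential left connectors'' is $Y=\{v_1,y_1,\dots,y_k\}$ (which has odd cardinality $k+1$) and the ``potential right connectors'' is $Z=\{v_3,z_1,\dots,z_\ell\}$ (odd cardinality $\ell+1$); I need to find an articulation point of $H$ in $Y$ and one in $Z$, and then set $A$ to be that pair.

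First I would observe that every vertex of $Y$ has exactly the same neighbourhood in $B$, namely $\{v_2,v_4\}$ (and symmetrically every vertex of $Z$ has neighbourhood $\{v_2,v_6\}$ in $B$), because these vertices have degree $2$ in $B$ and each lies on a common square with $v_2$ and $v_4$ (resp. $v_2$ and $v_6$). The key point is that the permutation $\sigma$ of $V(H)$ that transposes two distinct vertices $y,y'\in Y$ and fixes everything else is an automorphism of $B$: it preserves all edges inside $B$ because $y$ and $y'$ have identical neighbourhoods in $B$, and it has no effect on any vertex outside $B$. Since $H$ is involution-free, $\sigma$ cannot be an automorphism of the whole graph $H$; the only way this can fail is if $y$ or $y'$ has a neighbour outside $B$ — i.e. has degree in $H$ strictly larger than $2$ — so that $y$ and $y'$ do not have the same $H$-neighbourhood. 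Hence at least one of any two vertices of $Y$ has a neighbour outside $B$. If $|Y|\ge 2$ this forces at least one vertex of $Y$ to have a neighbour outside $B$; such a vertex is a cut vertex of $H$ (it lies in $B$ and also in another biconnected component, since its extra edge is not part of any cycle through $B$). If $|Y|=1$, i.e. $k=0$ — but wait, the definition of impasse requires $k$ odd and positive, so $|Y|=k+1\ge 2$ always holds; good, so this case does not arise. The same argument applied to $Z$ (using $\ell$ odd and positive, so $|Z|=\ell+1\ge 2$) gives an articulation point in $Z$.

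Having produced an articulation point $a\in Y$ and an articulation point $b\in Z$, I would note that $(a,b)$ is a pair of connectors of the impasse $B$ (by the remark following Definition~\ref{def:impasse}, since $a$ is $v_1$ or some $y_i$ and $b$ is $v_3$ or some $z_j$). Then, setting $A=\{a,b\}$, the pair $(B,A)$ satisfies the third clause of Definition~\ref{def:suitableconnector} — $B$ is an impasse with pair of connectors $(a,b)$ and $A=\{a,b\}$ — so $(B,A)$ is a suitable connector in $H$, as required.

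The main thing to be careful about — the only real obstacle — is the claim that a vertex of $B$ with a neighbour outside $B$ is necessarily an articulation point of $H$. This is a standard fact about biconnected components: $B$ is a maximal biconnected subgraph, so any edge from $v\in V(B)$ to a vertex $u\notin V(B)$ lies in a different biconnected component; since $B$ has at least two vertices, $v$ separates $u$ from $B\setminus\{v\}$, hence $v$ is a cut vertex. I would state this cleanly, perhaps recalling that in the block-cut tree (Definition~\ref{def:blockcuttree}) a block and an adjacent block share exactly a cut vertex. A secondary subtlety is making sure the transposition $\sigma$ really is a graph automorphism of $B$: this needs $\Gamma_B(y)=\Gamma_B(y')$, which I should derive explicitly from the degree-$2$ condition in the definition of impasse together with the fact that $B$ is a $(1,2)$-supergraph of $S_{k,\ell}$ (so the only neighbours of $y_i$ in $B$ are $v_2$ and $v_4$, and likewise for $v_1$). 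Everything else is bookkeeping.
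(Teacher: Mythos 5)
Your proposal is correct and takes essentially the same approach as the paper: the paper's proof is a one-liner that simply asserts (from involution-freeness) the existence of articulation points in $\{v_1,y_1,\dots,y_k\}$ and $\{v_3,z_1,\dots,z_\ell\}$, notes these form a pair of connectors, and concludes; you have supplied the omitted justification via the ``transposition would be a non-trivial involution'' argument, which is precisely what the paper is implicitly relying on.
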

\begin{proof}
	If $B$ is an impasse with vertices as given in Definition~\ref{def:impasse}, then as $H$ is involution-free there exist articulation points $a\in\{v_1, y_1, \dots, y_k\}$ and $b\in \{v_3, z_1, \dots, z_\ell\}$. Note that $(a,b)$ is a pair of connectors (cf.~Definition~\ref{def:impasse}) and hence, for $A=\{a,b\}$, $(B,A)$ is a suitable connector in $H$.
\end{proof}

\begin{lem}\label{lem:obstructionsuitableconnector}
	Let $B$ be a biconnected component in an involution-free graph $H$. If $B$ is an obstruction then there exists a set $A\subseteq V(B)$ of articulation points of $H$ such that $(B,A)$ is a suitable connector in $H$. 
\end{lem}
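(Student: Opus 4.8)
The statement to prove is Lemma~\ref{lem:obstructionsuitableconnector}: if $B$ is an obstruction that arises as a biconnected component of an involution-free graph $H$, then there is a set $A \subseteq V(B)$ of articulation points of $H$ making $(B,A)$ a suitable connector. By the definition of suitable connector for obstructions (Definition~\ref{def:suitableconnector}), the natural candidate for $A$ is forced: fix a cycle $C \in \cycles(B)$ and set $A = \{c \in C \mid |N_{C,H}(c)| \text{ is even}\}$. So the real content of the lemma is showing that for \emph{some} choice of $C \in \cycles(B)$, every vertex in this set $A$ is an articulation point of $H$. (The case $A = \emptyset$ is trivially fine, so the work is only needed when $A \neq \emptyset$.)

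\textbf{Key steps.} First I would recall, from Definition~\ref{def:obstruction}, that if $c \in C$ has $|N_{C,H}(c)|$ even then every vertex of $N_{C,H}(c)$ has degree $2$ in $B$; in particular the two cycle-neighbours of $c$ along $C$ and every other common neighbour of them have degree exactly $2$ in $B$. The plan is to exploit this together with the involution-freeness of $H$. The idea: suppose $c \in C$ with $|N_{C,H}(c)|$ even but $c$ is \emph{not} an articulation point of $H$; then all of $c$'s neighbours lie in $B$, so $N_{B,H}(c)$ equals $N_{C,H}(c)$ (using that $B$ is a biconnected component and no edge leaves it at $c$). Now $N_{C,H}(c)$ is a set of $\ge 2$ degree-$2$ vertices, each of which is adjacent precisely to $c_{i-1}$ and $c_{i+1}$ (the two cycle-neighbours of $c$). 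Swapping any two such vertices is an automorphism of $B$ fixing everything else. The crucial point is that these degree-$2$ vertices also have no neighbours outside $B$ (degree $2$, both neighbours already inside $B$), so such a transposition extends to an automorphism of all of $H$ — contradicting that $H$ is involution-free. Hence $c$ must be an articulation point. This handles every $c$ in the candidate set $A$ for an \emph{arbitrary} cycle $C \in \cycles(B)$, so in fact the choice of $C$ doesn't even matter here — we may fix $C(B,A)$ as in the definition.

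\textbf{Main obstacle.} The delicate step is the extension of the transposition of two walk-neighbour vertices to a genuine involution of $H$, and ruling out the degenerate sub-cases. One has to be careful that the two swapped vertices really have identical neighbourhoods \emph{in $H$}, not just in $B$ — this is where "degree $2$ in $B$" plus "biconnected component" plus "$c$ not an articulation point" combine (a degree-$2$ vertex of $B$ whose neighbours are the two articulation-point candidates $c_{i-1}, c_{i+1}$ has no neighbour outside $B$ unless it is itself an articulation point, and if it were an articulation point its two $B$-neighbours would have to carry the connection, which is fine — but we need the \emph{swapped pair} to be interchangeable). A cleaner route is: among the $\ge 2$ vertices of $N_{C,H}(c)$, at most one can be an articulation point of $H$ (two articulation points each of degree $2$ with the same two neighbours $c_{i-1},c_{i+1}$ would force a non-trivial involution of $H$ swapping them unless... actually even one suffices to obstruct), so there exist two of them that are \emph{not} articulation points, hence are pure degree-$2$ vertices of $H$ with neighbourhood exactly $\{c_{i-1},c_{i+1}\}$; transposing those two is a non-trivial involution of $H$, the desired contradiction. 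One should also double-check the small-cycle case $|C|=3$: there $c_{i-1}, c_{i+1}$ together with $c$ form a triangle, and $N_{C,H}(c) \setminus \{c_{i-1}, c_{i+1}\}$ consists of common neighbours of $c_{i-1}$ and $c_{i+1}$ distinct from $c$ — the same argument applies. With this, $A$ consists entirely of articulation points and $(B,A)$ with the fixed cycle $C(B,A)$ is a suitable connector, completing the proof.
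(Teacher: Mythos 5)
There is a genuine gap: you try to prove that for an \emph{arbitrary} $C\in\cycles(B)$, every $c\in C$ with $|N_{C,H}(c)|$ even is an articulation point, and you explicitly assert that ``the choice of $C$ doesn't even matter.'' Both claims are false. The definition of obstruction only guarantees that the vertices of $N_{C,H}(c)$ have degree~$2$ \emph{in $B$}; they may well be articulation points of $H$, with further neighbours outside~$B$. In that case transposing them is an automorphism of $B$ but not of $H$, so your first argument (that swapping ``extends to an automorphism of all of $H$'') breaks down. Your ``cleaner route'' is also flawed: the parenthetical claim that at most one vertex of $N_{C,H}(c)$ can be an articulation point (because two such would force an involution) is simply wrong --- two degree-$2$ vertices of $B$ with the same $B$-neighbourhood but different subtrees hanging off them are not interchangeable. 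The true consequence of involution-freeness points the other way: \emph{at most one} vertex of $N_{C,H}(c)$ can fail to be an articulation point (two non-articulation points would be twins of degree~$2$ in $H$, giving an involution). Since $|N_{C,H}(c)|\ge 2$, this yields that \emph{some} vertex of $N_{C,H}(c)$ is an articulation point, but it need not be $c$ itself --- $c$ may be the one exceptional non-articulation point.

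The paper closes exactly this gap with the step you dismissed: after locating an articulation point $c'\in N_{C,H}(c)$, it \emph{renames}, replacing $c$ by $c'$ on the cycle. This produces a different cycle $C'\in\cycles(B)$ (the walk-neighbour-sets are unchanged, because $c$ and $c'$ have the same two neighbours $c_{i-1},c_{i+1}$ in $B$, and common neighbours of $B$-vertices all lie inside the biconnected component $B$). The ``some choice of $C\in\cycles(B)$'' in Definition~\ref{def:suitableconnector} is not a formality but the load-bearing part of the lemma; without it there is no way to force the cycle to pass through articulation points. Your proposal proves a statement that is strictly stronger than the lemma and is, in fact, not true.
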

\begin{proof}
	If $B$ is an obstruction then there exists a cycle $C$ with $C\in \cycles(B)$. Let $c\in C$ such that $\abs{N_{C,H}(c)}$ is even. By definition of an obstruction, every vertex in $\abs{N_{C,H}(c)}$ has degree $2$ in $B$. Since $c\in N_{C,H}(c)$, $\abs{N_{C,H}(c)}\ge 2$. Therefore, as $H$ is involution-free, at least one vertex in $N_{C,H}(c)$ is an articulation point of $H$. By renaming vertices, we can assume without loss of generality that $c$ is an articulation point. Hence, for $A=\{c \in C \mid \text{the cardinality of } N_{C,H}(c)  \text{ is even}\}$, $(B,A)$ is a suitable connector in $H$, where $C(B,A)=C$.
\end{proof}

\subsection{Finding a Suitable Subtree}
In this section we will use the notion of rooted trees. Given a tree $T$ and a vertex $r$ in $T$, $(T,r)$ is a \emph{rooted tree} and the \emph{tree-order} $<_r$ induced by $r$ (on $T$) is the partial order of the vertices of $T$, where for vertices $u$ and $v$ of $T$ we have $u <_r v$ if and only if the unique path from $r$ (the root) to $v$ passes through $u$.
Such a partial order gives rise to the standard notion of child, parent, ancestor and descendant. In order to clarify which tree-order we are referring to we speak of an $r$-child, $r$-parent, $r$-ancestor and $r$-descendant when we mean child, parent, ancestor and descendant with respect to $<_r$.

For a connected graph $H$, recall the definition of the block-cut tree $\bc(H)$ from Definition~\ref{def:blockcuttree}.

\begin{defn}[$R$-open, $R$-closed]\label{def:Rclosed}
	Let $H$ be a connected graph, let $a$ be a cut vertex in $\bc(H)$, and let $R$ be a block in $\bc(H)$. If $a$ has exactly one descendant with respect to $<_R$ in $\bc(H)$ and this descendant is a block in $\bc(H)$ that is an edge, then $a$ is \emph{$R$-closed} (in $\bc(H)$).
	Otherwise, $a$ is \emph{$R$-open} (in $\bc(H)$).
\end{defn}

\begin{defn}[suitable subtree, closed]\label{def:suitabletree}
	Let $H$ be a connected graph. Let $T$ be a subtree of $\bc(H)$. 
	We say that $T$ is \emph{suitable} if it has the following properties:
	\begin{myenumerate}
		\item \label{item:suitable2} For every block $B$ in $T$, $(B,\Nof{T}(B))$ is a suitable connector in $H$ (Definition~\ref{def:suitableconnector}).
		\item \label{item:suitable3} Every cut vertex of $T$ has degree at most $2$ in $T$.	
	\end{myenumerate}
	
	A suitable subtree $T$ is \emph{closed} if there exists a block $R$ in $T$ such that every cut vertex that is a leaf in $T$ is $R$-closed in $\bc(H)$.	
\end{defn}

\begin{lem}\label{lem:choiceofRdoesnotmatter}
	Let $H$ be a connected graph and let $T$ be a suitable subtree of $\bc(H)$. Let $R$ and $R'$ be distinct blocks in $T$ and let $a$ be a cut vertex that is a leaf in $T$. If $a$ is $R$-closed in $\bc(H)$ then it is $R'$-closed in $\bc(H)$.
\end{lem}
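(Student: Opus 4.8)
The claim is essentially a statement that the notion ``$a$ is $R$-closed'' does not depend on the choice of the block $R$ used as a root, as long as $R$ lies in the suitable subtree $T$ and $a$ is a leaf of $T$. The plan is to unwind the definitions of $R$-closed (Definition~\ref{def:Rclosed}) and suitable subtree (Definition~\ref{def:suitabletree}), and to argue that the set of $<_R$-descendants of $a$ in $\bc(H)$ that matters --- namely, everything ``hanging off'' $a$ away from $T$ --- is the same regardless of whether we root at $R$ or at $R'$. Throughout, the key structural fact I would exploit is that $\bc(H)$ is a tree, so between any two of its vertices there is a unique path, and that $a$, being a leaf of the subtree $T$, has all but one of its $\bc(H)$-neighbours outside $T$.

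First I would set up notation: since $a$ is a leaf of the subtree $T$ and both $R,R'$ are blocks of $T$ with $R\neq R'$, the unique paths in $\bc(H)$ from $R$ to $a$ and from $R'$ to $a$ both have their last edge entering $a$ through the (unique) neighbour of $a$ that lies in $T$; call this neighbour $B_a$ (a block of $T$, incident to $a$ in $\bc(H)$). This is where I use property~\ref{item:suitable3} together with the fact that $T$ is a connected subtree and $a$ is a leaf: $a$ has exactly one $\bc(H)$-neighbour in $T$. Consequently, the $<_R$-descendants of $a$ are exactly the vertices $x$ of $\bc(H)$ such that the unique $R$--$x$ path in $\bc(H)$ passes through $a$, and since that path must reach $a$ via $B_a$, these are precisely the vertices in the components of $\bc(H)-a$ not containing $B_a$ --- a set that does not mention $R$ at all. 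The same description applies verbatim with $R'$ in place of $R$, since $R'$ is also separated from $a$ by the edge $\{B_a,a\}$. Hence the set of $<_R$-descendants of $a$ equals the set of $<_{R'}$-descendants of $a$.

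With that equality in hand, the conclusion is immediate: Definition~\ref{def:Rclosed} says $a$ is $R$-closed iff $a$ has exactly one $<_R$-descendant in $\bc(H)$ and that descendant is a block of $\bc(H)$ which is an edge. Since ``$a$ has exactly one $<_R$-descendant, and it is an edge-block'' is a property only of the descendant set of $a$, which we have just shown is independent of the choice between $R$ and $R'$, it follows that $a$ is $R$-closed iff $a$ is $R'$-closed. In particular, if $a$ is $R$-closed then it is $R'$-closed, as claimed.

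The only mild subtlety --- and the step I would be most careful about --- is justifying that $a$ has a \emph{unique} $\bc(H)$-neighbour inside $T$, i.e.\ that the $R$--$a$ and $R'$--$a$ paths really do enter $a$ through the same block $B_a$. This is where I would invoke that $T$ is a subtree (so connected and acyclic as a subgraph of the tree $\bc(H)$) and that $a$ is a \emph{leaf} of $T$: a leaf of a tree has exactly one neighbour in that tree, and since $R,R'\in V(T)$ while $a\in V(T)$, the $\bc(H)$-paths $R\to a$ and $R'\to a$, being paths in the tree $\bc(H)$ between vertices of the subtree $T$, lie entirely within $T$ and therefore reach $a$ via its unique $T$-neighbour $B_a$. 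Everything else is a direct bookkeeping of what ``$<_R$-descendant'' means in a tree rooted at $R$. I expect no real obstacle here; the lemma is a routine consequence of the tree structure of $\bc(H)$, recorded so that later constructions in Section~\ref{sec:k4MinorFreeMain} can speak of ``closed'' suitable subtrees without specifying a root.
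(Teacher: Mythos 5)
Your proof is correct and follows essentially the same route as the paper's: both arguments hinge on the observation that, because $a$ is a leaf of the subtree $T$ containing both $R$ and $R'$, the descendants of $a$ in $\bc(H)$ are the same for either choice of root. (Minor remark: you invoke property~\ref{item:suitable3}, but that is not needed — $a$ having a unique $T$-neighbour follows already from $a$ being a leaf of the subtree $T$.)
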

\begin{proof}
	Let $B$ be a block of $\bc(H)$. We show that $B$ is an $R$-descendant of $a$ in $\bc(H)$ if and only if it is an $R'$-descendant. From this it follows immediately that if $a$ is $R$-closed in $\bc(H)$ then it is $R'$-closed in $\bc(H)$.
	Let $B$ be an $R$-descendant of $a$. Since $a$ is a leaf of $T$ and $R$ is in $T$ it follows that $B$ is not in $T$. Since $R$, $R'$ and $a$ are all in $T$, there is a path in $T$ from $R'$ to $a$ and consequently this path does not contain $B$. Hence the unique path from $R'$ to $B$ goes through $a$, which means that $B$ is an $R'$-descendant of $a$ in $\bc(H)$. It is analogous to show that if $B$ is an $R'$-descendant of $a$ it is also an $R$-descendant.
\end{proof}

The following lemma gives the initialisation for finding a closed suitable subtree (which is then done in Lemma~\ref{lem:findsuitablesubtree}).
\begin{lem}\label{lem:initsuitablesubtree}
	Let $H$ be an involution-free, connected graph such that every biconnected component of $H$ is an edge, a diamond, an impasse or an obstruction. Then there exists a biconnected component $B_0$ and a set of articulation points $A_0\subseteq V(B_0)$ such that $(B_0, A_0)$ is a suitable connector in $H$ and hence $T(B_0)=(\{B_0\}\cup A_0, \{\{B_0, a\}\mid a\in A_0\})$ is a suitable subtree of $\bc(H)$.
\end{lem}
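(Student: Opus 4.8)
\textbf{Proof plan for Lemma~\ref{lem:initsuitablesubtree}.}

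The plan is to find a single biconnected component $B_0$ of $H$ together with an appropriate set $A_0$ of its articulation points such that $(B_0,A_0)$ is a suitable connector, and then observe that the two-level tree $T(B_0)$ is trivially a suitable subtree. The existence of such a connector for each individual type of block has already been established in Lemmas~\ref{lem:diamondsuitableconnector}, \ref{lem:impassesuitableconnector} and~\ref{lem:obstructionsuitableconnector}, so the only thing left to arrange is that we pick a block to which one of those lemmas applies, i.e.\ a block that is a diamond, an impasse, or an obstruction rather than merely an edge. The hypothesis tells us every biconnected component is one of these four kinds, so we must rule out the degenerate situation in which \emph{every} biconnected component of $H$ is an edge.

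So the first step is to dispose of the ``all edges'' case. If every biconnected component of $H$ is an edge, then $H$ is a tree. A tree on at least two vertices always has a non-trivial involution (for instance, swapping two leaves with a common neighbour, or more carefully: every tree has a centre consisting of one vertex or one edge, and if it is not a single vertex one can exhibit a non-trivial automorphism of order~$2$ --- in fact any tree with more than one vertex has a non-trivial automorphism unless it is a single edge, and a single edge has the swap), contradicting the assumption that $H$ is involution-free. Actually it is cleanest to argue directly: if $H$ has at least two vertices and is involution-free, take a longest path; its two endpoints are leaves, and if they had the same neighbour we could swap them, so $H$ must be a single vertex or we get a contradiction unless we argue more carefully --- the point to nail down is simply that an involution-free $H$ that is a forest of edges and isolated vertices cannot exist (a single edge has the transposition involution, two isolated vertices have the swap involution, one isolated vertex is fine but then $H$ has no biconnected component at all, and the lemma's hypothesis implicitly needs a biconnected component to exist for the statement to be non-vacuous). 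The safe formulation: since the conclusion asserts the existence of a biconnected component, $H$ has at least one edge; if \emph{all} biconnected components were edges then $H$ would be a forest, and a forest with at least one edge that is involution-free is impossible (the two endpoints of a longest path are leaves adjacent to the same vertex only when the path has length $2$, giving an involution; when the path has length $\geq 3$ the two pendant edges at the ends can be... ). I expect this to be the main obstacle --- not because it is deep, but because one has to phrase the forest/involution argument precisely; the paper may instead simply invoke the classification of trees or argue that a longest path yields a useful involution.

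Once the ``all edges'' case is excluded, the second step is immediate: fix a biconnected component $B_0$ that is a diamond, an impasse, or an obstruction, and apply Lemma~\ref{lem:diamondsuitableconnector}, Lemma~\ref{lem:impassesuitableconnector}, or Lemma~\ref{lem:obstructionsuitableconnector} respectively to obtain a set $A_0\subseteq V(B_0)$ of articulation points of $H$ with $(B_0,A_0)$ a suitable connector in $H$. The third step is to verify that $T(B_0)=(\{B_0\}\cup A_0,\ \{\{B_0,a\}\mid a\in A_0\})$ is a suitable subtree of $\bc(H)$: it is a subtree of $\bc(H)$ because $B_0$ is a block, each $a\in A_0$ is a cut vertex lying in $B_0$ (so $\{B_0,a\}$ is an edge of $\bc(H)$), and the resulting graph is a star, hence a tree; condition~\ref{item:suitable2} holds because $\Nof{T(B_0)}(B_0)=A_0$ and $(B_0,A_0)$ is a suitable connector by construction; condition~\ref{item:suitable3} holds because every cut vertex of $T(B_0)$ is a leaf, so has degree~$1\le 2$ in $T(B_0)$. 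This completes the proof; the only real content is the first step, and everything after it is bookkeeping that chains the type-specific lemmas together.
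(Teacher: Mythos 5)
There is a genuine gap in your first step. You try to dispose of the ``all edges'' case by asserting that a forest with at least one edge that is involution-free is impossible. This is false: involution-free (indeed asymmetric) trees exist. For instance, take the path $v_1\mbox{--}v_2\mbox{--}v_3\mbox{--}v_4\mbox{--}v_5\mbox{--}v_6$ and attach a pendant vertex $v_7$ to $v_3$. Any automorphism must fix $v_3$ (the unique degree-$3$ vertex) and must therefore permute the three subtrees hanging off $v_3$; but those subtrees have sizes $2$, $3$, $1$, so the permutation is trivial, and within each subtree (a path rooted at a fixed vertex) the automorphism is forced to be the identity. This tree is involution-free, satisfies the lemma's hypotheses (every biconnected component is an edge), and defeats your step~2, which only allows $B_0$ to be a diamond, impasse, or obstruction.

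What you have overlooked is the first clause of Definition~\ref{def:suitableconnector}: an edge $B_0=\{a,b\}$ together with $A_0=\{a,b\}$ is a suitable connector provided both endpoints are articulation points of~$H$. The paper handles the tree case with exactly this. If every biconnected component is an edge, then $H$ is a tree; since $H$ is involution-free it is not a star; hence some edge of $H$ joins two non-leaf vertices, i.e., two articulation points, and that edge furnishes the required $(B_0,A_0)$. (Concretely: a longest path in a non-star tree has length at least $3$, and its second edge joins two articulation points.) Your steps~2 and~3 --- invoking Lemmas~\ref{lem:diamondsuitableconnector}, \ref{lem:impassesuitableconnector}, \ref{lem:obstructionsuitableconnector} for the other block types, and checking that $T(B_0)$ is a star and therefore a suitable subtree --- are correct and match the paper; the fix is to add the edge-between-two-articulation-points case and to drop the false claim that the all-edge case cannot arise.
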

\begin{proof}
	First note that if all biconnected components of $H$ are edges, then there is at least one edge between articulation points as $H$ is involution-free and therefore $H$ is not a star. Therefore, $H$ contains a biconnected component $R$ that is one of the following: a diamond, an impasse, an obstruction, or an edge for which both endpoints are articulation points of $H$.
	In the first three cases we can use Lemmas~\ref{lem:diamondsuitableconnector},~\ref{lem:impassesuitableconnector} or~\ref{lem:obstructionsuitableconnector}, respectively, to obtain a suitable connector. If $B_0$ is an edge $\{a,b\}$ where both end points are articulation points, then $(B_0, \{a,b\})$ is a suitable connector. Then it is immediate that $T(B_0)$ is a suitable subtree of $\bc(H)$.
\end{proof}

\begin{lem}\label{lem:findsuitablesubtree}
	Let $H$ be an involution-free, connected graph such that every biconnected component of $H$ is an edge, a diamond, an impasse or an obstruction. Then there exists a closed suitable subtree of $\bc(H)$.
\end{lem}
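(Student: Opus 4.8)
The plan is to build a closed suitable subtree greedily, starting from the suitable subtree $T(B_0)$ guaranteed by Lemma~\ref{lem:initsuitablesubtree} and repeatedly extending it at any leaf cut vertex that is not $R$-closed (for a fixed reference block $R$ in the current tree), while preserving the two defining properties of suitability (Definition~\ref{def:suitabletree}). By Lemma~\ref{lem:choiceofRdoesnotmatter} the notion of $R$-closedness at a leaf does not depend on which block $R\in T$ we pick, so it is safe to fix one reference block $R=B_0$ once and for all; a suitable subtree $T$ is then closed exactly when every leaf cut vertex is $B_0$-closed. The process will terminate because each extension step strictly enlarges $T$ (as a subtree of the finite tree $\bc(H)$) or turns an open leaf into a closed one, and $\bc(H)$ is finite.

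First I would set up the extension step carefully. Suppose $T$ is a suitable subtree of $\bc(H)$ with reference block $R\in T$, and let $a$ be a leaf cut vertex of $T$ that is $R$-open. By Definition~\ref{def:Rclosed}, being $R$-open means that either $a$ has at least two $<_R$-descendants in $\bc(H)$, or its unique $<_R$-descendant block is not an edge. In the first sub-case there are at least two blocks $B_1,B_2$ that are $<_R$-children of $a$ in $\bc(H)$ and are not in $T$; since the full graph $H$ is involution-free, at least one of these blocks, say $B_1$, together with its articulation points, is (or contains) a suitable connector — for each of the four possible shapes of $B_1$ we invoke the appropriate one of Lemmas~\ref{lem:diamondsuitableconnector}, \ref{lem:impassesuitableconnector}, \ref{lem:obstructionsuitableconnector}, or (if $B_1$ is an edge) the observation that an edge whose endpoints are articulation points is a suitable connector; if $B_1$ is an edge with a degree-$1$ endpoint then we instead just attach $B_1$ as an edge-block hanging off $a$, which keeps $a$ a cut vertex of degree $2$ and makes $a$ no longer a leaf. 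I would then attach $B_1$ (and exactly one of its other articulation points, to keep degrees bounded by $2$) to $T$ at $a$. In the second sub-case, $a$ has a single $<_R$-descendant block $B_1\notin T$ that is not an edge, hence is a diamond, impasse, or obstruction; again Lemmas~\ref{lem:diamondsuitableconnector}--\ref{lem:obstructionsuitableconnector} give a set $A_1$ of articulation points making $(B_1,A_1)$ a suitable connector, and I attach $B_1$ and one articulation point of $A_1\setminus\{a\}$ to $T$.

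In all these cases the resulting tree $T'$ is still a subtree of $\bc(H)$, it still satisfies property~(\ref{item:suitable2}) — the new block $B_1$ satisfies $(B_1,\Gamma_{T'}(B_1))$ being a suitable connector by construction, and the old blocks are unaffected because we added exactly one new articulation point neighbour to $B_1$ — and it still satisfies property~(\ref{item:suitable3}), since $a$ now has degree $2$ in $T'$ (its $T$-neighbour plus $B_1$) and the freshly added articulation point is a leaf of degree $1$. Thus $T'$ is a suitable subtree strictly larger than $T$. Since $\bc(H)$ is finite, after finitely many extensions we reach a suitable subtree $T^*$ in which no leaf cut vertex is $B_0$-open, i.e.\ every leaf cut vertex is $B_0$-closed; by Lemma~\ref{lem:choiceofRdoesnotmatter} this is independent of the choice of reference block, so $T^*$ is closed, as required.

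The main obstacle I anticipate is the bookkeeping in the extension step that keeps property~(\ref{item:suitable3}) (degree at most $2$ at every cut vertex) intact while still guaranteeing that we make genuine progress toward closedness — in particular, handling the case where the block $B_1$ we must attach is itself an edge with a leaf endpoint (so $B_0$-closedness of $a$ is achieved without needing a suitable connector at $B_1$), versus the case where $B_1$ is a nontrivial connector block and we must pick exactly one of its articulation points to continue, being careful not to create a cut vertex of degree $3$. A secondary subtlety is ensuring that each extension is well-defined — that the blocks and articulation points we attach really are not already in $T$ — which follows because they are strict $<_R$-descendants of a leaf of $T$, hence outside $T$. I would also double-check that the termination measure (size of $T$, or number of $B_0$-open leaves) is actually monotone: attaching a connector block can in principle create new open leaves, but each such leaf is itself a fresh articulation point strictly deeper in $\bc(H)$, so an argument by induction on the height of the subtree of $\bc(H)$ below the current frontier closes the loop.
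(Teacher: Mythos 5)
Your extension step rests on an assumption that is false in general: that the leaf cut vertex $a$ at which you extend can always be put into the set $A_1$ witnessing $(B_1,A_1)$ as a suitable connector for the child block $B_1$. For diamonds this is automatic (whether $a\in\{s,t\}$ or $a\in\{x_1,\dots,x_k\}$, involution-freeness supplies an articulation point in the opposite part). But for impasses and obstructions the connector set is forced to a strict subset of $V(B_1)$ that $a$ need not lie in. Concretely, if $B_1$ is an impasse (Definition~\ref{def:impasse} with the labels of Definition~\ref{def:Skl}) then any pair of connectors comes from $\{v_1,y_1,\dots,y_k\}\times\{v_3,z_1,\dots,z_\ell\}$, yet the vertex $a$ at which $B_1$ meets $T$ could be any of $v_2,v_4,v_5,v_6$, none of which can ever be a connector; Lemma~\ref{lem:hard_impasses} only gives a hardness escape when $\deg_H(v_2)$ is even, and this lemma carries no such hypothesis. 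For an obstruction, $A_1$ is forced to equal $\{c\in C\mid |N_{C,H}(c)|\ \text{even}\}$ for some $C\in\cycles(B_1)$, and again $a$ need not be in it. In such situations no extension at $a$ preserves property~(\ref{item:suitable2}), and your greedy loop stalls on a suitable subtree that is not closed. The paper's Algorithm~\ref{alg:findsuitabletree} handles exactly this with a \emph{restart} branch: if no suitable connector $(B,A)$ with $a^*\in A$ exists among the $R$-children of $a^*$, the accumulated tree is discarded and one begins afresh with $T=(\{B\}\cup A,\dots)$ and a new reference block $R\leftarrow B$. Termination then cannot come from ``$T$ only grows''; the paper instead maintains the invariant that every cut vertex entering the frontier $A(T)$ is a strict $B_0$-descendant of the vertex just removed, so no vertex re-enters the frontier and the process terminates on a finite $\bc(H)$.

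Two smaller problems. First, when you attach $B_1$ you add ``exactly one of its other articulation points, to keep degrees bounded by~$2$,'' but property~(\ref{item:suitable2}) requires $\Gamma_T(B_1)$ to equal \emph{all} of $A_1$, and for an obstruction $|A_1|$ can exceed $2$; the degree-$\le 2$ constraint applies only to cut vertices, so adding all of $A_1$ is harmless (it is $B_1$, a block, that then has high degree), and that is what the paper's algorithm does. Second, attaching an edge-block $\{a,b\}$ whose endpoint $b$ has degree $1$ in $H$ gives $\Gamma_T(B_1)=\{a\}$, which is never a suitable connector for an edge (Definition~\ref{def:suitableconnector} requires both endpoints to be articulation points); rather than attaching such a block, one must choose a different $R$-child of $a$, which always exists when $a$ is $R$-open in an involution-free $H$.
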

\begin{proof}
	Let $B_0, A_0, T(B_0)$ be as given by Lemma~\ref{lem:initsuitablesubtree}.
	Algorithm~\ref{alg:findsuitabletree} keeps track of a suitable subtree $T$ of $\bc(H)$, a block $R$ of $T$, and the set $A(T)$ of leaves of $T$ that are cut vertices (i.e., that are articulation points of $H$). 
	
	\begin{algorithm}[H]
		\caption{}\label{alg:findsuitabletree}
		\begin{algorithmic}
			\State $T \leftarrow T(B_0)$
			\State $R \leftarrow B_0$
			\State $A(T) \leftarrow A_0$
			\While{$A(T)$ contains an $R$-open cut vertex $a^*$}
			\State // Invariant: All elements of $A(T)$ are $B_0$-descendants of $R$.
			\If {there is a suitable connector $(B,A)$ in $H$ such that $B$ is an $R$-child of $a^*$ and $a^*\in A$}
			\State // By the invariant, every element of $A\setminus\{a^*\}$ is a $B_0$-descendant of $a^*$.
			\State $V \leftarrow V(T) \cup \{B\} \cup A$
			\State $E \leftarrow E(T) \cup \{\{B, a\}\mid a\in A\}$
			\State $T \leftarrow (V, E)$
			\State $A(T) \leftarrow (A(T) \cup A)\setminus \{a^*\}$
			\Else 
			\State Choose a suitable connector $(B,A)$ in $H$ such that $B$ is an $R$-child of $a^*$ in $\bc(H)$.
			\State // By the invariant, every element of $A$ is a $B_0$-descendant of $a^*$.
			\State $V \leftarrow \{B\} \cup A$
			\State $E \leftarrow \{\{B, a\}\mid a\in A\}$
			\State $T \leftarrow (V, E)$
			\State $R\leftarrow B$
			\State $A(T) \leftarrow A$
			\EndIf
			\EndWhile
		\end{algorithmic}
	\end{algorithm}
	
	We now show that Algorithm~\ref{alg:findsuitabletree} is well-defined and finds a closed suitable subtree.\footnote{Since the graph $H$ is fixed, the running time of Algorithm~\ref{alg:findsuitabletree} is not important for us. What is important is that the algorithm gives us a (constructive) proof that such a closed suitable subtree exists.} 
	In order to show that the algorithm is well-defined note that any $R$-open cut vertex $a^*$ is an articulation point of $H$ and therefore is adjacent to at least two blocks of $\bc(H)$. At most one of these blocks can be an $R$-parent. Therefore $a^*$ has an $R$-child in $\bc(H)$. If there is such an $R$-child $B$ that is a diamond, an impasse, or an obstruction, then by Lemmas~\ref{lem:diamondsuitableconnector},~\ref{lem:impassesuitableconnector} or~\ref{lem:obstructionsuitableconnector}, respectively, there exists a suitable connector of the form $(B,A)$. If otherwise all $R$-children of $a^*$ are edges then $a^*$ has at least one such $R$-child $B=\{a^*,b\}$ for which $b$ is an articulation point (as $a^*$ is $R$-open and $H$ is involution-free). Therefore $(B, \{a^*,b\})$ is a suitable connector. Thus, the algorithm is well-defined as we can always choose a suitable connector $(B,A)$ where $B$ is an $R$-child of $a^*$.

	We next show that at any point during the algorithm, $T$ is a suitable subtree of $\bc(H)$, $R$ is a block in $T$, and $A(T)$ is the set of leaves of $T$ that are cut vertices of $\bc(H)$. First note that in the initialisation this clearly holds by Lemma~\ref{lem:initsuitablesubtree}. 
	We show that after each update these properties still hold. 
	Note that if we update $T$, $R$, and $A(T)$ as part of the else-block then $R=B$ is the only block in $T$, $\Nof{T}(B)=A$, and $(B,A)$ is a suitable connector. Thus, $T$ is a suitable subtree. Furthermore, the cut vertex leaves of $T$ are precisely the elements of $A$ and we have $A(T)=A$, as required.  
	
	If otherwise we update $T$ and $A(T)$ as part of the if-block then
	\begin{myenumerate}
		\item The block $R$ continues to be a vertex of $T$. 
		\item We add precisely one block $B$ together with the articulation points $A$ and the edges $\{\{B, a\}\mid a\in A\}$, which ensures that $\Nof{T}(B)=A$ and hence $(B,\Nof{T}(B))$ is a suitable connector.
		\item All cut vertices in $A\setminus \{a^*\}$ are leaves in $T$ and since $a^*$ was a leaf before the update, it now has degree $2$ in $T$.
	\end{myenumerate}
	Consequently, $T$ is a suitable subtree after the update. Furthermore, we remove $a^*$ from $A(T)$ as it now has degree $2$ in $T$, and we add the cut vertices $A\setminus \{a^*\}$ to $A(T)$ since they are leaves in $T$.
	
	We have established that at any point during the algorithm, $T$ is a suitable subtree of $\bc(H)$, $R$ is a block in $T$, and $A(T)$ is the set of leaves of $T$ that are cut vertices. It remains to show that Algorithm~\ref{alg:findsuitabletree} terminates (in which case it is immediate that $T$ is a closed suitable subtree). Note that with each iteration we remove a vertex $a^*$ from $A(T)$. With each iteration we may also add some vertices to $A(T)$. As noted in Algorithm~\ref{alg:findsuitabletree}, the vertices that  are added in each iteration are always $B_0$-descendants in $\bc(H)$ of the vertex $a^*$ that is deleted.   It follows immediately that Algorithm~\ref{alg:findsuitabletree} terminates as we only consider finite graphs.
\end{proof}

\subsection{Suitable Subtrees without Obstructions}

\begin{lem}\label{lem:suitableTreeHardnessNoObstruction}
	Let $H$ be a connected graph and let $T$ be a closed suitable subtree of $\bc(H)$. If no block of $T$ is an obstruction then $H$ has a hardness gadget.
\end{lem}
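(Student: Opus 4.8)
The plan is to show that under the hypothesis the tree $T$ is forced to be a path in the block-cut tree, then to apply the Chordal Bipartite Sequence Lemma (Lemma~\ref{lem:CaterpillarForBipChordalComponents}) along this path, and finally to use the ``closed'' condition to rule out the two conclusions of that lemma other than the existence of a hardness gadget. First I would pin down the shape of~$T$. Since $T$ is suitable and no block of $T$ is an obstruction, for every block $B$ in $T$ the pair $(B,\Nof{T}(B))$ is a suitable connector with $B$ an edge, a diamond, or an impasse; in each of those three cases Definition~\ref{def:suitableconnector} forces $\Nof{T}(B)$ to consist of exactly two vertices, so every block has degree~$2$ in~$T$. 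Since every cut vertex of~$T$ has degree at most~$2$ in~$T$ (Definition~\ref{def:suitabletree}), since $T$ is a tree, and since $T$ contains at least one block (closedness supplies one), $T$ must be a path
\[
a_0,\ B_1,\ a_1,\ B_2,\ \dots,\ B_q,\ a_q \qquad (q\ge 1),
\]
whose endpoints $a_0,a_q$ are cut vertices and whose blocks $B_1,\dots,B_q$ are biconnected components of~$H$. For each $i\in[q]$ the vertices $a_{i-1},a_i$ are distinct vertices of~$B_i$ with $\Nof{T}(B_i)=\{a_{i-1},a_i\}$, and (because $(B_i,\{a_{i-1},a_i\})$ is a suitable connector and $B_i$ is not an obstruction) $B_i$ is either the edge $\{a_{i-1},a_i\}$, or a diamond in which $\{a_{i-1},a_i\}$ is an edge, or an impasse with pair of connectors $(a_{i-1},a_i)$ --- exactly the alternatives required by Lemma~\ref{lem:CaterpillarForBipChordalComponents} with distinguished vertices $b_{i-1}=a_{i-1}$, $b_i=a_i$.

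Next I would cash in closedness. The only cut vertices that are leaves of $T$ are $a_0$ and~$a_q$, so closedness gives a block $R$ of $T$ for which both $a_0$ and $a_q$ are $R$-closed in~$\bc(H)$; by Lemma~\ref{lem:choiceofRdoesnotmatter} I may take $R=B_1$. Rooting $\bc(H)$ at~$B_1$, the $B_1$-parent of $a_0$ is~$B_1$ and the $B_1$-parent of $a_q$ is~$B_q$ (the unique path in $\bc(H)$ from $B_1$ to such a leaf stays inside~$T$ and its last edge is the $T$-edge at that leaf). Being $R$-closed, $a_0$ has exactly one $B_1$-descendant in~$\bc(H)$, and it is a block~$E_0$ that is an edge; as a block, $E_0$ has no cut-vertex descendants, hence is a leaf of~$\bc(H)$, so $E_0=\{a_0,b_0\}$ with $b_0$ lying in no other biconnected component. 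Therefore $b_0\notin V(B_1)$ and $b_0$ is the unique neighbour of~$a_0$ outside $V(B_1)$, giving $\abs{\NH{a_0}\setminus V(B_1)}=1$. The identical argument applied to~$a_q$ yields $\abs{\NH{a_q}\setminus V(B_q)}=1$.

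Finally I would invoke Lemma~\ref{lem:CaterpillarForBipChordalComponents} on $B_1,\dots,B_q$ with $b_i=a_i$: the hypotheses hold, and $\abs{\NH{b_0}\setminus V(B_1)}=1$ is odd, so one of the following holds --- (i) $H$ has a hardness gadget, (ii) $B_q$ is an edge or a diamond and $(\{a_{q-1}\},a_q)$ is a good start but not a good stop in~$B_q$, or (iii) $B_q$ is an impasse, $d_q$ is the unique common neighbour of $a_{q-1}$ and~$a_q$ in~$H$, and $(\{d_q\},a_q)$ is a good start but not a good stop in~$B_q$. In cases (ii) and (iii) the relevant pair has second coordinate~$a_q$, and by Definition~\ref{def:goodstartgoodstop} ``good start but not good stop in~$B_q$'' means precisely that $R_{B_q}=\NH{a_q}\setminus V(B_q)$ has even cardinality; but we have just computed that this cardinality is~$1$, a contradiction. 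Hence case (i) holds and $H$ has a hardness gadget.

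The proof is essentially bookkeeping, so I expect no deep obstacle; the step requiring the most care is the translation of the abstract ``$R$-closed'' hypothesis into the concrete cardinality statements $\abs{\NH{a_0}\setminus V(B_1)}=1$ and $\abs{\NH{a_q}\setminus V(B_q)}=1$ --- that is, verifying that the unique descendant edge-block of a leaf cut vertex is itself a leaf of~$\bc(H)$, that its far endpoint has degree~$1$ in~$H$ and lies outside the relevant block, so that the leaf cut vertex has exactly one neighbour outside that block. Once this is in place, everything else is a direct application of Lemma~\ref{lem:CaterpillarForBipChordalComponents} together with the definitions of suitable connector, suitable subtree, and good start/good stop.
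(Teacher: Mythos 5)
Your proposal is correct and follows essentially the same route as the paper's proof: deduce that $T$ is a path in $\bc(H)$ (blocks have degree $2$ by the suitable-connector definition, cut vertices have degree at most $2$), use closedness together with Lemma~\ref{lem:choiceofRdoesnotmatter} to establish $\abs{\NH{a_0}\setminus V(B_1)}=\abs{\NH{a_q}\setminus V(B_q)}=1$, and then invoke Lemma~\ref{lem:CaterpillarForBipChordalComponents}, with the oddness of $\abs{\NH{a_q}\setminus V(B_q)}$ contradicting the ``good start but not good stop'' alternatives so that only the hardness-gadget conclusion survives.
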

\begin{proof}
	As $T$ does not contain an obstruction, the degree of every block in $T$ is $2$. Together with the fact that every cut vertex has degree at most $2$, this implies that, for a non-negative integer $q$, $T$ is a path of the form $(b_0, B_1, b_1, B_2, \dots, B_q, b_q)$, where $B_1, \dots , B_q$ are blocks, i.e.~biconnected components of $H$, and $b_0, \dots, b_q$ are cut vertices, i.e.~articulation points of $H$. Since $T$ is closed it contains at least one block $R$ and therefore $q\ge 1$. Furthermore, for each $i\in [q]$, $(B_i, \{b_{i-1}, b_i\})$ is a suitable connector. And since $B_i$ is no obstruction, one of the following holds:
	\begin{myitemize}
		\item $B_i$ is an edge $\{b_{i-1}, b_i\}$, or
		\item $B_i$ is a diamond that contains the edge $\{b_{i-1},b_i\}$, or
		\item $B_i$ is an impasse such that $(b_{i-1},b_i)$ is a pair of connectors.
	\end{myitemize}	
	Since $T$ is closed, there is a block $R$ among $B_1, \dots, B_q$ such that both $b_0$ and $b_q$ are $R$-closed. By Lemma~\ref{lem:choiceofRdoesnotmatter}, $b_0$ is $B_1$-closed and $b_q$ is $B_q$-closed. It follows that $\abs{\NH{b_0} \setminus V(B_1)}=1$ and $\abs{\NH{b_q} \setminus V(B_q)}=1$.
	
	Thus, we can apply Lemma~\ref{lem:CaterpillarForBipChordalComponents} to obtain that $H$ has a hardness gadget or otherwise there exists $L_q\subseteq \Nof{B_q}(b_q)$ such that $(L_q, b_q)$ is a good start in $B_q$. Since $\NH{b_q}\setminus V(B_q)$ has odd cardinality, this means that $(L_q, b_q)$ is a good stop in $B_q$. Then Lemma~\ref{lem:CaterpillarForBipChordalComponents} ensures that $H$ has a hardness gadget in this case as well. 
\end{proof}

\subsection{Suitable Subtrees with Obstructions}

The goal of this section is to prove Lemma~\ref{lem:suitableTreeHardnessWithObstruction}, which gives a hardness gadget in a connected $K_4$-minor-free graph using a closed suitable subtree that contains an obstruction. In order to find this hardness gadget we use Lemma~\ref{lem:CycleHardness+}, which derives a hardness gadget based on the generalised cycle gadget from Definition~\ref{def:generalisedcyclegadget}. The sets of vertices $\calC_0, \dots, \calC_{q-1}$ from Lemma~\ref{lem:CycleHardness+} will correspond to the walk-neighbour-sets of a specific closed walk $W$. With Algorithms~\ref{alg:closedwalksub} and~\ref{alg:closedwalk}  we define this walk $W$ --- it is the output of Algorithm~\ref{alg:closedwalk}. In Lemmas~\ref{lem:feasibleinput},~\ref{lem:closedwalksub} and~\ref{lem:closedwalk} we establish that the algorithms are well-defined and give as output
a closed walk in~$H$ whose length is at least~$3$, and not equal to~$4$.  
In Figure~\ref{fig:closedwalk} we give an example that illustrates how $W$ is derived. In Lemmas~\ref{lem:properWalkIntersection} and~\ref{lem:noDwalk} we then show that the walk-neighbour-sets of $W$ satisfy certain properties required to apply Lemma~\ref{lem:CycleHardness+}. In the proof of Lemma~\ref{lem:suitableTreeHardnessWithObstruction} we put all the pieces together and establish the remaining necessary properties of $W$.

\begin{defn}[obstruction-free path, proper]\label{def:obstructionfreePath}
	Let $H$ be a connected graph and let $T$ be a closed suitable subtree of $\bc(H)$. A path in $T$ is \emph{obstruction-free} if it does not contain a block that is an obstruction. An obstruction-free path is \emph{proper} if its endpoints are cut vertices of $\bc(H)$. Note that it is possible that a proper obstruction-free path has length $0$. Then it is of the form $(v)$ where $v$ is a cut vertex of $\bc(H)$.
\end{defn}

\begin{defn}[$\SP{a,b}$]\label{def:pathroutine}
	Let $H$ be a graph and let $a$ and $b$ be vertices of $H$. If $a=b$ then $\SP{a,b}=(a)$.
	If $a\neq b$ then $\SP{a, b}$ is a shortest path from $a$ to $b$ in $H$.
\end{defn}

In Definition~\ref{def:pathroutine}, it is of course possible that $H$ might have multiple shortest paths from~$a$ to~$b$.
In this case, it doesn't matter which of these is chosen to be $\SP{a,b}$ --- for concreteness, the reader may assume that
$\SP{a,b}$ is the lexicographically least of these. (In fact, when we use the definition, this shortest path will turn out to be unique.)

\begin{lem}\label{lem:obstructionfreepath}
	Let $H$ be a connected graph and let $T$ be a closed suitable subtree of $\bc(H)$. For a non-negative integer $q$, let $P=(b_0, B_1, b_1, B_2, \dots, B_q, b_q)$ be a proper obstruction-free path in $T$.
	Then $\SP{b_0, b_q}$  is the unique 
	shortest path from~$a$ to~$b$ in~$H$. It
	passes through $b_0, b_1, \dots, b_q$ in order. For $i\in [q]$, the subpath of $\SP{b_0, b_q}$ that connects $b_{i-1}$ and $b_i$ is either an edge or 
	it is of the form $(b_{i-1},v,b_i)$, where $v$ is the unique common neighbour of $b_{i-1}$ and $b_i$ in $H$.
\end{lem}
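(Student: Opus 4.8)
The plan is to walk along the obstruction-free path $P$ block by block and stitch together the subpaths given by each block. First I would observe that for each $i\in[q]$, the block $B_i$ is either an edge, a diamond, or an impasse, and in each of these cases $(B_i,\{b_{i-1},b_i\})$ is a suitable connector. I would then analyse the three cases: if $B_i$ is the edge $\{b_{i-1},b_i\}$, then $b_{i-1},b_i$ are adjacent in $H$, and since $B_i$ is a biconnected component, this edge is a shortest $b_{i-1}$--$b_i$ path, and it is unique because any other $b_{i-1}$--$b_i$ path must leave $B_i$ (as $B_i$ is just an edge) and hence has length at least $2$. If $B_i$ is a diamond with edge $\{b_{i-1},b_i\}$, the same argument applies. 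If $B_i$ is an impasse with pair of connectors $(b_{i-1},b_i)$, then by the definition of impasse $b_{i-1}$ and $b_i$ are not adjacent (they have degree $2$ in $B_i$ and their neighbourhoods meet only in the unique common neighbour $d_i$), $d_i$ is their unique common neighbour in $B_i$, and since $B_i$ is a biconnected component, $d_i$ is their unique common neighbour in all of $H$; moreover there is no shorter connection, so $(b_{i-1},d_i,b_i)$ is the unique shortest $b_{i-1}$--$b_i$ path in $H$, of length $2$.

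Next I would concatenate these subpaths to form a walk $Q$ from $b_0$ to $b_q$ passing through $b_0,b_1,\dots,b_q$ in order. I would argue $Q$ is in fact a path (no repeated vertices): the internal vertices of the $i$-th segment lie in $V(B_i)$, distinct blocks $B_i,B_j$ meet in at most a single articulation point (namely a shared $b_\ell$), and the only vertices shared between consecutive segments are the cut vertices $b_\ell$ themselves, which occur once each. Then I would show $Q$ is a \emph{shortest} $b_0$--$b_q$ path in $H$: any $b_0$--$b_q$ path in $H$ must traverse the blocks $B_1,\dots,B_q$ in sequence — this is the standard fact that any path between two vertices in a connected graph must pass through every cut vertex separating them, and here $b_0,\dots,b_q$ separate $b_0$ from $b_q$ in the order dictated by the block-cut tree — so it decomposes into a $b_{i-1}$--$b_i$ subpath within (or passing through the part of $H$ attached at) $B_i$ for each $i$, and each such subpath has length at least $\dist_H(b_{i-1},b_i)$, which equals the length of our $i$-th segment by the case analysis above. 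Summing gives that $Q$ has minimum possible length, hence $\SP{b_0,b_q}$ (being a shortest path) has this length and — since equality forces each of its $b_{i-1}$--$b_i$ pieces to itself be a shortest $b_{i-1}$--$b_i$ path, and we showed those are unique — $\SP{b_0,b_q}$ must coincide with $Q$. This simultaneously yields uniqueness, the claim that it passes through $b_0,\dots,b_q$ in order, and the description of each subpath as either an edge or $(b_{i-1},v,b_i)$ with $v$ the unique common neighbour.

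I expect the main obstacle to be the rigorous justification that every $b_0$--$b_q$ path in $H$ respects the block decomposition — i.e.\ that the cut vertices $b_0,\dots,b_q$ genuinely separate $b_0$ from $b_q$ and are encountered in the order prescribed by $P$. This is intuitively clear from the structure of the block-cut tree, but making it precise requires invoking that $b_{i-1}$ and $b_i$ both lie in $B_i$, that $B_i$ is a maximal biconnected subgraph, and that removing an articulation point $b_i$ ($1\le i\le q-1$) disconnects $b_0$ from $b_q$ because $b_0$ and $b_q$ lie in different components of $H - b_i$ (one contains $B_1,\dots,B_i$ minus $b_i$ on the $b_0$-side, the other contains $B_{i+1},\dots,B_q$ minus $b_i$). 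A slightly delicate point is that $H$ may have additional subtrees hanging off the cut vertices $b_\ell$ and off the internal vertices of the impasse segments, but these are irrelevant: a shortest path never enters a pendant subtree and leaves it, and the unique-common-neighbour claims were already established using only that each $B_i$ is a biconnected \emph{component} of $H$. Finally I would note the degenerate case $q=0$, where $P=(b_0)$ and $\SP{b_0,b_0}=(b_0)$ trivially satisfies every assertion.
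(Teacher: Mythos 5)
Your proof is correct and takes essentially the same approach as the paper's: both observe that since the $B_i$ are biconnected components, any $b_0$--$b_q$ path must pass through $b_0,\dots,b_q$ in order, and then establish uniqueness of each $b_{i-1}$--$b_i$ segment by a case analysis on whether $B_i$ is an edge, diamond, or impasse. Your write-up is somewhat more detailed (explicitly checking the concatenation is a path, handling $q=0$, and noting pendant subtrees are irrelevant), but the underlying argument is the same; the only minor imprecision is that in the diamond case the parenthetical ``must leave $B_i$'' reasoning from the edge case does not literally carry over, though the conclusion that any competing path has length at least $2$ still holds trivially.
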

\begin{proof}
	Since $P$ is a proper obstruction-free path, $b_0, b_1 \dots, b_q$ are cut vertices and $B_1, \dots, B_q$ are blocks. Since $T$ is a suitable subtree and $P$ is obstruction free, for each $i\in [q]$, $(B_i, \{b_{i-1}, b_i\})$ is a suitable connector, where $B_i$ is an edge, diamond or impasse. Since $B_1, \dots, B_q$ are biconnected components, every path from $b_0$ to $b_q$ traverses $b_0, b_1 \dots, b_q$ in order. The shortest path from $b_0$ to $b_q$ is unique, if for each $i\in [k]$, the shortest path from $b_{i-1}$ to $b_i$ is unique. If $B_i$ is an edge or diamond, this is clearly the case since then the shortest path from $b_{i-1}$ to $b_i$ is an edge. If $B_i$ is an impasse then $(b_{i-1}, b_i)$ is a pair of connectors of $B_i$ and by Definition~\ref{def:impasse} there is no edge between $b_{i-1}$ and $b_i$, but there is a unique common neighbour $v$ of $b_{i-1}$ and $b_i$ in $H$ and consequently the unique shortest path from $b_{i-1}$ to $b_i$ is of the form $(b_{i-1},v,b_i)$, as required.
\end{proof}

\begin{defn}[attachment point, exit, destination]\label{def:exit}
	Let $H$ be a connected graph and let $T$ be a closed suitable subtree of $\bc(H)$.	
	Let $a$ be a cut vertex that has
	an obstruction~$B$ as a neighbour in~$T$. Then,
	since every cut vertex of $T$ has degree at most $2$, there is  a unique maximal-length proper obstruction-free path $P^*$ in $T$ starting at $a$. 
	Let $b$ be the other endpoint of $P^*$ (possibly $P^*=(a)$
	in which case $b=a$). 
	The vertex $a$ is an \emph{attachment point of $(T,B)$} if $b$ is a leaf in $T$. 
	Otherwise, $a$ is an \emph{exit of $(T,B)$}. In this case, $b$ is adjacent to a block $B'\neq B$ which is an obstruction. We say that $(b, B')$ is the \emph{destination} of $a$ in $T$.
\end{defn}

At the beginning of this section we outlined our plan to define a particular closed walk $W$. We chose the names in Definition~\ref{def:exit} since $W$ will \emph{exit} an obstruction when it encounters an exit, and it will then proceed towards the destination of that exit. The walk $W$ will not exit an obstruction when it encounters an attachment point. However, $W$ will be designed so that every even-cardinality walk-neighbour-set of $W$ contains an attachment point, and the structure that is \emph{attached} to such a point will allow us to  construct a hardness gadget.

\begin{defn}[concatenation ``$+$'']\label{def:plus}	
	Let $W=(w_0, \dots, w_k)$ and $W'=(w'_0, \dots, w'_{\ell})$ be two walks such that $w_k=w'_0$.  
	If $k=0$ then the \emph{concatenation} $W+W'$ of $W$ with $W'$ is equal to~$W'$.
	Similarly, if $\ell=0$, it is equal to~$W$.
	If both~$k$ and~$\ell$ are positive then  $W+W'=(w_0, \dots, w_{k-1},w_k,w'_1 \dots, w'_{\ell})$.	
\end{defn}

\begin{defn}[$D(C)$, $\cseg_C(a)$, $\cseg_C(a,b)$]\label{def:DofC}
	For an integer $q\ge 3$, let $C=(c_0, \dots, c_{q-1}, c_0)$ be a cycle in a graph $H$. Then $D(C)$ is the cyclic order induced by the order in which the walk $C$ traverses the vertices $\{c_0, \dots, c_{q-1}\}$. For $a\in C$, $\cseg_C(a)$ is the walk from $a$ to itself following all of the vertices of~$C$  in the order given by~$D(C)$.
	For $a,b\in C$, $\cseg_C(a,b)$ is the walk from $a$ to $b$ along~$C$ in the order given by~$D(C)$.
\end{defn}

\begin{algorithm}[H]
	\caption{{$\subwalk{T,a^*,B,\ell,a_0}$}}\label{alg:closedwalksub}
	\begin{algorithmic}
		\Require{A closed suitable subtree $T$ of $\bc(H)$ of a connected graph $H$, a cut vertex $a^*$ in $T$, an obstruction $B$ that is a block in $T$ such that $\dist_T(a^*,B)=\ell$, and an exit $a_0$ of $(T,B)$}
		\State $C \leftarrow C(B,\Nof{T}(B))$
		\State $\{a_0, \dots, a_k\} \leftarrow$ The exits of $(T,B)$ in the order of $D(C)$, starting from $a_0$
		\If {$k=0$}
		\State $W\leftarrow \cseg_C(a_0)$.
		\Else
		\State $\{(b_1, B_1), \dots, (b_k, B_k)\} \leftarrow$ The destinations of $a_1, \dots, a_k$, respectively	
		\State $W \leftarrow \cseg_C(a_0, a_1)$
		\For {$i=1, \dots, k$}
		\State $r_i \leftarrow \dist_T(B,b_i)$
		\State  \mbox{$W\leftarrow W + \SP{a_i, b_i} + \subwalk{T,a^*, B_i, \ell+r_i+1, b_i} + \SP{b_i, a_i} + \cseg_C(a_i, a_{i+1 \mod k+1})$}
		\EndFor
		\EndIf
		\Ensure{$W$}
	\end{algorithmic}
\end{algorithm}

\begin{algorithm}[H]
	\caption{{$\walk{T,B'}$}}\label{alg:closedwalk}
	\begin{algorithmic}
		\Require{A closed suitable subtree $T$ of $\bc(H)$ of a connected graph $H$, an obstruction $B'$ that is a block in $T$}
		\If{there is an exit $a^*$ of $(T,B')$}
		\State $(b^*, B^*) \leftarrow$ The destination of $a^*$	
		\State $r^* \leftarrow \dist_T(a^*, b^*)$
		\State \mbox{$W \leftarrow 
			\subwalk{T,a^*, B',1,a^*} +		
			\SP{a^*, b^*} + \subwalk{T, a^*, B^*, r^*+1, b^*} + \SP{b^*, a^*}
			$}
		\Else
		\State $W\leftarrow C(B',\Nof{T}(B'))$
		\EndIf
		\Ensure{$W$}
	\end{algorithmic}
\end{algorithm}

In Figure~\ref{fig:closedwalk} we provide some illustrations of 
a graph~$H$, a closed suitable subtree $T\in \bc(H)$, and the walk~$W$ returned by Algorithm~\ref{alg:closedwalk}.
In order to gain intuition, it is probably useful to simulate 
$\walk{T,O_1}$.
The exit~$a^*$ can be chosen to be~$a_2$ with destination $(b^*,B^*)=(b_2,O_2)$.
The variable~$r^*$ is set to~$0$. So the first part of $W$ is the output of the 
call   $ \subwalk{T,a_2, O_1,1,a_2}$.

Let's start by considering that call.
$\Gamma_T(O_1) = \{t_1,a_1,a_2\}$
and $C$ is the cycle around~$O_1$ shown in red.
The exits are  $\{a_2,a_1\}$  so 
the output $W$ of this call starts by following the red cycle clockwise from $a_2$ to~$a_1$.
In the else-clause we have $k=1$
and the destination of $a_1$ is $(b_1,O_3)$. 
The walk next takes the unique shortest path from~$a_1$ to~$b_1$.
Then there is a call to 
$\subwalk{T,a_2, O_3, \ell, b_1}$, for some value of~$\ell$ (the value of $\ell$ doesn't matter --- it is just for accounting).
The only exit of $(T, O_3)$ is~$b_1$, so this call  returns a walk around the red cycle in~$O_3$
from~$b_1$ to itself. 
Finally, the call to $ \subwalk{T,a_2, O_1,1,a_2}$
takes the unique shortest path back from~$b_1$ to~$a_1$ and finishes the red cycle in~$O_1$ clock-wise, back to~$a_2$.
Thus, the output of $ \subwalk{T,a_2, O_1,1,a_2}$
is a closed walk from~$a_2$ to itself that covers all of the red edges in the picture, apart from the triangle in~$O_2$.

This is concatenated with
$P_H(a_2,b_2) = (a_2)$ (which does nothing).
Then it is concatenated with the output of  a call to 
$\subwalk{T,a_2, O_2,1,a_2}$.
Now $(O_2,\{a_2\})$ 
is a suitable connector in~$H$ with $C(O_2,\{a_2\})$ equal to the red triangle in~$O_2$, so $C$ is assigned to be this triangle.
The cut-vertex~$a_2$ is the only exit of~$C$,
so this call returns the walk  from~$a_2$ to itself around~$C$.
Concatenating $P_H(a_2,b_2)$ with this does not change the output.
The entire walk is coloured in red.

\begin{figure}
	\vspace{-1cm}
	\centering
	\begin{tikzpicture}[xscale=1, yscale=0.75, node distance = 1.4cm,thick]
	\tikzstyle{dot}   =[fill=black, draw=black, circle, inner sep=0.15mm]
	\tikzstyle{vertex}=[  draw=black, circle, inner sep=1.5pt]
	\tikzstyle{apoint}=[  draw=black, circle, fill, inner sep=1.5pt]
	\tikzstyle{dist}  =[fill=white, draw=black, circle, inner sep=2pt]
	\tikzstyle{pinned}=[draw=black, minimum size=10mm, circle, inner sep=0pt]	
	\node[vertex] (a) at (0  ,2) {};
	\node[vertex] (b) at (1  ,3) {};
	\node[apoint] (c) at (2  ,3.5) [label=270:\tiny $c_1$] {};
	\node[vertex] (d) at (2  ,2.5) {};
	\node[vertex] (e) at (3  ,3) {};
	\node[vertex] (f) at (4  ,2) {};
	\node[apoint] (g) at (4.5  ,1) [label=180:\tiny $c_2$] {};
	\node[vertex] (h) at (3.5  ,1) {};
	\node[vertex] (i) at (4  ,0) {};
	\node[vertex] (j) at (3  ,-1) {};
	\node[vertex] (k) at (2  ,-0.5) {};
	\node[apoint] (l) at (2  ,-1.5) [label=90:\tiny $c_3$] {};
	\node[vertex] (m) at (1  ,-1) {};
	\node[vertex] (n) at (0  ,0) {};
	
	\draw (a) -- (b); \draw (c) -- (b); \draw (d) -- (b);
	\draw (c) -- (e); \draw (d) -- (e); \draw (e) -- (f);
	\draw (f) -- (g); \draw (f) -- (h); \draw (h) -- (i);
	\draw (g) -- (i); \draw (i) -- (j); \draw (j) -- (k);
	\draw (j) -- (l); \draw (k) -- (m); \draw (l) -- (m);
	\draw (m) -- (n); \draw (a) -- (n); 
	
	\node[apoint] (o) at (5.5  ,1) [label=225:\tiny $c_4$] {};
	\node[apoint] (p) at (5.5  ,2)  {};
	\node[vertex] (q) at (4.5  ,2) {};
	\node[vertex] (r) at (5  ,1.5) {};
	\node[apoint] (s) at (6.5  ,1)  [label=-45:\tiny $c_5$] {};
	\node[vertex] (t) at (7.5  ,1) {};
	\node[apoint] (u) at (8.5  ,1) [label=225:\tiny $c_{6}$] {};
	\node[vertex] (v) at (8.5  ,2) {};
	\node[vertex] (w) at (7.5  ,2) {};
	\node[vertex] (x) at (6.5  ,2) {};
	\node[vertex] (y) at (7  ,1.5) {};
	\node[vertex] (z) at (8  ,1.5) {};
	
	\draw (g) -- (o); \draw (o) -- (p); \draw (p) -- (q);
	\draw (q) -- (r); \draw (q) -- (g); \draw (r) -- (o);
	\draw (u) -- (t); \draw (t) -- (s); \draw (o) -- (s);
	\draw (u) -- (v); \draw (v) -- (w); \draw (w) -- (x);
	\draw (t) -- (y); \draw (y) -- (x); \draw (s) -- (x);
	\draw (t) -- (z); \draw (z) -- (v); \draw (t) -- (w);
	
	\node[apoint] (1) at (9.5  ,1) [label=0:\tiny $c_{7}$] {};
	\node[vertex] (2) at (10  ,2) {};
	\node[apoint] (3) at (11  ,3) {};
	\node[apoint] (4) at (13  ,3)  {};
	\node[apoint] (5) at (12  ,2) {};
	\node[vertex] (6) at (13  ,2) {};
	\node[apoint] (7) at (14  ,2)  {};
	\node[vertex] (8) at (13  ,1) {};
	\node[vertex] (9) at (13  ,0) {};
	\node[vertex] (10) at (12  ,0.5) {};
	\node[apoint] (11) at (12  ,-0.5) [label=90:\tiny $c_{8}$] {};
	\node[vertex] (12) at (11  ,0) {};
	\node[vertex] (13) at (10  ,0) {};
	\node[vertex] (14) at (10.5  ,1) {};
	
	\draw (u) -- (1); \draw (1) -- (2); \draw (2) -- (3);
	\draw (6) -- (4); \draw (5) -- (4); \draw (4) -- (3);
	\draw (4) -- (7); \draw (5) -- (8); \draw (6) -- (8);
	\draw (7) -- (8); \draw (8) -- (9); \draw (9) -- (10);
	\draw (12) -- (11); \draw (12) -- (10); \draw (11) -- (9);
	\draw (12) -- (13); \draw (13) -- (14); \draw (14) -- (2);
	\draw (13) -- (1);

	\node[vertex] (c1) at (2  ,4.5) {};
	\node[vertex] (l1) at (1  ,-2) {};
	\node[apoint] (l2) at (3  ,-2)   {};
	\node[vertex] (l3) at (2  ,-2.5) {};
	\node[vertex] (111) at (12.666  ,-0.5) {};
	\node[vertex] (112) at (12.666  ,-1.5) {};
	\node[apoint] (113) at (12  ,-1.5) [label=135:\tiny $c_{9}$] {};
	\node[apoint] (114) at (12  ,-2.5) [label=225:\tiny $c_{10}$] {};
	\node[apoint] (115) at (12.666  ,-2.5) {};
	\node[vertex] (116) at (12.333  ,-3) {};
	\node[vertex] (117) at (12.666  ,-3.5) {};
	\node[apoint] (118) at (12  ,-3.5) [label=180:\tiny $c_{11}$] {};
	\node[vertex] (119) at (12  ,-4.5) {};
	
	\draw (c) -- (c1); \draw (l1) -- (l); \draw (l2) -- (l);
	\draw (11) -- (111); \draw (l2) -- (l3); \draw (l1) -- (l3);
	\draw (11) -- (113); \draw (111) -- (112); \draw (112) -- (113);
	\draw (116) -- (114); \draw (115) -- (114); \draw (114) -- (113);
	\draw (114) -- (118); \draw (118) -- (117); \draw (117) -- (116);
	\draw (119) -- (118); \draw (115) -- (117); \draw (l2) -- (l1);

	\node[vertex] (l4) at (3  ,-3) {};
	\node[apoint] (l5) at (4  ,-3) {};
	\node[apoint] (l6) at (4  ,-2) {};
	\node[vertex] (l7) at (5  ,-2) {};
	\node[vertex] (l8) at (3.5  ,-2.5) {};

	\draw (l2) -- (l4); \draw (l2) -- (l6); \draw (l4) -- (l8);
	\draw (l5) -- (l6); \draw (l7) -- (l6); \draw (l6) -- (l8);
	\draw (l4) -- (l5); 
	
	\node[vertex] (p1) at (5.5  ,3) {};
	\draw (p) -- (p1); 
	
	\node[vertex] (n1) at (1  ,0) {};
	\node[vertex] (a1) at (1  ,2) {};
	\draw (n) -- (n1); \draw (a) -- (a1);
	\draw (a1) -- (n1);

	\node[vertex] (3a) at (11  ,4) {};
	\node[vertex] (4a) at (13  ,4) {};
	\node[vertex] (7a) at (14  ,3) {};
	\node[vertex] (5b) at (11  ,2) {};
	\node[apoint] (5a) at (11.5  ,2) {};
	\draw (3) -- (3a); \draw (4) -- (4a); \draw (5) -- (5a);
	\draw (5a) -- (5b); \draw (7) -- (7a);
	
	\node[vertex] (l9) at (5  ,-3) {};
	\node[vertex] (1111) at (14  ,-2.5) {};
	\node[apoint] (1110) at (13.333  ,-2.5) {};
	\draw (l5) -- (l9); \draw (115) -- (1110); \draw (1110) -- (1111);
	
	\node[vertex] (no2) at (5.5  ,0) {};
	\node[vertex] (ns2) at (6.5  ,0) {};
	\node[vertex] (nu2) at (8.5  ,0) {};
	\node[vertex] (1132) at (11  ,-1.5) {};
	\node[vertex] (1142) at (11  ,-2.5) {};
	\draw (o) -- (no2); \draw (ns2) -- (s); \draw (nu2) -- (u);
	\draw (113) -- (1132); \draw (114) -- (1142);
	
	\node (o1) at (2,1) {$O_1$};
	\node (e2) at (1.5,4) {$E_1$};
	\node (o2) at (2,-3) {$O_2$};
	\node (d1) at (5,2.5) {$D_1$};
	\node (i1) at (7.5,2.5) {$I_1$};
	\node (e2) at (6,0.5) {$E_2$};
	\node (e3) at (9,0.5) {$E_3$};
	\node (o3) at (11.5,1) {$O_3$};
	\node (d2) at (13,-1) {$D_2$};
	\node (e4) at (11.5,-2) {$E_4$};
	\node (e4) at (13,-3) {$D_3$};
	\node (e5) at (11.5,-4) {$E_5$};
	
	\begin{scope}[yshift=-7cm, xshift=-1cm,xscale=0.75]
	\node[dist] (to1) at (2,0) {$O_1$};
	\node[apoint] (tc1) at (3  ,1.5) [label=90:\tiny $c_{1}$] {};
	\node[apoint] (tc2) at (3  ,0) [label=90:\tiny $c_{2}$] {};
	\node[dist] (td1) at (4,0) {$D_1$};
	\node[apoint] (tc3) at (3  ,-1.5) [label=270:\tiny $c_{3}$] {};
	\node[dist] (to2) at (4,-1.5) {$O_2$};
	\node[apoint] (tc4) at (5  ,0) [label=90:\tiny $c_{4}$] {};
	\node[dist] (te2) at (6,0) {$E_2$};
	\node[apoint] (tc5) at (7  ,0) [label=90:\tiny $c_{5}$] {};
	\node[dist] (ti1) at (8,0) {$I_1$};
	\node[apoint] (tc6) at (9  ,0) [label=90:\tiny $c_{6}$] {};
	\node[dist] (te3) at (10,0) {$E_3$};
	\node[apoint] (tc7) at (11  ,0) [label=90:\tiny $c_{7}$] {};
	\node[dist] (to3) at (12,0) {$O_3$};
	
	\node[apoint] (tc8) at (13  ,0) [label=90:\tiny $c_{8}$] {};
	\node[dist] (td2) at (14,0) {$D_2$};
	\node[apoint] (tc9) at (15  ,0) [label=90:\tiny $c_{9}$] {};
	\node[dist] (te4) at (16,0) {$E_4$};
	\node[apoint] (tc10) at (17  ,0) [label=90:\tiny $c_{10}$] {};
	\node[dist] (td3) at (18,0) {$D_3$};
	\node[apoint] (tc11) at (19  ,0) [label=90:\tiny $c_{11}$] {};

	\draw (to1) -- (tc1); 
	\draw (to1) -- (tc2); \draw (tc2) -- (td1);
	\draw (to1) -- (tc3); \draw (tc3) -- (to2);
	
	\draw (td1) -- (tc4); \draw (tc4) -- (te2);
	\draw (te2) -- (tc5); \draw (tc5) -- (ti1);
	\draw (ti1) -- (tc6); \draw (tc6) -- (te3);
	\draw (te3) -- (tc7); \draw (tc7) -- (to3);
	\draw (to3) -- (tc8); \draw (tc8) -- (td2);
	\draw (td2) -- (tc9); \draw (tc9) -- (te4);
	\draw (te4) -- (tc10); \draw (tc10) -- (td3);
	\draw (td3) -- (tc11); 
	\end{scope}
	
	\begin{scope}[yshift=-15cm]
	\node[vertex] (a) at (0  ,2) {};
	\node[vertex] (b) at (1  ,3) {};
	\node[apoint] (c) at (2  ,3.5) [label=270:\tiny $t_1$] {};
	\node[vertex] (d) at (2  ,2.5) {};
	\node[vertex] (e) at (3  ,3) {};
	\node[vertex] (f) at (4  ,2) {};
	\node[apoint] (g) at (4.5  ,1) [label=180:\tiny $a_1$] {};
	\node[vertex] (h) at (3.5  ,1) {};
	\node[vertex] (i) at (4  ,0) {};
	\node[vertex] (j) at (3  ,-1) {};
	\node[vertex] (k) at (2  ,-0.5) {};
	\node[apoint] (l) at (2  ,-1.5) [label=90:\tiny ${a_2=b_2}$] {};
	\node[vertex] (m) at (1  ,-1) {};
	\node[vertex] (n) at (0  ,0) {};
	
	\draw[thick,red] (a) -- (b); \draw[thick,red] (c) -- (b); \draw (d) -- (b);
	\draw[thick,red] (c) -- (e); \draw (d) -- (e); \draw[thick,red] (e) -- (f);
	\draw[thick,red] (f) -- (g); \draw (f) -- (h); \draw (h) -- (i);
	\draw[thick,red] (g) -- (i); \draw[thick,red] (i) -- (j); \draw (j) -- (k);
	\draw[thick,red] (j) -- (l); \draw (k) -- (m); \draw[thick,red] (l) -- (m);
	\draw[thick,red] (m) -- (n); \draw[thick,red] (a) -- (n); 
	
	\node[apoint] (o) at (5.5  ,1)  {};
	\node[apoint] (p) at (5.5  ,2)  {};
	\node[vertex] (q) at (4.5  ,2) {};
	\node[vertex] (r) at (5  ,1.5) {};
	\node[apoint] (s) at (6.5  ,1)   {};
	\node[vertex] (t) at (7.5  ,1) {};
	\node[apoint] (u) at (8.5  ,1) {};
	\node[vertex] (v) at (8.5  ,2) {};
	\node[vertex] (w) at (7.5  ,2) {};
	\node[vertex] (x) at (6.5  ,2) {};
	\node[vertex] (y) at (7  ,1.5) {};
	\node[vertex] (z) at (8  ,1.5) {};
	
	\draw[thick,red] (g) -- (o); \draw (o) -- (p); \draw (p) -- (q);
	\draw (q) -- (r); \draw (q) -- (g); \draw (r) -- (o);
	\draw[thick,red] (u) -- (t); \draw[thick,red] (t) -- (s); 
	\draw[thick,red] (o) -- (s);
	\draw (u) -- (v); \draw (v) -- (w); \draw (w) -- (x);
	\draw (t) -- (y); \draw (y) -- (x); \draw (s) -- (x);
	\draw (t) -- (z); \draw (z) -- (v); \draw (t) -- (w);
	
	\node[apoint] (1) at (9.5  ,1) [label=0:\tiny $b_1$] {};
	\node[vertex] (2) at (10  ,2) {};
	\node[apoint] (3) at (11  ,3) {};
	\node[apoint] (4) at (13  ,3)  {};
	\node[apoint] (5) at (12  ,2) {};
	\node[vertex] (6) at (13  ,2) {};
	\node[apoint] (7) at (14  ,2)  {};
	\node[vertex] (8) at (13  ,1) {};
	\node[vertex] (9) at (13  ,0) {};
	\node[vertex] (10) at (12  ,0.5) {};
	\node[apoint] (11) at (12  ,-0.5) [label=90:\tiny $t_2$] {};
	\node[vertex] (12) at (11  ,0) {};
	\node[vertex] (13) at (10  ,0) {};
	\node[vertex] (14) at (10.5  ,1) {};
	
	\draw[thick,red] (u) -- (1); \draw[thick,red] (1) -- (2); 
	\draw[thick,red] (2) -- (3);
	\draw[thick,red] (6) -- (4); \draw (5) -- (4); \draw[thick,red] (4) -- (3);
	\draw (4) -- (7); \draw (5) -- (8); \draw[thick,red] (6) -- (8);
	\draw (7) -- (8); \draw[thick,red] (8) -- (9); \draw (9) -- (10);
	\draw[thick,red] (12) -- (11); \draw (12) -- (10); \draw[thick,red] (11) -- (9);
	\draw[thick,red] (12) -- (13); \draw (13) -- (14); \draw (14) -- (2);
	\draw[thick,red] (13) -- (1);

	\node[vertex] (c1) at (2  ,4.5) {};
	\node[vertex] (l1) at (1  ,-2) {};
	\node[apoint] (l2) at (3  ,-2)   {};
	\node[vertex] (l3) at (2  ,-2.5) {};
	\node[vertex] (111) at (12.666  ,-0.5) {};
	\node[vertex] (112) at (12.666  ,-1.5) {};
	\node[apoint] (113) at (12  ,-1.5)  {};
	\node[apoint] (114) at (12  ,-2.5)  {};
	\node[apoint] (115) at (12.666  ,-2.5) {};
	\node[vertex] (116) at (12.333  ,-3) {};
	\node[vertex] (117) at (12.666  ,-3.5) {};
	\node[apoint] (118) at (12  ,-3.5)  {};
	\node[vertex] (119) at (12  ,-4.5) {};
	
	\draw[dashed] (c) -- (c1); \draw[thick,red] (l1) -- (l); \draw[thick,red] (l2) -- (l);
	\draw (11) -- (111); \draw (l2) -- (l3); \draw (l1) -- (l3);
	\draw (11) -- (113); \draw (111) -- (112); \draw (112) -- (113);
	\draw (116) -- (114); \draw (115) -- (114); \draw (114) -- (113);
	\draw (114) -- (118); \draw (118) -- (117); \draw (117) -- (116);
	\draw[dashed] (119) -- (118); \draw (115) -- (117); \draw[thick,red] (l2) -- (l1);

	\node[vertex] (l4) at (3  ,-3) {};
	\node[apoint] (l5) at (4  ,-3) {};
	\node[apoint] (l6) at (4  ,-2) {};
	\node[vertex] (l7) at (5  ,-2) {};
	\node[vertex] (l8) at (3.5  ,-2.5) {};
	
	\draw[dashed] (l2) -- (l4); \draw[dashed] (l2) -- (l6); \draw[dashed] (l4) -- (l8);
	\draw[dashed] (l5) -- (l6); \draw[dashed] (l7) -- (l6); \draw[dashed] (l6) -- (l8);
	\draw[dashed] (l4) -- (l5); 
	
	\node[vertex] (p1) at (5.5  ,3) {};
	\draw[dashed] (p) -- (p1); 
	
	\node[vertex] (n1) at (1  ,0) {};
	\node[vertex] (a1) at (1  ,2) {};
	\draw (n) -- (n1); \draw (a) -- (a1);
	\draw (a1) -- (n1);

	\node[vertex] (3a) at (11  ,4) {};
	\node[vertex] (4a) at (13  ,4) {};
	\node[vertex] (7a) at (14  ,3) {};
	\node[vertex] (5b) at (11  ,2) {};
	\node[apoint] (5a) at (11.5  ,2) {};
	\draw[dashed] (3) -- (3a); \draw[dashed] (4) -- (4a); \draw[dashed] (5) -- (5a);
	\draw[dashed] (5a) -- (5b); \draw[dashed] (7) -- (7a);
	\node (o1) at (2,1) {$O_1$};
	\node (o2) at (2,-3) {$O_2$};
	\node (o3) at (11.5,1) {$O_3$};
	
	\node[vertex] (l9) at (5  ,-3) {};
	\node[vertex] (1111) at (14  ,-2.5) {};
	\node[apoint] (1110) at (13.333  ,-2.5) {};
	\draw[dashed] (l5) -- (l9); \draw[dashed] (115) -- (1110); \draw[dashed] (1110) -- (1111);
	
	\node[vertex] (no2) at (5.5  ,0) {};
	\node[vertex] (ns2) at (6.5  ,0) {};
	\node[vertex] (nu2) at (8.5  ,0) {};
	\node[vertex] (1132) at (11  ,-1.5) {};
	\node[vertex] (1142) at (11  ,-2.5) {};
	\draw[dashed] (o) -- (no2); \draw[dashed] (ns2) -- (s); \draw[dashed] (nu2) -- (u);
	\draw[dashed] (113) -- (1132); \draw[dashed] (114) -- (1142);
	\end{scope}
	\end{tikzpicture} 
	\caption[caption]{
		\textit{A graph $H$, a closed suitable subtree $T$ of $\bc(H)$ with a block $O_1$ that is an obstruction, and $\walk{T,O_1}$.}\\
		\emph{(Top)} An involution-free and $K_4$-minor-free graph $H$ such that every biconnected component is an edge, a diamond, an impasse or an obstruction. Articulation points are depicted as filled vertices.\\
		\emph{(Center)} A \emph{closed} and \emph{suitable} subtree $T$ of the block-cut tree of $H$, rooted at $O_1$. Note that every cut vertex of $T$ that is a leaf (i.e., $c_1$ or $c_{11}$) is $O_1$-closed in $\bc(H)$.\\
		\emph{(Bottom)} Solid lines are contained in the subgraph of $H$ induced by $V(T)$, while dashed lines are not. The red closed walk is the output of $\walk{T,O_1}$. Observe that $a_1$ and $a_2$ are exits of $(T,O_1)$ with destinations $(b_1,O_3)$ and $(b_2,O_2)$, respectively, and that $t_1$ and $t_2$ are attachment points of $(T,O_1)$ and $(T,O_3)$, respectively.}
	\label{fig:closedwalk}
\end{figure}

We now proceed to establish the correctness of Algorithms~\ref{alg:closedwalksub} and~\ref{alg:closedwalk}, and to 
prove some properties of the walks that they output.

\begin{lem}\label{lem:feasibleinput}
	All calls to $\subwalk{\cdot}$ in Algorithms~\ref{alg:closedwalksub} and~\ref{alg:closedwalk} have arguments that are  feasible inputs to Algorithm~\ref{alg:closedwalksub}.
\end{lem}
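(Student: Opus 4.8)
The plan is to check that at each of the three sites where $\subwalk{\cdot}$ is invoked --- the two calls inside $\walk{T,B'}$ and the single recursive call inside $\subwalk{T,a^*,B,\ell,a_0}$ --- the supplied tuple is a feasible input to Algorithm~\ref{alg:closedwalksub}, i.e.\ it consists of a closed suitable subtree $T$, a cut vertex $a^*$ of $T$, an obstruction $B$ that is a block of $T$ with $\dist_T(a^*,B)=\ell$, and an exit $a_0$ of $(T,B)$. The graph $H$, the tree $T$, and the vertex $a^*$ are never changed, and in $\walk{T,B'}$ the vertex $a^*$ is chosen as an exit of $(T,B')$, hence a cut vertex of $T$; so the first two requirements are automatic and the work is in tracking the obstruction-block, the exit, and the distance parameter. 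I would carry this out by induction down the recursion, strengthening the assertion to the invariant: \emph{in every call $\subwalk{T,a^*,B,\ell,a_0}$ the arguments are feasible, and moreover $a_0$ is the cut vertex immediately preceding $B$ on the unique $a^*$--$B$ path in $T$, a path of length $\ell$.}

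The first ingredient I would record is a structural fact read off from Definitions~\ref{def:suitableconnector},~\ref{def:suitabletree}, and~\ref{def:exit}: if $a$ is an exit of $(T,\hat B)$ with destination $(b,B)$, then $B$ is an obstruction that is a block of $T$ having $b$ as a neighbour, and $b$ is itself an exit of $(T,B)$. The first part is immediate from Definition~\ref{def:exit} once one notes that $b$, being the far endpoint of a path inside $T$, has its relevant neighbours inside $T$. For the second part: $b$ is not a leaf of $T$ (otherwise $a$ would be an attachment point, not an exit), so it has $T$-degree exactly $2$; one $T$-neighbour is the obstruction $B$, so the unique maximal proper obstruction-free path starting at $b$ cannot enter $B$ and is therefore forced --- it is the reverse of the maximal proper obstruction-free path from $a$ to $b$, or the trivial path $(b)$ when $b=a$. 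In either case its far endpoint ($a$, respectively $b$) is not a leaf of $T$, so $b$ is an exit of $(T,B)$.

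The base cases are the two calls in $\walk{T,B'}$. For $\subwalk{T,a^*,B',1,a^*}$: $B'$ is the obstruction-block handed to $\walk{\cdot}$, $a^*$ is an exit of $(T,B')$ by the if-condition, and $\dist_T(a^*,B')=1$ because $a^*$ is adjacent to $B'$ in $T$, so the invariant holds trivially. For $\subwalk{T,a^*,B^*,r^*+1,b^*}$, where $(b^*,B^*)$ is the destination of $a^*$ and $r^*=\dist_T(a^*,b^*)$: the $a^*$--$B^*$ path in the tree $T$ is the maximal proper obstruction-free path $P^*$ from $a^*$ to $b^*$ followed by the edge $\{b^*,B^*\}$, so $B^*$ is an obstruction-block of $T$, $\dist_T(a^*,B^*)=r^*+1$, $b^*$ is an exit of $(T,B^*)$ by the structural fact, and the invariant holds with $a_0=b^*$.

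For the inductive step I would work inside a call $\subwalk{T,a^*,B,\ell,a_0}$ obeying the invariant and fix $i\in[k]$: then $a_i$ is an exit of $(T,B)$ distinct from $a_0$, its destination is $(b_i,B_i)$, $r_i=\dist_T(B,b_i)$, and the recursive call is $\subwalk{T,a^*,B_i,\ell+r_i+1,b_i}$. By the structural fact, $B_i$ is an obstruction-block of $T$ and $b_i$ is an exit of $(T,B_i)$. For the distance: removing $B$ from the tree $T$ puts the distinct $B$-neighbours $a_0$ and $a_i$ in different components, and by the invariant $a^*$ lies in the component of $a_0$; hence the $a^*$--$B_i$ path runs $a^*\to\cdots\to B\to a_i\to\cdots\to b_i\to B_i$, of length $\dist_T(a^*,B)+1+\dist_T(a_i,b_i)+1=\ell+r_i+1$, using that $r_i=\dist_T(B,b_i)=1+\dist_T(a_i,b_i)$ because the $B$--$b_i$ path starts with the edge $\{B,a_i\}$, and its penultimate vertex is $b_i$; so the invariant is preserved. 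Finiteness of the collection of calls, needed for ``all calls'' to be meaningful, follows since $\ell$ strictly increases along any chain of recursive calls while staying bounded by the number of blocks of $T$. I expect the distance accounting in the inductive step to be the main obstacle: the bare feasibility statement does not survive the recursion on its own, and one is forced to carry the extra invariant pinning down $a_0$ as the predecessor of $B$ along the path from $a^*$, so that the separation argument computing $\dist_T(a^*,B_i)$ becomes available; the remaining checks are routine unwindings of the definitions.
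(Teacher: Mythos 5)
Your proof is correct and takes the same route as the paper: unwind the recursion and check, at each of the three call sites, that the new tuple satisfies the input specification of Algorithm~\ref{alg:closedwalksub}. The one place you are more careful than the paper is the distance accounting: the paper simply asserts that the unique $a^*$--$B_i$ path in $T$ runs from $a^*$ to $B$, then from $B$ to $b_i$, then to $B_i$, which tacitly uses the fact that $a^*$ lies on the $a_0$-side of the block $B$ in $T$ (so that the path really must cross $B$ to reach anything beyond the other exits $a_i$ with $i\ge 1$). Your strengthened invariant --- that $a_0$ is the penultimate vertex of the $a^*$--$B$ path, of length $\ell$ --- is exactly what licenses this step, and carrying it through the induction is the right way to make the claim airtight rather than just plausible. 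Your structural observation (a destination $(b,B')$ yields $b$ an exit of $(T,B')$) is what the paper dispatches as ``by the definition of a destination''; your unpacking of it via the degree-$2$ bound on cut vertices of $T$ and the non-leaf status of the far endpoint of the reversed maximal obstruction-free path is sound.
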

\begin{proof}
	First, consider Algorithm~\ref{alg:closedwalksub}, where for $i\in [k]$ we make a call $\subwalk{T,a^*, B_i, \ell+r_i+1, b_i}$. Observe that $b_i$ is an exit of $(T,B_i)$ by the definition of a destination (Definition~\ref{def:exit}). It remains to check that $\dist_T(a^*, B_i)=\ell+r_i+1$. This is true since $\ell=\dist_T(a^*,B)$ and $r_i = \dist_T(B,b_i)$ using the fact that the (unique) path from $a^*$ to $B_i$ in the tree $T$ goes from $a^*$ to $B$ then from $B$ to $b_i$ and then from $b_i$ to $B_i$, where $B_i$ is adjacent to $b_i$.
	
	Second, consider Algorithm~\ref{alg:closedwalk}, where the if-block makes two calls to $\subwalk{\cdot}$ --- one is 
	$\subwalk{T,a^*, B',1,a^*}$ and the other is
	$\subwalk{T, a^*, B^*, r^*+1, b^*}$.
	Observe that 
	$a^*$ is an exit of $(T,B')$ by the condition of the if-block and
	$b^*$ is an exit of $(T,B^*)$ by the definition of a destination.
	It remains to check that $\dist_T(a^*,B')=1$ and $\dist_T(a^*, B^*)=r^*+1$. The former is immediate since $a^*$ is adjacent to $B'$ in $T$. The latter is true since $r^*=\dist_T(a^*,b^*)$ and $B^*$ is adjacent to $b^*$ in $T$ where the (unique) path from $a^*$ to $B^*$ goes via $b^*$.
\end{proof}

\begin{lem}\label{lem:closedwalksub}
	$\subwalk{T,a^*,B,\ell,a_0}$ (Algorithm~\ref{alg:closedwalksub}) terminates, is well-defined, and returns a closed walk 
	in~$H$ of length at least $3$ from $a_0$ to itself.
\end{lem}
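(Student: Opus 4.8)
The plan is to establish all four assertions — termination, well-definedness, that the output is a closed walk in $H$ from $a_0$ to itself, and that its length is at least $3$ — simultaneously, by strong induction on $|V(T)| - \ell$. This is a legitimate (positive, integer-valued) measure: by Lemma~\ref{lem:feasibleinput} every invocation of $\subwalk{\cdot}$, including all recursive ones, has a feasible input, so in particular $\ell = \dist_T(a^*,B)$; since a simple path in the tree $T$ has at most $|V(T)|$ vertices, $\ell \le |V(T)| - 1$, whence $|V(T)| - \ell \ge 1$. Moreover, in each recursive call $\subwalk{T,a^*,B_i,\ell+r_i+1,b_i}$ the fourth argument is $\ell + r_i + 1 \ge \ell + 1$, so the measure strictly decreases, and Lemma~\ref{lem:feasibleinput} tells us the call is feasible, so the induction hypothesis applies to it. In particular this proves termination.

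Granting, inductively, that the recursive calls are well-defined and return closed walks from their entry points $b_i$ to themselves, I would first check well-definedness of a single invocation. Since $T$ is suitable, $(B,\Nof T(B))$ is a suitable connector, and because $B$ is an obstruction, the cycle $C$ assigned in Algorithm~\ref{alg:closedwalksub}, namely $C(B,\Nof T(B))$, is a well-defined induced cycle of $B$ with $\Nof T(B) = \{c \in C : |N_{C,H}(c)|\text{ is even}\}$ (Definition~\ref{def:suitableconnector}); in particular $\Nof T(B) \subseteq V(C)$. Every exit of $(T,B)$ is, by definition, a cut vertex adjacent to $B$ in $T$, hence lies in $\Nof T(B)$ and so on $C$; thus $a_0$, and indeed all of $a_0,\dots,a_k$, lie on $C$, and ``the exits of $(T,B)$ in the order of $D(C)$ starting from $a_0$'' is well-defined. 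Each exit has a destination by Definition~\ref{def:exit}, so $(b_1,B_1),\dots,(b_k,B_k)$ are well-defined; the cyclic segments $\cseg_C(a_0)$ and $\cseg_C(a_i,a_{i+1\bmod (k+1)})$ are well-defined since $a_0,\dots,a_k \in V(C)$, and the shortest paths $\SP{a_i,b_i}$, $\SP{b_i,a_i}$ exist because $H$ is connected.

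Next I would verify that the returned walk $W$ is a closed walk in $H$ from $a_0$ to itself. If $k=0$ this is immediate, since $W = \cseg_C(a_0)$ is literally the closed walk traversing $C$. If $k \ge 1$, I would check that each application of the concatenation operator $+$ is legal, i.e.\ the two pieces being joined share an endpoint: $\cseg_C(a_0,a_1)$ ends at $a_1$; $\SP{a_i,b_i}$ runs from $a_i$ to $b_i$; the recursive piece $\subwalk{T,a^*,B_i,\ell+r_i+1,b_i}$ is, by the induction hypothesis, a closed walk from $b_i$ to $b_i$; $\SP{b_i,a_i}$ runs from $b_i$ back to $a_i$; and $\cseg_C(a_i,a_{i+1\bmod (k+1)})$ continues along $C$ from $a_i$. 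Since the exits are listed in the cyclic order $D(C)$, the pieces $\cseg_C(a_0,a_1),\cseg_C(a_1,a_2),\dots,\cseg_C(a_k,a_0)$ concatenate to $\cseg_C(a_0)$, so $W$ begins and ends at $a_0$; and since every piece of $W$ is a walk in $H$ (edges of $C$ are edges of $H$, the $\SP{\cdot,\cdot}$ are paths in $H$, and the recursive pieces are walks in $H$), $W$ is a closed walk in $H$ from $a_0$ to itself.

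Finally, for the length bound, the point is simply that $W$ contains a complete traversal of $C$ as a subwalk — the whole of $W$ when $k=0$, and the concatenation of the pieces $\cseg_C(a_i,a_{i+1\bmod (k+1)})$ when $k \ge 1$ — while all remaining pieces of $W$ have non-negative length, so $|W| \ge |C|$; and $|C| \ge 3$ because $C \in \cycles(B)$ is an induced cycle of the obstruction $B$ whose length is not $4$ (Definition~\ref{def:obstruction}) and every cycle has length at least $3$. Hence $|W| \ge 3$. I expect the only genuinely delicate ingredient to be the well-foundedness of the recursion, but this is already packaged in Lemma~\ref{lem:feasibleinput} (feasibility of every call, hence $\ell = \dist_T(a^*,B)$ is bounded above, together with the strict increase of $\ell$ in recursive calls); the remainder is routine bookkeeping about matching endpoints of concatenated pieces, plus the observation that an obstruction cycle has length $\ne 4$, hence $\ge 3$.
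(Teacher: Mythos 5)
Your proof is correct and follows essentially the same route as the paper's: induction on the recursion (with termination from the bounded, strictly increasing fourth argument $\ell$), feasibility from Lemma~\ref{lem:feasibleinput}, and endpoint-matching of the concatenated pieces. The only small divergence is the length bound --- the paper propagates the inductive hypothesis ``each recursive piece has length $\ge 3$'' down from the base case, whereas you sum the $\cseg_C$ segments to get $|W| \ge |C| \ge 3$ directly; both are sound, though note that for $k\ge 1$ the $\cseg_C$ segments are interleaved with excursions and so do not form a contiguous subwalk of $W$, so it is the length inequality (which you state) rather than literal containment of $\cseg_C(a_0)$ as a subwalk that carries the argument.
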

\begin{proof}
	First consider the case $k=0$. Clearly, Algorithm~\ref{alg:closedwalksub} terminates and is well-defined. It returns $\cseg_C(a_0)$, which is a cycle from $a_0$ to itself of length at least $3$.
	Now consider the case where $k\ge 1$.
	Note that with each recursive call of $\subwalk{\cdot}$ the value of the parameter $\ell$ increases. Since $\ell$ corresponds to the distance between $a^*$ and $B$  in the finite graph~$T$,
	Algorithm~\ref{alg:closedwalksub} terminates.
	
	We now show that Algorithm~\ref{alg:closedwalksub} returns a closed walk of length at least $3$ from $a_0$ to itself.
	If, for $i\in [k]$, $\subwalk{T,a^*, B_i, \ell+r_i+1, b_i}$ returns a closed walk from $b_i$ to itself of length at least $3$ then  
	$\SP{a_i, b_i} + \subwalk{T, a^*,B_i, \ell+r_i+1,b_i} + \SP{b_i, a_i} + \cseg_C(a_i, a_{i+1 \mod k+1})$ is a walk from $a_i$ to $a_{i+1 \mod {k+1}}$ of length at least $3$. Thus, $\subwalk{T,a^*,B,\ell,a_0}$ returns a closed walk from $a_0$ to itself of length at least $3$. Since Algorithm~\ref{alg:closedwalksub} terminates, it reaches the base of the recursion, i.e.,~the case $k=0$, at some point, and we have already verified that the base case returns a closed walk of length at least $3$, as required.
	
	Finally, we show that Algorithm~\ref{alg:closedwalksub} is well-defined. By Lemma~\ref{lem:feasibleinput} all subroutine calls have feasible inputs. Also observe that all concatenation operations are well-defined since, for each $i\in [k]$, $\subwalk{T,a^*, B_i, \ell+r_i+1, b_i}$ returns a closed walk from $b_i$ to itself.
\end{proof}

\begin{lem}\label{lem:closedwalk}
	$\walk{T,B'}$	(Algorithm~\ref{alg:closedwalk}) terminates, is well-defined, and returns a closed walk in $H$ of length $q$, where $q\ge 3$ and $q\neq 4$.
\end{lem}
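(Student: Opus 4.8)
The plan is to treat $\walk{T,B'}$ as a thin wrapper around Lemma~\ref{lem:closedwalksub}: I would split along the two branches of Algorithm~\ref{alg:closedwalk} and, in each branch, verify the three required properties (termination, well-definedness, and that the output is a closed walk in $H$ of length $q$ with $q\ge 3$ and $q\neq 4$). Throughout I will use that $H$ is connected, since $\bc(H)$ is only defined for connected graphs.

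First I would dispose of the \textbf{else}-branch, where $(T,B')$ has no exit and $W$ is set to $C(B',\Nof{T}(B'))$. Termination is immediate. For well-definedness, since $T$ is a suitable subtree and $B'$ is a block of $T$, the pair $(B',\Nof{T}(B'))$ is a suitable connector in $H$ (Definition~\ref{def:suitabletree}); as $B'$ is an obstruction, Definition~\ref{def:suitableconnector} fixes a cycle $C(B',\Nof{T}(B'))\in\cycles(B')$. By the definition of $\cycles$ (Definition~\ref{def:obstruction}), every member of $\cycles(B')$ is an induced cycle of $B'$ whose length is not $4$, hence a closed walk in $H$ of length $q$ with $q\ge 3$ and $q\neq 4$, as needed.

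Next I would handle the \textbf{if}-branch, where $(T,B')$ has an exit $a^*$ with destination $(b^*,B^*)$, where $r^*=\dist_T(a^*,b^*)$, and
\[
W = \subwalk{T,a^*,B',1,a^*} + \SP{a^*,b^*} + \subwalk{T,a^*,B^*,r^*+1,b^*} + \SP{b^*,a^*}.
\]
By Lemma~\ref{lem:feasibleinput}, both calls to $\subwalk{\cdot}$ receive feasible inputs, so by Lemma~\ref{lem:closedwalksub} they terminate and return, respectively, a closed walk in $H$ from $a^*$ to itself of length at least $3$ and a closed walk in $H$ from $b^*$ to itself of length at least $3$. Since $H$ is connected, $\SP{a^*,b^*}$ and $\SP{b^*,a^*}$ are well-defined walks from $a^*$ to $b^*$ and from $b^*$ to $a^*$, of common length $\dist_H(a^*,b^*)\ge 0$. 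The four pieces have matching endpoints ($a^*\!\to\!a^*\!\to\!b^*\!\to\!b^*\!\to\!a^*$), so every concatenation in the display is legal (Definition~\ref{def:plus}); hence $W$ is well-defined and is a closed walk in $H$ from $a^*$ to itself, and $\walk{T,B'}$ terminates.

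Finally I would bound the length: since concatenation adds lengths (Definition~\ref{def:plus}), the length $q$ of $W$ is at least $3 + 0 + 3 + 0 = 6$, so $q\ge 3$ and $q\neq 4$. I do not expect a genuine obstacle here. The only thing to be careful about is the bookkeeping that the hypotheses of Lemma~\ref{lem:closedwalksub} are met by both recursive calls (which is precisely what Lemma~\ref{lem:feasibleinput} supplies), and observing that the length argument uses nothing beyond the ``length $\ge 3$'' guarantee on the two sub-walks produced by $\subwalk{\cdot}$.
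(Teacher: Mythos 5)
Your proof is correct and follows essentially the same route as the paper: split along the two branches of Algorithm~\ref{alg:closedwalk}, handle the else-branch via the definitions of suitable connector and obstruction, and handle the if-branch by invoking Lemma~\ref{lem:feasibleinput} and Lemma~\ref{lem:closedwalksub} to conclude the pieces are closed walks of length at least $3$ each, so the concatenation has length $q \ge 6$. No substantive differences from the paper's argument.
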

\begin{proof}
	Since Algorithm~\ref{alg:closedwalksub} terminates (Lemma~\ref{lem:closedwalksub}), it is immediate that Algorithm~\ref{alg:closedwalk} terminates.
	If there is no exit of $(T,B')$ then   $\walk{T,B'}$   returns $C(B',\Nof{T}(B'))$ which is a cycle in $\cycles(B')$ 
	by Definition~\ref{def:suitableconnector} 
	and hence has length at least~$3$, but 
	not~$4$, by Definition~\ref{def:obstruction}.
	If there is an exit $a^*$ of $(T,B')$ then, by Lemma~\ref{lem:feasibleinput} all subroutine calls have feasible inputs. 
	By Lemma~\ref{lem:closedwalksub}, 
	$\subwalk{T,a^*, B',1,a^*}$ returns a closed walk from $a^*$ to itself,  of length at least $3$, and
	$\subwalk{T, a^*, B^*, r^*+1, b^*}$ returns a closed walk from $b^*$ to itself, also of length at least $3$.
	It follows that the concatenations in the if-block are well-defined and therefore that Algorithm~\ref{alg:closedwalk} is well-defined. Furthermore,
	$  \subwalk{T,a^*, B',1,a^*} +		
	\SP{a^*, b^*} + \subwalk{T, a^*, B^*, r^*+1, b^*} + \SP{b^*, a^*}$
	is a closed walk from $a^*$ to itself of length $q\ge6$.
\end{proof}

\begin{obs}\label{obs:alternatingwalk} 
	$\walk{T,B'}$
	(Algorithm~\ref{alg:closedwalk}) outputs a closed walk $W$.
	If $(T,B')$ has no exit then $W=C$ for a   cycle   $C\in \cycles(B')$. Otherwise,
	the following holds.
	For a positive integer~$j$, there are obstructions $B'_0, \dots, B'_j$ such that 
	$W$ is of the form  
	$W=Q_0+P_0+ Q_1+ P_1 \dots+Q_j+ P_j$  
	where $Q_i$ and $P_i$ satisfy the following properties for all $i\in \{0,\ldots,j\}$.	 	\begin{myitemize}
		\item  Let $C_i=C(B'_i,\Nof{T}(B'_i))$.
		Then there are vertices $a$ and $a'$ in~$B'_i$ such that
		$Q_i$ is of the form $\cseg_{C_i}(a)$
		or $\cseg_{C_i}(a,a')$. Either way,  all vertices of $Q_i$ are in $B'_i$. Furthermore, only the endpoints of $Q_i$ are exits of $(T,B'_i)$. 
		\item  There is an exit~$a$ of $(T,B'_i)$ 
		with a destination $(b,B'_{i+1 \mod (j+1)})$
		such that
		$P_i$ is a path of the form $\SP{a,b}$. Hence, by Definition~\ref{def:exit} and Lemma~\ref{lem:obstructionfreepath},
		the endpoints 
		of~$P_i$
		are the only vertices of $P_i$ that are part of an obstruction.
		\item The obstruction $B'_{i}$ is distinct from the obstruction  $B'_{i+1 \mod (j+1)}$.
	\end{myitemize}
\end{obs}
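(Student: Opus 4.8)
The plan is to prove Observation~\ref{obs:alternatingwalk} by unfolding the recursion in Algorithms~\ref{alg:closedwalksub} and~\ref{alg:closedwalk} and tracking the structure of the walk that is built up. We already know from Lemma~\ref{lem:closedwalk} that $\walk{T,B'}$ terminates and returns a closed walk of length $q\ge 3$, $q\neq 4$, so the only new content is the claimed decomposition $W = Q_0+P_0+Q_1+P_1+\dots+Q_j+P_j$ together with the three bulleted properties of the $Q_i$ and $P_i$. The case with no exit is immediate: by the condition of the if-block in Algorithm~\ref{alg:closedwalk}, $W = C(B',\Nof{T}(B'))$, which is a cycle in $\cycles(B')$ by Definition~\ref{def:suitableconnector}. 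So the substance is the case where $(T,B')$ has an exit $a^*$ with destination $(b^*,B^*)$.

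First I would record the key structural fact about what $\subwalk{T,a^*,B,\ell,a_0}$ produces (this is really a strengthening of Lemma~\ref{lem:closedwalksub} that we should prove by induction on the recursion, i.e.\ on the finite quantity $\dist_T(a^*,\cdot)$ going downward, or equivalently on $|V(T)|$): the output of $\subwalk{T,a^*,B,\ell,a_0}$ is a closed walk from $a_0$ to itself of the form
\[
\cseg_C(a_0,a_1)+\SP{a_1,b_1}+W_1+\SP{b_1,a_1}+\cseg_C(a_1,a_2)+\dots+\SP{a_k,b_k}+W_k+\SP{b_k,a_k}+\cseg_C(a_k,a_0),
\]
where $C = C(B,\Nof{T}(B))$, the vertices $a_0,\dots,a_k$ are exactly the exits of $(T,B)$ listed in the cyclic order $D(C)$, and each $W_i$ is recursively the output of $\subwalk{T,a^*,B_i,\ell+r_i+1,b_i}$ for the destination $(b_i,B_i)$ of $a_i$ (when $k=0$ this degenerates to $\cseg_C(a_0)$). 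Each $\cseg_C(a_i,a_{i+1})$-segment lies entirely in $B$ by Definition~\ref{def:DofC} and, crucially, contains no exit of $(T,B)$ in its interior — this is precisely because $a_0,\dots,a_k$ is the \emph{complete} list of exits in cyclic order, so between consecutive ones there are none. Each $\SP{a_i,b_i}$ is a proper obstruction-free path from $a_i$ to $b_i$ by Definition~\ref{def:exit} and Lemma~\ref{lem:obstructionfreepath}, so its only vertices on an obstruction are its endpoints. By Lemma~\ref{lem:feasibleinput} all the recursive calls are on feasible inputs, so the induction hypothesis applies to each $W_i$, and concatenating the pieces yields a decomposition of the stated type for the subwalk (after flattening, each $W_i$ itself being a concatenation of $Q$- and $P$-segments).

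Next I would assemble the top-level decomposition. By Algorithm~\ref{alg:closedwalk}, in the exit case
\[
W = \subwalk{T,a^*,B',1,a^*} + \SP{a^*,b^*} + \subwalk{T,a^*,B^*,r^*+1,b^*} + \SP{b^*,a^*}.
\]
Expanding the two $\subwalk{\cdot}$ outputs using the structure above, and identifying $B_0' := B'$, $B_1' := B^*$ (and more generally the obstructions that arise recursively as $B_2',\dots,B_j'$ in the order the walk visits them), we get $W = Q_0 + P_0 + \dots + Q_j + P_j$ where each $Q_i$ is one of the $\cseg_{C_i}(a,a')$ (or $\cseg_{C_i}(a)$) segments and each $P_i$ is one of the $\SP{a,b}$ paths, with $P_j = \SP{b^*,a^*}$ closing the walk up. Each of the three bulleted properties then follows: the first from the fact that $\cseg_{C_i}$-segments lie in $B_i'$ and carry exits only at their endpoints; the second from the definition of destination plus Lemma~\ref{lem:obstructionfreepath}; and the third — that $B_i'$ and $B_{i+1\bmod(j+1)}'$ are distinct — from the clause in Definition~\ref{def:exit} stating that the destination $(b,B')$ of an exit $a$ of $(T,B)$ has $B'\neq B$, together with the fact that $T$ is a tree so the recursion never revisits the same obstruction on the same side.

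The main obstacle I expect is purely bookkeeping: making the inductive statement about $\subwalk{\cdot}$ precise enough that the "flattening" of nested concatenations really does produce a single sequence $Q_0,P_0,\dots,Q_j,P_j$ with the $Q$'s and $P$'s strictly alternating and with the index arithmetic $\bmod\,(j+1)$ working out at the wrap-around (the last $P_j = \SP{b^*,a^*}$ must be seen as going to a destination whose obstruction is $B_0' = B'$, which holds because $a^*$ is itself the relevant exit of $(T,B')$). Care is also needed because a $\subwalk{\cdot}$ call with $k=0$ contributes only a single $Q$-segment with no following $P$ inside it — but that is fine, since the $P$ that follows such a $Q$ in the global decomposition is supplied one level up in the recursion (the $\SP{b_i,a_i}$ returning from the call, or, at the very top, $\SP{b^*,a^*}$). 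Once the induction is set up carefully, each individual property is a direct citation of an already-established definition or lemma, so there is no real mathematical difficulty beyond the indexing.
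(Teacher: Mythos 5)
Your proposal is correct and follows essentially the same route as the paper: establish by induction on the recursion a structural description of the output of $\subwalk{\cdot}$ (the paper phrases it as $W'=Q'_0+\sum_{i=1}^q (P'_i + Q'_i)$; you write out the same thing term by term), then assemble the top-level decomposition from the if-block of Algorithm~\ref{alg:closedwalk} and observe that each bullet is a direct citation of Definition~\ref{def:exit}, Definition~\ref{def:DofC}, Lemma~\ref{lem:obstructionfreepath}, and the fact that the $a_0,\ldots,a_k$ are listed as the complete set of exits in cyclic order. You also correctly flag the two delicate bookkeeping points — flattening nested concatenations while preserving strict $Q/P$ alternation (noting that a $k=0$ subcall contributes only a $Q$, with the subsequent $P$ supplied one level up), and the wrap-around of $P_j=\SP{b^*,a^*}$, whose destination-obstruction is $B'_0=B'$ because $a^*$ is an exit of $(T,B')$ — which are exactly the points the paper handles implicitly by ``examination of the algorithm, using the fact that the block-cut tree is a tree.''
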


\noindent{\it Explanation of Observation~\ref{obs:alternatingwalk}.}
	If $(T,B')$ has no exit then 
	the result follows directly from  the definition of $C(B,A)$ for a suitable connector $(B,A)$ of an obstruction~$B$ (Definition~\ref{def:suitableconnector}).
	
	For the remaining case, we will prove the following about Algorithm~\ref{alg:closedwalksub}.
	$\subwalk{T,a^*,B,\ell,a_0}$  outputs a closed walk $W'$.
	For a positive integer~$q$, there are obstructions $B''_0, \dots, B''_q$
	such that $W'$ is of the form  
	$W'=Q'_0+\sum_{i=1}^q (P'_i + Q'_i ) $  
	where $Q'_i$ and $P'_i$ satisfy the following properties for all $i\in \{0,\ldots,q\}$.	 
	\begin{myitemize}
		\item  Let $C_i=C(B''_i,\Nof{T}(B''_i))$.
		Then there are vertices $a$ and $a'$ in~$B''_i$ such that
		$Q'_i$ is of the form $\cseg_{C_i}(a)$
		or $\cseg_{C_i}(a,a')$. Either way,  all vertices of $Q'_i$ are in $B''_i$. Furthermore, only the endpoints of $Q'_i$ are exits of $(T,B''_i)$. 
		\item  There is an exit~$a$ of $(T,B''_i)$ 
		with a destination $(b,B''_{i+1 \mod (j+1)})$
		such that
		$P'_i$ is a path of the form $\SP{a,b}$. Hence, by Definition~\ref{def:exit},
		the  endpoints 
		of~$P'_i$
		are the only vertices of $P'_i$ that are part of an obstruction.
		\item The obstruction $B''_{i}$ is distinct from the obstruction  $B''_{i+1 \mod (q+1)}$.
	\end{myitemize}
	The proof is by induction on the recursion depth.
	If $\subwalk{T,a^*,B,\ell,a_0}$ makes no recursive calls then
	the variable ``$k$'' is equal to~$0$ and we also set $q=0$. In this case, $B''_0$ is equal to $B$, and the first property follows easily (the others are vacuous).
	Otherwise, $k$ is positive and $(T,B)$ has exits $\{a_0,\ldots,a_k\}$ where $a_1,\ldots,a_k$
	have destinations $(b_1,B_1),\ldots,(b_k,B_k)$. 
	Note that $B_1,\ldots,B_k$ are disjoint from~$B$ and from each other.
	Once again,  $B''_0$ is $B$.
	$Q'_0$ is $W_C(a_0,a_1)$, as defined in the algorithm. Then $B''_1$ is $B_1$, and $P'_1$ is $P_H(a_1,b_1)$ as defined in the algorithm.
	The rest follows by induction, and examination of the algorithm, using
	the fact that the block-cut tree is a tree. 
	
	Given this fact for  Algorithm~\ref{alg:closedwalksub}, we obtain the conclusion for 
	Algorithm~\ref{alg:closedwalk} by putting together the pieces in the output~$W$.
	This completes our explanation of Observation~\ref{obs:alternatingwalk}.

\begin{lem}\label{lem:distinctwiandwi+2} 	Let $H$ be a connected graph. Let $T$ be a closed suitable subtree of $\bc(H)$. Let $B'$ be an obstruction that is a block of $T$. Let $W=(w_0,,\dots, w_{q-1}, w_0)$ be the output of $\walk{T,B'}$ (Algorithm~\ref{alg:closedwalk}).
	Then, for each $i\in \{0, \dots, q-1\}$, $w_i$ and $w_{i+2 \mod q}$ are distinct. 
\end{lem}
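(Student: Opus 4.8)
\textbf{Proof plan for Lemma~\ref{lem:distinctwiandwi+2}.}

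The plan is to argue by contradiction: suppose that for some index~$i$ we have $w_i = w_{i+2 \bmod q}$. The first step is to understand, using Observation~\ref{obs:alternatingwalk}, where in the decomposition $W = Q_0 + P_0 + Q_1 + P_1 + \cdots + Q_j + P_j$ the three consecutive vertices $w_i, w_{i+1}, w_{i+2}$ can lie. Since each $P_\ell$ is a shortest path $\SP{a,b}$ between two vertices of obstructions whose only obstruction-vertices are its endpoints, and shortest paths have no repeated vertices, the only way to get $w_i = w_{i+2}$ is if the length-$2$ subwalk $(w_i, w_{i+1}, w_{i+2})$ is not contained in any single $P_\ell$; that is, at least one of $w_i, w_{i+1}, w_{i+2}$ lies strictly inside some $Q_\ell$, i.e.\ on the cyclic tour $\cseg_{C_\ell}$ around an induced cycle $C_\ell$ of an obstruction $B'_\ell$.

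The core of the argument is then a case analysis on how $(w_i, w_{i+1}, w_{i+2})$ sits relative to the tour segments $Q_\ell$. I would handle these cases: (i) all three of $w_i, w_{i+1}, w_{i+2}$ lie on a single $Q_\ell$ — then $(w_i, w_{i+1}, w_{i+2})$ is a length-$2$ subwalk of the closed walk $\cseg_{C_\ell}$ along the induced cycle $C_\ell$, and since $C_\ell$ is a cycle of length $\neq 4$ (in fact $\geq 3$, $\neq 4$) the tour $\cseg_{C_\ell}$ visits $c_0, \dots, c_{r-1}$ then returns to $c_0$; two steps along such a tour return to the start only if $r \le 2$, which is impossible since $C_\ell$ has length $\ge 3$; (ii) $(w_i, w_{i+1}, w_{i+2})$ straddles a junction between a $Q_\ell$ and an adjacent $P_\ell$ (or $P_{\ell-1}$), meaning one or two of the three vertices are on $Q_\ell$ and the rest on the path; here the shared endpoint is an exit $a$ of $(T, B'_\ell)$, and I would use that an exit is a cut vertex and that the path $\SP{a,b}$ leaves the biconnected component containing~$C_\ell$ immediately — so the vertex of $P_\ell$ adjacent to~$a$ is not in $B'_\ell$ and cannot equal a vertex two steps back along $C_\ell$ inside $B'_\ell$, because biconnected components meet only in cut vertices; (iii) $(w_i, w_{i+1}, w_{i+2})$ straddles two consecutive paths $P_{\ell-1}, P_\ell$ through a degenerate (length-$0$) segment $Q_\ell = (a)$ — but by Observation~\ref{obs:alternatingwalk} consecutive obstructions $B'_\ell$ and $B'_{\ell+1 \bmod (j+1)}$ are distinct, and a length-$0$ $Q_\ell$ would force $a$ to be an exit of $(T,B'_\ell)$ with $B'_\ell$ having no other exit, forcing $B'_{\ell-1} = B'_{\ell+1}$-type collisions that contradict distinctness, or alternatively one checks directly that the two path-steps away from $a$ go to vertices that lie outside $B'_\ell$ on opposite sides and hence cannot coincide since they belong to different biconnected components separated by the cut vertex~$a$.

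The step I expect to be the main obstacle is case~(ii), the junction case, because it requires carefully translating "an exit $a$ of $(T,B'_\ell)$ with destination $(b, B')$'' into the concrete statement that the neighbour of $a$ along $\SP{a,b}$ is not a vertex of $B'_\ell$ — this uses Definition~\ref{def:exit} (the obstruction-free path from $a$ to $b$) together with Lemma~\ref{lem:obstructionfreepath} (the shortest path structure through edges/diamonds/impasses) and the fact that $B'_\ell$ is a biconnected component so that any vertex shared between $B'_\ell$ and the next block along the block-cut tree is precisely the cut vertex $a$ itself. Once that is in hand, $w_i = w_{i+2}$ would force a vertex of $B'_\ell$ (namely $w_i$, two steps back along $C_\ell$) to equal a vertex outside $B'_\ell$ (namely $w_{i+2}$, on $\SP{a,b}$ past~$a$), which is absurd unless both equal the cut vertex~$a$ — but then $w_{i+1} = a$ too, contradicting that $w_{i+1}$ is a distinct intermediate vertex. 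I would also invoke Lemma~\ref{lem:disjoint} where needed to rule out that $w_i$ and $w_{i+2}$ both lie in $C_\ell$ as non-adjacent walk-neighbours that happen to coincide. Assembling these cases yields the contradiction and completes the proof.
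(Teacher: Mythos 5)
Your plan matches the paper's proof: both reduce to the decomposition from Observation~\ref{obs:alternatingwalk} and split into (i) the triple lies inside one $Q_\ell$ (a subwalk of an induced cycle of length $\ge 3$), (ii) it lies inside one $P_\ell$ (a path, so no repeats), and (iii) it straddles a boundary, in which case $w_i$ and $w_{i+2}$ lie in different biconnected components separated by the cut vertex $w_{i+1}$. One small correction to your case (iii): the $Q$-segments are never of length $0$ (they are either $\cseg_{C_\ell}(a)$, the whole cycle, or $\cseg_{C_\ell}(a,a')$ with $a\neq a'$), so the genuine degenerate case is a length-$0$ $P_\ell$ (when an exit equals its destination) — but the biconnected-component-separation argument in your case (ii) already covers it, so the proof goes through.
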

\begin{proof}
	All indices in this proof are considered to be modulo $q$. For any $i\in \{0, \dots, q-1\}$, our goal is to show $w_i\neq w_{i+2}$.
	We make a case distinction based on Observation~\ref{obs:alternatingwalk}.
	\begin{itemize}
		\item If $W$ is a cycle $C\in \cycles(B')$ then $w_i\neq w_{i+2}$ is immediate.
		\item Otherwise, for a positive integer~$j$, $W$ is of the form
		$W=Q_0+P_0+ Q_1+ P_1 \dots+Q_j+ P_j$ with the properties stated in Observation~\ref{obs:alternatingwalk}. 
		We consider the walk $(w_i, w_{i+1}, w_{i+2})$.
		\begin{myitemize}
			\item If for some $\ell\in [j]$, $(w_i, w_{i+1}, w_{i+2})$ is a subwalk of $Q_\ell$ then $w_i \neq w_{i+2}$ since, by Observation~\ref{obs:alternatingwalk},
			$Q_i$ is a subwalk of a cycle.
			\item If for some $\ell\in [j]$, $(w_i, w_{i+1}, w_{i+2})$ is a subwalk of $P_\ell$ then, since $P_\ell$ is a path, we have $w_i\neq w_{i+2}$.
			\item Otherwise, by Observation~\ref{obs:alternatingwalk}, there is no biconnected component that contains both $w_i$ and $w_{i+2}$ and consequently $w_i \neq w_{i+2}$. 
		\end{myitemize}
	\end{itemize} 
\end{proof}

The following lemma establishes (a stronger version of) the properties~\eqref{equ:CycleHardness+2} and~\eqref{equ:CycleHardness+3} for the walk returned by Algorithm~\ref{alg:closedwalk}, as required by Lemma~\ref{lem:CycleHardness+}.

\begin{lem}\label{lem:properWalkIntersection}
	Let $H$ be a connected $K_4$-minor-free graph. Let $T$ be a closed suitable subtree of $\bc(H)$. Let $B'$ be an obstruction that is a block of $T$. Let $W=(w_0,\dots, w_{q-1}, w_0)$ be the output of $\walk{T,B'}$ (Algorithm~\ref{alg:closedwalk}). By Lemma~\ref{lem:closedwalk}, $W$ is a closed walk and $q\ge 3$. For each $i\in \{0, \dots, q-1\}$, let $W_i=N_{W,H}(w_i)$ (Definition~\ref{def:walkneighbourset}). If $H$ has no hardness gadget then the following statement holds:
	\begin{equation*}
	\text{If $u\in W_{i-1 \bmod q}$ and $v\in W_{i+1 \bmod q}$ then $\NH{u} \cap \NH{v}= W_i$.}
	\end{equation*}
\end{lem}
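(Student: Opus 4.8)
The statement is the crucial ``gluing'' property that will let Lemma~\ref{lem:CycleHardness+} be applied with the walk $W$ in the role of the cyclic sequence $\calC_0,\ldots,\calC_{q-1}$. Since $W_i = N_{W,H}(w_i) = \Gamma_H(w_{i-1})\cap\Gamma_H(w_{i+1})$, the inclusion $W_i \subseteq \NH{u}\cap\NH{v}$ whenever $u\in W_{i-1}$ and $v\in W_{i+1}$ is almost immediate: every $c\in W_i$ is adjacent to both $w_{i-1}$ and $w_{i+1}$, and $u$ (being in $W_{i-1}=\Gamma_H(w_{i-2})\cap\Gamma_H(w_i)$) is adjacent to $w_i$, while similarly $v$ is adjacent to $w_i$; but I also need $c$ adjacent to $u$ and to $v$. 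So the easy direction is not quite a triviality and I should instead argue both inclusions by the same local/structural analysis. The plan is therefore to fix $i$ and the vertices $u\in W_{i-1}$, $v\in W_{i+1}$, and to study the relative position of the three consecutive walk-vertices $w_{i-1}, w_i, w_{i+1}$ using the decomposition of $W$ provided by Observation~\ref{obs:alternatingwalk}: either $W$ is a single cycle $C\in\cycles(B')$, or $W$ has the alternating form $Q_0+P_0+\cdots+Q_j+P_j$ where each $Q_\ell$ lives inside one obstruction $B'_\ell$ and each $P_\ell$ is an induced path $\SP{a,b}$ whose internal vertices lie in no obstruction.

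First I would handle the case where the sub-walk $(w_{i-1},w_i,w_{i+1})$ lies entirely inside one of the cycle-segments $Q_\ell$ (equivalently, inside a single obstruction $B'_\ell$ with cycle $C_\ell = C(B'_\ell,\Nof{T}(B'_\ell))$), which also subsumes the single-cycle case. Here $W_i$ equals $N_{C_\ell, H}(w_i)$, the walk-neighbour-set of $w_i$ in the induced cycle $C_\ell$ of the $K_4$-minor-free graph $H$. I would invoke Lemma~\ref{lem:separators} (and Lemma~\ref{lem:disjoint}): if $|N_{C_\ell,H}(w_i)|>1$ then there is a separation $(A,B)$ of $H$ with $w_i$ on one side, $N_{C_\ell,H}(w_i)\subseteq B$, and separator $\{w_{i-1},w_{i+1}\}$; any common neighbour of $u$ and $v$ must then be ``pinched'' into this separator structure, and a $K_4$-minor argument (mirroring the one inside the proof of Lemma~\ref{lem:separators}) shows it must lie in $N_{C_\ell,H}(w_i)$. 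If $|N_{C_\ell,H}(w_i)|=1$, say $N_{C_\ell,H}(w_i)=\{w_i\}$ — wait, $w_i\notin N_{C_\ell,H}(w_i)$ in general; rather $N_{C_\ell,H}(w_i)$ is a singleton vertex $c$ — then any further common neighbour $c'\ne c$ of $u$ and $v$ yields a $K_4$-minor on $\{w_{i-1},w_i',w_{i+1},\ldots\}$ using the cycle $C_\ell$ to connect $w_{i-1}$ and $w_{i+1}$ externally, contradicting $K_4$-minor-freeness. In both sub-cases the set of common neighbours of $u$ and $v$ is exactly $W_i$.

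Next I would handle the case where $(w_{i-1},w_i,w_{i+1})$ straddles a ``$Q$–$P$'' junction or lies on a path-segment $P_\ell=\SP{a,b}$. If all three vertices lie on one induced path $P_\ell$, then $w_i$ is an internal vertex of $\SP{a,b}$ with $u,v$ its two path-neighbours-once-removed; because $\SP{a,b}$ is the unique shortest path (Lemma~\ref{lem:obstructionfreepath}) and, along it, consecutive biconnected components are edges, diamonds with $\{b_{i-1},b_i\}$ an edge, or impasses with unique common neighbour, one checks that $\Gamma_H(w_{i-1})\cap\Gamma_H(w_{i+1})$ is the single vertex $w_i$ (or, in the impasse sub-case, the relevant forced vertex), matching $W_i$; a putative extra common neighbour creates a $K_4$-minor or contradicts the biconnected-component structure, exactly as in the proofs of Lemmas~\ref{lem:FDGoodStart}–\ref{lem:SklGoodStart}. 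If the triple straddles a junction — one of $w_{i-1},w_{i+1}$ in an obstruction $B'_\ell$ and the other on the adjacent path or in the next obstruction — then $w_{i-1}$ and $w_{i+1}$ live in different biconnected components (using Lemma~\ref{lem:distinctwiandwi+2} and Observation~\ref{obs:alternatingwalk}, which tells us consecutive obstructions $B'_\ell\neq B'_{\ell+1}$ and that path-interiors avoid obstructions), so their only common neighbours are along the (unique) tree-path joining them in $\bc(H)$; walking through that path one sees this common-neighbour set is precisely the forced singleton, which again equals $W_i$.

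The main obstacle, I expect, will be the case analysis at the junctions: carefully enumerating where $w_{i-1}, w_i, w_{i+1}$ can sit relative to the $Q_\ell$/$P_\ell$ decomposition, and in each configuration pinning down $\Gamma_H(w_{i-1})\cap\Gamma_H(w_{i+1})$ exactly — neither too big (ruled out by $K_4$-minor-freeness, typically by exhibiting the minor on four suitably chosen vertices joined by internally-disjoint paths through the walk) nor too small (every vertex of $W_i$ really is a common neighbour of $u$ and of $v$, which uses that $u\in\Gamma_H(w_i)$, $v\in\Gamma_H(w_i)$, together with the biconnected-component constraint forcing common neighbours of $w_{i-1}$ and $w_{i+1}$ to behave uniformly). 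Throughout I would lean on the hypothesis ``$H$ has no hardness gadget'' to rule out the offending sub-configurations via Lemmas~\ref{lem:good_triangle_gadget}, \ref{lem:adj_odd_neighbours}, \ref{lem:diamond_square_gadget} and Corollary~\ref{cor:diamonds_disjoint*}, so that only the clean cases survive, in each of which the displayed equality $\Gamma_H(u)\cap\Gamma_H(v)=W_i$ holds.
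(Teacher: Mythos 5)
Your proposal is structurally on the right track: like the paper, you case-split using Observation~\ref{obs:alternatingwalk}, and the paper's Claim~A (no biconnected component contains both $w_{i-1}$ and $w_{i+1}$) / Claim~B (one does) matches your path-segment / obstruction-segment dichotomy. However, there are two concrete problems.

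First, a small but real slip: you write ``$w_i\notin N_{C_\ell,H}(w_i)$ in general; rather $N_{C_\ell,H}(w_i)$ is a singleton vertex $c$.'' In fact $w_i\in N_{C,H}(w_i)=\NH{w_{i-1}}\cap\NH{w_{i+1}}$ always, so when this set is a singleton it is $\{w_i\}$. This is not merely cosmetic, because the whole point is to show $\NH{u}\cap\NH{v}=\{w_i\}$ in the hard sub-case, and you need to know exactly what the singleton is.

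Second, and more substantively, the ``$K_4$-minor on $\{w_{i-1},w_i,w_{i+1},\ldots\}$'' you gesture at in the case $|W_i|=1$, $u\neq w_{i-1}$ does not suffice for $q=3$, and even for $q\ge 5$ you have not exhibited it. For $q=3$ the offending configuration ($u\neq w_{i-1}$ with $u$ adjacent to $w_i$ and $w_{i+1}$, plus a putative extra common neighbour $c'\neq w_i$ of $u$ and $v=w_{i+1}$) is only a ``bowtie'' of two triangles on a shared edge plus one more vertex; no $K_4$-minor is forced by $K_4$-minor-freeness alone, and the paper instead invokes the hypothesis that $H$ has no hardness gadget via Lemma~\ref{lem:good_triangle_gadget}. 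Your plan mentions Lemma~\ref{lem:good_triangle_gadget} only as a generic tool, without recognising that it is \emph{required} here and that a pure minor argument would fail. For $q>4$ the paper does something cleaner than a bespoke $K_4$-minor: it uses Lemma~\ref{lem:separators} to show that $u$ is adjacent to no cycle vertex other than $w_{i-2},w_i$, then replaces $w_{i-1}$ by $u$ in $C$ to obtain a new induced cycle $C'$ of the same length; since $N_{C',H}(u)=W_{i-1}$ has cardinality $>1$, Corollary~\ref{cor:diamonds_disjoint*} forces $N_{C',H}(w_i)=\NH{u}\cap\NH{w_{i+1}}$ to be a singleton, hence $\{w_i\}$, and a second substitution ($w_{i+1}\mapsto v$, cycle $C''$) finishes. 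This ``cycle-vertex substitution'' idea is the missing key step in your outline. Your direct $K_4$-subdivision can actually be made to work for $q\ge 5$ (branch vertices $u,w_i,v,w_{i-2}$ with the paths $u$--$c'$--$v$, $w_i$--$w_{i-1}$--$w_{i-2}$ and the long arc $v\ldots w_{i-2}$), but you would still have to first rule out extra cycle-adjacencies of $u$ and $v$ and of $c'$, which is essentially where Lemma~\ref{lem:separators} enters; and separately you would still be stuck at $q=3$ without the hardness-gadget tool.
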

\begin{proof} 
	All indices in this proof are considered to be modulo $q$.
	Let $i\in \{0, \dots, q-1\}$, $u\in W_{i-1}$ and $v\in W_{i+1}$. Our goal is to show that $\NH{u} \cap \NH{v}= W_i$. We split the proof into   
	two cases (Claims~A and~B).
	
	\medskip
	\noindent \textbf{Claim~A:} \textit{
		If there is no biconnected component of $H$ that contains both $w_{i-1}$ and $w_{i+1}$ then $\NH{u} \cap \NH{v}= W_i$.
	}
	\medskip
	
	\begin{claimproof}
		If there is no biconnected component that contains both $w_{i-1}$ and $w_{i+1}$ then, by the definition of $W_i$,   $W_i=\{w_i\}$. Since $w_{i-1}$ and $w_{i+1}$ are not in the same biconnected component every path from $w_{i-1}$ to $w_{i+1}$ goes through $w_i$. There is a path from $w_{i-1}$ to $u$ via $w_{i-2}$ and   there is a path from $v$ to $w_{i+1}$ via $w_{i+2}$. Since $w_{i-2}$ and $w_{i+2}$ are distinct from $w_i$ by Lemma~\ref{lem:distinctwiandwi+2} these paths do not go through $w_i$. Hence every path from $u$ to $v$ also goes through $w_i$. Thus, there is no biconnected component that contains both $u$ and $v$. Hence, $\NH{u} \cap \NH{v}= \{w_i\}= W_i$, as required.
		This concludes the proof of Claim~A.
	\end{claimproof}

	\medskip
	\noindent \textbf{Claim~B:} \textit{
		If there is a biconnected component $B$ such that $w_{i-1}$ and $w_{i+1}$ are in $B$ then $\NH{u} \cap \NH{v}= W_i$.
	}
	\medskip
	
	\begin{claimproof}
		By Lemma~\ref{lem:distinctwiandwi+2}, $w_{i-1}\neq w_{i+1}$. This together with the fact that $w_i$ is adjacent to both $w_{i-1}$ and $w_{i+1}$ implies that $w_i$ is also in $B$. If $u=w_{i-1}$ then it is trivial that $u$ is in $B$. If $u\neq w_{i-1}$ then $\abs{W_{i-1}}>1$. By the fact that $W_{i-1}=\NH{w_{i-2}}\cap \NH{w_i}$ and the fact that both $w_{i-1}$ and $w_i$ are in $B$, it follows that $W_{i-1}\subseteq V(B)$ and that $w_{i-2}$ is in $B$.
		Thus, we have established that $u$ is in $B$. Analogously, $v$ is in $B$. We state this formally so we can refer to it.
		
		\medskip
		\noindent \textbf{Fact~1:} \textit{
			If $\abs{W_{i-1}}>1$ then  
			every vertex in $W_{i-1}\cup \{w_{i-2}\}$ is in $B$. Similarly, if $\abs{W_{i+1}}>1$ then  
			every vertex in  $W_{i+1}\cup \{w_{i+2}\}$  is in $B$. Consequently, both $u$ and $v$ are in $B$. 
		}
		\medskip

		If $u=w_{i-1}$ and $v=w_{i+1}$ then $\NH{u} \cap \NH{v}= \NH{w_{i-1}} \cap \NH{w_{i+1}}= W_i$, as required.
		
		Therefore, we assume for the rest of the proof that $u\neq w_{i-1}$ (the case $v\neq w_{i+1}$ is symmetric).
		By Fact~1, the walk $(w_{i-2},w_{i-1}, w_i, w_{i+1})$ is in $B$.
		We show that $B$  is an obstruction. Suppose, for contradiction, that $B$ is an edge, diamond or impasse. Then by Observation~\ref{obs:alternatingwalk}, 
		there are cut-vertices~$a$ and~$b$ such that		
		the walk $(w_{i-2},w_{i-1}, w_i, w_{i+1})$ is a subpath of $\SP{a,b}$. This  contradicts Lemma~\ref{lem:obstructionfreepath}, which states that no four consecutive vertices of this path are part of the same biconnected component.
		
		Thus, we have established that $(w_{i-2},w_{i-1}, w_i, w_{i+1})$ is a walk in the obstruction~$B$. By Observation~\ref{obs:alternatingwalk} and the definition of $\cseg_C(\cdot)$ (Definition~\ref{def:DofC}), it is a subwalk of some cycle $C\in \cycles(B)$ following the order $D(C)$. It follows that $W_{i-1}=N_{C,H}(w_{i-1})$ and $W_{i}=N_{C,H}(w_{i})$. By Corollary~\ref{cor:diamonds_disjoint*}, from the fact that $H$ has no hardness gadget and $\abs{W_{i-1}}>1$ it follows that $W_i=\{w_i\}$. Let $\ell$ be the length of $C$. Since $C\in \cycles(B)$ we have $\ell= 3$ or $\ell>4$. We make a   case distinction depending on $\ell$.
		
		\begin{myitemize}
			\item Suppose $\ell=3$ so $w_{i-2} = w_{i+1}$. 
			Suppose, for contradiction that $\abs{W_{i+1}}>1$. Then by Fact~1, $w_{i+2}$ is also in $B$ and $(w_{i-2},w_{i-1}, w_i, w_{i+1}, w_{i+2})$ is a subwalk of $\cseg_C(\cdot)$. This gives a contradiction to the fact that all vertices of $\cseg_C(\cdot)$, apart from possibly its endpoints, are distinct (see Definition~\ref{def:DofC}). Therefore, $W_{i+1}=\{w_{i+1}\}$ and consequently $v=w_{i+1}$. 
			Since $u\neq w_{i-1}$,
			$(v,u,w_i,v)$ and $(v,w_{i-1},w_i,v)$ are two distinct triangles that share 
			the edge $\{w_i,v\}$. By  Lemma~\ref{lem:good_triangle_gadget},
			since $H$ has no hardness gadget, $u$ and $v$ have no common neighbour other than~$w_i$.

			\item Suppose $\ell>4$.
			Apply Lemma~\ref{lem:separators} to the 
			cycle~$C$ and the index~$i-1$.
			This shows that there is a separation $(A_1,A_2)$ of~$H$ 
			such that $C\setminus \{w_{i-1}\} \subseteq A_1$,
			$W_{i-1}=N_{C,H}(w_{i-1})\subseteq A_2$, and $A_1\cap A_2 = \{w_{i-2},w_i\}$.
			Since $u\in A_2$,  $u$ is not adjacent to any vertex in~$C\setminus \{w_{i-2},w_{i-1},w_i\}$.
			By the definition of $\cycles(B)$ (Definition~\ref{def:obstruction}) $C$ is an induced cycle of~$B$, so the 
			cycle $C'$ that is obtained from~$C$   by replacing $w_{i-1}$ with $u$ is also an induced cycle of~$B$.
			Also, $C'$ has length $\ell>4$. 
			Since $H$ has no hardness gadget,
			we can apply Corollary~\ref{cor:diamonds_disjoint*} to obtain    (using the fact that $N_{C',H}(u)= N_{C,H}(w_{i-1})= W_{i-1}$ has cardinality greater 
			than $1$) that $|N_{C',H}(w_i)|=1$.
			By definition, $N_{C',H}(w_i) = \NH{u} \cap \NH{w_{i+1}}$,
			so $\NH{u} \cap \NH{w_{i+1}}=\{w_i\}$.

			\begin{myitemize} 
				\item If $\abs{W_{i+1}}=1$ then $v=w_{i+1}$, so we are finished.

				\item Suppose that $\abs{W_{i+1}}>1$. By Fact~1,   $w_{i+1}$ and $w_{i+2}$ are in $B$, and consequently  
				the walk  $(w_{i-2},u, w_i, w_{i+1}, w_{i+2})$ is a subwalk of~$C'$.
				It follows that $W_{i+1} = N_{C',H}(w_{i+1})$.
				Apply Lemma~\ref{lem:separators} to the 
				cycle~$C'$ and the index~$i+1$.
				This shows that there is a separation $(A_3,A_4)$ of~$H$ 
				such that $C'\setminus \{w_{i+1}\} \subseteq A_3$,
				$W_{i+1} \subseteq A_4$, and $A_3\cap A_4 = \{w_{i},w_{i+2}\}$.
				
				Since $v\in A_4$,  $v$ is not adjacent to any vertex in~$C'\setminus \{w_{i},w_{i+1},w_{i+2}\}$.
				So the
				cycle $C''$ that is obtained from~$C'$   by replacing $w_{i+1}$ with $v$ is also an induced cycle of~$B$ with length $\ell>4$.
				Since $H$ has no hardness gadget,
				we can apply Corollary~\ref{cor:diamonds_disjoint*} to obtain    (using the fact that $N_{C'',H}(v)=   W_{i+1}$ has cardinality greater 
				than $1$) that $|N_{C'',H}(w_i)|=1$.
				By definition, $N_{C'',H}(w_i) = \NH{u} \cap \NH{v}$,
				so we are finished. 			\end{myitemize}
			
		\end{myitemize} 
		This concludes the proof of Claim~B.			
	\end{claimproof}	
	The lemma follows immediately from Claim~A and Claim~B.\end{proof}

The following lemma establishes property~\eqref{equ:CycleHardness+4} for the walk returned by Algorithm~\ref{alg:closedwalk}, as required by Lemma~\ref{lem:CycleHardness+}.

\begin{lem}\label{lem:noDwalk}
	Let $H$ be a connected graph. Let $T$ be a closed suitable subtree of $\bc(H)$. Let $B'$ be an obstruction that is a block of $T$. Let $W=(w_0,,\dots, w_{q-1}, w_0)$ be the output of $\walk{T,B'}$ (Algorithm~\ref{alg:closedwalk}). By Lemma~\ref{lem:closedwalk}, $W$ is a closed walk and $q\ge 3$.
	For each $i\in \{0, \dots, q-1\}$, let $W_i=N_{W,H}(w_i)$. Then there exists no closed walk $D=(d_0,\dots,d_{q-1},d_0)$ with $d_i\in \Gamma_H(W_i)\setminus (W_{i-1}\cup W_{i+1})$ for all $i$ (indices taken modulo $q$).
\end{lem}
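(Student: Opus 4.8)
The plan is to argue by contradiction: suppose there is a closed walk $D = (d_0, \dots, d_{q-1}, d_0)$ with $d_i \in \Gamma_H(W_i) \setminus (W_{i-1} \cup W_{i+1})$ for all $i$ (indices modulo $q$). I would first handle the degenerate case where $(T,B')$ has no exit, so that $W = C$ for a cycle $C \in \cycles(B')$. In that case $W_i = N_{C,H}(c_i)$ and the non-existence of $D$ is essentially built into the definition of obstruction together with Lemma~\ref{lem:disjoint} (which gives that the walk-neighbour-sets are pairwise disjoint, so that the exclusion $d_i \notin W_{i-1} \cup W_{i+1}$ is meaningful and a $D$-walk would yield a $K_4$-minor exactly as in the cycle-gadget arguments); more carefully, I expect this to reduce to the same kind of reasoning used in Corollary~\ref{cor:CycleOfSquares} and Lemma~\ref{lem:adj_odd_neighbours}, where a closed walk of the required form forces a $K_4$-minor or a self-loop. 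Actually, since $H$ is $K_4$-minor-free here (the lemma as stated does not say so, but it is applied in that setting, and the block $B'$ is $K_4$-minor-free), I would either add that hypothesis or observe that the only use of a "$D$-walk" downstream is via Lemma~\ref{lem:CycleHardness+}, which also assumes we are inside a $K_4$-minor-free graph. Let me therefore run the argument assuming $H$ is $K_4$-minor-free, consistent with Lemma~\ref{lem:suitableTreeHardnessWithObstruction}.

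For the main case, I would use the decomposition of $W$ from Observation~\ref{obs:alternatingwalk}: $W = Q_0 + P_0 + Q_1 + \dots + Q_j + P_j$, where each $Q_\ell$ is a segment of a cycle $C_\ell = C(B'_\ell, \Gamma_T(B'_\ell))$ inside an obstruction $B'_\ell$, and each $P_\ell$ is a shortest path $\SP{a,b}$ between an exit and a destination, whose internal vertices lie in edges/diamonds/impasses and are not in any obstruction. The key structural fact I would extract is: for each index $i$, the triple $(w_{i-1}, w_i, w_{i+1})$ is either (i) three consecutive vertices of a single cycle $C_\ell$ inside an obstruction (so $W_{i-1}, W_i, W_{i+1}$ are walk-neighbour-sets of $C_\ell$), or (ii) a sub-walk of a single path-segment $P_\ell$ or a single-vertex-plus-edge transition into one, in which case $W_i = \{w_i\}$ (because $w_{i-1}$ and $w_{i+1}$ are separated by $w_i$, using Lemma~\ref{lem:distinctwiandwi+2} to see $w_{i-2}, w_{i+2} \ne w_i$ as in the proof of Lemma~\ref{lem:properWalkIntersection}, Claim~A). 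In case (ii), the exclusion $d_i \notin W_{i-1} \cup W_{i+1}$ is automatic-ish, but more importantly $d_i \in \Gamma_H(w_i)$, and I would track where $d_i$ lives relative to the block-cut structure.

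The heart of the argument is then an orientation/consistency analysis analogous to the proof of Lemma~\ref{lem:CycleGadget+}: along any cycle-segment $Q_\ell$ in an obstruction $B'_\ell$, if $d_i$ ever falls into $\Gamma_H(W_i)$ while being excluded from $W_{i-1} \cup W_{i+1}$, then — exactly as in Lemma~\ref{lem:separators}/Corollary~\ref{cor:diamonds_disjoint} — either $d_i$ is forced to lie outside the biconnected component $B'_\ell$ entirely, or it creates a $K_4$-minor using three vertices of $C_\ell$ plus $d_i$. Propagating around the whole closed walk $D$, every $d_i$ must then eventually be "pushed off" into regions separated by cut vertices, but the closedness of $D$ (it returns to $d_0$) combined with the fact that the block-cut tree is a tree means $D$ cannot consistently travel "outward" everywhere and still close up — this is the same tree-acyclicity obstruction that powers Observation~\ref{obs:alternatingwalk}. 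Concretely, I would pick an obstruction segment $Q_\ell$ that $W$ traverses and a separation $(A_1, A_2)$ from Lemma~\ref{lem:separators} isolating a walk-neighbour-set; the $d_i$'s corresponding to that segment are then trapped on one side, and walking $D$ all the way around forces it to cross a separator vertex that it is forbidden to use, or to produce a $K_4$-minor. I expect the main obstacle to be bookkeeping: carefully stating, for each of the (roughly three) types of consecutive triples in $W$, exactly which side of which separation each $d_i$ must lie on, and then assembling these local constraints into a global contradiction without the case analysis ballooning. I would try to isolate a single clean sub-claim — "if $(w_{i-1},w_i,w_{i+1})$ is a sub-walk of a cycle $C_\ell$ inside obstruction $B'_\ell$ then any valid $d_i$ lies in $V(B'_\ell) \setminus C_\ell$, and moreover consecutive such $d_i, d_{i+1}$ cannot both lie in $B'_\ell$ unless they coincide with cycle vertices" — mirroring the $K_4$-minor extraction in Claim~A of Lemma~\ref{lem:main\_biconnected\_c5}, and then conclude by the tree structure of $\bc(H)$ that $D$ cannot be closed.
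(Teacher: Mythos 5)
Your overall plan (contradiction, two cases split by whether $(T,B')$ has an exit) is aligned with the paper's, and your handling of the no-exit case (where $W$ is a single cycle in $\cycles(B')$) is roughly in the right spirit — the paper shows $D$ must be vertex-disjoint from $\bigcup_i W_i$ via a $K_4$-minor extraction, then contracts $D$ to a single vertex adjacent to all of $W$ and reads off a $K_4$-minor. You also correctly flag that the lemma as stated omits a $K_4$-minor-freeness hypothesis that the proof (in this case) really uses.

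However, for the main case your route is substantially more complicated than what is needed, and you are missing the one observation that collapses it. The paper does not invoke the decomposition of Observation~\ref{obs:alternatingwalk}, separations, or any tree-acyclicity argument here. It simply notes: if $W$ is not contained in a single biconnected component, then there is an index $i$ such that no biconnected component contains both $w_{i-1}$ and $w_{i+1}$, so $w_i$ is the unique cut vertex on every path between them. But $d_{i-1}, d_i, d_{i+1}$ are all distinct from $w_i$ (since $d_{i-1}\notin W_i$, $d_{i+1}\notin W_i$, and $d_i$ is a neighbour of $w_i$ with no self-loops), and $d_{i-1}$ is adjacent to $w_{i-1}$ while $d_{i+1}$ is adjacent to $w_{i+1}$, so $(w_{i-1}, d_{i-1}, d_i, d_{i+1}, w_{i+1})$ is a walk from $w_{i-1}$ to $w_{i+1}$ avoiding $w_i$ — an immediate contradiction to pure biconnectivity, with no $K_4$-minors at all. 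Your plan of tracking separations segment-by-segment and assembling a global ``cannot close up'' argument is plausible in spirit but would require the bookkeeping you yourself anticipate and is not how the proof goes; the local observation at a single index $i$ already finishes the case. I would also caution against citing Corollary~\ref{cor:CycleOfSquares} here, since that corollary has the non-existence of a $D$-walk as a hypothesis, not a conclusion, so it cannot be used to derive it.
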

\begin{proof}
	Assume for the sake of contradiction that such a walk $D$ exists. We distinguish two cases:
	\begin{enumerate}[(I)]
		\item $W$ is not entirely contained in a single biconnected component of $H$. In this case, there is an index $i$ such that no biconnected component contains both $w_{i-1}$ and $w_{i+1}$.
		Now consider $d_{i-1}\in \Gamma_H(w_{i-1})$ and $d_{i+1}\in \Gamma_H(w_{i+1})$. Note that $d_{i-1}\neq w_i$ and $d_{i+1}\neq w_i$ by the specification of $D$. Note further that $d_{i}\neq w_i$ as we do not allow self-loops in $H$. Consequently, there are two internally vertex disjoint 2-paths from $w_{i-1}$ to $w_{i+1}$; one passes through $d_{i-1}$, $d_i$ and $d_{i+1}$; and the other passes through $w_i$. This is a contradiction to the fact that no biconnected component contains both $w_{i-1}$ and $w_{i+1}$.

		\item $W$ is entirely contained in a biconnected component $B$. By Observation~\ref{obs:alternatingwalk}, the only possibility for this to be true is that $B$ is an obstruction and 
		$W$ is a cycle in $\cycles(B)$. By the definition of obstructions (and $\cycles(B)$), $W$ is thus an induced cycle of length $q$
		such that		
		$q\geq 3$ and $q\neq 4$.  The lemma will follow easily from the following claim.
		
		\medskip
		\noindent \textbf{Claim~A:} \textit{  $D\cap \left(\bigcup_i W_i\right)=\emptyset $.
		}\medskip

		\begin{claimproof}
			Assume for contradiction that $d_i\in W_j$ for some indices~$i$ and~$j$. Note that $j\notin\{i-1,i+1\}$ by the specification of $D$. 
			We cannot have $j=i$ since $d_i \in \Gamma_H(W_i)$ so $d_i\notin W_i$ (otherwise $H$ has a self-loop).	
			Since $j\notin\{i-1,i,i+1\}$, we have $q\geq 5$.
			We will show that $H$ has a $K_4$-minor. 
			Since $d_i$ is adjacent to $w_i$, and it is not equal to $w_{i-1}$ or $w_{i+1}$ 
			and since $W$ is an induced cycle in $B$,
			we conclude that $d_i$ is distinct from the vertices of $W$. $H$ therefore contains a $K_4$-minor containing the vertex~$d_i$ and its three neighbours~$w_i$,
			$w_{j-1}$ and $w_{j+1}$. There are disjoint paths between these three vertices along the cycle~$W$ and $d_i$ is not on these paths. See the following illustration.
			\begin{center}
				\begin{tikzpicture}[scale=1.5, node distance = 1.4cm,thick]
				\tikzstyle{dot}   =[fill=black, draw=black, circle, inner sep=0.15mm]
				\tikzstyle{vertex}=[  draw=black, circle, inner sep=1.5pt]
				\tikzstyle{dist}  =[fill=white, draw=black, circle, inner sep=2pt]
				\tikzstyle{pinned}=[draw=black, minimum size=10mm, circle, inner sep=0pt]	
				\node[vertex] (wi-1) at (-1  ,1) [label=90: $w_{j-1}$] {};
				\node[vertex] (wi+1) at (1  ,1) [label=90: $w_{j+1}$] {};
				\node[vertex] (di) at (0  ,1) [label=90: $w_j$] {};
				\node[vertex] (wip) at (0  ,0) [label=0: $d_i$] {}; 
				\node[vertex] (wjp) at (0  ,-1) [label=270: {$w_i$}] {};
				
				\draw[dashed] (wi-1) to[out=180,in=180,distance=2cm] (wjp); \draw (wi-1) -- (di); \draw (wi-1) -- (wip); 
				\draw (di) -- (wi+1); \draw (wip) -- (wi+1);\draw (wjp) -- (wip);
				\draw[dashed] (wi+1) to[out=0,in=0,distance=2cm] (wjp);
				\end{tikzpicture}
			\end{center}
			This concludes the proof of Claim~A.
		\end{claimproof}
		We have assumed for contradiction that $D$ exists, and proved Claim~A.	
		We obtain the  contradiction by using  Claim~A to construct a $K_4$-minor in~$H$.  Claim~A demonstrates that $W\cap D=\emptyset$.
		Now contract the walk $D$ to a single vertex. This yields a vertex $d\notin W$ which is adjacent to all vertices of $W$. As $W$ has length at least $3$, we have found a $K_4$-minor as promised. 
	\end{enumerate}
\end{proof}

\begin{lem}\label{lem:suitableTreeHardnessWithObstruction}
	Let $H$ be a connected $K_4$-minor-free graph. Let $T$ be a closed suitable subtree of $\bc(H)$. Let $B'$ be an obstruction that is a block of $T$. Then $H$ has a hardness gadget.
\end{lem}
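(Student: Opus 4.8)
The plan is to apply Lemma~\ref{lem:CycleHardness+} to the closed walk $W=(w_0,\dots,w_{q-1},w_0)=\walk{T,B'}$ with the sets $\calC_i = N_{W,H}(w_i)$ (all indices taken modulo~$q$). Suppose, for contradiction, that $H$ has no hardness gadget. By Lemma~\ref{lem:closedwalk}, $W$ is a closed walk with $q\ge 3$ and $q\neq 4$. Lemma~\ref{lem:properWalkIntersection} yields properties~\eqref{equ:CycleHardness+2} and~\eqref{equ:CycleHardness+3} for the $\calC_i$, and Lemma~\ref{lem:noDwalk} yields property~\eqref{equ:CycleHardness+4}; none of these depends on the auxiliary gadgets $(J_0,z_0),\dots,(J_{q-1},z_{q-1})$. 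It therefore remains to choose gadgets $(J_i,z_i)$ so that property~\eqref{equ:CycleHardness+1} holds and to exhibit an index $k$ satisfying~\eqref{equ:CycleHardness+5} and~\eqref{equ:CycleHardness+6}; Lemma~\ref{lem:CycleHardness+} then produces a hardness gadget for~$H$, the desired contradiction.

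The first step is a structural analysis of the $\calC_i$ based on Observation~\ref{obs:alternatingwalk}, which writes $W$ as a concatenation of obstruction arcs $Q_\ell$ (subwalks of cycles $C_\ell = C(B'_\ell,\Nof{T}(B'_\ell))$ in obstructions $B'_\ell$, traversed along $D(C_\ell)$) and shortest paths $P_\ell$ through blocks that are edges, diamonds or impasses. For $w_i$ in the interior of a path segment, or at a junction between two segments, biconnectedness, Lemma~\ref{lem:obstructionfreepath} and $K_4$-minor-freeness force $|\calC_i|=1$. For $w_i$ in the interior of an obstruction arc $Q_\ell$ we have $\calC_i = N_{C_\ell,H}(w_i)$; by Lemma~\ref{lem:disjoint} the walk-neighbour-sets along $C_\ell$ are pairwise disjoint, and by Corollary~\ref{cor:diamonds_disjoint*} (using that $H$ has no hardness gadget) a walk-neighbour-set of cardinality greater than~$1$ has both of its cycle-neighbours of cardinality~$1$. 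Moreover, if $|\calC_i|$ is even then, by the definition of $C(B'_\ell,\cdot)$, $w_i$ belongs to $\Nof{T}(B'_\ell)$, every vertex of $\calC_i$ has degree~$2$ in~$B'_\ell$, and $w_i\in\calC_i$; since $w_i$ is interior to $Q_\ell$ it is not an exit of $(T,B'_\ell)$, hence it is an attachment point, so the maximal proper obstruction-free path $P^\ast$ starting at $w_i$ in $T$ ends at a leaf $b$ of $T$, which --- because $T$ is closed and by Lemma~\ref{lem:choiceofRdoesnotmatter} --- is $R$-closed in $\bc(H)$; in particular the unique block of $\bc(H)$ strictly below~$b$ is an edge, so $|\Gamma_H(b)\setminus V(B_b)|=1$, where $B_b$ is the block of $P^\ast$ incident with~$b$.

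Now I define the gadgets and select $k$. For each $i$ with $|\calC_{i-1}|$ and $|\calC_{i+1}|$ both odd, let $(J_i,z_i)$ be the single vertex $z_i$ with empty pinning, so $\Omega_i=V(H)$ and $|\calC_{i\pm1}\cap\Omega_i|=|\calC_{i\pm1}|$ is odd. If $|\calC_{i-1}|$ is even (so $w_{i-1}$ is an attachment point), apply Lemma~\ref{lem:CaterpillarForBipChordalComponents} to the blocks of $P^\ast$ read from~$b$ towards $w_{i-1}$ --- the $R$-closedness of~$b$ supplying the hypothesis $|\Gamma_H(b)\setminus V(B_b)|$ odd --- to get, since $H$ has no hardness gadget, a good start $(L_{i-1},w_{i-1})$ that is not a good stop in the block $B^{(1)}$ of $P^\ast$ incident with $w_{i-1}$; that is, a gadget $(J^-_i,z^-_i)$ with odd-set $L_{i-1}\cup R_{i-1}$, where $|L_{i-1}|$ is odd, $L_{i-1}\subseteq\Gamma_H(w_{i-1})\cap V(B^{(1)})$, and $R_{i-1}=\Gamma_H(w_{i-1})\setminus V(B^{(1)})$ has even cardinality. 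Define $(J^+_i,z^+_i)$ analogously when $|\calC_{i+1}|$ is even, and let $J_i$ be obtained from $z_i$ by adding an edge from $z_i$ to $z^-_i$ (when $|\calC_{i-1}|$ is even) and an edge from $z_i$ to $z^+_i$ (when $|\calC_{i+1}|$ is even); then $\Omega_i=\{\,v : |\Gamma_H(v)\cap(L_{i-1}\cup R_{i-1})|\text{ odd}\,\}$, intersected with the analogous set on the $+$-side when present. A direct computation checks~\eqref{equ:CycleHardness+1}: when $|\calC_{i-1}|$ is even, a vertex of $\calC_{i-1}\setminus\{w_{i-1}\}$ has degree~$2$ with both cycle-neighbours in $R_{i-1}$, giving an even count, while $w_{i-1}$ sees all of the odd-cardinality set $L_{i-1}\cup R_{i-1}$, so $\calC_{i-1}\cap\Omega_i=\{w_{i-1}\}$; the case $|\calC_{i-1}|=1$ and the analogues for $\calC_{i+1}$ follow from the same computation using that the relevant adjacent walk-neighbour-sets are singletons (Corollary~\ref{cor:diamonds_disjoint*}). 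For the index~$k$: since $B'$ is an obstruction that is a block of $T$, $W$ traverses some obstruction cycle $C_\ell$ of length $\neq 4$ in full (if $(T,B')$ has no exit then $W=C(B',\Nof{T}(B'))$; otherwise a leaf obstruction of $T$ has a single exit and is traversed fully by Algorithm~\ref{alg:closedwalksub}). If $C_\ell$ is a triangle I apply Lemma~\ref{lem:good_triangle_gadget}, Lemma~\ref{lem:adj_odd_neighbours} or Corollary~\ref{cor:CycleOfSquares} to that obstruction directly, obtaining a hardness gadget. Otherwise $|C_\ell|\ge 5$; take $w_k,\dots,w_{k+3}$ to be four consecutive vertices of $C_\ell$ appearing consecutively on $W$, so their walk-neighbour-sets are $N_{C_\ell,H}(\cdot)$, pairwise disjoint by Lemma~\ref{lem:disjoint}, with at most two of cardinality $>1$ and those two non-adjacent among the four. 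Using the descriptions of $\Omega_{k+1}$ and $\Omega_{k+2}$ above, $|(\calC_k\cup\calC_{k+2})\cap\Omega_{k+1}|$ and $|(\calC_{k+1}\cup\calC_{k+3})\cap\Omega_{k+2}|$ are each a disjoint union of two sets whose cardinalities have the same parity, hence even, so~\eqref{equ:CycleHardness+6} holds; and~\eqref{equ:CycleHardness+5} holds because on an induced cycle of length $\neq 4$ the sets $\calC_k,\calC_{k+3}$ are disjoint and every vertex of an even walk-neighbour-set has degree~$2$ in its obstruction with both neighbours on the cycle, so it has no neighbour in $\calC_{k+3}$. By Lemma~\ref{lem:CycleHardness+} this contradicts the assumption that $H$ has no hardness gadget.

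I expect the structural step to be the main obstacle. The delicate parts are: handling the vertices at the junctions between obstruction arcs and path segments (the exits and their destinations), where one must use $K_4$-minor-freeness and the biconnected-component structure of~$H$ to rule out walk-neighbour-sets of size other than~$1$; verifying that every even walk-neighbour-set genuinely contains an attachment point whose maximal proper obstruction-free path terminates at an $R$-closed leaf, so that Lemma~\ref{lem:CaterpillarForBipChordalComponents} applies with the correct parity hypothesis; and checking that odd walk-neighbour-sets of cardinality at least~$3$ --- which are not excluded on long obstruction cycles --- do not disturb the parity bookkeeping in~\eqref{equ:CycleHardness+1},~\eqref{equ:CycleHardness+5} and~\eqref{equ:CycleHardness+6}. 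The individual gadget computations are routine once the structure is pinned down, but carrying all of these parities simultaneously along the (possibly long) walk~$W$ is where the bulk of the argument lies.
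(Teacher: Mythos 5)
Your overall strategy matches the paper's: apply Lemma~\ref{lem:CycleHardness+} to $W=\walk{T,B'}$ with $\calC_i=N_{W,H}(w_i)$, use Lemmas~\ref{lem:properWalkIntersection} and~\ref{lem:noDwalk} for~\eqref{equ:CycleHardness+2}--\eqref{equ:CycleHardness+4}, and build the gadgets $(J_i,z_i)$ from good-start gadgets at attachment points via Lemma~\ref{lem:CaterpillarForBipChordalComponents}; the structural analysis showing that every even $\calC_i$ is at an attachment point whose maximal obstruction-free path ends at an $R$-closed leaf is precisely the paper's Claim~A. However, your strategy for choosing the index $k$ diverges from the paper's and introduces two genuine gaps.

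The paper's Claim~C splits into two cases: (I) if $W$ is a single cycle $C\in\cycles(B)$ it takes $k=0$ and handles $q=3$ and $q\ge 5$ separately (using $K_4$-minor-freeness for $q=3$ and Lemma~\ref{lem:separators} for $q\ge 5$); (II) if $W$ spans several biconnected components it chooses $k$ at a \emph{junction}, so that $w_k,w_{k+1}$ lie in an obstruction $B$, $w_{k+1}$ is an exit, and $w_{k+2},w_{k+3}$ lie outside $B$, from which disjointness and the absence of edges between $\calC_k$ and $\calC_{k+3}$ follow structurally, regardless of the lengths of the obstruction cycles. You instead always pick $k$ inside a fully-traversed obstruction cycle $C_\ell$. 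This breaks down as follows. First, when $C_\ell$ is a triangle and $W\neq C_\ell$, you claim a hardness gadget comes ``directly'' from Lemma~\ref{lem:good_triangle_gadget}, Lemma~\ref{lem:adj_odd_neighbours} or Corollary~\ref{cor:CycleOfSquares}, but none of these applies to a triangle obstruction with an even walk-neighbour-set of degree-$2$ vertices (for example, a triangle $(a,b,c,a)$ with one further vertex $c'$ adjacent to $a$ and $b$: here $|N_{C_\ell,H}(c)|=2$, the other two walk-neighbour-sets are singletons, so Lemma~\ref{lem:good_triangle_gadget} and Corollary~\ref{cor:CycleOfSquares} fail for want of extra common neighbours, and Lemma~\ref{lem:adj_odd_neighbours} needs odd $\ge 3$). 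Such obstructions exist precisely because these lemmas do not settle them, so the case cannot be dismissed.

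Second, for $|C_\ell|\ge 5$ your argument for~\eqref{equ:CycleHardness+5} only treats the even-cardinality walk-neighbour-sets (where the definition of obstruction forces degree~$2$). A walk-neighbour-set of odd cardinality at least~$3$ is not constrained that way, and you explicitly flag this as a risk in your final paragraph without resolving it in the body. The paper's Case~I for $q\ge 5$ closes this by invoking Lemma~\ref{lem:separators} to show that the only vertices of $\bigcup_i W_i$ adjacent to $W_0$ are $w_1$ and $w_{q-1}$, which works irrespective of whether $|W_0|$ is even or odd; you would need the same separator argument, or you would need to abandon the fully-traversed-cycle strategy and pick $k$ at a junction as in the paper's Case~II, which sidesteps both issues at once. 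A third, smaller imprecision: when verifying~\eqref{equ:CycleHardness+1} for the side where the walk-neighbour-set is odd of size $\ge 3$, the fact that each of its vertices sees exactly $w_i$ in $\hat\Omega_{i\pm1}$ rests on Lemma~\ref{lem:properWalkIntersection} (as in the paper's Claim~B Case~II), not merely on Corollary~\ref{cor:diamonds_disjoint*} as you write.
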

\begin{proof} 
	Let $W=(w_0,\dots,w_{q-1}, w_0)$ be the output of $\walk{T,B'}$. 
	By Lemma~\ref{lem:closedwalk}, $W$ is a closed walk with $q\ge 3$ and $q\neq 4$.	
	Our goal is to use  Lemma~\ref{lem:CycleHardness+} to show that $H$ has a hardness gadget. To this end, we identify the sets $\mathcal{C}_i$ of Lemma~\ref{lem:CycleHardness+} with the sets $W_i=N_{W,H}(w_i)$. Let $S$ be the set of all $i$ such that $W_i$ has even cardinality. 
	
	\medskip
	\noindent \textbf{Claim A:}  \textit{
		For every $i\in S$, there is an obstruction $O_i$ such that, for $C_i=C(O_i, \Nof{T}(O_i))$, the following hold.
		\begin{myitemize}
			\item $w_{i-1}, w_i, w_{i+1}\in C_i$.
			\item $W_i=N_{C_i,H}(w_i)$.
			\item Every vertex in $W_i$ has degree $2$ in $O_i$.
			\item $w_i$ is an attachment point of $(T,O_i)$.
		\end{myitemize}
	}
	\medskip
	
	\begin{claimproof} Fix $i\in S$.
		By the definition of $W_i$ and the fact that $\abs{W_i}>1$, there is a biconnected component $O_i$ of~$H$ that contains $w_{i-1}$, $W_i$, and $w_{i+1}$. Suppose, for contradiction, that $O_i$ is an edge, diamond or impasse, then, by Observation~\ref{obs:alternatingwalk}, the walk $(w_{i-1}, w_i, w_{i+1})$ is a subpath of a path of the form $\SP{\cdot}$. However, since $\abs{W_i}\ge 2$, $w_{i-1}$ and $w_{i+1}$ have at least $2$ common neighbours in $H$, this is a contradiction to Lemma~\ref{lem:obstructionfreepath}.
		
		We have established that $O_i$ is an obstruction. Since $T$ is a suitable subtree, $(O_i, \Nof{T}(O_i))$ is a suitable connector and, by Definition~\ref{def:suitableconnector}, $C_i=C(O_i, \Nof{T}(O_i))$ is a cycle with $\Nof{T}(O_i)=\{c \in C_i \mid \text{the cardinality of } N_{C_i,H}(c)  \text{ is even}\}$. By Observation~\ref{obs:alternatingwalk}, $(w_{i-1}, w_i, w_{i+1})$ is a subwalk of a walk of the form $\cseg_{C_i}(\cdot)$. It follows that $w_{i-1}, w_i, w_{i+1}\in C_i$ and $W_i=N_{C_i,H}(w_i)$, as required. 
		
		As the cardinality of $W_i$ is even and $C_i\in \cycles(O_i)$, by Definition~\ref{def:obstruction},  
		every vertex in~$W_i$ has degree~$2$ in~$O_i$, 	 as required.
		
		The fact that the cardinality of $W_i$ is even also implies that $w_i\in \Nof{T}(O_i)$ (since $\Nof{T}(O_i)=\{c \in C_i \mid \text{the cardinality of } N_{C_i,H}(c)  \text{ is even}\}$). Thus, by Definition~\ref{def:exit}, $w_i$ is either an exit or an attachment point of $(T,O_i)$. However, by Observation~\ref{obs:alternatingwalk}, only the endpoints of $\cseg_{C_i}(\cdot)$ are exits, which means that $w_i$ is an attachment point, as required.
		This finishes the proof of Claim~A.
	\end{claimproof}

	In the remainder of this proof, for each $i\in S$, let $O_i$ and $C_i$ be as stated in Claim~A. 
	Next we use the fact that, for each $i\in S$, $w_i$ is an attachment point of $(T,O_i)$ to define a gadget $(\hat{J}_i,\hat{z}_i)$. Those gadgets will be used in the construction of the gadgets $(J_0,z_0),\dots,(J_{q-1},z_{q-1})$ required by Lemma~\ref{lem:CycleHardness+}. Recall that, by definition of attachment points (Definition~\ref{def:exit}), for each $i\in S$, there is a (unique) maximal-length proper obstruction-free path $P_i =(b^i_0,B^i_1,b^i_1,B^i_2,\dots,B^i_{q_i},b^i_{q_i} )$ in $T$ 
	such that $w_i = b^i_{q_i}$ and $b^i_0$ is a leaf in $T$. As $T$ is closed, we obtain that, for some block $R$ of $T$, the vertex $b^i_0$ is $R$-closed, i.e., $b^i_0$ has precisely one descendant in $\bc(H)$ with respect to $<_R$. Moreover, this descendant must be an edge. 
	We distinguish whether $q_i=0$ or $q_i\ge 1$:
	\begin{itemize}
		\item[$q_i=0$:] We have $w_i=b^i_{q_i} = b^i_0$. 
		Since $b^i_0=w_i$ is $R$-closed,  Lemma~\ref{lem:choiceofRdoesnotmatter} ensures that it is also  $O_i$-closed. Consequently, $w_i$ has precisely three neighbours in $H$: The two neighbours in $O_i$ (which are $w_{i-1}$ and $w_{i+1}$ --- these are distinct by Lemma~\ref{lem:distinctwiandwi+2}), as well as the other endpoint~$\ell_i$ of the edge  that is the unique descendant   of $w_i$ in $\bc(H)$.
		
		We define $\hat{J}_i$ to be a single edge, one endpoint of which is $z_i$, and the other endpoint of which is pinned to $w_i$. Observe that
		\[\{v\in V(H)\mid \enspace \abs{\hom{(\hat{J}_i,\hat{z}_i)}{(H,v)}} \text{ is odd.} \} = \{w_{i-1},w_{i+1},\ell_i\}\,.\]
		This concludes the definition of $(\hat{J}_i, \hat{z}_i)$ in the case that $q_i=0$.
		
		\item[$q_i\geq 1$:] By Lemma~\ref{lem:choiceofRdoesnotmatter}, $b^i_0$ is   $B^i_0$-closed. It follows that 
		$|\Gamma_H(b^i_0)\setminus B^i_0|=1$.
		Since $T$ is a suitable subtree and $P_i$ is obstruction-free,
		for each $j\in [q_i]$, $(B^i_j, \{b^i_{j-1}, b^i_j\})$ is a suitable connector in~$H$ and $B^i_j$ is an edge, diamond or impasse. Thus, we can invoke Lemma~\ref{lem:CaterpillarForBipChordalComponents}. We obtain that at least one of the following is true:
		\begin{myitemize}
			\item $H$ has a hardness gadget.
			\item $B^i_{q_i}$ is an edge or a diamond and $( L_i,b^i_{q_i})$ is a good start in~$B^i_{q_i}$ but not a good stop in~$B^i_{q_i}$, 
			where $L_i = 
			\{b^i_{q_i-1}\}$.		
			\item $B^i_{q_i}$ is an impasse, 		
			and $(L_i, b^i_{q_i})$ is a good start in $B^i_{q_i}$ but not a good stop in~$B^i_{q_i}$, where
			$L_i = \{d_i\}$ and 
			$d_{i}$ is  
			the unique common neighbour of $b^i_{q_i-1}$ and $b^i_{q_i}$	in~$H$. 			
		\end{myitemize}

		We are done in the first case, so suppose that one of other cases applies.
		By definition of good starts, we thus obtain a gadget $(\hat{J}_i,\hat{z}_i)$ such that, for $R_{i} = \Gamma_H(b^i_{q_i})\setminus V(B^i_{q_i})$,
		\[\{v\in V(H) \mid \enspace \abs{\hom{(\hat{J}_i,\hat{z}_i)}{(H,v)}} \text{ is odd.} \} = L_{i} \cup R_{i}.\]
		Note that   $L_{i}$ and $R_{i}$ are disjoint. Further, recall that $w_i = b^i_{q_i}$ and therefore $R_{i}\cap V(O_i)=\{w_{i-1},w_{i+1}\}$ (by Claim~A). As $(L_{i},b^i_{q_i})$ is not a good stop in $B^i_{q_i}$, we have that $R_{i}$ is of even cardinality, and thus $L_{i} \cup R_{i}$ is of odd cardinality.
		This concludes the definition of $(\hat{J}_i, \hat{z}_i)$ in the case that $q_i\geq 1$.
	\end{itemize}
	We now state the previously-established crucial property of the gadgets $(\hat{J}_i,\hat{z}_i)$ (which applies for all $q_i$, unless $H$ has a hardness gadget).
	
	\medskip
	\noindent \textbf{Fact 1:} \textit{For every $i\in S$, there is a gadget $(\hat{J}_i,\hat{z}_i)$ such that the set \[\hat{\Omega}_i=\{v\in V(H)\mid \enspace \abs{\hom{(\hat{J}_i,\hat{z}_i)}{(H,v)}} \text{ is odd.} \}\] 
		is a subset of $\NH{w_i}$, has odd cardinality, and contains precisely two vertices of $O_i$ --- the vertices $w_{i-1}$ and $w_{i+1}$.}
	\medskip
	
	We proceed by defining for each $i\in \{0, \dots, q-1\}$ the gadgets $(J_i,z_i)$ needed for Lemma~\ref{lem:CycleHardness+}. The definition of $(J_i,z_i)$ depends on whether or not $i-1$ and $i+1$ are in $S$. $(J_i,z_i)$ always contains the vertex $z_i$. Additionally, for $j\in \{i-1, i+1\}$, if $j \in S$ then $(J_i,z_i)$ also contains a copy of the gadget $(\hat{J}_j,\hat{z}_j)$ and $z_i$ is adjacent to $\hat{z}_j$. We provide an illustration of $(J_i,z_i)$ for the case $i-1\in S$ and $i+1\in S$ in Figure~\ref{fig:attachmentgadget}.

	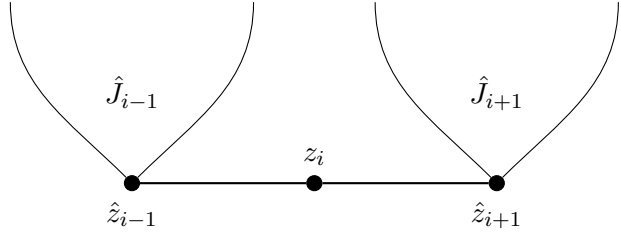
\begin{figure}
		\centering
		\begin{tikzpicture}
		[scale=0.8, node distance = 1.4cm]
		\tikzstyle{dot}   =[fill=black, draw=black, circle, inner sep=0.15mm]
		\tikzstyle{vertex}=[fill=black, draw=black, circle, inner sep=2pt]
		\tikzstyle{dist}  =[fill=white, draw=black, circle, inner sep=2pt]
		
		\node[vertex] (l) at (-3  ,0) [label=270:{$\hat{z}_{i-1}$}] {};
		\node[vertex] (r) at (3  ,0) [label=270:{$\hat{z}_{i+1}$}] {};
		\node[vertex] (m) at (0,0) [label=90:{$z_{i}$}] {};
		\node (jim1) at (-3,1.5) {$\hat{J}_{i-1}$};
		\node (jip1) at (3,1.5) {$\hat{J}_{i+1}$};
		
		\draw[thick] (m) -- (l); \draw[thick] (m) -- (r);
		\draw (l) to[out=135, in=270] (-5,3); \draw (l) to[out=45, in=270] (-1,3);
		\draw (r) to[out=45, in=270] (5,3); \draw (r) to[out=135, in=270] (1,3);
		\end{tikzpicture}
		\caption{The gadget $(J_i,z_i)$ for the case $i-1\in S$ and $i+1\in S$}
		\label{fig:attachmentgadget}
	\end{figure}

	Following the notation of Lemma~\ref{lem:CycleHardness+}, we set (for every $i\in\{0,\dots,q-1\}$):
	\[\Omega_i = \{v\in V(H)\mid \enspace \abs{\hom{(J_i,z_i)}{(H,v)}} \text{ is odd.} \} \]
	
	\medskip
	\noindent \textbf{Claim B:} \textit{For all $i\in \{0,\dots,q-1\}$ the following holds true, unless $H$ has a hardness gadget; indices are taken modulo $q$:
		\begin{itemize}
			\item If $i-1\in S$ then $W_{i-1}\cap \Omega_i =\{w_{i-1}\}$, and if $i-1\notin S$ then $W_{i-1}\cap \Omega_i = W_{i-1}$.
			\item If $i+1\in S$ then $W_{i+1}\cap \Omega_i =\{w_{i+1}\}$, and if $i+1\notin S$ then $W_{i+1}\cap \Omega_i = W_{i+1}$.
		\end{itemize}
		So, $W_{i-1}\cap \Omega_i$ and $W_{i+1}\cap \Omega_i$ have odd cardinality.
	}
	\medskip
	
	\begin{claimproof}
		We only show the first item; the second one is symmetric. We distinguish whether or not $i-1\in S$:
		\begin{enumerate}[(I)]
			\item If $i-1\in S$ then, by Claim~A, $w_{i-1}$ and $w_i$ are contained in the cycle $C_{i-1}$, and $W_{i-1}=N_{C_{i-1},H}(w_{i-1})$.
			Since $C_{i-1}\in\cycles(O_{i-1})$, $C_{i-1}$ is
			an  induced 
			cycle in~$O_{i-1}$ 	(hence in~$H$)		
			and is not a square. Therefore we can apply Corollary~\ref{cor:diamonds_disjoint*}. It follows that $H$ either has a hardness gadget, in which case we are done, or $\abs{N_{C_{i-1},H}(w_i)}=1$, i.e., $N_{C_{i-1},H}(w_i)=\{w_i\}$. This implies that $w_i$ is not an exit of $(T,O_{i-1})$, and thus $w_{i+1}\in C_{i-1}$ and consequently $W_i=N_{C_{i-1},H}(w_i)=\{w_i\}$. 
			
			We are now able to prove that $W_{i-1}\cap \Omega_i =\{w_{i-1}\}$. First, for $v\in W_{i-1}\setminus\{w_{i-1}\}$, we show that $v\notin \Omega_i$ and hence $v\notin W_{i-1}\cap \Omega_i$. By the construction of $J_i$, it suffices to show that there is an even number of vertices in $\hat{\Omega}_{i-1}$ that are adjacent to $v$. Recall from Fact~1 that $\hat{\Omega}_{i-1}\subseteq \NH{w_{i-1}}$. The vertex $v$ has precisely two common neighbours with $w_{i-1}$, namely $w_{i-2}$ and $w_i$ (any others would lead to a $K_4$-minor in $H$ induced by the vertices $\{v,w_{i-2}, w_{i-1}, w_i\}$). By Fact~1, we know that both of these are in $\hat{\Omega}_{i-1}$ and hence that there are two vertices in $\hat{\Omega}_{i-1}$ that are adjacent to $v$, as required.

			It remains to show that $w_{i-1}\in \Omega_i$ and hence $w_{i-1}\in W_{i-1}\cap \Omega_i$.
			By the construction of $J_i$, it suffices to show that there is an odd number of vertices in $\hat{\Omega}_{i-1}$ that are adjacent to $w_{i-1}$, and, in case $i+1\in S$, that there is an odd number of vertices in $\hat{\Omega}_{i+1}$ that are adjacent to $w_{i-1}$.
			
			\begin{itemize}
				\item By Fact~1, $\hat{\Omega}_{i-1}\subseteq \NH{w_{i-1}}$. Hence every element of $\hat{\Omega}_{i-1}$ is adjacent to $w_{i-1}$. By Fact~1, $\hat{\Omega}_{i-1}$ has odd cardinality, as required.
				
				\item Suppose that $i+1\in S$. By Fact~1, we have $\hat{\Omega}_{i+1}\subseteq\NH{w_{i+1}}$ and $w_i\in \hat{\Omega}_{i+1}$. Furthermore, $w_i$ is the only common neighbour of $w_{i-1}$ and $w_{i+1}$ in $H$ by the fact that $W_i=\{w_i\}$. Hence $w_i$ is the only vertex in $\hat{\Omega}_{i+1}$ that is adjacent to $w_{i-1}$, as required.
			\end{itemize}
			
			\item Consider $i-1\notin S$. 
			Our goal is to show that	$W_{i-1}\cap \Omega_i = W_{i-1}$.		
			If $i+1\notin S$, then $J_i$ contains only the single vertex $z_i$ and $\Omega_i=V(H)$ and we are finished.
			
			Hence we can assume $i+1\in S$. We first proceed as in Case~(I) to obtain either a hardness gadget or $W_i=\{w_i\}$. By Claim~A, $w_{i}$, $w_{i+1}$ and $w_{i+2}$ are contained in the cycle $C_{i+1}$, and $W_{i+1}=N_{C_{i+1},H}(w_{i+1})$.
			Since $C_{i+1}\in\cycles(O_{i+1})$, $C_{i+1}$ is induced and not a square and therefore we can apply Corollary~\ref{cor:diamonds_disjoint*}. It follows that $H$ either has a hardness gadget, in which case we are done, or $\abs{N_{C_{i+1},H}(w_i)}=1$, i.e., $N_{C_{i+1},H}(w_i)=\{w_i\}$. This implies that $w_i$ is not an exit of $(T,O_{i+1})$, and thus $w_{i-1}\in C_{i+1}$ and consequently $W_i=N_{C_{i+1},H}(w_i)=\{w_i\}$.  
			
			In order to show that $W_{i-1}\cap\Omega_{i} = W_{i-1}$ we show, for each $v\in W_{i-1}$, that $v \in \Omega_i$.
			By the construction of $J_i$ ($i-1\notin S$, $i+1\in S$), it suffices to show that there is an odd number of vertices in $\hat{\Omega}_{i+1}$ that are adjacent to $v$. Recall from Fact~1 that $\hat{\Omega}_{i+1}\subseteq \NH{w_{i+1}}$. By the fact that $v \in W_{i-1}$ and $w_{i+1}\in W_{i+1}$, from Lemma~\ref{lem:properWalkIntersection} we obtain that $\NH{v}\cap \NH{w_{i+1}} = W_i$. We have already established that $W_i=\{w_i\}$ and hence $w_i$ is the only vertex in $\hat{\Omega}_{i+1}$ that is adjacent to $w_{i-1}$, as required.

		\end{enumerate} 
		This concludes the proof of Case~(II) and of Claim~B.	\end{claimproof}
	We prove one final claim before we can apply Lemma~\ref{lem:CycleHardness+}:
	
	\medskip
	\noindent \textbf{Claim C:} \textit{Unless $H$ has a hardness gadget, there exists $k\in \{0,\dots,q-1\}$ such that both of the following are true; indices are taken modulo $q$:
		\begin{itemize}
			\item There are no edges between $W_k$ and $W_{k+3}$.
			\item $(W_k \cup W_{k+2})\cap \Omega_{k+1}$ and $(W_{k+1} \cup W_{k+3})\cap \Omega_{k+2}$ are of even cardinality.
		\end{itemize}
	}
	\begin{claimproof}
		We distinguish two cases.
		\begin{enumerate}[(I)]
			\item There is a biconnected component $B$ that contains $W$. Consequently, by Observation~\ref{obs:alternatingwalk}, there is a cycle $C\in \cycles(B)$ such that $W=C$. Since $C\in \cycles(B)$ it has length $q=3$ or $q\ge 5$. 
			In this case, we choose $k=0$.
			We first show that there is no edge between $W_k$ and $W_{k+3}$:
			\begin{myitemize}
				\item If $q=3$, we show that for $u,v\in W_0$ there cannot be an edge between $u$ and $v$. If $u=v$ there cannot be an edge since we do not allow self-loops in $H$. If $u\neq v$ there cannot be an edge, as otherwise $u,v,w_1,w_2$ induce a $K_4$-minor
				in~$H$, contradicting the fact that~$H$ has none.			 
				Thus, there are no edges between $W_0$ and $W_{3 \bmod q}=W_0$.
				
				\item If $q\ge 5$, consider $W_0$ and $W_3=W_{3 \bmod q}$. If $\abs{W_0}=\abs{W_3}=1$ then there are no edges between $W_0$ and $W_3$ since $C$ is induced by the definition of obstruction (Definition~\ref{def:obstruction}). 
				So suppose $|W_0|>1$ (the case $|W_3|>1$ is symmetric).				
				Since $C$ is an induced cycle of length $q>4$ in a biconnected graph $B$, we can apply Lemma~\ref{lem:separators} to 
				find a separation $(A,A')$ of~$H$ such that $C\setminus \{w_0\} \subseteq A$, $W_0 \subseteq A'$and $A\cap A' = \{w_q,w_1\}$.
				Since all of the vertices in $\bigcup_{i=1}^{q-1} W_i$
				have neighbours in  $C\setminus \{w_0\}$, 
				this implies
				that $w_{q-1}$ and $w_1$ are the only vertices in $\bigcup_{i=1}^{q-1} W_i$ that are adjacent to vertices in $W_0$. However, by Lemma~\ref{lem:disjoint}, $W_0, \dots W_{q-1}$ are pairwise disjoint and hence $w_{q-1}, w_1\notin W_3$. So, there are no edges between $W_0$ and $W_3$, as required.
			\end{myitemize}
			To establish the second bullet point, again use $k=0$ and the fact that
			$W_0, \dots W_{q-1}$ are pairwise disjoint. We have
			$\abs{(W_0 \cup W_{2})\cap \Omega_{1}} = \abs{W_0 \cap \Omega_{1}}+\abs{W_{2}\cap \Omega_{1}}$.
			By Claim~B, each of these terms is odd, so their sum  is even. The same argument applies to $(W_{1} \cup W_{3})\cap \Omega_{2}$.
			
			\item $W$ is not entirely contained in one biconnected component. If this is true, then by Observation~\ref{obs:alternatingwalk}, there exists an obstruction $B$ with cycle $C\in \cycles(B)$ such that, for some $k\in \{0, \dots, q-1\}$, $w_k$ and $w_{k+1}$ are contained in $C$, $w_{k+1}$ is an exit of $B$ (in particular, an articulation point), and $w_{k+2}$ and $w_{k+3}$ are not contained in $B$.
			
			Since $w_{k+2}\neq w_{k+4}$ by Lemma~\ref{lem:distinctwiandwi+2}, it follows that no $v\in W_{k+3}$ is in $B$, which implies that there is no edge between $W_k$ and $W_{k+3}$, as required.
			
			For the second item, observe that $W_k$ and $W_{k+2}$ must be disjoint, as $w_k$ and $w_{k+1}$ are in the biconnected component $B$, but $w_{k+2}$ is not. We further claim that $W_{k+1}$ and $W_{k+3}$ are disjoint. To see this, observe first that $W_{k+1}=\{w_{k+1}\}$ since $w_{k+1}$ is the only common neighbour of $w_k$ and $w_{k+2}$ as otherwise $w_{k+2}$ would be contained in $B$. Then we have already established that no $v\in W_{k+3}$ is in $B$, which implies $w_{k+1}\notin W_{k+3}$.
			
			Using the fact that $W_k$ and $W_{k+2}$ are disjoint, we conclude that 
			$\abs{(W_k \cup W_{k+2})\cap \Omega_{k+1}} = (\abs{W_k \cap \Omega_{k+1}}+\abs{W_{k+2}\cap \Omega_{k+1}})$.
			By Claim~B, each of these terms is odd, so their sum is even.
			Using the fact that $W_{k+1}$ and $W_{k+3}$ are disjoint, the same is true for $W_{k+1}$ and $W_{k+3}$.
		\end{enumerate}
	\end{claimproof}
	
	We are finally able to invoke Lemma~\ref{lem:CycleHardness+}: Recall first, that $q\geq 3$ and $q\neq 4$ from the beginning of the proof. 
	Recall that we identify the sets $\mathcal{C}_i$ of Lemma~\ref{lem:CycleHardness+} with the sets $W_i$.	
	Unless $H$ has a hardness gadget (in which case we are finished) the following hold.
	\begin{itemize}
		\item[] \eqref{equ:CycleHardness+1} holds by Claim B.
		\item[] \eqref{equ:CycleHardness+2} and~\eqref{equ:CycleHardness+3} hold by Lemma~\ref{lem:properWalkIntersection}.
		\item[] \eqref{equ:CycleHardness+4} is established by Lemma~\ref{lem:noDwalk}. 
		\item[] There is a $k$ such that  \eqref{equ:CycleHardness+5} and~\eqref{equ:CycleHardness+6} hold by Claim C.
	\end{itemize}
	Consequently, all conditions are satisfied and we obtain a hardness gadget by Lemma~\ref{lem:CycleHardness+}.
\end{proof}

\subsection{Proof of the Main Theorem}

We can now prove Theorem~\ref{thm:main}, which we restate for convenience.

\begin{thmmain}
\statethmmain
\end{thmmain}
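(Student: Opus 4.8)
The plan is to assemble the main theorem from the machinery built in Sections~\ref{sec:warmup}--\ref{sec:k4MinorFreeMain}, reducing everything to the case of a connected, involution-free target. First I would dispatch the polynomial-time side: if $H^\ast$ has at most one vertex, then by Theorem~\ref{thm:star} the number of homomorphisms from any $G$ to $H$ has the same parity as the number of homomorphisms to $H^\ast$, and this is $0$ if $H^\ast$ is empty and $1$ if $H^\ast$ is a single vertex (with no self-loop, since $H$ is simple). So $\parHom{H}$ is trivially solvable in polynomial time.

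For the hardness side, assume $H^\ast$ has at least two vertices. By Theorem~\ref{thm:star} it suffices to prove that $\parHom{H^\ast}$ is $\parP$-complete and has no $\exp(o(|G|))$ algorithm under rETH, so replace $H$ by $H^\ast$ and assume $H$ is involution-free and $K_4$-minor-free with $|V(H)|\ge 2$. Next I would reduce to the connected case: $H$ has a connected component $H'$ with at least two vertices, and since $H$ is involution-free, $H'$ must itself be involution-free (any non-trivial involution of $H'$ extends to one of $H$ by acting as the identity elsewhere). Lemma~\ref{lem:ETH_connectivity} gives a polynomial-time Turing reduction from $\parHom{H'}$ to $\parHom{H}$ with linear-size oracle queries, so it is enough to show $\parHom{H'}$ is $\parP$-complete with no subexponential algorithm. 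Now $H'$ is connected, involution-free, $K_4$-minor-free, with at least two vertices. By the $K_4$-minor-free Component Lemma (Lemma~\ref{lem:main_biconnected}), every biconnected component of $H'$ is an edge, a diamond, an impasse, an obstruction, or else $H'$ has a hardness gadget (in which case we are done). So assume every biconnected component of $H'$ is one of those four types. By Lemma~\ref{lem:findsuitablesubtree} there is a closed suitable subtree $T$ of $\bc(H')$. Either $T$ contains a block that is an obstruction, in which case Lemma~\ref{lem:suitableTreeHardnessWithObstruction} shows $H'$ has a hardness gadget, or no block of $T$ is an obstruction, in which case Lemma~\ref{lem:suitableTreeHardnessNoObstruction} shows $H'$ has a hardness gadget. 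Either way $H'$ has a hardness gadget.

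Finally I would chain the hardness consequences. Since $H'$ is involution-free and has a hardness gadget, Theorem~\ref{thm:hardness-gadget} gives that $\parRet{H'}$ is $\parP$-hard and, assuming rETH, cannot be solved in time $\exp(o(|J|))$. Since $H'$ is involution-free, Theorem~\ref{thm:ETH_RetToHoms} gives a polynomial-time reduction from $\parRet{H'}$ to $\parHom{H'}$ with linear-size oracle queries, so $\parHom{H'}$ is $\parP$-hard; and membership in $\parP$ is immediate from the definition, so $\parHom{H'}$ is $\parP$-complete. For the subexponential lower bound, composing the two reductions (each polynomial-time with $O(|G|)$-size queries) with a hypothetical $\exp(o(|G|))$ algorithm for $\parHom{H'}$ would yield an $\exp(o(|G|))$ algorithm for $\parRet{H'}$, contradicting rETH; and then composing with the connectivity reduction of Lemma~\ref{lem:ETH_connectivity} and the involution-free reduction of Theorem~\ref{thm:star} (both linear in the query size) transfers the lower bound back to $\parHom{H}$.

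The proof is essentially bookkeeping once all the lemmas are in place, so there is no single hard technical obstacle here. The one point that needs a little care is tracking the linearity of the oracle-query sizes through the chain Theorem~\ref{thm:star} $\to$ Lemma~\ref{lem:ETH_connectivity} $\to$ Theorem~\ref{thm:ETH_RetToHoms} $\to$ Theorem~\ref{thm:hardness-gadget}, since it is precisely this linearity that propagates the rETH-based lower bound; each step has been stated with the requisite $O(\cdot)$ bound, so the composition is still linear and the $\exp(o(|G|))$ conclusion survives. A second minor point worth stating explicitly is the passage from $H$ to a large component $H'$: one must note that $H'$ inherits involution-freeness from $H$, which is what allows Theorems~\ref{thm:hardness-gadget} and~\ref{thm:ETH_RetToHoms} to be applied to $H'$.
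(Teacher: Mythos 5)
Your proposal follows the paper's proof essentially step for step: dispose of the trivial $|V(H^\ast)|\le 1$ case via Theorem~\ref{thm:star}, reduce to an involution-free connected component $H'$ of $H^\ast$ with $\ge 2$ vertices, invoke Lemma~\ref{lem:main_biconnected}, Lemma~\ref{lem:findsuitablesubtree}, and Lemmas~\ref{lem:suitableTreeHardnessNoObstruction}/\ref{lem:suitableTreeHardnessWithObstruction} to obtain a hardness gadget, then chain Theorem~\ref{thm:hardness-gadget}, Theorem~\ref{thm:ETH_RetToHoms}, and Lemma~\ref{lem:ETH_connectivity} while tracking linear-size oracle queries for the rETH bound. (One small imprecision: when $|V(H^\ast)|\le 1$ the count is not a fixed constant $0$ or $1$ but depends on $G$ — it is $1$ iff $G$ is edgeless, respectively empty — though the conclusion that it is computable in polynomial time is of course unaffected.)
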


\begin{proof}
By Theorem~\ref{thm:ReductionSuffices}, for every graph~$G$,
$|\hom{G}{H}| =|\hom{G}{H^\ast}|\mod 2$.
It is trivial to count homomorphisms to a graph with at most one vertex.
Suppose that $H^*$ has at least two vertices.
Then it suffices to show that  $\parHom{H^\ast}$
is $\parP$-complete and that 
 $\parHom{H^*}$ cannot be solved in time $\exp(o(\abs{G}))$, unless the rETH fails. 
	
	Since $H^\ast$ is involution-free and contains at least $2$ vertices, there is an involution-free connected component $H'$ of $H^\ast$ with at least $2$ vertices as well: If $H$ is disconnected, it has at least $2$ connected components, and at least one of those two components cannot be a single vertex, as otherwise, we obtain a non-trivial involution by switching those vertices. Furthermore, a connected component of an involution-free graph cannot have a non-trivial involution, as otherwise, the entire graph would  have a non-trivial involution.
	
	Next we claim that $H'$ has a hardness gadget:	Assume first that 
	$H'$ has a biconnected component that is not an edge, a diamond, an obstruction, or an impasse. By Lemma~\ref{lem:main_biconnected}, $H'$ has a hardness gadget. In the remaining case,  every biconnected component of $H'$ is an   edge, a diamond, an obstruction, or an impasse. By Lemma~\ref{lem:findsuitablesubtree}, 
	there is a closed suitable subtree~$T$ of the block-cut tree $\bc(H')$. If no block of $T$ is an obstruction, then 
	$H'$ has a  hardness gadget   by Lemma~\ref{lem:suitableTreeHardnessNoObstruction}.
	Otherwise,  $H'$ has a hardness gadget by Lemma~\ref{lem:suitableTreeHardnessWithObstruction}. 
	
	This allows us to invoke Theorem~\ref{thm:hardness-gadget} and we obtain that $\parRet{H'}$ is $\parP$-hard and cannot be solved in time $\exp(o\abs{J})$, unless the rETH fails.
	
	Since $H'$ is involution-free, we can reduce $\parRet{H'}$ to $\parHom{H'}$ by Theorem~\ref{thm:ETH_RetToHoms}, and we can reduce $\parHom{H'}$ to $\parHom{H^\ast}$ by Lemma~\ref{lem:ETH_connectivity}.  These reductions are tight in the sense that any subexponential-time algorithm for $\parHom{H^\ast}$ would yield a subexponential-time algorithm for $\parRet{H'}$; this is due to the fact that the size of the oracle queries in each reduction is bounded linearly in the input size (see, e.g., the proof of Theorem~\ref{thm:hardness-gadget} in which we made explicit that a linear bound on the size of the oracle queries is sufficient for the lower bound under rETH to transfer).
	We thus obtain $\parP$-hardness of $\parHom{H^\ast}$, and that $\parHom{H^\ast}$ cannot be solved in time $\exp(o(|G|))$, unless the rETH fails.
\end{proof}

	\section{Counting Homomorphisms mod $2$ to Graphs of Degree at most $3$}\label{sec:degreeAtmost3}

We explored the possibilities for constructing hardness gadgets in graphs containing two squares that share one edge  when we analysed $K_4$-minor-free and chordal bipartite graphs. It turns out that a similar strategy suffices to completely solve the case  where $H$ has degree at most $3$. We start with the following lemma.

\begin{lem}\label{lem:mainDeg3}
	Let $H$ be an involution-free graph of degree at most $3$  that contains a square. Then $H$ has a hardness gadget.
\end{lem}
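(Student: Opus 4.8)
The plan is to leverage the toolbox developed for the $K_4$-minor-free and chordal bipartite cases, since a graph of maximum degree~$3$ that contains a square is extremely constrained: any square $(v_1,v_2,v_3,v_4,v_1)$ has each of its vertices with at most one further neighbour outside the square. First I would pick a square $C=(v_1,v_2,v_3,v_4,v_1)$ in $H$ and analyse how it can be extended. Since $\deg_H(v_i)\le 3$, the set $\Gamma_H(v_1)\cap\Gamma_H(v_3)$ has size at most~$2$, and likewise for $\Gamma_H(v_2)\cap\Gamma_H(v_4)$; so either $C$ is an induced square sitting inside a diamond with $k=2$, or one of these common-neighbour sets has size exactly~$2$ (a $k=2$ diamond with an extra vertex) — but in the degree-$3$ world the ``extra structure'' is limited to single pendant-type attachments. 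I would split into cases according to whether the square lies in a $K_4$-minor (which in degree~$3$ essentially forces a $K_4$ subdivision, a strong hardness source via Lemma~\ref{lem:good_triangle_gadget} / Lemma~\ref{lem:diamond_square_gadget}) or is $K_4$-minor-free locally.

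The core case is when $H$ restricted near the square is $K_4$-minor-free. Then two squares sharing an edge (the graph $F$ of Definition~\ref{def:graphF}) will appear as soon as one tries to use biconnectivity or involution-freeness to ``grow'' the square. In the degree-$3$ setting the graph $S_{k,\ell}$ is forced to have $k=\ell=1$ (higher $k$ or $\ell$ would push some vertex to degree~$\ge4$), so the relevant obstruction — an impasse with $k=\ell=1$ — has the vertex $v_2$ of degree exactly~$2$ inside the two squares, hence forced degree at most~$3$ in $H$ with at most one extra neighbour. If $\deg_H(v_2)$ is even (i.e.\ $2$), then Lemma~\ref{lem:hard_impasses} (or directly Lemma~\ref{lem:even_diamond}) gives a hardness gadget. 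If $\deg_H(v_2)=3$, then $v_2$ has a neighbour $w$ outside the two squares, and since $H$ is involution-free this asymmetry must be genuine; I would then build a path gadget (Lemma~\ref{lem:SquarefreePathHardness} / Lemma~\ref{lem:SquarefreePathPrehardness}) or a small cycle gadget (Corollary~\ref{cor:CycleOfSquares}) through the configuration, using the extra edge at $v_2$ to make the relevant $\Omega$-sets even while keeping $|\Sigma_{is}|,|\Sigma_{os}|,|\Sigma_{ix}|$ odd and $|\Sigma_{ox}|$ even. The involution-freeness is exactly what guarantees that such an asymmetric extension exists somewhere — if every local configuration were symmetric, the swap of $x$- and $y$-type vertices (as in Figure~\ref{fig:introBiconnectedComps}) would be an involution of $H$.

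Concretely, the steps in order are: (1)~fix a square and observe the degree constraints force one of a small number of local pictures — either a diamond/impasse with tiny parameters or a $K_4$-subdivision; (2)~dispose of the $K_4$-subdivision case via the strong-hardness-gadget lemmas (Lemma~\ref{lem:good_triangle_gadget}, Lemma~\ref{lem:diamond_square_gadget}, Lemma~\ref{lem:3x1}, Lemma~\ref{lem:zigzag}); (3)~in the remaining ($K_4$-minor-free near the square) case, produce the graph $F$ and invoke Lemma~\ref{lem:typeV_hard}, Lemma~\ref{lem:even_diamond} and Lemma~\ref{lem:adj_odd_neighbours} to reduce to an impasse with $k=\ell=1$; (4)~use involution-freeness of $H$ to locate a vertex in the impasse (the analogue of $v_2$ or of a connector) that has a neighbour outside, hence breaking the local symmetry; (5)~feed the resulting path-like structure into Lemma~\ref{lem:SquarefreePathHardness} or Corollary~\ref{cor:CycleOfSquares} to obtain a hardness gadget.

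\textbf{Main obstacle.} I expect the delicate point to be step~(4)–(5): translating ``$H$ is globally involution-free'' into the existence of a concrete \emph{odd-degree} or \emph{outside-neighbour} witness at the right vertex of the local configuration, and then verifying that the path/cycle gadget one builds from that witness actually satisfies all four parity conditions of Definition~\ref{defn:hardness-gadget} (in particular that no unwanted homomorphisms sneak in because of the low degree making the graph locally tree-like in the wrong places). This is essentially a finite but fiddly case analysis on the shape of the square's neighbourhood, and the degree-$3$ bound is what keeps it finite; the arguments are close in spirit to those in Lemma~\ref{lem:main_chordal_bipartite} but much shorter because the parameters $k,\ell$ cannot exceed~$1$.
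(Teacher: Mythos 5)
Your plan relies on machinery that does not apply here, so there is a genuine gap at steps~(2) and~(3). The strong-hardness-gadget lemmas you cite (Lemma~\ref{lem:good_triangle_gadget}, Lemma~\ref{lem:diamond_square_gadget}, Lemma~\ref{lem:3x1}, Lemma~\ref{lem:zigzag}) as well as Lemma~\ref{lem:typeV_hard}, Lemma~\ref{lem:even_diamond}, Lemma~\ref{lem:adj_odd_neighbours} and the impasse machinery all carry \emph{$K_4$-minor-freeness of the ambient graph~$H$} as a hypothesis; Definition~\ref{def:stronghardnessgadget} only promises a hardness gadget for $K_4$-minor-free supergraphs of~$J$. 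A degree-$3$ graph need not be $K_4$-minor-free, and you cannot ``dispose of the $K_4$-subdivision case'' by appealing to lemmas whose premise is precisely the absence of such a subdivision --- that is circular. Your plan also gestures at applying these lemmas ``locally'' around the square, but no local versions are stated or proved in the paper, and the proofs of those lemmas really do reach into the whole of~$H$ (they produce $K_4$-minors from global configurations). So without a separate argument that an involution-free degree-$3$ graph with a square is $K_4$-minor-free, or a reproof of the lemmas under weaker hypotheses, the proposal does not close.

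The paper's actual proof is much more elementary and avoids all of that. Fix a square $C=(a,b,c,d,a)$. If a diagonal, say $\{a,c\}$, were present then $a$ and $c$ would have degree exactly~$3$ with identical neighbourhoods $\{b,d\}$ plus each other, so $(ac)$ would be a non-trivial involution --- contradiction. So $C$ is induced; by the same involution argument not both of $a,c$ (nor both of $b,d$) can have degree~$2$, so w.l.o.g.\ two adjacent vertices $a,b$ of $C$ have degree~$3$, with extra neighbours $v,v'$ outside~$C$. The paper then splits on whether $\{v,v'\}\in E(H)$, uses involution-freeness once more to forbid a couple of specific extra edges, and in each case writes down a tiny explicit hardness gadget with $I=\{a\}$, $S=\{b\}$, $J_1,J_2$ short pin paths, and $J_3$ a single edge; the degree-$3$ bound makes all the parity checks in Definition~\ref{defn:hardness-gadget} one-line verifications. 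No graph~$F$, no impasses, no $K_4$-minor-free lemmas, no path or cycle gadgets. The key observation you are missing is that in the degree-$3$ world \emph{any extra edge near the square is itself a source of an involution}, which replaces all the heavy local-structure analysis.
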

\begin{proof}
	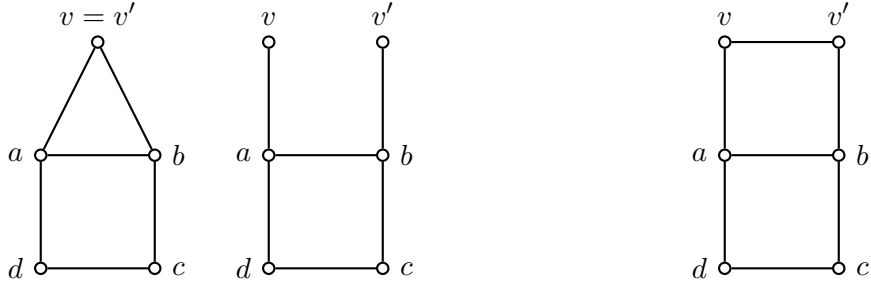
\begin{figure}
		\centering
		\begin{tikzpicture}[scale=1.5, node distance = 1.4cm,thick]
		\tikzstyle{dot}   =[fill=black, draw=black, circle, inner sep=0.15mm]
		\tikzstyle{vertex}=[  draw=black, circle, inner sep=1.5pt]
		\tikzstyle{dist}  =[fill=white, draw=black, circle, inner sep=2pt]
		\tikzstyle{pinned}=[draw=black, minimum size=10mm, circle, inner sep=0pt]	
		\node[vertex] (a) at (0  ,1) [label=180: $a$] {};
		\node[vertex] (b) at (1  ,1) [label=0: $b$] {};
		\node[vertex] (c) at (1  ,0) [label=0: $c$] {};
		\node[vertex] (d) at (0  ,0) [label=180: $d$] {}; 
		\node[vertex] (xy) at (0.5  ,2) [label=90: ${v=v'}$] {};
		
		\draw (a) -- (b); \draw (b) -- (c); \draw (c) -- (d); \draw (d) -- (a);
		\draw (a) -- (xy); \draw (b) -- (xy);
		
		\begin{scope}[xshift=2cm]
		\node[vertex] (a) at (0  ,1) [label=180: $a$] {};
		\node[vertex] (b) at (1  ,1) [label=0: $b$] {};
		\node[vertex] (c) at (1  ,0) [label=0: $c$] {};
		\node[vertex] (d) at (0  ,0) [label=180: $d$] {}; 
		\node[vertex] (x) at (0  ,2) [label=90: ${v}$] {};
		\node[vertex] (y) at (1  ,2) [label=90: ${v'}$] {};
		
		\draw (a) -- (b); \draw (b) -- (c); \draw (c) -- (d); \draw (d) -- (a);
		\draw (a) -- (x); \draw (b) -- (y);
		\end{scope}
		
		\begin{scope}[xshift=6cm]
		\node[vertex] (a) at (0  ,1) [label=180: $a$] {};
		\node[vertex] (b) at (1  ,1) [label=0: $b$] {};
		\node[vertex] (c) at (1  ,0) [label=0: $c$] {};
		\node[vertex] (d) at (0  ,0) [label=180: $d$] {}; 
		\node[vertex] (x) at (0  ,2) [label=90: ${v}$] {};
		\node[vertex] (y) at (1  ,2) [label=90: ${v'}$] {};
		
		\draw (a) -- (b); \draw (b) -- (c); \draw (c) -- (d); \draw (d) -- (a);
		\draw (a) -- (x); \draw (b) -- (y); \draw (x) -- (y); 
		\end{scope}

		\end{tikzpicture}
		\caption{\label{fig:CasesDeg3}Illustration of the two cases in the proof of Lemma~\ref{lem:mainDeg3}.  }
	\end{figure}
	\noindent Let $C=(a,b,c,d,a)$ be a square in $H$. Assume first that at least one of the edges $\{a,c\}$ or $\{b,d\}$ are present. W.l.o.g.\ let $\{a,c\}$ be present. Then $a$ and $c$ have degree $3$ and thus, by assumption, no further neighbours. Thus $(ac)$ is a non-trivial involution of~$H$.
	
	Now assume that none of $\{a,c\}$ or $\{b,d\}$ are edges of $H$. If both, $a$ and $c$ have degree $2$ then $(ac)$ is an involution. Similarly, if $b$ and $d$ have both degree $2$, we obtain the involution $(bd)$. W.l.o.g.\ we can thus assume that $a$ and $b$ have degree $3$. Let $v$ and $v'$ be the neighbours of $a$ and $b$, respectively, that are not contained in $C$. In what follows, we 
	consider cases based on whether the edge $\{v,v'\}$ is present, and, if not, we differentiate between $v=v'$ and $v\neq v'$; an illustration  is provided in Figure~\ref{fig:CasesDeg3}.
	
	\begin{enumerate}[(I)]
		\item $\{v,v'\}\notin E(H)$:   
		This case corresponds to the two illustrations to the left of Figure~\ref{fig:CasesDeg3}.
		Note first that $\{v',d\}$ cannot be an edge of $H$, as otherwise, $b$ and $d$ both have neighbours $\{a,v',c\}$
		(and no other neighbours, since they have degree~$3$), which means 
		that   
		$(bd)$ is a non-trivial involution of $H$. 
		Similarly, $\{v,c\}$ cannot be an edge of $H$, as otherwise $(ac)$ is a non-trivial involution of $H$. 
		Also,  at least one of the edges $\{v,d\}$ and $\{v',c\}$ must not be present in $H$, as otherwise $(ad)(bc)$ is a non-trivial involution of $H$. W.l.o.g.,  assume that $\{v,d\}$ is not present.
		We construct a hardness gadget of $H$ as follows:
		\begin{itemize}
			\item $I=\{a\}$.
			\item $S=\{b\}$.
			\item $J_1$ is a path of $4$ vertices: The first vertex is a $b$-pin, the second vertex is $y$, and the fourth vertex is 
			an $a$-pin.
			\item $J_2$ is a path of $3$ vertices: The first vertex is an $a$-pin, the second vertex is $z$, and the third vertex is a $c$-pin.
			\item $J_3$ is just the edge $\{y,z\}$.
		\end{itemize}
		We first claim that $\Omega_y=\{v',a\}$.
		A vertex of $H$ is in $\Omega_y$ if and only if it is adjacent to $b$ and  has an odd number of 2-paths to $a$. As $H$ has degree at most three, the neighbours of $b$ are precisely $v'$, $a$ and $c$. Note that $a$ has degree precisely $3$ and thus has an odd number of 2-paths to itself. Furthermore, there is only one 2-path from $v'$ to $a$: This path contains $b$ as internal vertex. There cannot be an additional 2-path from $v'$ to $a$, since, in this case, the internal vertex must either be $v$, which is not possible as $\{v,v'\}\notin E(H)$, or $d$, which is not possible as $\{v,d\}\notin E(H)$. Finally, there are precisely two 2-paths from $c$ to $a$: One has $b$ as internal vertex, and the other has $d$ as internal vertex. There cannot be a third one, as this 2-path would have $v$ as internal vertex, but we ruled out the existence of the edge $\{v,c\}$. This shows that $\Omega_y=\{v',a\}$.
		
		Our next claim is that $\Omega_z=\{b,d\}$. Observe that $\Omega_z$ contains precisely the common neighbours of $a$ and $c$. Thus $b$ and $d$ are included in $\Omega_z$. The only candidate for a third common neighbour would be $v$, but we ruled out the existence of the edge $\{v,c\}$.
		
		Finally, we observe that $|\Sigma_{v',d}|=0$ as $\{v',d\}$ is not an edge of $H$, and that $|\Sigma_{v',b}|=|\Sigma_{b,a}|=|\Sigma_{a,d}|=1$.
		
		\item $\{v,v'\}\in E(H)$: 
		This case corresponds to the illustration to the right of Figure~\ref{fig:CasesDeg3}.
		As in case (I), the edge $\{v,c\}$ is not present, as otherwise $(ac)$ is a non-trivial involution, and that the edge $\{v',d\}$ is not present, as otherwise $(bd)$ is a non-trivial involution. We construct a hardness gadget as follows:
		\begin{itemize}
			\item $I=\{a\}$.
			\item $S=\{b\}$.
			\item $J_1$ is a path of $3$ vertices: The first vertex is a $v$-pin, the second vertex is $y$, and the third vertex is a $b$-pin.
			\item $J_2$ is a path of $3$ vertices: The first vertex is an $a$-pin, the second vertex is $z$, and the third vertex is a $c$-pin.
			\item $J_3$ is just the edge $\{y,z\}$.
		\end{itemize}
		Note first that $\Omega_y$ contains precisely the common neighbours of $v$ and $b$. Thus $v'$ and $a$ are contained in $\Omega_y$. Recall further that $c$ is not adjacent to $v$. As the degree of $H$ is bounded by $3$, we thus have $\Omega_y=\{v',a\}$. Similarly, we obtain that $\Omega_z=\{b,d\}$.
		
		Finally, we observe that $|\Sigma_{v',d}|=0$ as $\{v',d\}$ is not an edge of $H$, and that $|\Sigma_{v',b}|=|\Sigma_{b,a}|=|\Sigma_{a,d}|=1$.
	\end{enumerate}
\end{proof}

\begin{thm}\label{thm:bddeg}
Let $H$ be a graph 
whose involution-free reduction $H^*$ has maximum degree at most~$3$.
If $H^*$ contains at most one vertex, then $\parHom{H}$ can be solved in polynomial time.
Otherwise,   $\parHom{H}$ is $\parP$-complete and, assuming the randomised Exponential Time Hypothesis,
it  cannot be solved in time $\exp(o(\abs{G}))$.
\end{thm}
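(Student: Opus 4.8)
The plan is to mirror the proof of Theorem~\ref{thm:main}, simply replacing the $K_4$-minor-free structural analysis of Sections~\ref{sec:chordalBipartiteComps}--\ref{sec:k4MinorFreeMain} by the short Lemma~\ref{lem:mainDeg3}. By Theorem~\ref{thm:ReductionSuffices} it suffices to analyse $\parHom{H^\ast}$. If $H^\ast$ has at most one vertex then counting homomorphisms into it is trivial, so $\parHom{H}$ is polynomial-time solvable; this disposes of the tractable case. So assume $H^\ast$ has at least two vertices. Exactly as in the proof of Theorem~\ref{thm:main}, I would first extract an involution-free connected component $H'$ of $H^\ast$ with at least two vertices: a connected component of an involution-free graph is itself involution-free, and if $H^\ast$ were disconnected with all components having at most one vertex it would admit the transposition of two isolated vertices. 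Note that $H'$ has maximum degree at most~$3$ and that $H'=(H')^\ast$.

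The core step is to show that $H'$ has a hardness gadget, via a three-way case distinction according to which cycles $H'$ contains. If $H'$ contains a square, then Lemma~\ref{lem:mainDeg3} immediately yields a hardness gadget. If $H'$ is square-free but contains some cycle~$C$, then $C$ has length~$3$ or at least~$5$ and, exactly as observed in Remark~\ref{rem:squarefreeSimplification}, the hypotheses of Corollary~\ref{cor:CycleOfSquares} are met, so $H'$ again has a hardness gadget. Finally, if $H'$ is square-free and acyclic, then, being connected, it is a tree; but Faben and Jerrum showed that the involution-free reduction of every tree has at most one vertex (see~\cite{FJ}, or alternatively~\cite[Lemmas 5.1 and 5.3]{squarefree}), so $H'=(H')^\ast$ would have at most one vertex, contradicting $\abs{V(H')}\ge 2$. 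Hence in every admissible case $H'$ has a hardness gadget.

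Given a hardness gadget for the involution-free graph $H'$, Theorem~\ref{thm:hardness-gadget} shows that $\parRet{H'}$ is $\parP$-hard and, assuming the rETH, cannot be solved in time $\exp(o(\abs{J}))$. Since $H'$ is involution-free, Theorem~\ref{thm:ETH_RetToHoms} reduces $\parRet{H'}$ to $\parHom{H'}$, and Lemma~\ref{lem:ETH_connectivity} reduces $\parHom{H'}$ to $\parHom{H^\ast}$; both reductions make only linearly sized oracle queries, so the conditional lower bound is preserved, and $\parHom{H}$ inherits $\parP$-completeness and the $\exp(o(\abs{G}))$ lower bound via Theorem~\ref{thm:ReductionSuffices}. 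This is precisely the chain of reductions used in the final proof of Theorem~\ref{thm:main}.

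I do not expect a genuine obstacle in writing this argument: all of the difficulty has already been absorbed into Lemma~\ref{lem:mainDeg3}, and the remainder is a routine dispatch identical in structure to the conclusion of Theorem~\ref{thm:main}. The only point that needs slight care is the square-free, acyclic sub-case, where one must explicitly invoke the Faben--Jerrum fact that an involution-free tree has at most one vertex in order to rule it out; everything else follows mechanically from the results established earlier in the paper.
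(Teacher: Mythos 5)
Your argument is correct, and rests on the same key tools as the paper's proof: Theorem~\ref{thm:ReductionSuffices}, Lemma~\ref{lem:mainDeg3} for the case of a square, the square-free dichotomy for the rest, and the reduction chain through Theorem~\ref{thm:hardness-gadget} and Theorem~\ref{thm:ETH_RetToHoms}. There are two minor divergences from the paper's actual proof of this theorem. First, you mirror the conclusion of Theorem~\ref{thm:main} by extracting an involution-free connected component $H'$ and routing through Lemma~\ref{lem:ETH_connectivity}; the paper's proof of Theorem~\ref{thm:bddeg} skips this entirely and produces a hardness gadget for $H^\ast$ itself, then applies Theorem~\ref{thm:hardness-gadget} and Theorem~\ref{thm:ETH_RetToHoms} directly to $H^\ast$, which works because neither a hardness gadget nor those two theorems requires $H^\ast$ to be connected. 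Second, for the square-free case the paper simply cites the hardness-gadget result of~\cite{squarefree} for involution-free square-free graphs with at least two vertices, whereas you re-derive it from in-paper material by splitting into a cyclic sub-case (Corollary~\ref{cor:CycleOfSquares} via Remark~\ref{rem:squarefreeSimplification}) and an acyclic sub-case (the Faben--Jerrum fact that an involution-free tree has at most one vertex). Both routes are sound; yours is a touch longer but more self-contained, and the connected-component extraction, while harmless, is not actually needed once one notes that an involution-free forest with at least two vertices is also impossible, by the same argument you give for the component.
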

\begin{proof}
By Theorem~\ref{thm:ReductionSuffices}, for every graph~$G$,
$|\hom{G}{H}| =|\hom{G}{H^\ast}|\mod 2$.
It is trivial to count homomorphisms to a graph with at most one vertex.
Suppose that $H^*$ has at least two vertices.
Then it suffices to show that  $\parHom{H^\ast}$
is $\parP$-complete and that 
$\parHom{H^*}$ cannot be solved in time $\exp(o(\abs{G}))$, unless the rETH fails.

If $H^\ast$ does not contain a square but has at least $2$ vertices, then it has a hardness gadget as shown in~\cite{squarefree}. If $H^\ast$  contains a square, then it has a hardness gadget by Lemma~\ref{lem:mainDeg3}. 

By Theorem~\ref{thm:hardness-gadget}, we obtain that $\parRet{H^*}$ is $\parP$-hard and 
that it cannot be solved in time $\exp(o\abs{J})$, unless the rETH fails.

Finally, since $H^*$ is involution-free, we can reduce $\parRet{H^*}$ to $\parHom{H^*}$ by Theorem~\ref{thm:ETH_RetToHoms}.  
As we have already noted, the size of the oracle queries in this reduction are bounded linearly in the input size,
so the reduction proves that any subexponential-time algorithm for $\parHom{H^\ast}$ would yield a subexponential-time algorithm for $\parRet{H^*}$,
completing the proof. \end{proof}

	\section{Counting List Homomorphisms modulo $2$}\label{sec:parLHom}
	
	Given graphs $G$ and $H$ together with a set of \emph{lists} $\boldS=\{S_v \subseteq V(H) \mid v \in V(G)\}$, a \emph{(list) homomorphism} from $(G, \boldS)$ to $H$ is a homomorphism $h$ from $G$ to $H$ such that for each $v\in V(G)$ we have $h(v)\in S_v$. 
	\label{def:listhomomorphisms}
We use $\hom{(G, \boldS)}{H}$ to denote the set of homomorphisms from $(G, \boldS)$ to $H$.
	List homomorphisms are a natural generalisation of both homomorphisms and retractions.
	
In this section we provide a complete complexity classification for the problem of counting list homomorphisms modulo~$2$ to a given graph~$H$. 
The classification determines for which graphs~$H$ the problem is feasible.
We strengthen the result by 
considering a wider class of graphs~$H$ than in the rest of the paper (where we required~$H$ to be a simple graph, without self-loops or parallel edges). 
Let $\calH$ be the set of all undirected graphs~$H$ which do not have parallel edges --- self-loops are allowed.

 Given a set $S$, let $\calP(S)$ denote its power set. 
 We consider the following problem, parameterised by a graph~$H\in \calH$  and by a set of lists
  $\calL\subseteq \calP(V(H))$.
	
\prob{$\parHom{H, \calL}$.}{A simple graph $G$ and a collection of lists $\boldS=\{S_v\in \calL \mid v\in V(G)\}$.}{$|\hom{(G,\boldS)}{H}| \bmod 2$.}

The input~$G$ to $\parHom{H,\calL}$ is assumed to be simple because this is standard in the field, and because it makes
results stronger. However, this restriction is not important for our result 
--- see Remark~\ref{rem:final}.
Taking $\calL = P(V(H))$, the problem	
  $\parHom{H, \calP(V(H))}$ is the problem of counting list homomorphisms to $H$ modulo~$2$.
  To simplify the notation, we also write $\parLHom{H}$ \label{def:parLhom} for this problem. The following lemma is well-known.

\begin{lem}\label{lem:P4hardness}
Let $H$ be a graph in~$\calH$  that contains a walk $(a,b,c,d)$ 
such that $a\neq c$, $b\neq d$, and $\{a,d\}\notin E(H)$. Let $\calL\subseteq \calP(V(H))$ be a set of lists with $\{\{a,c\}, \{b,d\}\}\subseteq \calL$. Then $\parHom{H,\calL}$ is $\parP$-complete.
	\end{lem}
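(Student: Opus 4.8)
The statement is the classical reduction from counting independent sets to counting list homomorphisms, using the walk $(a,b,c,d)$ as an ``asymmetric path of length~$3$''. The plan is to reduce $\paris$ (or rather its bipartite special case, exactly as in Section~\ref{sec:warmup}) to $\parHom{H,\calL}$ directly, without going through hardness gadgets. Given an input graph $G$ to $\paris$, first reduce to the bipartite case: if $G$ has a partition of $V(G)$ into independent sets $U$ and $V$, build $\widehat G$ from $G$ by adding two new vertices $p$ and $q$, joining $p$ to every vertex of $U$ and $q$ to every vertex of $V$. Define the list assignment $\boldS$ by $S_p=\{b,d\}$ — wait, we need the \emph{fixed} endpoints — more precisely we set $S_p=\{b,d\}$ forces nothing; instead pin the two new vertices: set $S_p = \{b\}$? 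We cannot, since $\{b\}$ need not be in $\calL$. The correct construction is: $S_u = \{a,c\}$ for all $u\in U$, $S_v=\{b,d\}$ for all $v\in V$, $S_p = \{b,d\}$, $S_q=\{a,c\}$, and additionally attach to $p$ a pendant structure forcing $p\mapsto b$, and similarly force $q\mapsto c$. Concretely, since $a\neq c$ and $b\neq d$, the two lists $\{a,c\}$ and $\{b,d\}$ together with the edges of the walk behave like the $4$-vertex path $(o,s,i,x)$ from Section~\ref{sec:warmup}: a list-homomorphism sends each $U$-vertex to $\{a,c\}$ and each $V$-vertex to $\{b,d\}$, and the absence of the edge $\{a,d\}$ means $h^{-1}(a)\cup h^{-1}(d)$ is an independent set of $G$.

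In more detail, I would argue as follows. Add to $\widehat G$ a vertex $p$ adjacent to all of $U$ with $S_p=\{b,d\}$, and a vertex $q$ adjacent to all of $V$ with $S_q=\{a,c\}$; also add an edge between $p$ and $q$. I then need $p$ and $q$ to be ``pinned'' to the right values. Using a gadget: since $b$ has the neighbour $a$ with $a$ adjacent to $b$ in the walk, and $a$ has the neighbour $b$, one can force images via repeated common-neighbour tricks; but the cleanest route is to observe that in the list-homomorphism setting we are allowed the lists in $\calL$, so in fact the standard reduction only needs the lists $\{a,c\}$ and $\{b,d\}$: set $S_x=\{a,c\}$ for $x\in U\cup\{q\}$ and $S_x=\{b,d\}$ for $x\in V\cup\{p\}$, where $p$ is joined to all of $U$ and $q$ to all of $V$ and $\{p,q\}\in E(\widehat G)$. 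A homomorphism $h$ of $(\widehat G,\boldS)$ restricted to the three ``columns'' $\{p,q\}$, $U$, $V$ must, because the walk $(a,b,c,d)$ has $\{a,d\}\notin E(H)$ and $a\neq c$, $b\neq d$, land in one of two symmetric ``states'' on the component $\{p\}\cup U\cup V\cup\{q\}$; quotienting out by this global $2$-fold symmetry (which contributes an even factor and hence vanishes mod~$2$) one gets a bijection, up to parity, between $\hom{(\widehat G,\boldS)}{H}$ and the independent sets of~$G$. This is the same bookkeeping as in the proof of Theorem~\ref{thm:hardness-gadget}: the symmetry between $(a,b)$ and $(c,d)$ plays the role of the even factor there.

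For the general (non-bipartite) case, I would use the standard trick of replacing each vertex of $G$ by a long odd path (or, equivalently, taking the reduction from $\paris$ on general graphs that was already invoked in the proof of Theorem~\ref{thm:hardness-gadget}): since we only care about parity, substituting each edge of $G$ by a path of odd length~$\geq 3$ makes the resulting graph bipartite while preserving the parity of the number of independent sets (this is exactly the reduction of Valiant~\cite{Valiant2006} combined with the parity-preserving subdivision observation). Alternatively, one cites directly that $\paris$ restricted to bipartite inputs is still $\parP$-complete and has no subexponential-time algorithm under rETH — both facts follow from~\cite{Valiant2006,DellHMTW14} since the reductions there already produce bipartite (indeed, incidence-graph) instances. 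Then the reduction above, applied to bipartite $G$, yields $\parP$-completeness of $\parHom{H,\calL}$, and membership in $\parP$ is immediate because counting list homomorphisms is in~$\#\mathrm{P}$.

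The main obstacle is the parity bookkeeping: one must check carefully that every list-homomorphism of $(\widehat G,\boldS)$ with $p\mapsto b$ and $q\mapsto c$ corresponds to an independent set of $G$ (namely $h^{-1}(a)\cup h^{-1}(d)$) and conversely, \emph{and} that the configurations with $p\mapsto d$ (or $q\mapsto a$) either are forbidden by the edge $\{p,q\}$ together with the structure of the walk, or come in pairs so that they contribute~$0$ mod~$2$. Since $\{a,d\}\notin E(H)$ but we do not assume anything about the other non-walk pairs (e.g.\ $\{a,c\}$, $\{b,d\}$, $\{c,d\}$ beyond the walk edges), some care is needed about which of $a\sim b$, $b\sim c$, $c\sim d$ are the \emph{only} available edges between consecutive lists — but in fact we only ever use that these three edges are \emph{present} and that $\{a,d\}$ is \emph{absent}, so the argument is robust: the reduction produces a partially-list-labelled graph that behaves exactly like the $4$-vertex-path reduction of Section~\ref{sec:warmup}, and the rest is the same parity computation as in~\cite[Theorem 4.2]{squarefree} as recalled in the proof of Theorem~\ref{thm:hardness-gadget}.
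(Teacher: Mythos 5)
Your core idea is the same as the paper's: reduce counting independent sets of a bipartite graph, modulo~$2$, to $\parHom{H,\calL}$ by placing the lists $\{a,c\}$ on one side and $\{b,d\}$ on the other, and using the absence of the edge $\{a,d\}$ (together with the presence of the three walk edges) to encode the independent-set constraint. That part is right, and so is the observation that $a\neq c$, $b\neq d$ are exactly what is needed for the ``in/out'' interpretation to be well-defined.

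However, you overcomplicate the construction in a way that introduces a genuine error in the justification. The paper simply takes the bipartite graph $G$ as is, assigns $S_v = \{a,c\}$ for $v$ on one side and $S_v = \{b,d\}$ for $v$ on the other, and notes that the map $h \mapsto h^{-1}(a)\cup h^{-1}(d)$ is an \emph{exact bijection} between $\hom{(G,\boldS)}{H}$ and the independent sets of~$G$. There are no new vertices $p,q$, no edge $\{p,q\}$, no forcing, and no parity cancellation of any kind: one just checks that $(a,b),(c,b),(c,d)$ are allowed on an edge while $(a,d)$ is not. Your version adds $p,q$ with the edge $\{p,q\}$ and then argues that the extraneous cases vanish via ``a global $2$-fold symmetry which contributes an even factor and hence vanishes mod~$2$''. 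As stated this is wrong: there is no free $\mathbb{Z}/2$-action (if there were, the whole count would be even and you would always output $0$). What actually happens with your gadget is a three-case analysis on $(h(p),h(q)) \in \{(b,a),(b,c),(d,c)\}$: the cases $(b,a)$ and $(d,c)$ contribute $2^{|U|}$ and $2^{|V|}$ respectively (even whenever the relevant side is nonempty), and only $(b,c)$ contributes $|\calI(G)|$. That is a correct argument, but it is not the argument you gave, and it requires the side conditions $|U|,|V|\geq 1$ that you would need to handle. All of this is avoided by dropping $p$ and $q$ entirely. Finally, the paper invokes the $\parP$-completeness of $\parbis$ directly (Faben's thesis), so there is no need for the odd-path subdivision digression either, although citing that bipartite inputs suffice is of course also fine.
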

\begin{proof}
The problem $\parbis$, of counting the independent sets of a bipartite graph, modulo~2,
is   known to be $\parP$-complete~\cite[Theorem 4.2.1]{Faben2008}.
We will reduce $\parbis$ to $\parHom{H,\calL}$.

Let $G$ be a bipartite graph (an input to $\parbis$) with vertex partition $V(G)=(L,R)$. 
For each $v\in L$, let $S_v=\{a,c\}$ and for each $v\in R$ let $S_v=\{b,d\}$. We set $\boldS=\{S_v \mid v\in V(G)\}$. Then every homomorphism $h$ from $(G,\boldS)$ to $H$ corresponds to an independent set in $G$ (and vice versa), where $h(v)\in \{a,d\}$ means that $v$ is \emph{in} the independent set and $h(v)\in \{b,c\}$ means that $v$ is \emph{out} of the independent set. (Since $a\neq c$ and $b\neq d$ it is well-defined whether $v$ is in or out.) Hence a single $\parLHom{H, \calL}$ oracle call with input $(G, \boldS)$ returns the number of independent sets of~$G$, modulo~$2$.
	\end{proof}
	
	\begin{thm}\label{thm:2lists}
Let $H$ be a connected graph in~$\calH$ and let $\calL\subseteq \calP(V(H))$ be a set of lists with $\{S\subseteq V(H) \mid \abs{S}=2\}\subseteq \calL$. 
If 
(i) $H$ is a complete bipartite graph with no self-loops, or 
(ii) $H$ is a complete graph in which every vertex has a self-loop,
then   $\parHom{H,\calL}$ can be solved in polynomial time. Otherwise, $\parHom{H,\calL}$ is $\parP$-complete.
	\end{thm}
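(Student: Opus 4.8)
The plan is to split the argument into an easy polynomial-time direction and a harder hardness direction; the hardness direction will reduce entirely to the purely combinatorial task of producing, in every non-exceptional target graph, the kind of walk required by Lemma~\ref{lem:P4hardness}. Membership of $\parHom{H,\calL}$ in $\parP$ is immediate, since for fixed $H$ and $\calL$ the quantity $|\hom{(G,\boldS)}{H}|$ is a $\#\P$ function of the input $(G,\boldS)$, so I only need to establish the two tractability claims and the hardness claim.

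For tractability I would treat the two cases directly. If $H$ is complete bipartite with parts $A,B$ and no self-loops, then a map $h\colon V(G)\to V(H)$ is a homomorphism exactly when, on every edge of $G$, one endpoint lies in $A$ and the other in $B$; hence $|\hom{(G,\boldS)}{H}|$ factors over the connected components of $G$ — a non-bipartite component contributes $0$, a bipartite component with colour classes $X,Y$ contributes $\prod_{v\in X}|S_v\cap A|\prod_{v\in Y}|S_v\cap B|+\prod_{v\in X}|S_v\cap B|\prod_{v\in Y}|S_v\cap A|$, and an isolated vertex $v$ contributes $|S_v|$ — all of which is computable in polynomial time. If instead $H$ is a complete graph with a self-loop at every vertex, then every map $V(G)\to V(H)$ is a homomorphism, so $|\hom{(G,\boldS)}{H}|=\prod_{v\in V(G)}|S_v|$, again polynomial-time. (These are exactly the polynomial-time cases of the Dyer--Greenhill dichotomy~\cite{DG}.)

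For hardness, by Lemma~\ref{lem:P4hardness} and the assumption $\{S\subseteq V(H)\mid |S|=2\}\subseteq\calL$, it suffices to prove the following combinatorial statement: \emph{if a connected $H\in\calH$ is neither a complete bipartite graph without self-loops nor a complete graph with a self-loop at every vertex, then $H$ has a walk $(a,b,c,d)$ with $a\neq c$, $b\neq d$ and $\{a,d\}\notin E(H)$} — for then $\{a,c\}$ and $\{b,d\}$ are genuine two-element subsets of $V(H)$, hence lie in $\calL$, and Lemma~\ref{lem:P4hardness} gives $\parP$-completeness. I would prove the contrapositive. Assume $|V(H)|\geq 2$ (the one-vertex case being trivial) and that every walk $(a,b,c,d)$ with $a\neq c$, $b\neq d$ has $\{a,d\}\in E(H)$. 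If some vertex $v$ has a self-loop, pick a neighbour $u\neq v$: the walk $(u,v,v,u)$ forces a self-loop at $u$, so by connectedness every vertex has a self-loop; then for distinct non-adjacent $u,w$ with a common neighbour $x$ (so $x\notin\{u,w\}$) the walk $(u,x,w,w)$ forces $\{u,w\}\in E(H)$, a contradiction, so any two distinct vertices with a common neighbour are adjacent, which together with connectedness makes $H$ complete — a complete graph with all self-loops. If $H$ has no self-loop, then the walk $(a,b,c,a)$ on a triangle $\{a,b,c\}$, and the walk on the first four vertices of the shortest odd cycle when that cycle has length $\geq 5$ (in which case it is chordless), both yield contradictions, so $H$ has no odd cycle and hence is bipartite, say $V(H)=A\sqcup B$; and if some $a\in A$, $b\in B$ were non-adjacent, the first four vertices $v_0,v_1,v_2,v_3$ of a shortest $a$--$b$ path (of odd length $\geq 3$) would satisfy $v_0\neq v_2$ and $v_1\neq v_3$, yet the hypothesis forces $\{v_0,v_3\}\in E(H)$, contradicting $\dist_H(v_0,v_3)=3$. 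Hence $H$ is a complete bipartite graph without self-loops.

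The main obstacle is the combinatorial claim of the previous paragraph: although each individual deduction is short, the argument has to be split according to whether $H$ carries a self-loop, and one must handle the degenerate situations with care — self-loops treated as edges $\{a,d\}$ with $a=d$, a single-vertex $H$, and (on the tractability side) components of $G$ that are isolated vertices — while verifying that each exhibited walk genuinely satisfies $a\neq c$ and $b\neq d$. Everything else — $\parP$-membership, the two counting formulas, and the single invocation of Lemma~\ref{lem:P4hardness} — is routine.
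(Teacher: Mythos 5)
Your proof is correct, and the hardness direction takes a genuinely different route from the paper's. The paper splits hardness into four structural cases and uses two different base problems: for the case where $H$ has both a looped and an unlooped vertex it reduces from $\paris$ directly (via a $2$-element list on an adjacent looped/unlooped pair), and for the remaining three cases it invokes Lemma~\ref{lem:P4hardness}. You instead route \emph{every} hardness case through Lemma~\ref{lem:P4hardness}, by isolating a single combinatorial claim — a non-exceptional connected $H\in\calH$ always contains a walk $(a,b,c,d)$ with $a\neq c$, $b\neq d$, $\{a,d\}\notin E(H)$ — and proving it by contrapositive. The key step that the paper handles differently is the mixed looped/unlooped case: under your contrapositive hypothesis, the walk $(u,v,v,u)$ with $v$ looped and $u$ an unlooped neighbour immediately forces $u$ to be looped too, so this case cannot arise; equivalently, in that case the walk $(u,v,v,u)$ already satisfies the lemma's conditions. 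Your approach is more modular (one reduction, one combinatorial lemma) and avoids the extra $\paris$ reduction; the paper's four-way case split is slightly more verbose but makes it explicit which piece of structure is responsible for hardness in each regime. Your tractability argument is an explicit version of the Dyer--Greenhill cases the paper cites. One small point worth making explicit when writing this up: in your self-loop branch you implicitly use that $H$ is connected with $\abs{V(H)}\geq 2$ to guarantee the looped vertex has a neighbour $u\neq v$, and that $u$ and $w$ being non-adjacent rules out either being a common neighbour of the pair — you note these, and both are correct.
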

\begin{proof}
The easiness result comes from Dyer and Greenhill~\cite[Theorem 1.1]{DG}. 
(Dyer and Greenhill's result is stated for homomorphisms rather than for list homomorphisms, but it is easy to see, and well known,
that it extends to list homomorphisms.)
For the hardness part we consider four cases.
		
\begin{description}
\item[Case 1:] $H$ contains at least one looped and one unlooped vertex.

The problem $\paris$, of counting the independent sets
of a graph, modulo~$2$,  is known to be $\parP$-complete~\cite{Valiant2006}.
In this case there is an easy reduction from $\paris$ to $\parLHom{H, \calL}$.
To see this, note that, since $H$ is connected, it 		
 contains a looped vertex $a$ which is adjacent to an unlooped vertex $b$. 
 Counting the homomorphisms from a graph~$G$  to $H[\{a,b\}]$ is well-known to be equivalent to counting the independent sets of~$G$ (see, 
 e.g.,~\cite{BW}). Since $\{a,b\} \in \calL$ we can use this list to restrict the image of  homomorphisms to $\{a,b\}$, giving the desired reduction. 
 			
\item[Case 2:] $H$ is a bipartite graph without self-loops but it is not a complete bipartite graph.
			
In this case, $H$ contains a path   $(a,b,c,d)$ such that $\{a,d\}\notin E(H)$ so $\parHom{H,\calL}$ is $\parP$-complete by Lemma~\ref{lem:P4hardness}.
			
			\item[Case 3:] $H$ is a graph without self-loops that contains a cycle of odd length. 
			
			Consider a shortest odd-length cycle $C$ in $H$. Due to minimality, $C$ has to be an induced cycle of $H$ (any additional edge between vertices of $C$ would give a shorter even-length cycle and a shorter odd-length cycle). If $C$ is not a triangle, then $C$ contains a path  $(a,b,c,d)$ such that $\{a,d\}\notin E(H)$. If otherwise $C$ is a triangle $(a,b,c,a)$, then $\{a,a\}\notin E(H)$ since $H$ does not have self-loops. In both cases $\parHom{H,\calL}$ is $\parP$-complete by Lemma~\ref{lem:P4hardness}.
			
			\item[Case 4:] $H$ is a graph with all self-loops present but $H$ is not a complete graph.
			
In this case, $H$ contains a path $(a,b,c)$ where $\{a,c\}\notin E(H)$. Since $\{b,b\}\in E(H)$ we can apply Lemma~\ref{lem:P4hardness} to the walk $(a,b,b,c)$ to obtain $\parP$-completeness of $\parHom{H,\calL}$.
		\end{description}
	\end{proof}
	
The following complexity classification for the problem $\parLHom{H}$ follows easily from Theorem~\ref{thm:2lists}.	
\begin{thm}\label{cor:LHomclassification}
Let $H$ be  graph in~$\calH$.
If every connected component $H'$ of~$H$ satisfies one of the following
\begin{enumerate}
\item
$H'$ is a complete bipartite graph with no self-loops, or 
\item $H'$ is a complete graph in which every vertex has a self-loop, 
\end{enumerate}
then   $\parLHom{H}$ can be solved in polynomial time. Otherwise, $\parLHom{H}$ is $\parP$-complete.
\end{thm}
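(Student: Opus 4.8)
The plan is to derive Theorem~\ref{cor:LHomclassification} from Theorem~\ref{thm:2lists} by a standard component-wise decomposition of the problem $\parLHom{H}$. First I would recall that a homomorphism from a graph $G$ with lists $\boldS$ to $H$ is determined component by component: if $G_1,\dots,G_m$ are the connected components of $G$, then $|\hom{(G,\boldS)}{H}| = \prod_{j=1}^m |\hom{(G_j,\boldS)}{H}|$, and hence the parity of $|\hom{(G,\boldS)}{H}|$ is $1$ if and only if each factor is odd. So without loss of generality the input graph $G$ is connected. For connected $G$, any list homomorphism $h$ maps all of $V(G)$ into a single connected component $H'$ of $H$ (since the image of a connected graph is connected). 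Thus $|\hom{(G,\boldS)}{H}| = \sum_{H' \text{ component of } H} |\hom{(G,\boldS\restriction_{H'})}{H'}|$, where $\boldS\restriction_{H'}$ replaces each list $S_v$ by $S_v \cap V(H')$; if some $S_v \cap V(H')=\emptyset$ the corresponding term is $0$.

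The easiness direction is then immediate: if every component $H'$ satisfies (1) or (2), then by the polynomial-time algorithm of Theorem~\ref{thm:2lists} (with $\calL = \calP(V(H'))$, so that in particular all size-$2$ lists are present) we can compute $|\hom{(G,\boldS\restriction_{H'})}{H'}| \bmod 2$ for each of the finitely many components $H'$ in polynomial time, and sum the results modulo~$2$. The reduction in the sum is only over the (constantly many) components of the fixed graph $H$, so this is a polynomial-time algorithm.

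For the hardness direction, suppose some connected component $H'$ of $H$ satisfies neither (1) nor (2). I would reduce $\parHom{H', \calL'}$ to $\parLHom{H}$, where $\calL' = \{S \subseteq V(H') \mid |S| = 2\}$; by Theorem~\ref{thm:2lists} the former is $\parP$-complete, since $H'$ is connected in $\calH$ and fails both easiness conditions. Given an input $(G,\boldS)$ to $\parHom{H',\calL'}$ (so every $S_v$ is a $2$-element subset of $V(H')$, hence also a valid list from $\calP(V(H))$), the same pair $(G,\boldS)$ is a valid input to $\parLHom{H}$, and any list homomorphism from $(G,\boldS)$ to $H$ must in fact map into $H'$, because each list $S_v \subseteq V(H')$ and $H'$ is a union of connected components of $H$ (so no edge of $H$ leaves $H'$). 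Wait --- one must be slightly careful: $G$ need not be connected, but since every $S_v \subseteq V(H')$, each vertex of $G$ is forced into $H'$ regardless of connectivity. Hence $|\hom{(G,\boldS)}{H}| = |\hom{(G,\boldS)}{H'}|$ exactly (not merely modulo~$2$), and a single oracle call to $\parLHom{H}$ solves $\parHom{H',\calL'}$. Membership of $\parLHom{H}$ in $\parP$ is clear since counting list homomorphisms is in $\#\P$. The main (minor) obstacle is just bookkeeping: making sure the decomposition $|\hom{(G,\boldS)}{H}| = \sum_{H'} |\hom{(G,\boldS\restriction_{H'})}{H'}|$ is stated correctly for disconnected $G$ in the easiness direction (it holds after first splitting $G$ into its own connected components), and confirming that self-loops in $H$ cause no trouble --- the argument above uses only that $H'$ is a union of components and that list homomorphisms respect lists, both of which are insensitive to the presence of self-loops.
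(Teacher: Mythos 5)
Your proof is correct and takes essentially the same approach as the paper: the hardness direction is the paper's reduction from $\parHom{H',\calL'}$ (restricting lists to a "bad" component $H'$), and the easiness direction ultimately relies on the same Dyer–Greenhill tractability result underlying Theorem~\ref{thm:2lists}. You simply spell out the routine component-wise decomposition (over components of $G$ and of $H$) that the paper elides by citing Dyer and Greenhill directly.
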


\begin{proof}
The easiness result comes from Dyer and Greenhill~\cite[Theorem 1.1]{DG}. 
For the hardness part, let $H'$ be a connected component of~$H$ that  
is not a complete bipartite graph with no self-loops and is not a complete graph in which every vertex has a self-loop.
Let $\calL$ be the set of all size-2 subsets of $V(H')$.
From Theorem~\ref{thm:2lists}, 
$\parHom{H',\calL}$ is $\parP$-complete.
However, $\parHom{H',\calL}$ reduces trivially to $\parLHom{H}$ --- 
given an input~$(G,\boldS)$ to $\parHom{H',\calL}$
simply return the number of (list) homomorphisms from $(G,\boldS)$ to~$H$, modulo~$2$.

\end{proof}
	
\begin{rem}\label{rem:final}
Theorem~\ref{cor:LHomclassification}
would be unchanged if we changed the definition of  $\parLHom{H}$
so that the input~$G$ can be any graph in~$\calH$ (so it need not be simple). 
A self-loop on a vertex~$v$ of~$G$ simply enforces the constraint that a homomorphism  must map~$v$ to a vertex of~$H$ that has a self-loop. 
The same constraint can be enforced using a list. 
	\end{rem}

	\section*{Acknowledgements} 
	\pdfbookmark[1]{Acknowledgements}{acknowledgements}

We would like to thank Dave Richerby, our fellow member of the square-haters' club, for valuable discussions; this club will hopefully lose some members with the appearance of this work. Furthermore we thank Holger Dell for pointing out the tight conditional lower bound for counting independent sets modulo $2$ in~\cite{DellHMTW14}.

	\section{Index of Symbols and Definitions}\label{sec:index}
	 
	{
		\renewcommand{\arraystretch}{1.5}
		\begin{longtable}{@{}l|lll@{}}
			articulation point & removal increases number of connected components & Def.\ \ref{def:blockcuttree} & p.\ \pageref{def:blockcuttree}\\
			attachment point & {} & Def.\ \ref{def:exit} & p.\ \pageref{def:exit}\\
			$\bc(H)$& block-cut tree of $H$ & Def.\ \ref{def:blockcuttree} & p.\ \pageref{def:blockcuttree}\\ 
			biconnected component & maximal biconnected subgraph  & Def.\ \ref{def:blockcuttree} & p.\ \pageref{def:blockcuttree}\\
			biconnected graph & at least two vertices and no articulation points  & Def.\ \ref{def:blockcuttree} & p.\ \pageref{def:blockcuttree}\\ 
			block-cut tree & tree of biconn.\ components and articulation points & Def.\ \ref{def:blockcuttree} & p.\ \pageref{def:blockcuttree}\\ 
			chordal bipartite graph & all induced cycles are squares & Def.\ \ref{def:chordalbipartite} & p.\ \pageref{def:chordalbipartite}\\
			$\cycles(B)$ & set of distinguished cycles of obstruction $B$ & Def.\ \ref{def:obstruction} & p.\ \pageref{def:obstruction}\\
			$D(C)$ & order induced by cycle $C$ & Def.\ \ref{def:DofC} & p.\ \pageref{def:DofC}\\
			$\deg_H(v)$ & degree of $v$ in graph $H$ & ~ & p.\ \pageref{page:degree}\\
			destination & {} & Def.\ \ref{def:exit} & p.\ \pageref{def:exit}\\
			diamonds & distinguished family of chordal bipartite graphs & Def.\ \ref{def:diamond} & p.\ \pageref{def:diamond}\\ 
			exit & {} & Def.\ \ref{def:exit} & p.\ \pageref{def:exit}\\
			$F$ & graph with two squares sharing one edge & Def.\ \ref{def:graphF} & p.\ \pageref{def:graphF}\\
			good start & {} & Def.\ \ref{def:goodstartgoodstop} & p.\ \pageref{def:goodstartgoodstop}\\
			good stop & {} & Def.\ \ref{def:goodstartgoodstop} & p.\ \pageref{def:goodstartgoodstop}\\
			hardness gadget & substructure of a graph inducing $\parP$-hardness & Def.\ \ref{defn:hardness-gadget} & p.\ \pageref{defn:hardness-gadget}\\
			$H[S]$ & subgraph of $H$ induced by $S$ & ~ & p.\ \pageref{page:inducedgraph}\\
			homomorphism & edge-preserving mapping & ~ & p.\ \pageref{page:homomorphism}\\
			$\hom{G}{H}$ & set of homomorphisms from  $G$ to $H$ & ~ & p.\ \pageref{page:homsGH}\\
			$\hom{(J, x)}{(H, y)}$ & set of homomorphisms with pinned vertices& ~ & p.\ \pageref{page:partialhomset}\\
			impasses & distinguished family of chordal bipartite graphs & Def.\ \ref{def:impasse} & p.\ \pageref{def:impasse}\\
			involution & automorphism of order $\leq2$ & ~ & p.\ \pageref{page:involution} \\
			involution-free graph & graph without non-trivial involutions & ~ & p.\ \pageref{page:involutionfree} \\
			involution-free reduction & {} & ~ & p.\ \pageref{page:involutionfreereduction}\\
			$K_4$, $K_4$-minor-free & 4-vertex complete graph  &  & p.\ \pageref{page:K4} \\
			list homomorphism & {} & ~ & p.\ \pageref{def:listhomomorphisms} \\
			$N_{W,H}(w_i)$& walk-neighbour-set &Def.\ \ref{def:walkneighbourset} & p.\ \pageref{def:walkneighbourset}\\
			obstruction & distinguished biconnected $K_4$-minor-free graph& Def.\ \ref{def:obstruction} & p.\ \pageref{def:obstruction}\\
			obstruction-free path & path in the block-cut tree excluding obstructions & Def.\ \ref{def:obstructionfreePath} & p.\ \pageref{def:obstructionfreePath} \\
			pair of connectors & distinguished pair of vertices of an impasse & Def.\ \ref{def:impasse} & p.\ \pageref{def:impasse}\\
			partially $H$-labelled graph & pair consisting of a graph and a pinning function & ~ & p.\ \pageref{page:partiallyHlabelled}\\
			$\SP{a,b}$ & a particular shortest path from $a$ to $b$ in $H$ & Def.\ \ref{def:pathroutine} &p.\ \pageref{def:pathroutine}\\
			pinning function & partial function between vertices of two graphs & ~ & p.\ \pageref{page:pinningfunction}\\
			pre-hardness gadget & $\parP$-hardness for all $K_4$-minor-free (1,2)-supergraphs & Def.\ \ref{def:prehardness} & p.\ \pageref{def:prehardness}\\
			$R$-closed/$R$-open & {} & Def.\ \ref{def:Rclosed} & p.\ \pageref{def:Rclosed}\\

rETH & randomised Exponential Time Hypothesis   &  {} & p.\ \pageref{page:rETH}\\

			retraction & homomorphism from a partially labelled graph & ~ & p.\ \pageref{page:retractions} \\
			separation/separator & {} & Def.\ \ref{def:separator} & p.\ \pageref{def:separator}\\
			$S_{k,\ell}$ & distinguished $\TV$-typed supergraph of $F$ & Def.\ \ref{def:Skl} & p.\ \pageref{def:Skl} \\
			strong hardness gadget & $\parP$-hardness for all $K_4$-minor-free supergraphs & Def.\ \ref{def:stronghardnessgadget} & p.\ \pageref{def:stronghardnessgadget}\\
			suitable connector & {} & Def.\ \ref{def:suitableconnector} &p.\ \pageref{def:suitableconnector}\\
			suitable subtree & {} & Def.\ \ref{def:suitabletree} &p.\ \pageref{def:suitabletree}\\
			type $\TV$ & predicate for supergraphs of $F$ & Def.\ \ref{def:typeV} & p.\ \pageref{def:typeV}\\
			
	  walk-neighbour-set &&Def.\ \ref{def:walkneighbourset} & p.\ \pageref{def:walkneighbourset}\\

			$\cseg_C(a,b)$/$\cseg_C(a)$ & walk from $a$ to $b$ / from $a$ to $a$ along cycle $C$ & Def.\ \ref{def:DofC} & p.\ \pageref{def:DofC}\\
			$\Gamma_H(v)$ & neighbourhood of $v$ in graph $H$ & ~ & p.\ \pageref{page:neighbourhood}\\
			$\Gamma_H(S)$ & joint neighbourhood of $S$ & ~ &  p.\ \pageref{page:neighbourhood}\\
						$\Gamma_{H\setminus F}(i,j)$ & common neighbours of $v_i$ and $v_j$ in $H\setminus F$ & Def.\ \ref{def:graphF} & p.\ \pageref{def:graphF}\\
			$\parHom{H}$ & counting homomorphisms to $H$ mod 2 & ~ & p.\ \pageref{page:ParHom}\\
						
$\paris$ & counting independent sets mod 2 & ~ & p.\ \pageref{page:ParIS}\\

			$\parLHom{H}$ & counting list homomorphisms to $H$ mod 2 & ~ & p.\ \pageref{def:parLhom} \\
			$\parP$ & complexity class of parity problems & ~ & p.\ \pageref{page:parP} \\
			$\parRet{H}$ & counting retractions to $H$ mod 2 & ~ & p.\ \pageref{page:ParRet}\\
			(1,2)-supergraph & supergraph without new adjacencies and 2-paths & Def.\ \ref{def:12supergraph} & p.\ \pageref{def:12supergraph}\\
			$+$ & concatenation of walks & Def.\ \ref{def:plus} & p.\ \pageref{def:plus}
		\end{longtable}
	}	
	
	\clearpage
	
	\bibliography{\jobname}
	
	\appendix
	
	\pagebreak

\end{document}